\newtheorem{lemma}{Lemma}[section]
\newtheorem{theorem}{Theorem}[section]
\newtheorem{remark}{Remark}[section]
\newtheorem{defn}{Definition}[section]
\newcommand{\E}{{\mathbb E}}
\newcommand{\D}{{\mathcal{D}}}
\newcommand{\Q}{{\mathbb Q}}
\newcommand{\M}{{\mathcal{M}}}
\newcommand{\MM}{{\mathbb{M}}}
\newcommand{\R}{{\mathbb R}}
\renewcommand{\P}{{\mathbb P}}
\newcommand{\A}{{\mathcal{A}}}
\newcommand{\B}{{\mathcal{B}}}
\newcommand{\G}{{\mathbb{G}}}
\newcommand{\g}{{\mathcal{G}}}
\newcommand{\F}{{\mathcal F}}
\newcommand{\h}{{\mathcal H}}
\DeclareMathOperator{\sign}{sign}
\newcommand{\argmax}{\operatornamewithlimits{\textrm{argmax}}}
\newcommand{\argmin}{\operatornamewithlimits{\textrm{argmin}}}
\definecolor{lgray}{gray}{0.70}
\DeclarePairedDelimiter{\floor}{\lfloor}{\rfloor}
\DeclareRobustCommand{\rchi}{{\mathpalette\irchi\relax}}
\newcommand{\irchi}[2]{\raisebox{\depth}{$#1\chi$}}
\newcommand{\blind}{1}
\begin{document}
\doparttoc 
\faketableofcontents 



\if1\blind
  {
  \title{A semi-parametric model for target localization in distributed systems}
 \author{Rohit K. Patra\thanks{Corresponding author. E-mail: rohitpatra@ufl.edu}\hspace{.2cm}\\
      {University of Florida}\\ Moulinath Banerjee\\
      {University of Michigan}\\
       George Michailidis\\
      {University of Florida}}
    \maketitle
  } \fi
  
  \if0\blind
  {    \title{A semi-parametric model for target localization in distributed systems}
        \maketitle
  } \fi

\begin{abstract}
Distributed systems serve as a key technological infrastructure for monitoring diverse systems across space and time. Examples of their widespread applications include: precision agriculture, surveillance, ecosystem and physical infrastructure monitoring, animal behavior and tracking, disaster response and recovery to name a few. Such systems comprise of a large number of sensor devices at fixed locations, wherein each individual sensor obtains measurements that are subsequently fused and processed at a central processing node. A key problem for such systems is to detect targets and identify their locations, for which a large body of literature has been developed focusing primarily on employing parametric models for signal attenuation from target to device. In this paper, we adopt a nonparametric approach that only assumes that the signal is nonincreasing as function of the distance between the sensor and the target. We propose a simple tuning \textit{parameter free} estimator for the target location, namely, the simple score estimator (SSCE). We show that the SSCE is $\sqrt{n}$ consistent and has a Gaussian limit distribution which can be used to construct asymptotic confidence regions for the location of the target. We study the performance of the SSCE through extensive simulations, and finally demonstrate an application to target detection in a video surveillance data set. 
\end{abstract}

\section{Introduction} 
\label{sec:introduction}

Target detection and localization represents a canonical problem in distributed systems, wherein information is obtained from sensing devices and then appropriately fused to identify the presence and location of target(s). Sensing technologies have evolved over time from phased arrays in radar systems (see \cite{niu2012target} and references therein), to wireless sensor networks involving many inexpensive sensors (see survey paper \cite{akyildiz2002wireless}), to highly sophisticated surveillance/monitoring systems integrating video and other sensor data \cite{joshi2012survey}. 
Examples of this canonical problem based on such diverse technologies abound and include precision agriculture~\cite{cardell2005reactive}, surveillance~\cite{estrin2007reflections}, animal behavior~\cite{mainwaring2002wireless}, drone tracking,  emergent disaster response and recovery~\cite{blatt2006energy}, fire hazards \cite{son2006design}, structural integrity of critical infrastructure \cite{chen2017nb}. 

Such distributed systems comprise a large number of sensors (acoustic, image/video, chemical, environmental) deployed at various (fixed or random) locations, wherein each individual sensor acquires signals from the surrounding area at fixed time intervals. The task of a central location is to integrate or fuse the data recorded by the sensors to locate or track an object or other quantity of interest, such as a crack in a bridge, or a chemical spill in the environment. In what follows, we formally describe the problem for a single fixed time. 

Consider $n$ identical sensors deployed at locations $\{X_i\}_{i=1}^n \in \R^d$ over a $d$-dimensional region $\rchi  \in \R^d$, where $d$ is $1, 2, $ or $3$.  Our object of interest is the location of a target that emits a  signal; e.g., infrared, acoustic, temperature, etc. Let $\theta_0 :=(\theta_{0, 1}, \ldots, \theta_{0, d}) \in \Theta$ denote the position of the target, where $\Theta \subset \R^d$ is called the `monitoring region'.  We assume that the energy or intensity of the signal attenuates with distance (from the target) according to a \textit{nonincreasing} function $\eta_0:\R^+\to\R^+$, where $\R^+$ denotes the positive real line, i.e., the true  energy/intensity of the signal at  sensor located at $X$ is $\eta_0\left(|\theta_0-X|^2\right), $  where $|\cdot|$ denotes the Euclidean norm and $\eta_0(0)$ is the energy of the signal at the target. However, since the sensor measurements are error prone, the observed energy at a sensor at $X$ is
\begin{equation}\label{eq:model_sen}
Y = \eta_0\left(|\theta_0-X|^2\right) + \epsilon,
\end{equation}
where $\epsilon$ is the unobserved measurement error. We will assume that $\E(\epsilon|X)=0$ and $\E(\epsilon^2|X) < \infty$ for almost every $X.$ The goal here is to estimate  the \textit{unknown} function $\eta_0$ and $\theta_0\in \Theta$  based on an i.i.d.~sample $(X_i, Y_i)$ under minimal assumptions on $\eta_0.$

Given the importance and wide applicability of this canonical problem, a large body of work has emerged; for a comprehensive review see the book \cite{varshney2012distributed}. The majority of existing work imposes a parametric functional form on the signal attenuation function $\eta$ leveraging information about the nature of the signal obtained by the sensors. This is a justified approach when dealing with thermal or acoustic signals that exhibit exponential and polynomial rates of decay, respectively~see e.g., \cite{blatt2006energy, clouqueur2001value, li2002detection, sheng2005maximum}. However, in many real-life scenarios, such parametric assumptions fall short. For example, the above assumptions are hard to justify for image/video acquiring sensors, or fail to hold in non-ideal environments, like the presence of dense vegetation or at high altitude, where the signal attenuation deviates from such nice parametric forms~\cite{watanabe1996sound}. To address such issues, some approaches quantize the signal (record a value 0 if the signal is below a certain threshold and 1 otherwise) -- see \cite{katenka2007local, katenka2008robust} and references therein. However, this strategy can lead to significant loss of information thereby negatively impacting target detection and localization capabilities of distributed systems. To that end, we propose to address the problem of estimating $\theta_0$ and $\eta_0$ in~\eqref{eq:model_sen} \textit{without any parametric} assumptions on $\eta_0$. 

Note that the nature of the problem under consideration corresponds to  semiparametric estimation with \textit{bundled parameters}, wherein the parametric and nonparametric components are intertwined (see \cite{huang1997interval}). Such problems have been studied in the literature based on estimators that require tuning parameters, see e.g., ~\cite{Powelletal89, LiDuan89, ICHI93, HardleEtAl93, Hristacheetal01, DelecroixEtal06, MR2529970, cuietal11} and references therein. However, observe that in our setting, the nonparametric component $\eta_0$ is governed by a natural shape constraint: \textit{monotonicity}. It is by now very well known that the use of shape constraints like monotonicity, convexity, log-concavity, etc., lead to elegant tuning parameter free estimates in a wide repertoire of nonparametric problems involving function estimation that, at least in one dimension (i.e., shape constrained functions of one variable), produce minimax optimal rates under minimal smoothness assumptions, see e.g., ~\cite{groeneboom2001estimation, zhang2002risk, GS15, GJ14, MR3576560,
MR2509075, han2017sharp, 2019arXiv190902088K, gao2020estimation} and references therein. Bypassing the tuning parameter selection step provides estimates that are truly \textit{data-driven}: in fact, shape-constrained procedures have an adaptive data-driven bandwidth choice built into the algorithms for their computation, and therefore extraneous stipulations of bandwidth via cross-validation or other techniques are not necessary. 

Hence, we adopt the shape-constrained approach to leverage its advantages for the problem at hand. This is rendered feasible in our version of the bundled parameters problem by the recent developments in~\cite{groeneboom2016current, balabdaoui2016least, balabdaoui2019score, 2019arXiv190902088K} in the related single index model (where $\E(Y|X)= \eta_0(\theta_0^\top X)$) with shape constrained link function $\eta$, which demonstrated how the parameter of interest $\theta_0$ can be estimated at the optimal $\sqrt{n}$ rate while using a tuning parameter free approach for the estimation of the nuisance parameter $\eta$. 

The main difficulty in studying the asymptotics for the estimator of $\theta_0$ in the shape constrained framework  of~\eqref{eq:model_sen} comes from the fact that the monotonically constrained estimator of $\eta_0$ is piecewise constant and thus lies on the ``boundary'' of the space of monotone functions. When the parametric ($\theta_0$) and nonparametric ($\eta_0$) components are not bundled (e.g., Cox proportional hazards model or partial linear regression model), the discontinuity (and boundary problem) of the estimator of $\eta_0$ can be overcome using traditional techniques because the asymptotics of the estimators for $\theta_0$ do not involve $\eta_0'$, see e.g.,~\cite{MR1915446, MR1394975, huang2002note}. This is, however, not true when $\theta_0$ and $\eta_0$ are intertwined, see e.g., ~\cite{2017arXiv170800145K}. To overcome the above difficulties, we adapt the powerful and elegant techniques developed in~\cite{groeneboom2016current} and~\cite{balabdaoui2019score} to develop and study tuning parameter free estimators for~\eqref{eq:model_sen}. As opposed to the single index model studied in the above works, the index in~\eqref{eq:model_sen} ($|\theta_0-X|$) is not a linear function of the parameter. This creates a number of new technical challenges that require careful handling; see e.g., Section~\ref{sec:PropertyofM}. Our work shows that the tools developed in~\cite{groeneboom2016current} can be used for general bundled problems (where the index is not linear in $\theta$ or $X$), provided the index is only locally linear, and therefore expands the scope of these techniques to a broader set of problems/models. Finally, \cite{balabdaoui2019score} assumes that the errors have all moments, while we relax this assumption significantly and establish $\sqrt{n}$-consistency for the estimator of $\theta_0$ for heavier tailed errors; we require errors to have only finite  sixth   moment (conditional on $X)$. This relaxation is important in many applications, due to the nature of the operating environment \cite{liu2009robust} or possible adversarial signal contamination \citep{swami2002some, dai2017sparse}.
A summary of the key technical contributions of this work is provided next:
\begin{enumerate}
  \item We find simple conditions on $\rchi$ the support of $X$ and $\eta_0$ for the  parameter in~\eqref{eq:model_sen} ($\eta_0$ and $\theta_0$) to be identifiable. 
 \item We provide two tuning parameter free estimators for $\theta_0$; namely the simple score estimator and least squares estimator, see Section~\ref{sec:Ident}. Furthermore, each of the estimators is associated with a tuning parameter free estimator for $\eta_0$.
  \item In contrast to most works in the shape constrained literature, we find the rate of convergence of the location estimators under  heavy-tailed errors. We allow the errors to be arbitrarily dependent on $X$. We show that the simple score estimator is $\sqrt{n}$ consistent and asymptotically normal as long as $\E(\epsilon^6|X)< \infty$ almost every $X.$
  \item We study the performance of both the  simple score estimator and least squares estimator through extensive simulations, and analyze a real life dataset in Section~\ref{sec:real_data_analysis_CCTV_footage}. We use the $m$-out-of-$n$ bootstrap to provide tuning  parameter free inference for the simple score estimator. 
\end{enumerate}

\paragraph*{Organization.} The remainder of the paper is organized as follows. In Section~\ref{sec:Ident}, we provide simple conditions for identifiability of the model, followed by two parameter free estimators for the target location $\theta_0$. In Section~\ref{sec:asymptotic_analysis}, we provide asymptotic analyses for the two estimators. Section~\ref{sub:asymptotic_analysis_of_the_lse} finds rate upper bounds for the least squares estimators of $\theta_0$ and $\eta_0$. Section~\ref{sub:asymptotic_analysis_of_the_sse} shows that the simple score estimator for $\theta_0$ is $\sqrt{n}$ consistent and asymptotically normal. In Section~\ref{sec:simulation_study}, we study the finite sample performance of both  estimators through extensive simulations. We also illustrate that the $m$-out-of-$n$ bootstrap can be used for valid inference for the simple score estimator. In Section~\ref{sec:real_data_analysis_CCTV_footage}, we use the proposed estimators in a surveillance application to locate an individual. Specifically,  we use video footage from a wide angle CCTV camera for the entrance lobby of the INRIA Labs at Grenoble, France~\citep{fisher2005caviar}. Section~\ref{sec:concluding_remarks} summarizes the contribution of the paper and provides some concluding remarks, and in particular thoughts about extending this approach to the time-series case [i.e. a series of data on sensor readings that arrive at consecutive time points] which is relevant to tracking a moving target. The proofs of all the results in the Appendix.

\section{Model identifiability and estimation}\label{sec:Ident}

We start by addressing identifiability issues for the model posited in~\eqref{eq:model_sen} based on the following two assumptions.
\begin{enumerate}[label=\bfseries (A\arabic*)]
\setcounter{enumi}{0}
 \item The support $\rchi$ of $X$ is a bounded convex set with at least one interior point.  The covariate $X$ has a bounded density with respect to the Lebesgue measure on $\rchi$. The parameter set $\Theta$ is bounded with non empty interior and $\theta_0$ belongs to the interior of $\Theta$. Further, let  $T\in \R$  be some finite number such that  $\sup_{x\in \rchi} |x| \le T$ and $\sup_{\theta\in \Theta} |\theta| \le T$. \label{a1}
 \item The function $t\mapsto\eta_0(t)$ is nonconstant, continuously differentiable, and nonincreasing  on $\R^+$.  \label{eta_bound}
\end{enumerate}

The boundedness assumptions on $\rchi$ can be replaced by a sub-Gaussianity or heavy-tail moment assumption. In that case, the rate upper bound derived for the estimators of $\eta$ will suffer. On the other hand, as long as the elements of $X$ have enough moments, the score estimator proposed later in the paper will still be $\sqrt{n}$ consistent; also see Remark~3 of~\cite{balabdaoui2019score}.  In the following result (proved in Section~\ref{sec:proof_of_thmIdent}), we establish the identifiability of~\eqref{eq:model_sen}. The boundedness assumption on $\Theta$ is natural as in practice the monitoring regions are well known in advance. The continuity, non-constancy, and monotonicity assumptions on the attenuation function is very natural is justified by the physics of signal attenuation. 
\begin{lemma}\label{thm:Ident}
Suppose assumptions~\ref{a1} and~\ref{eta_bound} hold.  Then, $\theta_0$ and $\eta_0$ are unique. In other words, if there exists a function $g:\R^+ \to \R^+$ and $\beta\in \R^d$ such that 
\begin{equation}\label{eq:non_iden}
 \eta_0(|\theta_0-x|^2) = g(|\beta-x|^2) \qquad \text{ for all } x\in \rchi,
 \end{equation}
  then $\theta_0=\beta$ and $\eta_0= g$ on $\{|\theta_0-x|^2 : x\in \rchi\}.$
\end{lemma}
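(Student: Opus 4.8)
The plan is to use the $C^1$ regularity of $\eta_0$. Set $h(x):=\eta_0(|\theta_0-x|^2)$; by assumption~\ref{eta_bound} the map $h$ is continuously differentiable on $\R^d$, with $\nabla h(x)=2\,\eta_0'(|\theta_0-x|^2)\,(x-\theta_0)$, so wherever $\nabla h(x)\neq 0$ it is a nonzero scalar multiple of $x-\theta_0$; equivalently, the level sets of $h$ are pieces of spheres centered at $\theta_0$. The identity $\eta_0(|\theta_0-x|^2)=g(|\beta-x|^2)$ on $\rchi$ says that $h$ has, on $\rchi$, the same structure with center $\beta$, and the idea is to deduce that $\nabla h(x)$ also points along $x-\beta$, so that $x,\theta_0,\beta$ are collinear for a whole open set of $x$'s --- which is impossible, once $d\ge 2$, unless $\theta_0=\beta$.

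Concretely: (1) First upgrade the regularity of the a priori arbitrary function $g$. Fix $x\in\mathrm{int}(\rchi)$ with $x\neq\beta$ and a direction $v$ with $\langle\beta-x,v\rangle\neq 0$; then $s\mapsto|\beta-(x+sv)|^2$ is a $C^\infty$ local diffeomorphism near $s=0$, and on a small segment through $x$ the identity reads $h(x+sv)=g(|\beta-(x+sv)|^2)$, so $g$ equals $s\mapsto h(x+sv)$ composed with a smooth inverse and is therefore differentiable at $|\beta-x|^2$; the chain rule then gives $\nabla h(x)=2\,g'(|\beta-x|^2)\,(x-\beta)$. (2) Equate the two expressions for $\nabla h$ on $\mathrm{int}(\rchi)\setminus\{\beta\}$: $\eta_0'(|\theta_0-x|^2)(x-\theta_0)=g'(|\beta-x|^2)(x-\beta)$. (3) Since $\eta_0$ is nonconstant, continuity of $\eta_0'$ gives a nonempty open ball $B\subseteq\mathrm{int}(\rchi)$, which we may take to avoid $\theta_0$ and $\beta$, on which $\eta_0'(|\theta_0-x|^2)\neq 0$; then for each $x\in B$ the left-hand side above is a nonzero vector, so $x-\theta_0$ and $x-\beta$ are parallel, i.e. $x,\theta_0,\beta$ are collinear. (4) For $d\ge 2$ no line contains the open ball $B$, so $\theta_0=\beta$; the borderline case $d=1$ is handled separately, using that (when the target lies inside the sensor field) $h$ is symmetric about $\theta_0$ and, through the $\beta$-representation, also about $\beta$. (5) Finally substitute $\theta_0=\beta$ back into $\eta_0(|\theta_0-x|^2)=g(|\beta-x|^2)$ to get $\eta_0=g$ on $\{|\theta_0-x|^2:x\in\rchi\}$.

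The step I expect to be the main obstacle is (1): $g$ is only assumed to be a function $\R^+\to\R^+$, so the chain rule is not legitimate until differentiability of $g$ has been extracted from $h=g(|\beta-\cdot|^2)$ via the known smoothness of $h$ and the submersion property of $x\mapsto|\beta-x|^2$ --- which needs care about the slices where that map degenerates (its vertex) and about the endpoints of the various distance ranges. A further delicate point, implicit in step (3), is that one needs the realized squared distances $\{|\theta_0-x|^2:x\in\rchi\}$ to reach a region where $\eta_0$ is genuinely decreasing, so that $\nabla h\not\equiv 0$ on an open subset of $\rchi$; this is where the non-constancy of $\eta_0$ must be used together with the position of $\theta_0$ relative to $\rchi$, and it is also why the one-dimensional case requires the additional symmetry input above.
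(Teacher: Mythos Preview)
Your gradient approach is genuinely different from the paper's. The paper never differentiates $g$: it works with sublevel sets, finding an open ball $B\subset\rchi$ on which $h$ is nonconstant, choosing two values $t_0<t_1$ in the range of $h|_B$, and observing that the boundaries of $\{x:h(x)\le t_j\}$ are pieces of spheres centered at $\theta_0$ (by monotonicity of $\eta_0$) which, via the identity $h=g(|\beta-\cdot|^2)$, are at the same time radially symmetric about $\beta$; since an open patch of a sphere determines its center, $\theta_0=\beta$. The virtue of the paper's route is that your step~(1) --- extracting differentiability of $g$ from the identity --- simply disappears. The virtue of yours is that the role of $\eta_0'$ and of the non-constancy hypothesis is made fully explicit; the ``further delicate point'' you flag (that $\{|\theta_0-x|^2:x\in\rchi\}$ must meet a region where $\eta_0'<0$) is equally present in the paper's proof, where it appears as the existence of a ball $B\subset\rchi$ on which $h$ is nonconstant.

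There is one genuine gap: your treatment of $d=1$. The collinearity conclusion in step~(4) is vacuous in dimension one, and the symmetry fallback you sketch requires $\theta_0\in\mathrm{int}(\rchi)$, so that $h$ is even about $\theta_0$ on a two-sided neighborhood contained in $\rchi$. But assumption~\ref{a1} only places $\theta_0$ in the interior of the monitoring region $\Theta$, not of the sensor support $\rchi$; if $\theta_0$ lies to one side of the interval $\rchi$, $h$ is monotone on $\rchi$ and neither the gradient direction nor reflection symmetry recovers the center (identifiability is genuinely delicate in that configuration). For $d\ge 2$ your argument is complete.
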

\todo[inline]{Do you need $g$ to monotone? Ans: No}
In this paper, we suppose that we have $n$ i.i.d.~observations $\{(X_i, Y_i) \in \rchi \times \R, 1\le i\le n\}$ from~\eqref{eq:model_sen}. Before introducing the estimators for the location parameter $\theta_0$, we propose the following simple \textit{profile least squares} estimator for the attenuation function for any location $\theta\in\Theta$
\begin{equation}\label{eq:eta_tilde}
\tilde\eta_\theta :=\argmin_{\eta \in \M}\Q_n(\eta, \theta),
\end{equation}
where 
\begin{equation}\label{eq:Q_def}
\Q_n(\eta, \theta) :=\sum_{i=1}^{n} \left(Y_i - \eta(|\theta-X_i|^2)\right)^2
\end{equation}
and \begin{equation}\label{eq:M_def}
\M := \big\{\eta : [0, 4T^2]\to \R^+ : \eta \text{ is a non-increasing function}\big\}.
\end{equation}
For every fixed $\theta$, the optimization problem in~\eqref{eq:eta_tilde} can be shown to be convex. However, note that $\tilde{\eta}_\theta$ is well defined only at $\{|\theta-X_i|^2\}_{i=1}^n$. In this paper, we consider the canonical extension of $\tilde\eta_\theta$, and define $t\mapsto\tilde\eta_\theta(t)$ to be the unique right continuous piecewise constant function on $[0, 4T^2]$ with potential jumps at $\{|\theta-X_i|^2\}_{i=1}^n$.
Further, it is well known that, when there are no ties in $\{|\theta- X_i|\}_{i=1}^n$, the profiled estimator $\tilde{\eta}_\theta$ is the left derivative of the least concave majorant of the cumulative sum diagram %
\begin{equation}\label{eq:cusum}
\left\{(0, 0), \big(1, Y_{(1, \theta)}\big), \ldots, \Big(k, \sum_{j=1}^k Y_{(j, \theta)} \Big), \ldots, \Big(n, \sum_{j=1}^n Y_{(j, \theta)} \Big)\right\}, 
\end{equation}
where $Y_{(k, \theta)}$ is the measurement corresponding to the sensor that is $k$'th closest to $\theta$; see for example~\cite[Theorem~1.2.1]{RWD88} or~\cite[Theorem~1.1]{BarlowEtAl72}. For ease of presentation, we assume that there are no ties in $\{|\theta- X_i|\}_{i=1}^n$. The case of ties can be easily handled by ``merging'' tied data points and considering a weighted least squares problem, see e.g., ~\cite{balabdaoui2016least, balabdaoui2019score, Patra16}. From~\eqref{eq:cusum}, the profile least squares estimator can be easily with a complexity of $O(n)$  via  the pool adjacent violators algorithm (PAVA); see~\cite[Section 1.2]{RWD88} and  \cite{Grotzinger84}.

For any $\theta\in \Theta$, we define the ``population'' version of $\tilde{\eta}_\theta$ as follows:
\begin{equation}\label{eq:eta_profile_pop}
\eta_{\theta}(t) := \E\big(\eta_0(|\theta_0-X|^2)\big| |\theta-X|^2=t\big)\qquad \text{ for all } t\in\R^+.
\end{equation}
The following lemma states a useful characterization for $\eta_{\theta}(t)$. It shows that for some fixed $\theta$, $\eta_\theta(\cdot)$ can be thought of as the minimizer of the population version squared error loss in~\eqref{eq:Q_def}.
\begin{lemma}\label{lem:eta_theta_def}
Suppose $\eta_0$ is a strictly decreasing function. Then, there exists $\delta_0>0$ such that for every $\theta\in B(\theta_0, \delta_0)$, $\eta_{\theta}$ uniquely minimizes $\eta \mapsto \E(| \eta(|\theta-X|^2)- \eta_0(|\theta_0-X|^2)|)$ over the class of nonincreasing functions $\M$.
\end{lemma}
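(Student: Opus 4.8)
The plan is to exploit the special structure: for $\theta$ near $\theta_0$ the map $x\mapsto |\theta-x|^2$ is, on the relevant level sets, a strictly monotone transformation of $x\mapsto |\theta_0-x|^2$ in a sense that lets us identify the conditional expectation in \eqref{eq:eta_profile_pop} with a genuinely monotone function of $t$, and then show that \emph{any} nonincreasing competitor $\eta$ pays a strictly positive $L^1$ penalty unless it agrees with $\eta_\theta$ almost everywhere under the law of $|\theta-X|^2$. First I would record the pointwise fact that $\eta_\theta(t)=\E(\eta_0(|\theta_0-X|^2)\mid |\theta-X|^2=t)$ is the $L^1$-projection of $\eta_0(|\theta_0-X|^2)$ onto the $\sigma$-field generated by $|\theta-X|^2$ \emph{in the sense of conditional expectation}, but since $L^1$-projection onto a $\sigma$-field is the conditional expectation (not the conditional median), I instead argue directly: for a fixed nonincreasing $\eta$, write $W:=|\theta-X|^2$ and $Z:=\eta_0(|\theta_0-X|^2)$, and decompose
\begin{equation}\label{eq:decomp}
\E\bigl(|\eta(W)-Z|\bigr)-\E\bigl(|\eta_\theta(W)-Z|\bigr)=\E\Bigl(\E\bigl(|\eta(W)-Z|-|\eta_\theta(W)-Z|\,\big|\,W\bigr)\Bigr).
\end{equation}
Conditionally on $W=t$, the inner expression is a comparison of the $L^1$ loss of the constant $\eta(t)$ against the $L^1$ loss of the constant $\eta_\theta(t)$ for predicting $Z$; this is minimized at \emph{any} conditional median of $Z$ given $W=t$, so to make $\eta_\theta$ optimal I must show $\eta_\theta(t)$ is a conditional median of $Z$ given $W=t$ and that the minimizer is essentially unique.

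The key observation making both claims true is that, by Lemma~\ref{thm:Ident} and its proof (with $\delta_0$ chosen so that $\theta\in B(\theta_0,\delta_0)$ keeps us in the identifiable regime), on each level set $\{x\in\rchi : |\theta-x|^2=t\}$ the quantity $|\theta_0-x|^2$ is not constant in general; rather the conditional law of $Z=\eta_0(|\theta_0-X|^2)$ given $W=t$ is a genuine (non-degenerate) distribution. However — and this is the crucial geometric point specific to the \emph{index being locally linear} as emphasized in the introduction — for $\theta$ close to $\theta_0$, one shows using assumption \ref{a1} (convex support, interior point, bounded density) and strict monotonicity of $\eta_0$ that the function $t\mapsto \eta_\theta(t)$ defined in \eqref{eq:eta_profile_pop} is itself strictly decreasing and continuous; I would prove this by differentiating under the integral (the $X$-density is bounded, $\eta_0\in C^1$ by \ref{eta_bound}) after a change of variables that aligns the level sets of $|\theta-\cdot|^2$ with a sweeping family of parallel hyperplanes shifted toward $\theta_0$, so that increasing $t$ moves the conditioning set monotonically \emph{away} from $\theta_0$ and hence decreases $\E(\eta_0(|\theta_0-X|^2)\mid\cdot)$ strictly. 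Granting that $\eta_\theta\in\M$ and is strictly decreasing, it lies in the feasible set, so it is a legitimate candidate.

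With $\eta_\theta$ feasible and the conditional-median interpretation in hand, I finish as follows. For uniqueness, suppose $\tilde\eta\in\M$ also minimizes. By \eqref{eq:decomp} and the fact that a constant minimizing $c\mapsto \E(|c-Z|\mid W=t)$ must be a conditional median, $\tilde\eta(t)$ must be a conditional median of $Z$ given $W=t$ for $P_W$-a.e.\ $t$. I then argue the conditional median is unique for a.e.\ $t$: the conditional law of $Z$ given $W=t$ has no atom at its median because $\eta_0$ is continuous and strictly decreasing and $X$ has a density (so $|\theta_0-X|^2$, hence $Z$, has a continuous conditional distribution given $W=t$ for a.e.\ $t$), which forces the set of medians to be a single point for a.e.\ $t$. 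Combined with the strict monotonicity and continuity of $\eta_\theta$ already established, and the fact that two nonincreasing functions agreeing $P_W$-a.e.\ must agree at every continuity point of either, we get $\tilde\eta=\eta_\theta$ on the support of $W=\{|\theta-x|^2:x\in\rchi\}$, which is the claimed uniqueness (up to the usual convention about values outside the support, handled as in the canonical extension discussion after \eqref{eq:M_def}).

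\textbf{Main obstacle.} The delicate step is establishing that $\eta_\theta$ is both strictly decreasing and equals the conditional median (not merely the conditional mean) of $Z$ given $W=t$; these are exactly the places where the \emph{nonlinear} index $|\theta-X|^2$ (as opposed to a linear single index) forces real work. I expect to handle it by the local change of variables that makes the level sets of $|\theta-\cdot|^2$ into translates of a fixed hypersurface, reducing the monotonicity of $\eta_\theta$ to monotonicity of an average of the $C^1$ function $\eta_0$ over a set that translates monotonically, and — for the median claim — by noting that the conditional distribution of $Z$ given $W=t$ inherits continuity from the density of $X$ and the continuity of $\eta_0$, so any median is a point of strict increase of the conditional c.d.f., giving both existence of a median equal to $\eta_\theta(t)$ (after identifying mean and median is \emph{not} automatic — instead one shows the unique minimizer of the conditional $L^1$ risk is the median and that this median, as a function of $t$, coincides with the right-continuous nonincreasing function $\eta_\theta$ by a separate argument using that $\eta_\theta$ is itself a valid nonincreasing function dominating/dominated appropriately) and its uniqueness. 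If identifying $\eta_\theta$ with the conditional median turns out to be false in general, the fallback is to redefine the population target as the conditional \emph{median} $m_\theta(t):=\mathrm{med}(Z\mid W=t)$, prove $m_\theta\in\M$ and strictly decreasing by the same change-of-variables argument, and note that $m_\theta=\eta_\theta$ under the symmetry/continuity afforded by \ref{a1}–\ref{eta_bound} in a neighborhood of $\theta_0$; this still delivers the lemma as stated because the lemma only asserts minimization over $\M$, and the minimizer is the conditional median which, under our assumptions near $\theta_0$, is well-defined, unique, and monotone.
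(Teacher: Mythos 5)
There is a genuine gap, and it originates in how you read the loss. The displayed criterion in the lemma is a typo for the population \emph{squared} error loss: the sentence introducing the lemma says explicitly that $\eta_\theta$ "can be thought of as the minimizer of the population version squared error loss in~\eqref{eq:Q_def}", and this is how the lemma is used later (see \eqref{eq:temp_123} and the computation \eqref{eq:pr_1}, which \emph{is} the paper's proof of this fact: conditioning on $W=|\theta-X|^2$ and completing the square gives $\Q(\eta,\theta)-\Q(\eta_\theta,\theta)=-\tfrac12\,d_\theta^2(\eta,\eta_\theta)$, so the conditional mean $\eta_\theta$ of \eqref{eq:eta_profile_pop} is the minimizer and is unique up to $P_{|\theta-X|^2}$-null sets, with no median anywhere in sight). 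By instead taking the $L^1$ loss literally, you are forced to prove that the conditional \emph{mean} $\eta_\theta(t)=\E(\eta_0(|\theta_0-X|^2)\mid |\theta-X|^2=t)$ coincides with a conditional \emph{median} of $\eta_0(|\theta_0-X|^2)$ given $|\theta-X|^2=t$. That is false in general (the conditional law of $\eta_0(|\theta_0-X|^2)$ on a sphere $\{|\theta-x|^2=t\}$ with $\theta\neq\theta_0$ has no reason to be symmetric), you yourself concede it is "not automatic", and your fallback — redefining the population target as the conditional median — proves a different statement, since $\eta_\theta$ is already pinned down as a conditional expectation by \eqref{eq:eta_profile_pop}. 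So the central identification on which your argument rests is unproved and generally untrue, and the proposal does not establish the lemma for the object it is actually about.

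Two secondary issues. First, the one genuinely nontrivial ingredient (with either loss) is feasibility: one must check that $t\mapsto\eta_\theta(t)$ is nonincreasing for $\theta\in B(\theta_0,\delta_0)$, which is where strict monotonicity of $\eta_0$ and the closeness of $\theta$ to $\theta_0$ enter; your sketch of this via "a change of variables that aligns the level sets of $|\theta-\cdot|^2$ with a sweeping family of parallel hyperplanes" does not work as stated — these level sets are spheres intersected with $\rchi$, they are not translates of one another, and increasing $t$ does not move the conditioning set monotonically away from $\theta_0$ pointwise (part of the sphere $\{|\theta-x|=\sqrt t\}$ moves toward $\theta_0$ as $t$ grows when $\theta\neq\theta_0$). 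A correct argument should bound the oscillation of $|\theta_0-x|^2$ over the level set $\{|\theta-x|^2=t\}$ by $O(|\theta-\theta_0|)$ and use uniform continuity plus strict decrease of $\eta_0$ to get monotonicity of the conditional mean for $|\theta-\theta_0|$ small. Second, once you have the squared-loss reading, uniqueness is immediate from the exact quadratic excess-risk identity and needs none of the atomlessness or continuity-point bookkeeping you develop for medians.
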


Observe that $\eta_{\theta_0} = \eta_0$. Thus,  if $\theta_0$ were known, then $\tilde{\eta}_{\theta_0}$ would be the estimator for $\eta_0$. In Theorem~\ref{thm:unif_eta}, we will study the asymptotic properties of $\tilde{\eta}_\theta$ as $\theta$ varies in a small neighborhood of $\theta_0$.  Below, we  use the profile least squares estimator to propose two  \textit{tuning parameter free} estimators for $\theta_0$. 

\paragraph{Least Squares Estimator (LSE).} The LSE for $\theta_0$ defined as
\begin{equation}\label{eq:LSE}
 \check{\theta} :=\argmin_{ \theta\, \in \Theta} \sum_{i=1}^{n} \left(Y_i - \tilde{\eta}_\theta(|\theta-X_i|^2)\right)^2,
\end{equation}
where the profile least squares estimator $\tilde{\eta}_\theta$ is defined via~\eqref{eq:eta_tilde}. Note that the above minimization problem is free of tuning parameters. However, unlike~\eqref{eq:eta_tilde}, the optimization problem in~\eqref{eq:LSE} is typically \textit{non-convex}. Recall that the cumulative sum diagram in~\eqref{eq:cusum} depends only on the ordering of $\{|\theta-X_i|\}_{i=1}^n$. Thus, for every $\theta$ in the interior of the parameter space, and $\beta$ in some small neighborhood of $\theta$, we have that 
\begin{equation}\label{eq:Jumps}
\sum_{i=1}^{n} \left(Y_i - \tilde{\eta}_\theta(|\theta-X_i|^2)\right)^2 =\sum_{i=1}^{n} \left(Y_i - \tilde{\eta}_\beta(|\beta-X_i|^2)\right)^2.
\end{equation}
Further, for every $t\in [0, 4T^2]$ the function $\theta\mapsto \tilde{\eta}_\theta(t)$ is piecewise constant, as changes in $\theta$ may lead to different ordering for $\{|\theta-X_i|\}_{i=1}^n$. As a consequence, $\theta\mapsto\sum_{i=1}^{n} (Y_i - \tilde{\eta}_\theta(|\theta-X_i|^2))^2$ is piecewise constant with multiple global minimizers. The results that follow hold true for any global minimizer $\check{\theta}$. Once we have the LSE for $\theta_0$, we define $\check{\eta}$, the LSE for $\eta_0$, as
\begin{equation}\label{eq:eta_check}
\check{\eta}:= \argmin_{\eta \in \M}\Q_n(\eta, \check\theta).
\end{equation}
We study the asymptotic behavior of LSE $(\check{\eta}, \check{\theta})$ in Theorem~\ref{thm:LSE_joint}. 
\begin{figure}[h!]
\centering
\includegraphics[width=.9\textwidth]{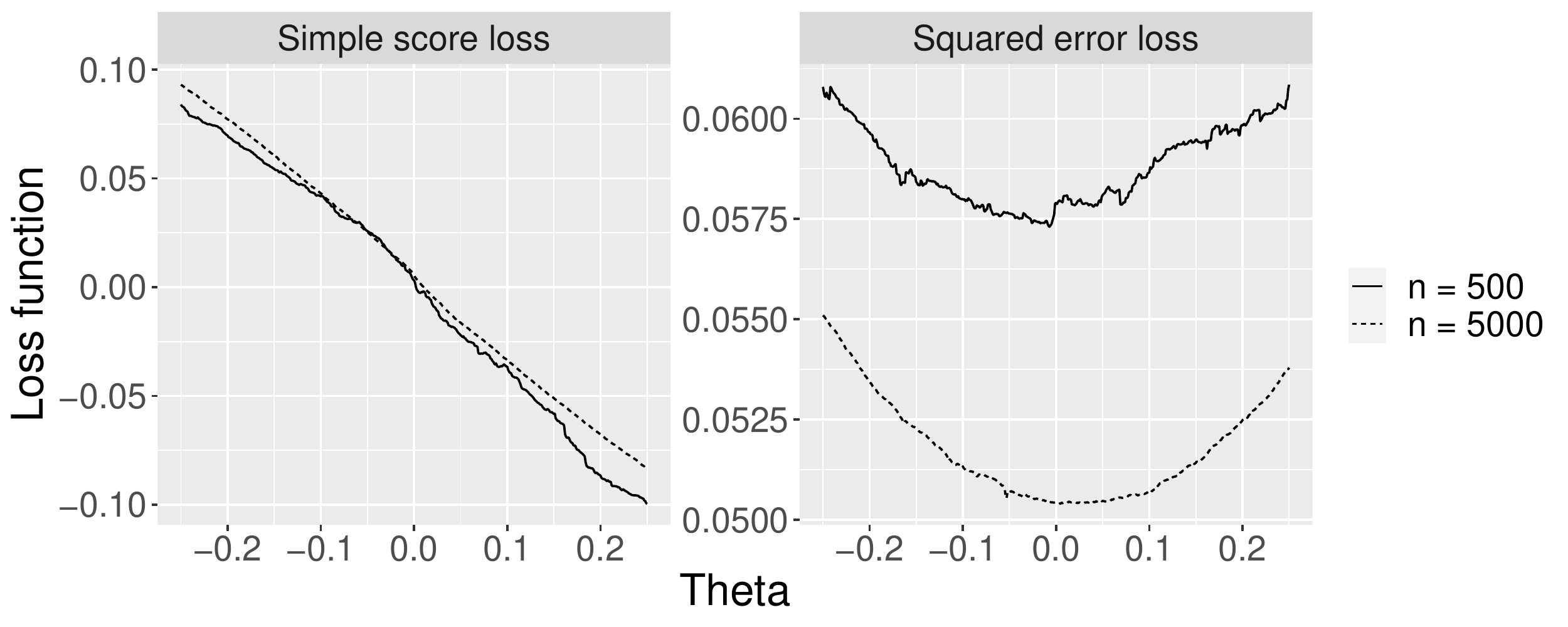}
  \caption[]{Plot of  $\theta \mapsto\frac{1}{n} \sum_{i=1}^{n} (Y_i - \tilde\eta_\theta(|\theta-X_i|^2))X_i$ (left panel) and $\theta \mapsto \frac{1}{n}\sum_{i=1}^{n} (Y_i - \tilde{\eta}_\theta(|\theta-X_i|^2))^2$ (right panel) for sample sizes $500$ (solid) and $5000$ (doted). In both cases, we have taken $\theta =0, $ $X\sim\text{Uniform}[-3, 3]$, $\epsilon\sim t_7/4$, and $Y= 1/(1+ .1 |X-\theta|^2)^2 + \epsilon$.}
  \label{fig:plot_of_loss}
\end{figure}

\paragraph{ Simple Score Estimator (SSCE).} To motivate this estimator, assume for the moment that $\tilde\eta_\theta$ is differentiable;
 then, $\check{\theta}$ can be defined as the solution to the following score equation:
 \begin{equation}\label{eq:fake_score_estim}
\sum_{i=1}^{n} \left(Y_i - \tilde\eta_{\theta}\big(|\theta-X_i|^2\big)\right)\tilde\eta_{\theta}'\big(|\theta-X_i|^2\big)(X_i-\theta)  = \textbf{0}_d,
\end{equation}
where $\textbf{0}_d\in\R^d$ is a vector comprising of zeros. The above score equation is equivalent to a semiparametric efficient score equation, and hence it is reasonable to expect that its solution would give rise to a semiparametrically efficient estimator for $\theta_0$. However, since $\tilde\eta_\theta$ is a piecewise constant function, $\tilde\eta_\theta'$ does not exist. In light of this, we propose a new estimator based on the following simple modification of the above efficient score equation~\cite{balabdaoui2019score, groeneboom2016current}. We define an SSCE by a zero of
\begin{equation}
\theta \mapsto n^{-1}\sum_{i=1}^{n} \left(Y_i - \tilde\eta_\theta\big(|\theta-X_i|^2\big)\right)(X_i-\theta).
\end{equation}
Further, by the property of the isotonic estimator $\tilde{\eta}_\theta$, we have that $\sum_{i=1}^n Y_i = \sum_{i=1}^n \tilde\eta_\theta\big(|\theta-X_i|^2\big)$. Thus, the SSCE is the zero of $\theta\mapsto \mathbb{M}_n(\theta)$, where 
 \begin{equation}\label{eq:score_estim}
\mathbb{M}_n(\theta):=\sum_{i=1}^{n} \left(Y_i - \tilde\eta_\theta\big(|\theta-X_i|^2\big)\right)X_i.
\end{equation}
Note that in the definition of $\mathbb{M}_n(\cdot), $ we have ignored the non-differentiability of $\tilde\eta_\theta$ and replaced $\tilde\eta_\theta'$ in~\eqref{eq:fake_score_estim} with $1$.  The motivation for~\eqref{eq:score_estim} is that in the absence of $\tilde\eta_\theta'$, $\mathbb{M}_n(\cdot)$ can be seen as a ``rough approximation" of the efficient score equation~\cite{balabdaoui2019score, groeneboom2016current}.  Further, 
$\phi(x, y)=  \left(y - \tilde\eta_\theta\big(|\theta-x|^2\big)\right)(x-\theta)$ is a valid influence function in the sense of~\cite[Chapter 1.2]{MR1915446}.  Another motivation for SSCE stems from observing that $\E(X [Y-\eta_{\theta_0} (|\theta_0-X|^2)])$, the ``population'' version of $\MM_n(\theta_0)$, is ${\bf 0}_d$. In Section~\ref{sub:asymptotic_analysis_of_the_sse}, we discuss assumptions (see assumptions~\ref{a4} and~\ref{assum:NonzeroEverywhere}) under which $\theta_0$ is the unique zero of $\theta\mapsto \E(X [Y-\eta_{\theta} (|\theta-X|^2)]).$

After obtaining the SSCE for $\theta_0$, define $\hat{\eta}$ (the SSCE for $\eta_0$) as
\begin{equation}\label{eq:eta_hat}
\hat{\eta}:= \argmin_{\eta \in \M}\Q_n(\eta, \hat\theta).
\end{equation}
Now recall that $\theta\mapsto \tilde{\eta}_\theta(t)$ is piecewise constant with discontinuities (for any $t\in [0, 4T^2]$). This and~\eqref{eq:Jumps} imply that $\theta\mapsto\MM_n(\theta)$ will have discontinuities and exact zeros in~\eqref{eq:score_estim} may not always exist. Instead, we define the SSCE $\hat \theta$ as a ``zero crossing'' of $\theta \mapsto \MM_n(\theta).$ The following definition is from~\cite{groeneboom2016current}.

\begin{defn}[Zero Crossing, \cite{groeneboom2016current}]
We say that $\beta^*$ is a zero crossing of a real-valued function $\beta\mapsto\zeta(\beta)$ on a set $\mathcal{A}$ if each open neighborhood of $\beta^*$ contains points $\beta_1, \beta_2 \in \mathcal{A}$ such that $\zeta(\beta_1)\zeta(\beta_2) \le 0$. We say that an $m$-dimensional function $\beta\mapsto \zeta (\beta) = (\zeta_1 (\beta), ... \zeta_m (\beta))'$ has a crossing of zero at a point $\beta^*$, if $\beta^*$ is a crossing of zero of each component $\beta\mapsto\zeta_j(\beta)$ for every $j\in [m].$
\end{defn}
We study the asymptotic behavior of SSCE $(\hat{\eta}, \hat{\theta})$ in Theorem~\ref{thm:SSE_consis} and find the asymptotic distribution of $\hat{\theta}$ in Theorem~\ref{thm:Asymp_norml}. Following the work of~\cite{balabdaoui2020profile} in a single index model, one can show that the SSCE is asymptotically equivalent to the following minimizer ${\theta}^\dagger:= \argmin_{\theta\in \Theta}\big|\mathbb{M}_n(\theta)\big|$. However, we do not pursue this extension here. 

\section{Asymptotic analysis of the estimators} 
\label{sec:asymptotic_analysis}

We start our analysis by establishing properties of the simple profile least squares estimator for $\tilde\eta_\theta$. Henceforth, we also require the following assumption on the distribution of $\epsilon.$

\begin{enumerate}[label=\bfseries (A\arabic*)]
\setcounter{enumi}{2}

\item The error $\epsilon$ in model~\eqref{eq:model_sen} has finite  $q$-th moment, i.e., $K_q\coloneqq \big[\E(|\epsilon|^q)\big]^{1/q}  < \infty$ where $q\ge 2$. Further, $\mathbb{E}(\epsilon|X) = 0, $ $P_X$ a.e.~and $\sigma^2(x)\coloneqq \E(\epsilon^2|X=x) \le \sigma^2 < \infty$ for all $x\in \rchi.$ \label{assum:err_mom}
  \end{enumerate}
 The above assumption on $\epsilon$ is fairly general and allows for heteroscedastic errors. Further, contrary to most existing work for similar models that require sub-Gaussian or sub-exponential errors (see, e.g.,~\cite{balabdaoui2016least,balabdaoui2019score,balabdaoui2020profile}), we allow the error distribution to have only finitely many moments. 
The following result, established in~Section~\ref{sec:proof_unif_eta}, shows that $\tilde\eta_\theta$ (defined in~\eqref{eq:eta_tilde}) converges to $\eta_\theta$ (defined in~\eqref{eq:eta_profile_pop}) uniformly in $\theta$ in a neighborhood of $\theta_0$.
\begin{theorem}\label{thm:unif_eta} Suppose assumptions~\ref{a1}--\ref{assum:err_mom} hold, then
  \begin{equation}\label{eq:eta_bound_unif}
  \sup_{\theta \in \Theta} \|\tilde{\eta}_\theta\|_\infty = O_p(n^{1/q}).
  \end{equation}
 Moreover, let $P_X$ denotes the distribution of $X$,  then there exists a fixed  $\delta_0>0$, such that
\begin{equation}\label{eq:unif_eta}
\sup_{\theta\in B(\theta_0, \delta_0)} \int \Big\{\tilde{\eta}_\theta(|\theta-x|^2) -\eta_\theta(|\theta-x|^2)\Big\}^2 dP_X(x) =O_p\big(n^{-2/3} n^ {2/q}\big).
\end{equation}
\end{theorem}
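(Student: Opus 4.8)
The plan is to treat the two displays separately, since the first is a crude sup-norm bound driven by the tails of $\epsilon$, while the second is the genuine $n^{-1/3}$-type rate for isotonic regression, made uniform in $\theta$. For~\eqref{eq:eta_bound_unif}, I would first note that by the characterization of $\tilde\eta_\theta$ as the left derivative of the least concave majorant of the cumulative sum diagram~\eqref{eq:cusum}, the max-min (switching) formula gives
\[
\tilde\eta_\theta(|\theta-X_{(k,\theta)}|^2) = \max_{i\le k}\min_{j\ge k}\frac{1}{j-i+1}\sum_{\ell=i}^{j} Y_{(\ell,\theta)},
\]
so $\|\tilde\eta_\theta\|_\infty$ is bounded by $\max_k |\tfrac1k\sum_{\ell\le k} Y_{(\ell,\theta)}|$ plus a symmetric tail term, hence by $\|\eta_0\|_\infty + \max_{1\le k\le n}\big|\frac1k\sum_{\ell=1}^k \epsilon_{(\ell,\theta)}\big|$ type quantities. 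Since the ordering of the $\epsilon_i$ induced by $\theta$ is just some permutation, one can bound $\sup_\theta$ of the relevant partial-sum maxima by $\max_{i\le n}|\epsilon_i|$ (up to constants), and $\max_{i\le n}|\epsilon_i| = O_p(n^{1/q})$ follows from assumption~\ref{assum:err_mom} and a union bound on $\P(|\epsilon_i| > t n^{1/q})\le K_q^q/(t^q n)$. This handles the first claim.

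For~\eqref{eq:unif_eta}, I would run the standard least-squares/basic-inequality argument for monotone regression, but carried out uniformly over $\theta\in B(\theta_0,\delta_0)$ and with a truncation to accommodate heavy tails. Fix $\theta$ in the ball. By Lemma~\ref{lem:eta_theta_def}, $\eta_\theta$ is the population minimizer over $\M$ of the relevant loss, and by~\ref{eta_bound} (together with smoothness of $x\mapsto|\theta-x|^2$ and boundedness of $\rchi$) $\eta_\theta$ is uniformly bounded and Lipschitz on $[0,4T^2]$, uniformly in $\theta$. Writing $Y_i = \eta_0(|\theta_0-X_i|^2) + \epsilon_i$ and using that $\tilde\eta_\theta$ minimizes $\Q_n(\cdot,\theta)$, the basic inequality yields
\[
\|\tilde\eta_\theta - \eta_\theta\|_{n,\theta}^2 \;\lesssim\; \Big|\frac1n\sum_{i=1}^n \epsilon_i\big(\tilde\eta_\theta(|\theta-X_i|^2) - \eta_\theta(|\theta-X_i|^2)\big)\Big| \;+\; \text{bias terms},
\]
where $\|\cdot\|_{n,\theta}$ is the empirical $L^2$ norm in the variable $|\theta-X_i|^2$, and the bias term measures how well $\eta_0(|\theta_0-\cdot|^2)$ is approximated in $\M$ through the index $|\theta-\cdot|^2$; this is controlled by $|\theta-\theta_0|$ and handled by shrinking $\delta_0$. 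The stochastic term is bounded by a weighted empirical process over the class of bounded monotone functions, whose $L^2$ bracketing entropy is $\log N_{[]}(\varepsilon,\M_B,L^2)\lesssim B/\varepsilon$; combining a peeling argument with a truncation of $\epsilon_i$ at level $n^{1/q}$ (the discarded part contributing $O_p(n^{-1/2}\cdot n^{1/q})$, absorbed into the stated rate since $n^{-2/3+2/q}$ dominates when $q\le 6$... actually one checks the bookkeeping gives exactly $n^{-2/3}n^{2/q}$) gives the modulus of continuity bound $\mathbb{E}\sup_{\|g\|\le\delta, g\in\M_B}|n^{-1/2}\sum \epsilon_i g| \lesssim n^{1/q}\delta^{1/2}$. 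Solving the resulting inequality $r_n^2 \lesssim n^{-1/2}n^{1/q} r_n^{1/2}$ yields $\|\tilde\eta_\theta - \eta_\theta\|_{n,\theta}^2 = O_p(n^{-2/3}n^{2/q})$. Finally I would pass from the empirical norm to the population norm $\int\{\cdot\}^2 dP_X$ using a uniform (in $\theta$ and over $\M_B$) Glivenko–Cantelli / ratio-type bound, and upgrade pointwise-in-$\theta$ control to $\sup_{\theta\in B(\theta_0,\delta_0)}$ either by a chaining argument in $\theta$ (exploiting that $\theta\mapsto|\theta-x|^2$ is Lipschitz, so the whole collection of classes indexed by $\theta$ has controlled entropy) or, more cheaply, by noting the derived bound depends on $\theta$ only through quantities already made uniform.

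The main obstacle is the interaction of three features at once: the heavy-tailed (only $q$ finite moments, possibly heteroscedastic and $X$-dependent) errors, which forces the truncation and degrades the rate by the factor $n^{2/q}$; the need for \emph{uniformity in $\theta$}, which means the empirical-process bounds must hold simultaneously over a whole family of monotone-regression problems whose design (the induced order statistics $|\theta-X_i|^2$) changes with $\theta$; and the fact that $\eta_\theta$ for $\theta\neq\theta_0$ is only an implicitly-defined conditional expectation rather than $\eta_0$ itself, so its regularity (boundedness, Lipschitzness, monotonicity) must be established from~\ref{a1}–\ref{eta_bound} before the entropy bounds can be applied. Getting the truncation level and the peeling to interact so that the final exponent is exactly $-2/3 + 2/q$ (and confirming the discarded-tail and bias contributions are genuinely lower order) is the delicate bookkeeping step.
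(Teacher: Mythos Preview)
Your overall architecture matches the paper: for~\eqref{eq:eta_bound_unif} the paper uses the even simpler fact $\min_i Y_i \le \tilde\eta_\theta(|\theta-X_j|^2)\le \max_i Y_i$ (Robertson--Wright--Dykstra, Theorem~1.3.4) in place of your max--min formula, then the same union/Markov bound on $\max_i|\epsilon_i|$. For~\eqref{eq:unif_eta} the paper also runs truncation at level $C_\epsilon\asymp n^{1/q}$, peeling in shells of $d_\theta(\eta,\eta_\theta)$, and the entropy bound $\log N_{[\,]}(\nu,\cdot)\lesssim K/\nu$ for monotone functions composed with $|\theta-\cdot|^2$; the difference is that it works directly in the \emph{population} metric $d_\theta$ via a van der Vaart--Wellner Theorem~3.2.5 argument, rather than bounding the empirical norm first and then transferring.

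There is, however, a genuine gap in your treatment of what you call the ``bias term''. Writing the basic inequality out, the cross term is
\[
\frac{1}{n}\sum_{i=1}^n\big(\tilde\eta_\theta(|\theta-X_i|^2)-\eta_\theta(|\theta-X_i|^2)\big)\big(\eta_0(|\theta_0-X_i|^2)-\eta_\theta(|\theta-X_i|^2)\big).
\]
You say this ``measures how well $\eta_0(|\theta_0-\cdot|^2)$ is approximated through the index $|\theta-\cdot|^2$'' and is ``controlled by $|\theta-\theta_0|$ and handled by shrinking $\delta_0$''. That is not correct: the factor $\eta_0\circ\theta_0-\eta_\theta\circ\theta$ is merely bounded, not small. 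What actually makes this term tractable is that its \emph{population mean is zero} for every fixed function of $|\theta-X|^2$, by the very definition of $\eta_\theta$ as the conditional expectation $\E[\eta_0(|\theta_0-X|^2)\mid |\theta-X|^2]$. Consequently the cross term is a centered empirical process and must be controlled by the same entropy machinery as the noise term, not discarded. The paper does exactly this: in the decomposition of $(\Q_n-\Q)(\eta,\theta)-(\Q_n-\Q)(\eta_\theta,\theta)$ it isolates three pieces---the $\epsilon$-term, the product term $(\eta_0\circ\theta_0-\eta_\theta\circ\theta)(\eta\circ\theta-\eta_\theta\circ\theta)$, and the squared term $(\eta\circ\theta-\eta_\theta\circ\theta)^2$---and bounds each via maximal inequalities (the squared term by symmetrization plus contraction). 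Once you fix this point, the rest of your sketch (truncation level $n^{1/q}$, peeling, solving for $r_n$, uniformity in $\theta$ via the joint entropy over $\{\eta\circ\theta:\eta\in\M^K,\theta\in B(\theta_0,r)\}$) goes through and lands on the stated rate.
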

The profiled estimator $\tilde{\eta}_\theta$ of $\eta_\theta$ plays a crucial role in the definition and analysis of  both the LSE and SSCE. The first part of Theorem~\ref{thm:unif_eta} shows that even though $\M$ is an unbounded class, $\|\tilde{\eta}_\theta\|_{\infty}$ is not too large.  Moreover, the above uniform convergence result helps us study the behavior of the criterion/loss function around $\theta_0$ for both the LSE and SSCE.

\subsection{Asymptotic analysis of the LSE} 
\label{sub:asymptotic_analysis_of_the_lse}

In this section, we compute upper bounds on  the rate of convergence of $\check{\eta}(|\check\theta-x|^2)$ and $\check{\theta},$ to ${\eta_0}(|\theta_0-x|^2)$ and $\theta_0$, respectively.  Since $\check\theta$ is a minimizer $ \theta \mapsto \Q_n(\tilde{\eta}_\theta, \theta)$  over $\rchi$ and $\tilde{\eta}_\theta \in \M$ for all $\theta \in \rchi$, the next step in characterizing the asymptotic behavior of $\check{\theta}$ is to calculate the {metric entropy} of the class of functions $\{\eta(|\theta-\cdot|^2): \eta\in\M, \theta\in \Theta\}$. However, by~\eqref{eq:eta_bound_unif}, we have that for large enough $n$, $P(\sup_{\theta \in B(\theta_0, \delta_0)} \|\tilde{\eta}_\theta\|_\infty > C n^{1/q}) \le \epsilon$ for some $C$ depending only on $\epsilon.$ 
 Thus, we will study the following class of functions,
\begin{equation}\label{eq:F_1}
\mathcal{F}_K := \big\{x\mapsto \eta(|\theta-x|^2): \eta\in\M, \theta\in \Theta, \|\eta\|_\infty \le K\big\}.
\end{equation}
Let $N_{[\, ]}(\varepsilon, \mathcal{F}_{K}, L_2(P_X))$ denote the $\varepsilon$-bracketing number of $\F_K$ in the $L_2(P_X)$ metric (see Section 2.1.1 of~\cite{VdVW96} for a formal definition).
The following lemma, proved in~Section~\ref{sec:ent_calc}, computes the bracketing entropy of $\F_K$.
\begin{lemma}\label{thm:entropy_FK}
Let $\varepsilon >0$ and $K>\varepsilon$. Then, there exists  a constant $A_1>0$ depending only on $d$, $\Theta$, and $\rchi$ such that $\log N_{[\, ]}(\varepsilon, \mathcal{F}_{K}, L_2(P_X)) \le {A_1 K}/{\varepsilon}.$
\end{lemma}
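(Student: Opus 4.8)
The plan is to build brackets for $\F_K$ by combining two ingredients: a covering of the parameter set $\Theta$ (a bounded subset of $\R^d$, hence finite-dimensional) and the classical bracketing-entropy bound for bounded monotone functions on an interval. First I would fix $\varepsilon>0$ and $K>\varepsilon$ and recall that the class $\M_K := \{\eta\in\M : \|\eta\|_\infty\le K\}$ of nonincreasing functions $[0,4T^2]\to[0,K]$ satisfies $\log N_{[\,]}(\delta,\M_K,L_2(Q)) \le A\, K/\delta$ for any probability measure $Q$, with $A$ an absolute constant times $\sqrt{4T^2}$ (this is the standard monotone-bracketing bound, see e.g. van der Vaart and Wellner, Theorem 2.7.5; the dependence on the domain length is why $\Theta,\rchi$ enter the constant). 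Using this with $Q = $ the law of $|\theta-X|^2$ would, however, create a $\theta$-dependent measure, so instead I would work with the pushforward uniformly: since $|\theta-x|^2\in[0,4T^2]$ for all $\theta\in\Theta$, $x\in\rchi$, a bracket $[\ell,u]$ for $\eta$ on $[0,4T^2]$ in the Lebesgue-$L_2$ sense transfers to a bracket $[\ell(|\theta-\cdot|^2),u(|\theta-\cdot|^2)]$ for $x\mapsto\eta(|\theta-x|^2)$, and because $X$ has a bounded density on the bounded set $\rchi$ (assumption \ref{a1}), the $L_2(P_X)$-size of the transferred bracket is controlled by a constant (depending on the density bound, $d$, $\rchi$) times its Lebesgue-$L_2$-size.

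The second step handles the dependence on $\theta$. For a fixed monotone $\eta$ with $\|\eta\|_\infty\le K$, the map $\theta\mapsto\eta(|\theta-x|^2)$ is not Lipschitz in general (since $\eta$ need only be nonincreasing, not smooth), so I cannot simply discretize $\Theta$ and use continuity in $\theta$ for a fixed $\eta$. The clean way around this is to discretize $\theta$ first and build the monotone brackets \emph{after} fixing the index map. Concretely: the function $x\mapsto|\theta-x|^2$ and $x\mapsto|\theta'-x|^2$ differ by $|\theta-x|^2-|\theta'-x|^2 = (\theta-\theta')^\top(\theta+\theta'-2x)$, which is bounded by $C_T|\theta-\theta'|$ uniformly in $x\in\rchi$. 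Thus if $\theta,\theta'$ are close, the arguments fed to $\eta$ are uniformly close; but a monotone $\eta$ can still jump. The resolution is to note that for a \emph{monotone} function the composition $\eta\circ(|\theta-\cdot|^2)$ ranges over the same set of values regardless of $\theta$, and to produce the bracket for the pair $(\eta,\theta)$ directly: cover $\Theta$ by $N_\Theta \asymp (\mathrm{diam}(\Theta)/\rho)^d$ balls of radius $\rho$, and for each center $\theta_j$ build $L_2(P_X)$-brackets for $\{x\mapsto\eta(|\theta_j-x|^2):\eta\in\M_K\}$ of size $\varepsilon/2$ using the monotone bound — this costs $\exp(A' K/\varepsilon)$ brackets per center. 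Then I must enlarge each bracket to absorb the perturbation from moving $\theta_j$ to a nearby $\theta$; since moving $\theta$ by $\rho$ shifts the argument of $\eta$ by at most $C_T\rho$, the enlargement of a single bracket endpoint in sup-norm is at most the oscillation of $\eta$ over an interval of length $C_T\rho$, and integrating, one controls the $L_2(P_X)$ enlargement — but again monotonicity of $\eta$ means this oscillation can be $K$ on a tiny set.

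The trick that makes this work (and, I expect, the main obstacle to state carefully) is that one does not bracket a fixed $\eta$ across $\theta$'s, but rather observes that replacing $\theta_j$ by $\theta$ and $\eta$ by its "shift" $\eta(\cdot + C_T\rho)$ and $\eta(\cdot - C_T\rho)$ gives valid monotone lower/upper envelopes that are \emph{themselves} in $\M_K$, so the total bracket count is still $N_\Theta\cdot\exp(A'K/\varepsilon)$ but with the monotone brackets taken at resolution $\varepsilon/4$ against an $L_2$ modulus-of-continuity term for $\M_K$; choosing $\rho$ polynomially small in $\varepsilon$ (say $\rho\asymp \varepsilon^{2}/K$) makes the shift contribution $o(\varepsilon)$, so $N_\Theta$ contributes only $d\log(K/\varepsilon)$ to the log-entropy, which is dominated by $A'K/\varepsilon$. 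Hence $\log N_{[\,]}(\varepsilon,\F_K,L_2(P_X)) \le A_1 K/\varepsilon$ with $A_1$ depending only on $d$, $\Theta$, $\rchi$ (through $T$ and the density bound). The delicate point I would spend the most care on is justifying that the monotone shift-envelopes genuinely bracket $x\mapsto\eta(|\theta-x|^2)$ for all $\theta$ within $\rho$ of $\theta_j$ simultaneously, i.e. that $\eta(|\theta_j-x|^2 + C_T\rho) \le \eta(|\theta-x|^2) \le \eta(|\theta_j-x|^2 - C_T\rho)$, which follows from monotonicity of $\eta$ together with the uniform bound $\big||\theta-x|^2-|\theta_j-x|^2\big|\le C_T\rho$; everything else is the standard monotone-bracketing estimate plus a Lebesgue-to-$P_X$ density comparison.
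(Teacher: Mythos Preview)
Your approach is correct and coincides with the paper's: the paper's proof simply records the Lipschitz property $\big||\theta_1-x|-|\theta_2-x|\big|\le|\theta_1-\theta_2|$ and then cites Lemma~K.1 of \cite{2017arXiv170800145K} (also Lemma~4.9 of \cite{balabdaoui2016least}), whose content is exactly your cover-$\Theta$-then-use-monotone-shift-envelopes construction. One minor caveat worth tightening: the Lebesgue-to-$L_2(P_X)$ comparison you invoke requires the density of $|\theta-X|^2$ under $P_X$ to be bounded, which holds for $d\ge 2$ but blows up near zero when $d=1$; the clean fix (and the reason the paper states Lipschitzness for $|\theta-x|$ rather than $|\theta-x|^2$) is to reparametrize via $\tilde\eta(s)=\eta(s^2)$, still nonincreasing with the same sup bound, and run your argument with the index $|\theta-x|$, whose pushforward density is bounded on $[0,2T]$ for every $d\ge 1$.
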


In Section~\ref{sec:proof_of_theorem_LSE_joint}, we use the above result to establish the following upper bounds on the rate of convergence for $x \mapsto \check{\eta}(|\check\theta-x|^2)$. To show that $\check{\theta}$ inherits the rate of convergence of the (joint) regression function, we will need the following assumption.
\todo[inline]{Where do we need A4 In theorem 3.2? Do we need twice differentiability? ANS: we don't. We would have needed it in the proof of $d_n$ but we do not need to constrain ourselves $\theta_0^\perp$.} 
\begin{enumerate}[label=\bfseries (A4$'$)]
\setcounter{enumi}{-1}
\item There exists an open set $\mathcal{A}\subset \{|\theta_0- x|^2 : x \in \rchi\}$, such that $t\mapsto \eta_0(t)$ is  continuously differentiable on $\mathcal{A}$, $\inf_{t\in \mathcal{A}}|\eta_0'(t)|>0$, and $\P(|\theta_0-X|^2\in \mathcal{A}) >0$. 
\label{a4prime}
\end{enumerate}
\begin{theorem}\label{thm:LSE_joint}
If assumptions~\ref{a1}--\ref{assum:err_mom} hold and $q\ge 5$, then we have
\begin{equation}\label{eq:conver_LSE_joint}
 \int \Big\{\check{\eta}(|\check\theta-x|^2) -\eta_0(|\theta_0-x|^2)\Big\}^2 dP_X(x) =O_p\big(n^{-2/3} n^{2/q} \big).
\end{equation}
Moreover, if assumption~\ref{a4prime} also holds, then 
\begin{equation}\label{eq:conver_LSE}
 |\check\theta-\theta_0| =O_p\big(n^{-1/3} n^{1/q} \big).
\end{equation}
\end{theorem}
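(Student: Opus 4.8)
The plan is to run a standard empirical-process rate argument (à la van de Vaart–Wellner, Theorem 3.2.5 / 3.4.1) in two stages: first get the $L_2(P_X)$ rate for the bundled regression function $x\mapsto\check\eta(|\check\theta-x|^2)$, then transfer that rate to $|\check\theta-\theta_0|$ using a curvature/margin condition supplied by assumption~\ref{a4prime}. For the first stage, note that $(\check\eta,\check\theta)$ minimizes $\eta\mapsto\Q_n(\eta,\theta)$ jointly, so by the basic inequality $\Q_n(\check\eta,\check\theta)\le\Q_n(\eta_0,\theta_0)$ we obtain, after expanding squares and writing $Y_i=\eta_0(|\theta_0-X_i|^2)+\epsilon_i$, the inequality
\begin{equation}
 \frac1n\sum_{i=1}^n\big(\check\eta(|\check\theta-X_i|^2)-\eta_0(|\theta_0-X_i|^2)\big)^2 \le \frac2n\sum_{i=1}^n\epsilon_i\big(\check\eta(|\check\theta-X_i|^2)-\eta_0(|\theta_0-X_i|^2)\big).
\end{equation}
The right-hand side is controlled by a maximal inequality over the class $\F_K$ of~\eqref{eq:F_1} (with $K\asymp n^{1/q}$, which holds with probability arbitrarily close to one by Theorem~\ref{thm:unif_eta}), using the bracketing entropy bound $\log N_{[\,]}(\varepsilon,\F_K,L_2(P_X))\lesssim K/\varepsilon$ from Lemma~\ref{thm:entropy_FK}. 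Since this entropy integral $\int_0^\delta\sqrt{K/\varepsilon}\,d\varepsilon\asymp\sqrt{K\delta}$ is finite, the Dudley-type bound gives a modulus of continuity that, balanced against the quadratic left-hand side, yields rate $r_n$ solving $r_n^{-2}\sqrt{K}\, r_n^{1/2}\asymp\sqrt n$, i.e. $r_n\asymp n^{-1/3}K^{1/3}=n^{-1/3}n^{1/(3q)}$ for the $L_2$ norm of the difference — squaring gives $n^{-2/3}n^{2/(3q)}$. The stated exponent $n^{-2/3}n^{2/q}$ in~\eqref{eq:conver_LSE_joint} is weaker, so the slack in the moment-dependent factor is not an issue; the heavy-tail bookkeeping (truncating $\epsilon_i$ at level $n^{1/q}$ and bounding the truncated-mean remainder, which needs $q\ge 5$ to make the remainder negligible) is exactly where the extra powers of $n^{1/q}$ enter. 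Establishing the maximal inequality for the empirical process indexed by $\F_K$ against heavy-tailed multipliers $\epsilon_i$ — rather than sub-Gaussian ones — is the main technical obstacle, and I expect it to mirror the Orlicz/truncation arguments already used to prove Theorem~\ref{thm:unif_eta}.

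For the second stage I would argue as follows. By the triangle inequality in $L_2(P_X)$,
\begin{equation}
 \Big(\int\big(\eta_0(|\check\theta-x|^2)-\eta_0(|\theta_0-x|^2)\big)^2\,dP_X(x)\Big)^{1/2} \le \|\check\eta(|\check\theta-\cdot|^2)-\eta_0(|\theta_0-\cdot|^2)\|_{P_X} + \|\eta_0(|\check\theta-\cdot|^2)-\check\eta(|\check\theta-\cdot|^2)\|_{P_X},
\end{equation}
and both terms on the right are $O_p(n^{-1/3}n^{1/q})$ — the first by~\eqref{eq:conver_LSE_joint}, the second because $\check\eta=\tilde\eta_{\check\theta}$ and Theorem~\ref{thm:unif_eta} (after checking $\check\theta\in B(\theta_0,\delta_0)$ with high probability, which follows from consistency of $\check\theta$, itself a consequence of the first-stage bound plus identifiability, Lemma~\ref{thm:Ident}). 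So the left-hand side is $O_p(n^{-1/3}n^{1/q})$. Now I invoke~\ref{a4prime}: on the open set $\mathcal A$ we have $|\eta_0'|\ge c>0$, so for $x$ with $|\theta_0-x|^2\in\mathcal A$ a first-order Taylor expansion gives $|\eta_0(|\check\theta-x|^2)-\eta_0(|\theta_0-x|^2)|\gtrsim \big| |\check\theta-x|^2-|\theta_0-x|^2\big| = \big|\langle\check\theta-\theta_0,\check\theta+\theta_0-2x\rangle\big|$, uniformly for $\check\theta$ in a small ball. Integrating the square of this lower bound over the set $\{x:|\theta_0-x|^2\in\mathcal A\}$, which has positive $P_X$-measure, produces a quadratic form in $\check\theta-\theta_0$; I must check this quadratic form is nondegenerate, i.e. that the matrix $\int_{\{|\theta_0-x|^2\in\mathcal A\}}(\theta_0-x)(\theta_0-x)^\top\,dP_X(x)$ (the leading term as $\check\theta\to\theta_0$) is positive definite, which holds because $\mathcal A$ is open with positive probability and $\rchi$ has an interior point (so the vectors $\theta_0-x$ span $\R^d$ on that set). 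This yields $|\check\theta-\theta_0|^2\lesssim \int(\eta_0(|\check\theta-x|^2)-\eta_0(|\theta_0-x|^2))^2\,dP_X(x)=O_p(n^{-2/3}n^{2/q})$, hence~\eqref{eq:conver_LSE}.

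Two points need care. First, the consistency of $\check\theta$ needed to localize into $B(\theta_0,\delta_0)$ and into the regime where the Taylor expansion and the nondegeneracy of the quadratic form are valid: I would get this by a separate, cruder argument — the first-stage $L_2$ bound forces $\|\eta_0(|\check\theta-\cdot|^2)-\eta_0(|\theta_0-\cdot|^2)\|_{P_X}\to 0$ in probability, and by identifiability (Lemma~\ref{thm:Ident}) together with compactness of $\Theta$ and continuity of $\theta\mapsto\eta_0(|\theta-\cdot|^2)$ in $L_2(P_X)$, this implies $\check\theta\to\theta_0$. Second, care is needed that the curvature lower bound is uniform: the constant in $|\eta_0(|\check\theta-x|^2)-\eta_0(|\theta_0-x|^2)|\gtrsim |\,|\check\theta-x|^2-|\theta_0-x|^2|$ must not degenerate as $\check\theta\to\theta_0$, which is fine since $\eta_0'$ is continuous and bounded away from $0$ on $\mathcal A$ and $|\check\theta-x|^2$ stays in a slightly enlarged compact subset of $\mathcal A$ for $x$ in a slightly shrunk subset and $\check\theta$ close to $\theta_0$; I would phrase this with a fixed compact $\mathcal A'\subset\mathcal A$ of positive $P_X$-measure. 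The genuinely hard step remains the first one — pushing the bracketing-entropy rate argument through with only a finite number of error moments — and everything after it is soft analysis.
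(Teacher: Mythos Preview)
Your argument for the joint rate~\eqref{eq:conver_LSE_joint} is essentially the paper's: restrict to $\F_K$ with $K\asymp n^{1/q}$ (justified by Theorem~\ref{thm:unif_eta}), use the bracketing bound of Lemma~\ref{thm:entropy_FK}, and run a rate-of-convergence theorem for least squares with heavy-tailed errors (the paper cites Theorem~2.1 of~\cite{2019arXiv190902088K} rather than spelling out the truncation; that is where the condition $q\ge 5$ enters). Your consistency step for $\check\theta$ via identifiability and compactness also matches.

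The second stage, however, has a real gap. In your triangle inequality
\[
\big\|\eta_0(|\check\theta-\cdot|^2)-\eta_0(|\theta_0-\cdot|^2)\big\|_{P_X}
\le \big\|\check\eta(|\check\theta-\cdot|^2)-\eta_0(|\theta_0-\cdot|^2)\big\|_{P_X}
+\big\|\eta_0(|\check\theta-\cdot|^2)-\check\eta(|\check\theta-\cdot|^2)\big\|_{P_X},
\]
you claim Theorem~\ref{thm:unif_eta} controls the second term. It does not: Theorem~\ref{thm:unif_eta} bounds $\|\tilde\eta_{\check\theta}(|\check\theta-\cdot|^2)-\eta_{\check\theta}(|\check\theta-\cdot|^2)\|_{P_X}$, where $\eta_{\check\theta}(t)=\E\big[\eta_0(|\theta_0-X|^2)\,\big|\,|\check\theta-X|^2=t\big]$ is the population \emph{profile} at $\check\theta$, not $\eta_0$ itself. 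Since $\eta_{\check\theta}\neq\eta_0$ for $\check\theta\neq\theta_0$, your second term is not controlled, and trying to bridge the difference $\|\eta_{\check\theta}(|\check\theta-\cdot|^2)-\eta_0(|\check\theta-\cdot|^2)\|$ by smoothness of $\theta\mapsto\eta_\theta$ is circular (it reintroduces a $|\check\theta-\theta_0|$ term on the right). This is precisely the bundled-parameter difficulty: the nonparametric and parametric errors are entangled and cannot be separated by a plain triangle inequality.

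The paper's route around this is a near-orthogonality argument. Write $g_1=\check\eta(|\check\theta-\cdot|^2)-\eta_0(|\check\theta-\cdot|^2)$ and $g_2=\eta_0(|\check\theta-\cdot|^2)-\eta_0(|\theta_0-\cdot|^2)$, restrict to a set $B\subset\rchi$ with $\{|\theta_0-x|^2:x\in B\}\subset\mathcal A$, and show by a conditioning/Cauchy--Schwarz computation that $(P_{X_B}[g_1g_2])^2\le c_n\,P_{X_B}g_1^2\,P_{X_B}g_2^2$ with $c_n<1$ asymptotically. Lemma~5.7 of~\cite{VANC} then yields $P_{X_B}g_2^2\le (1-\sqrt{c_n})^{-1}P_{X_B}(g_1+g_2)^2$, after which your Taylor/quadratic-form lower bound on $P_{X_B}g_2^2$ finishes the job. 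Verifying $c_n<1$ is the substantive work and uses~\ref{a4prime} together with the fact that $X_B$ is not degenerate conditional on $|\check\theta-X_B|$; your proposal does not contain this step.
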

 Assumption~\ref{a4prime} is inspired by~\cite[Assumption (A5)]{balabdaoui2016least} and allows us get the rate of convergence of $\check{\theta}$ from the rate of convergence of $\check{\eta}(|\check{\theta}-\cdot|^2)$.
 If~assumption~\ref{a4prime} doesn't hold then, the second part of proof of Theorem~\ref{thm:LSE_joint} can be easily modified to show that there exists a positive semi-definite matrix $I$ such that $ |I_{d\times d}(\check\theta-\theta_0)| =O_p\big(n^{-1/3} n^{1/q} \big).$ Assumption~\ref{a4prime} essentially says that $\eta_0$ must be smooth and non-constant on a region (in $\rchi$) of positive mass. 

The LSE discussed above is a natural  tuning parameter free estimator for $\theta_0$.  The sub-$\sqrt{n}$ upper bound on rate of convergence, however, raises the question whether the rate bound above is tight or the LSE actually converges at the much faster $\sqrt{n}$ rate.  To investigate this, we have done an extensive simulation study in Section~\ref{sec:simulation_study} of the paper. The simulations suggests that, indeed the above rate upper bound is not tight. However, it is still unknown whether  the LSE is $\sqrt{n}$ consistent, let alone its asymptotic distribution.  The rate of convergence of the LSE from the various simulation settings considered in Section~\ref{sec:simulation_study} is inconclusive, e.g., in Table~\ref{tab:location}, the  $\sqrt{n}\times \text{Var}(\check{\theta})$ appears to decrease, while Figures~\ref{fig:exp1 all} and~\ref{fig:linear_all} display an almost opposite trend. We believe that the difficulties in finding the true rate of convergence of the LSE for $\theta_0$ stems from the fact that $\tilde{\eta}_{\check{\theta}}$ is not  continuous and $\eta_0$ and $\theta_0$ are intertwined.     A similar phenomenon is observed for the LSE in the monotone single index models, where faster than $n^{1/3}$ rate (under sub-exponential errors) is conjectured and observed but not proved~\citep{tanaka2008semiparametric,balabdaoui2016least}. Lastly, the dependence of $q$ in the rate upper bound above in Theorem~\ref{thm:LSE_joint} can be improved by using the techniques developed in~\cite[Theorem 3.1 and Corollary 3.1]{2019arXiv190902088K} and~\cite[Theorems~4.1 and 7.3]{balabdaoui2016least}. However we do not pursue this marginal improvement in the current paper. Instead, we focus on studying a $\sqrt{n}$ consistent estimator with a tractable limit distribution that is practically useful, namely the SSCE.

 \todo[inline]{Need to add comments about $\sqrt{n}$ rate.}

\subsection{Asymptotic analysis of the SSCE} 
\label{sub:asymptotic_analysis_of_the_sse}
In this section, we study the asymptotics of the SSCE defined in~\eqref{eq:score_estim}. We will first prove its existence and consistency. Before stating the main results of this section, let us define $M: \rchi \to \R$, the population version of $\mathbb{M}_n(\cdot):$
\begin{equation}
M(\theta):=\E\Big(\left[Y - \eta_\theta\big(|\theta-X|^2\big)\right]({X-\theta})\Big).
\end{equation}
It is easy to see that $M(\theta_0)={\bf 0}_d$. However, it is not clear whether $\theta_0$ is the unique zero of $\theta\mapsto M(\theta)$ and/or ``well-separated'' in the sense of~\cite[Theorem 5.9]{vanderVaart98}. In Lemmas~\ref{lem:zeroCrossing} and~\ref{lem:deriv_M} (stated and proved in Section~\ref{sec:PropertyofM}), we will use the following  two assumptions to show that $\theta \mapsto M(\theta)$ has a unique zero at $\theta_0$, is differentiable at $\theta_0$, and $M'(\theta_0)$ is non-singular.


 \begin{enumerate}[label=\bfseries (A\arabic*)]
\setcounter{enumi}{3}

\item $\E\Big(\eta'_0(|\theta_0-X|^2) \text{Cov}\big(X\big||\theta_0-X|^2 \big)\Big)$ is a positive definite matrix.\footnote{In Lemma~\ref{lem:deriv_M}, we show that $M'(\theta_0)= \E(\eta'_0(|\theta_0-X|^2) \text{Cov}(X|\,|\theta_0-X|^2 ))$.}  \label{a4}

\item There exists a $\delta_0>0$ such that for all $\theta\in B(\theta_0, \delta_0)$ and $\theta\neq \theta_0$, the random  variable\label{assum:NonzeroEverywhere}
\[\text{Cov}\Big((\theta- \theta_0)^\top X, \eta_0\big(|\theta_0-X|^2\big)\big| |\theta-X|^2\Big)\neq 0 \quad \text{almost everywhere}.\]
\end{enumerate}
Note that~\ref{a4prime} is a sufficient condition for both~\ref{a4} and \ref{assum:NonzeroEverywhere}.  Assumptions~\ref{a4} and \ref{assum:NonzeroEverywhere} are similar to~\cite[Assumption A4]{2017arXiv170800145K},~\cite[Assumption A6]{balabdaoui2019score}, and~\cite[Theorem 4.1]{groeneboom2016current} among others.  The following result, proved in Section~\ref{sec:thm:SSE_consis}, shows that $\hat{\theta}$ exists with probability approaching one and is consistent. 


\begin{theorem}\label{thm:SSE_consis}
Suppose assumptions~\ref{a1}--\ref{assum:NonzeroEverywhere} hold, then the SSCE exists with probability approaching one and $\hat{\theta}$ is consistent for $\theta_0$, i.e., $\hat{\theta}\stackrel{P}{\to} \theta_0$.
\end{theorem}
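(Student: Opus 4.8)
The plan is to follow the standard M-estimation route for zero crossings developed in~\cite{groeneboom2016current}, in two stages: first establish a uniform approximation of $\mathbb{M}_n(\theta)$ by $M(\theta)$ over a neighborhood of $\theta_0$, and then combine this with a well-separatedness property of $\theta_0$ as the unique zero of $M$ (coming from Lemmas~\ref{lem:zeroCrossing} and~\ref{lem:deriv_M}, which use assumptions~\ref{a4} and~\ref{assum:NonzeroEverywhere}) to conclude existence of a zero crossing near $\theta_0$ and its consistency. For existence, the idea is that because $M$ changes sign (componentwise) in every neighborhood of $\theta_0$ by non-degeneracy of $M'(\theta_0)$, and $\mathbb{M}_n$ is uniformly close to $M$, the function $\mathbb{M}_n$ must also exhibit a zero crossing in a shrinking neighborhood of $\theta_0$ with probability tending to one; the piecewise-constant nature of $\theta\mapsto\mathbb{M}_n(\theta)$ is precisely why we phrase this via the zero-crossing definition rather than an exact root.

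First I would decompose
\[
\mathbb{M}_n(\theta) - M(\theta)
= \frac{1}{n}\sum_{i=1}^n \Big(Y_i - \eta_\theta(|\theta-X_i|^2)\Big)X_i - M(\theta)
\; + \; \frac{1}{n}\sum_{i=1}^n \Big(\eta_\theta(|\theta-X_i|^2) - \tilde\eta_\theta(|\theta-X_i|^2)\Big)X_i.
\]
The first bracketed term is a centered empirical process; since $|X_i|\le T$ by~\ref{a1} and $\eta_\theta$ ranges over a fixed (deterministic) function for each $\theta$, a uniform law of large numbers over $\theta\in B(\theta_0,\delta_0)$ handles it — one needs only continuity of $\theta\mapsto\eta_\theta$ in an appropriate sense plus an envelope with a finite moment, which follows from~\ref{assum:err_mom} (finite second moment suffices here) together with boundedness of $\eta_0$ and $X$. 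The second term is controlled by Theorem~\ref{thm:unif_eta}: by Cauchy–Schwarz and $|X_i|\le T$,
\[
\Big| \frac1n\sum_{i=1}^n \big(\eta_\theta - \tilde\eta_\theta\big)(|\theta-X_i|^2)X_i \Big|
\le T \left( \frac1n\sum_{i=1}^n \big(\tilde\eta_\theta(|\theta-X_i|^2) - \eta_\theta(|\theta-X_i|^2)\big)^2 \right)^{1/2},
\]
and the empirical $L_2$ average can be replaced by its $P_X$-counterpart (up to a further uniform-convergence argument, again using the bracketing entropy bound of Lemma~\ref{thm:entropy_FK}), which is $O_p(n^{-1/3}n^{1/q})=o_p(1)$ uniformly over $\theta\in B(\theta_0,\delta_0)$ by~\eqref{eq:unif_eta}. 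Hence $\sup_{\theta\in B(\theta_0,\delta_0)}|\mathbb{M}_n(\theta) - M(\theta)| \stackrel{P}{\to} 0$.

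Next I would invoke the properties of $M$ established in Section~\ref{sec:PropertyofM}: by Lemma~\ref{lem:zeroCrossing}, $\theta_0$ is the unique zero of $M$ on $B(\theta_0,\delta_0)$, and by Lemma~\ref{lem:deriv_M}, $M$ is differentiable at $\theta_0$ with non-singular derivative $M'(\theta_0) = \E(\eta_0'(|\theta_0-X|^2)\,\text{Cov}(X \mid |\theta_0-X|^2))$ (assumption~\ref{a4} makes this negative definite, up to sign). Non-singularity gives the quantitative separation $|M(\theta)| \ge c|\theta-\theta_0|$ for $\theta$ near $\theta_0$, and more importantly it shows that for each small $\rho>0$ the sphere $\{|\theta-\theta_0|=\rho\}$ contains, for each coordinate $j$, points where $M_j$ is strictly positive and points where it is strictly negative. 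Combining with the uniform approximation, on an event of probability $\to 1$ the same sign changes occur for $(\mathbb{M}_n)_j$ on that sphere, so by a continuity/connectedness (or topological degree) argument $\mathbb{M}_n$ has a zero crossing inside the ball of radius $\rho$; letting $\rho\downarrow 0$ along a sequence $\rho_n$ gives both existence with probability approaching one and $\hat\theta\stackrel{P}{\to}\theta_0$. I expect the main obstacle to be the first step — making the empirical-process/uniform-in-$\theta$ control of $\mathbb{M}_n - M$ rigorous under only a finite low-order moment on $\epsilon$, in particular handling the interplay between the random function $\tilde\eta_\theta$ and the factor $X_i$, and transferring the empirical $L_2$ bound of Theorem~\ref{thm:unif_eta} into the bound on $\mathbb{M}_n-M$ uniformly over the neighborhood; the zero-crossing/topological argument at the end is comparatively routine once the separation of $\theta_0$ and the uniform convergence are in hand.
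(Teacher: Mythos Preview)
Your overall strategy---uniform convergence of $\mathbb{M}_n$ to $M$ on a neighborhood of $\theta_0$, combined with the properties of $M$ from Lemmas~\ref{lem:zeroCrossing} and~\ref{lem:deriv_M}---matches the paper's. Your decomposition of $\mathbb{M}_n(\theta)-M(\theta)$ is essentially the content of the paper's Lemma~\ref{lem:Approx_M}, and the paper carries it out in the same spirit (splitting off the deterministic piece involving $\eta_\theta$ and controlling the random piece involving $\tilde\eta_\theta-\eta_\theta$ via Theorem~\ref{thm:unif_eta} and bracketing-entropy bounds).

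Where your proposal has a genuine gap is the existence step, which you describe as ``comparatively routine.'' It is not. The map $\theta\mapsto\mathbb{M}_n(\theta)$ is piecewise constant and discontinuous, so standard topological-degree or Poincar\'e--Miranda arguments do not apply directly. Your observation that each coordinate $M_j$ changes sign on the sphere $\{|\theta-\theta_0|=\rho\}$ transfers to $(\mathbb{M}_n)_j$ by uniform convergence, but this only tells you that \emph{each} coordinate has a zero crossing \emph{somewhere} on the ball; it does not produce a \emph{single} point $\theta^*$ at which all $d$ coordinates have a zero crossing simultaneously, which is what the definition requires. For $d\ge 2$ this is a real issue and not a connectedness triviality.

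The paper handles exactly this difficulty by first smoothing $\mathbb{M}_n$ with a compactly supported kernel $K_h$ to obtain a continuous function $\widetilde{\mathbb{M}}_{n,h}$, then using the non-singularity of $M'(\theta_0)$ together with Brouwer's fixed point theorem (after the reparameterization $\gamma=M'(\theta_0)\theta$) to produce an exact zero $\theta_{n,h}$ of $\widetilde{\mathbb{M}}_{n,h}$ in a small ball, and finally sending $h\downarrow 0$ along a subsequence and arguing by contradiction (using that $\mathbb{M}_n$ takes only finitely many values) that any accumulation point of $\{\theta_{n,h}\}$ is a zero crossing of $\mathbb{M}_n$ in the sense of the definition. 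You have also inverted the difficulty assessment: the uniform approximation is the more standard empirical-process step, while the existence of a zero crossing for a discontinuous multivariate map is the part that needs the extra smoothing-and-Brouwer machinery.
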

If assumption~\ref{a4} does not hold, then just as in the case of Theorem~\ref{thm:LSE_joint}, we can show that $A^\top  (\hat \theta -\theta_0 ) =o_p(1)$, where $A:= \E(\eta'_0(|\theta_0-X|^2) \mathrm{Cov}(X||\theta_0-X|^2 ))$. To prove the asymptotic normality of $\hat \theta$, the SSCE, we will require the following smoothness assumption on the conditional expectation of $X$ given $|\theta_0-X|$.
\begin{enumerate}[label=\bfseries (A\arabic*)]
\setcounter{enumi}{5}
  \item The function  $ u\mapsto \E[X |\, |\theta -  X|^2=u]$ is twice continuously differentiable, except possibly at a finite number of points, and there exists a finite constant  $\bar{M}>0$ such that for every $\theta_1, \theta_2 \in \Theta$, \label{ContConditional}
\begin{equation}\label{eq:lip_h_beta}
\sup_{u \in [0, 4T^2]}\Big|\E\big[X \big|\, |\theta_1 -  X|^2=u\big] -\E\big[X \big|\, |\theta_2-  X|^2=u\big]\Big| \le \bar{M} |\theta_1-\theta_2|.
\end{equation}
\end{enumerate}
The assumption~\ref{ContConditional} is standard and widely used for semiparametric regression models of similar nature. Assumption~\ref{ContConditional} is similar to those in~\cite[Theorem 3.2]{VANC},~\cite[Assumption A5]{groeneboom2016current},~\cite[Assumption A5]{balabdaoui2019score},~and~\cite[Assumption B3]{2017arXiv170800145K}; also see~\cite[Assumption G2 (ii)]{song2014semiparametric}. The above papers, discuss many distributions of $X$ such that~\ref{ContConditional} holds. 

Based on the discussion preceding~\eqref{eq:score_estim} in Section~\ref{sec:Ident}, it is intuitively apparent that $\hat{\theta}$ is not semiparametrically efficient as~\eqref{eq:fake_score_estim} is not the efficient score equation. The following result (proved in~Section~\ref{sub:asymptotic_normality_of_}) shows that $\hat{\theta}$ is nonetheless asymptotically normal and finds its asymptotic distribution.
\begin{theorem}\label{thm:Asymp_norml}
Suppose assumptions~\ref{a1}--\ref{ContConditional} hold and $t\mapsto \eta_0(t)$ is strictly decreasing. Furthermore, suppose $q\ge 6$\footnote{Using Theorem 3.1 and Corollary 3.1 of~\cite{2019arXiv190902088K} and techniques used in the proof of Theorems~4.1 and 7.3 of~\cite{balabdaoui2016least} one can improve the assumptions that $q\ge 6$. However we do not pursue this marginal improvement in the current paper.} and let  
\[A:=\E\Big(\eta'_0(|\theta_0-X|^2) \text{Cov}\big(X\big||\theta_0-X|^2 \big)\Big)\] 
and \[\Sigma:= \E\Big[\sigma^2(X) \big(X -\E(X |\, |\theta_0 -  X|^2\big)\big(X^\top -\E(X^\top |\, |\theta_0 -  X|^2\big)\Big].\]
Then
\begin{equation}\label{eq:limit}
\sqrt{n}(\hat\theta-\theta_0) \stackrel{D}{\to} N(0, A^{-1} \Sigma A^{-1}).
\end{equation}
\end{theorem}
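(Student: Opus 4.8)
The plan is to establish asymptotic normality of $\hat\theta$ by the standard $Z$-estimator route adapted to the ``zero crossing'' setting, following the architecture of~\cite{groeneboom2016current} and~\cite{balabdaoui2019score}. First I would expand $\MM_n(\hat\theta)$ around $\theta_0$. Since $\hat\theta$ is a zero crossing of $\theta\mapsto\MM_n(\theta)$ and (by Theorem~\ref{thm:SSE_consis}) $\hat\theta\stackrel{P}{\to}\theta_0$, the goal is to show
\[
\sqrt{n}\,\MM_n(\hat\theta) = \sqrt{n}\,\MM_n(\theta_0) + \sqrt{n}\,A(\hat\theta-\theta_0) + o_p\big(1 + \sqrt{n}|\hat\theta-\theta_0|\big),
\]
together with $\sqrt{n}\,\MM_n(\hat\theta)=o_p(1)$ from the zero-crossing property and the smallness of jumps of $\MM_n$. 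Rearranging and inverting $A$ (nonsingular by~\ref{a4}) then gives $\sqrt n(\hat\theta-\theta_0) = -A^{-1}\sqrt n\,\MM_n(\theta_0) + o_p(1)$, and a CLT for $\sqrt n\,\MM_n(\theta_0)$ finishes the argument.

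The decomposition I would use splits $\MM_n(\theta) - M(\theta)$ into an empirical-process term and a bias term coming from replacing $\tilde\eta_\theta$ by $\eta_\theta$. Concretely, write
\[
\MM_n(\theta) = \mathbb{P}_n\big[(Y-\eta_\theta(|\theta-X|^2))X\big] + \mathbb{P}_n\big[(\eta_\theta(|\theta-X|^2)-\tilde\eta_\theta(|\theta-X|^2))X\big].
\]
For the first term I would use Theorem~\ref{thm:unif_eta} (uniform $L_2$ rate $O_p(n^{-1/3}n^{1/q})$ for $\tilde\eta_\theta-\eta_\theta$) plus the bracketing entropy bound of Lemma~\ref{thm:entropy_FK} to show the empirical process $\sqrt n(\mathbb{P}_n-P)[\cdot]$ is stochastically equicontinuous over a shrinking neighborhood of $\theta_0$, so that it can be replaced by its value at $\theta_0$ up to $o_p(1)$; here finiteness of the sixth moment ($q\ge6$) is what makes the product $\eps\cdot X$ and the truncation arguments work, mirroring~\cite{balabdaoui2019score}. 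For the drift I would Taylor-expand $\theta\mapsto M(\theta)$ at $\theta_0$ using Lemma~\ref{lem:deriv_M}, which gives $M'(\theta_0)=A$. The genuinely delicate piece is the second (bias) term $\mathbb{P}_n[(\eta_\theta-\tilde\eta_\theta)(|\theta-X|^2)X]$: one must show it is $o_p(n^{-1/2})$ uniformly over $|\theta-\theta_0|\lesssim n^{-1/2}$ (or at least negligible after the rearrangement). This is exactly the place where the ``locally linear index'' structure enters — the reduction to a one-dimensional isotonic problem via the ordering of $\{|\theta-X_i|\}$ lets one exploit the characterization of $\tilde\eta_\theta$ as a slope of a least concave majorant, and the usual trick (as in~\cite{groeneboom2016current}) of integrating against the piecewise-constant $\tilde\eta_\theta$ and using that $\sum_i(Y_i-\tilde\eta_\theta(|\theta-X_i|^2))$ vanishes on the blocks of the isotonic fit converts this into something controllable via the $L_2$ rate of $\tilde\eta_\theta$; the Lipschitz smoothness of $u\mapsto\E[X\mid |\theta-X|^2=u]$ from~\ref{ContConditional} is what lets one replace $X$ inside the block sums by a smooth function of the index, killing the leading bias.

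Establishing $\sqrt n\,\MM_n(\hat\theta)=o_p(1)$ requires a separate lemma bounding the maximal jump size of $\theta\mapsto\MM_n(\theta)$: since $\MM_n$ changes only when the ordering of $\{|\theta-X_i|\}$ changes, and each such swap perturbs the cumulative-sum diagram — and hence $\tilde\eta_\theta$ — by a controlled amount, the jumps are $o_p(n^{-1/2})$, so a zero crossing yields $|\MM_n(\hat\theta)|=o_p(n^{-1/2})$. Finally, $\sqrt n\,\MM_n(\theta_0)=\sqrt n\,\mathbb{P}_n[(Y-\eta_0(|\theta_0-X|^2))X] + o_p(1)$ after handling the $\tilde\eta_{\theta_0}-\eta_0$ bias, and since $Y-\eta_0(|\theta_0-X|^2)=\eps$ with $\E(\eps\mid X)=0$, the summands $\eps_i X_i$ are i.i.d.\ mean zero; but one must center $X$ by $\E[X\mid |\theta_0-X|^2]$ because the bias term contributes exactly the projection of $X$ onto functions of the index — this is why the asymptotic variance is $\Sigma=\E[\sigma^2(X)(X-\E[X\mid|\theta_0-X|^2])(X-\E[X\mid|\theta_0-X|^2])^\top]$ rather than $\E[\sigma^2(X)XX^\top]$. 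The ordinary multivariate Lindeberg CLT (valid since $\E(\eps^2\mid X)\le\sigma^2<\infty$ and $X$ is bounded) then gives $\sqrt n\,\MM_n(\theta_0)\stackrel{D}{\to}N(0,\Sigma)$, and Slutsky plus $A^{-1}$ delivers~\eqref{eq:limit}. I expect the bias term for $\tilde\eta_\theta$ near $\theta_0$ to be the main obstacle, since it demands uniform (in $\theta$) control of an isotonic estimator whose design points themselves move with $\theta$ — this is the principal new technical difficulty relative to the single index model, where the index is globally linear.
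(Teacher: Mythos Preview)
Your proposal is essentially correct and follows the same architecture as the paper's proof: the decomposition into a drift term (Taylor expansion of $M$ via Lemma~\ref{lem:deriv_M} giving $A=M'(\theta_0)$), an empirical-process CLT term with the crucial centering $X-\E[X\mid|\theta_0-X|^2]$, and a bias term from $\tilde\eta_\theta-\eta_\theta$ controlled through the isotonic block structure together with assumption~\ref{ContConditional}. You also correctly identify why $\Sigma$ involves the centered $X$ rather than $XX^\top$, and why $q\ge 6$ enters.

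Two presentational differences are worth noting. First, the paper does not split $Y-\tilde\eta_{\hat\theta}$ into $(Y-\eta_{\hat\theta})+(\eta_{\hat\theta}-\tilde\eta_{\hat\theta})$ at the outset; instead it keeps $Y-\tilde\eta_{\hat\theta}$ intact and decomposes the \emph{covariate} factor as $X-\hat\theta = \big(X-\hat\theta-h_{\hat\theta}\big) + \big(h_{\hat\theta}-\bar h_{n,\hat\theta}\big) + \bar h_{n,\hat\theta}$, where $\bar h_{n,\hat\theta}$ is an explicitly constructed piecewise-constant approximation to $h_{\hat\theta}$ on the blocks of $\tilde\eta_{\hat\theta}$ (see~\eqref{eq:hbar_def}). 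The term with $\bar h_{n,\hat\theta}$ vanishes identically by the isotonic characterization, and the key inequality $|h_\theta-\bar h_{n,\theta}|\lesssim |\eta_\theta-\tilde\eta_\theta|$ (Lemma~\ref{lem:H_bar_diff}, which needs $\eta_0$ strictly decreasing) converts the middle piece into something of order $\|\tilde\eta_\theta-\eta_\theta\|^2=O_p(n^{-2/3}n^{2/q})=o_p(n^{-1/2})$ for $q\ge6$. Your description (``replace $X$ inside the block sums by a smooth function of the index'') is the right intuition, but the object on which the block argument acts is $Y-\tilde\eta_\theta$, not $\eta_\theta-\tilde\eta_\theta$; trying to run the block identity on the latter directly would reintroduce $Y$ anyway. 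Second, the paper sidesteps your proposed ``maximal jump size'' lemma entirely by simply assuming $\MM_n(\hat\theta)=0$ (``to avoid messy and technical details''); your route via jump control is more honest about the zero-crossing definition but is not what is done here.
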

\begin{remark}[Efficient estimation of $\theta_0$]\label{rem:Efficient_est}
As discussed above, the SSCE is not semiparametrically efficient. We can improve upon it by considering the following estimator:
\begin{equation}\label{eq:ESE}
\tilde{\theta}_h:= \argmin_{\theta\in \Theta} \widetilde{\mathbb{M}}_n(\theta)\quad \text{where} \quad \widetilde{\mathbb{M}}_n(\theta) = \Big\|\sum_{i=1}^{n} \left(Y_i - \tilde\eta_{\theta}\big(|\theta-X_i|^2\big)\right)\tilde\eta_{\theta, h}'\big(|\theta-X_i|^2\big)X_i \Big\|,
\end{equation}
where for every $u\in [0, 4T^2]$ and $h>0$, we define
\begin{equation}\label{eq:eta_tilde_der}
\tilde\eta_{\theta, h}' (u) := \frac{1}{h}\int_{0}^{4T^2} K\left(\frac{u-x}{h} \right)d\tilde{\eta}_{\theta}(x),
\end{equation}
 where $x\to K(x)$ is a twice differentiable kernel with support $[0, 1]$. Using the techniques developed in this paper and~\cite{balabdaoui2019score}, one can show that if $h\asymp n^{-1/7}$, then $\tilde{\theta}_h$ is an efficient estimator for $\theta_0$ under some additional smoothness assumptions on $\eta_0$. However, we do not consider this estimator any further, since it involves a tuning parameter and the finite sample performance of $\tilde{\theta}_h$ can depend heavily on the choice of the bandwidth $h$. Finally, it is important to note that $\tilde{\theta}_h$ will be efficient \textit{only} when the errors are homoscedastic. In case of heteroscedastic errors, it may be the case that the SSCE has lower asymptotic variance than the above estimator. In fact, in a closely related model~\cite{balabdaoui2020profile} give an example where the efficient estimator (under homoscedastic error) has worse finite sample (and asymptotic) variance than a non-efficient estimator.  
\end{remark}

\section{Performance evaluation} 
\label{sec:simulation_study}


To investigate the performance of the LSE and SSCE, we carry out several simulation experiments. The \texttt{R} codes developed to implement these estimators and the scripts replicating the numerical results are available at \if1\blind{~\url{http://stat.ufl.edu/~rohitpatra/}.}\fi\if0\blind{the author's website. }\fi  We consider i.i.d. observations from
\begin{equation}\label{eq:simul_model}
Y = \eta_0(| \mathbf{0}_d-X|^2) + \epsilon\footnote{Not that here we have choosen $\theta_0= \mathbf{0}_d$ without loss of generality.},
\end{equation}
where the distribution of the covariates,  the attenuation function, and distribution of the errors  vary across a wide range of options. In the following subsections, we study the behavior of $\check{\theta}_1$ and $\hat{\theta}_1$, where $\check{\theta}:= (\check{\theta}_1,\ldots,\check{\theta}_d)$ and $\hat{\theta}:= (\hat{\theta}_1,\ldots,\hat{\theta}_d).$ However, before undertaking a comprehensive comparison across different settings, we focus on the potential
statistical inefficiency of the LSE, shown in the next example.

\subsection{A simple example showcasing the inefficiency of the LSE} \label{sub:_a_simple_exam} 

We consider the following setup for the model~\eqref{eq:simul_model}:
\begin{equation}\label{eq:simple_example}
\eta_0(t)= 1/(1+0.1 t), \qquad X \sim \text{Uniform}[-3, 3]^2, \qquad and \qquad \epsilon| X\sim \text{Normal}\big( 0, (0.1)^2\big).
\end{equation}
In Table~\ref{tab:location}, we list the sample variance of (centered and scaled) LSE and SSCE for the above setting. It can be seen that the variance of $\sqrt{n}(\hat{\theta}_1-\theta_{0,1})$ stabilizes and converges to its asymptotic limit for even relatively small sample size ($\geq 500$). On the other hand, the variance of $\sqrt{n}(\check{\theta}_1- \theta_{0, 1})$ appears to grow with $n$.  This finding together with similar results presented in the sequel suggest that the LSE might not be $\sqrt{n}$ consistent. Note that analogous inconclusive behavior is observed for the LSE estimator in the closely related monotone single index models in~\cite{tanaka2008semiparametric, balabdaoui2016least}.\footnote{Under the stronger assumption of convexity, ~\cite{2017arXiv170800145K} show that a minor variant of the LSE is not only $\sqrt{n}$ consistent, but also \textit{semiparametrically efficient}.} It is remarkable, though, that the estimated sample variance of SSCE is significantly lower than that of LSE (even under homoscedastic error). Thus, even though it is unknown whether the LSE is $\sqrt{n}$ consistent, it is safe to conclude that the LSE is not efficient for the estimation of 
$\theta_0$.

\begin{table}[!ht]
      \caption{\label{tab:location} Scaled sample variance for the LSE ($\check{\theta}_1$) and SSCE ($\hat{\theta}_1$) for $\theta_{0, 1}$ over $500$ replications as the sample size grows.}
      \centering
  \begin{tabular}{ccccccc}

  Sample size& $500$ & $1000$ &$3000$ &$5000$ &$10000$ & $15000$\\
  \midrule
    $n\times\text{var}(\check{\theta}_{1}) $&$1.43 $&$ 1.50$  &$ 1.57$ &$ 1.46 $&$ 1.75 $& $1.82$\\    $n\times\text{var}(\hat{\theta}_{1})$& $0.74 $&$ 0.75$  &$ 0.74$ &$ 0.73 $&$ 0.74 $& $0.74$   \\
  \end{tabular}
  \end{table}

\begin{table}[ht]
      \caption[Table showing choices for the attenuation function and the distribution of the covariates and error.]{\label{tab:choices_table}Table showing choices for the attenuation function and the distribution of the covariates and error for the simulation considered in Section~\ref{sub:independent_covariates}. }
      \centering

\begin{minipage}{0.35\textwidth}

  \begin{tabular}{l}
  
  Choice of attenuation function\\
  \midrule
$ 5+(1+t/5)^{-3}$\\ 
$5+\exp(-t/4)$ \\
$5+(1-t/10)\times \mathbf{1}_{[0, 10]}(t)$\\
  \bottomrule
  \end{tabular}
\end{minipage}
\begin{minipage}{0.63\textwidth}

  \begin{tabular}{l}
  
  Choices of covariate distribution\\
\midrule
$X_1\sim \text{Unif}[-3, 3]$ and $X_2| X_1 \sim 0.2 X_1+ .8\text{Unif}[-3, 3]$\\
    $X\sim \text{Unif}[-3, 3]^2$\\
    $X\sim \text{Normal}(0, I_{2\times2})$\\
  \bottomrule
  \end{tabular}
\end{minipage}
\bigskip

\begin{minipage}{0.4\textwidth}

  \begin{tabular}{l}
  
Homoscedastic error distributions \\
\midrule
    $\epsilon|X\sim \text{Normal}(0, 1)$\\
    $\epsilon|X\sim(-1)^{\text{Ber}(1)} \times\text{Beta}(2, 3)$\\
    $\epsilon|X\sim t_3$ \\
    $\epsilon|X\sim t_7$ \\
  \bottomrule
  \end{tabular}
\end{minipage}
\begin{minipage}{0.59\textwidth}
  \begin{tabular}{l }
  
Heteroscedastic error distributions\\
\midrule
   $ \epsilon|X\sim \log\big(2+ |\theta_0-X|^2\big)\times \text{Normal}(0, 1) $\\
    $\epsilon|X\sim \log\big(2+ |\theta_0-X|^2\big)\times (-1)^{{\text{Ber}(1)} } \times\text{Beta}(2, 3)$\\
    $\epsilon|X\sim \log\big(2+ |\theta_0-X|^2\big)\times  t_3$ \\
    $\epsilon|X\sim \log\big(2+ |\theta_0-X|^2\big)\times t_7$  \\ 
    \bottomrule
  \end{tabular}
\end{minipage}
  \end{table}
  \begin{figure}
\centering
\includegraphics[width=.9\textwidth]{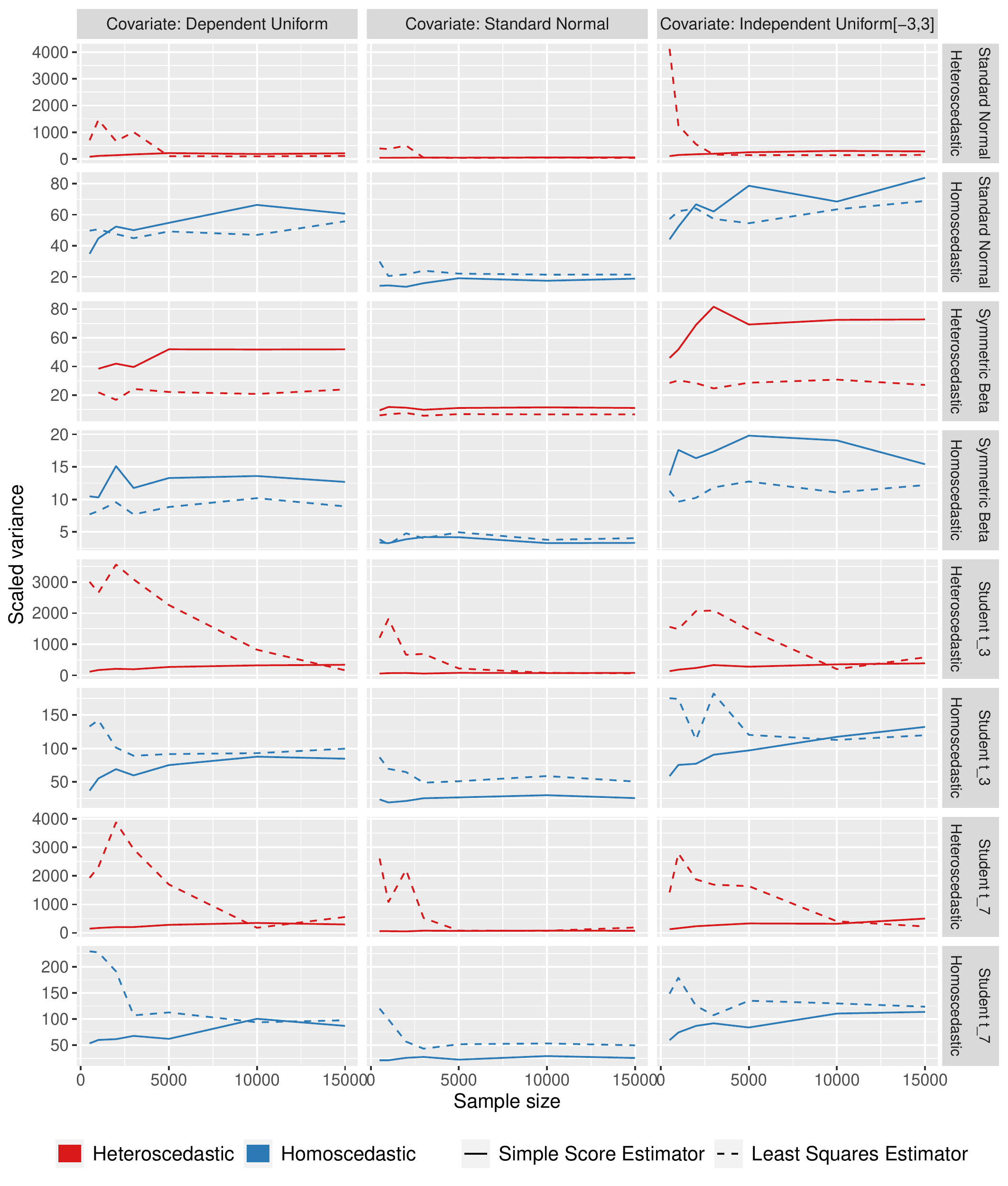}
  \caption[]{Plot of scaled sample variance ($n\times \text{var}(\theta^\dagger)$) for the LSE (dashed) and SSE (solid) for the first co-ordinate of $\theta_0$ when $\eta_0(t) = 5+(1+t/5)^{-3}$. The sample variance is calculated by using $200$ replications.}
  \label{fig:poly_2 all}
\end{figure}

\begin{figure}[h!]
\centering
\includegraphics[width=.9\textwidth]{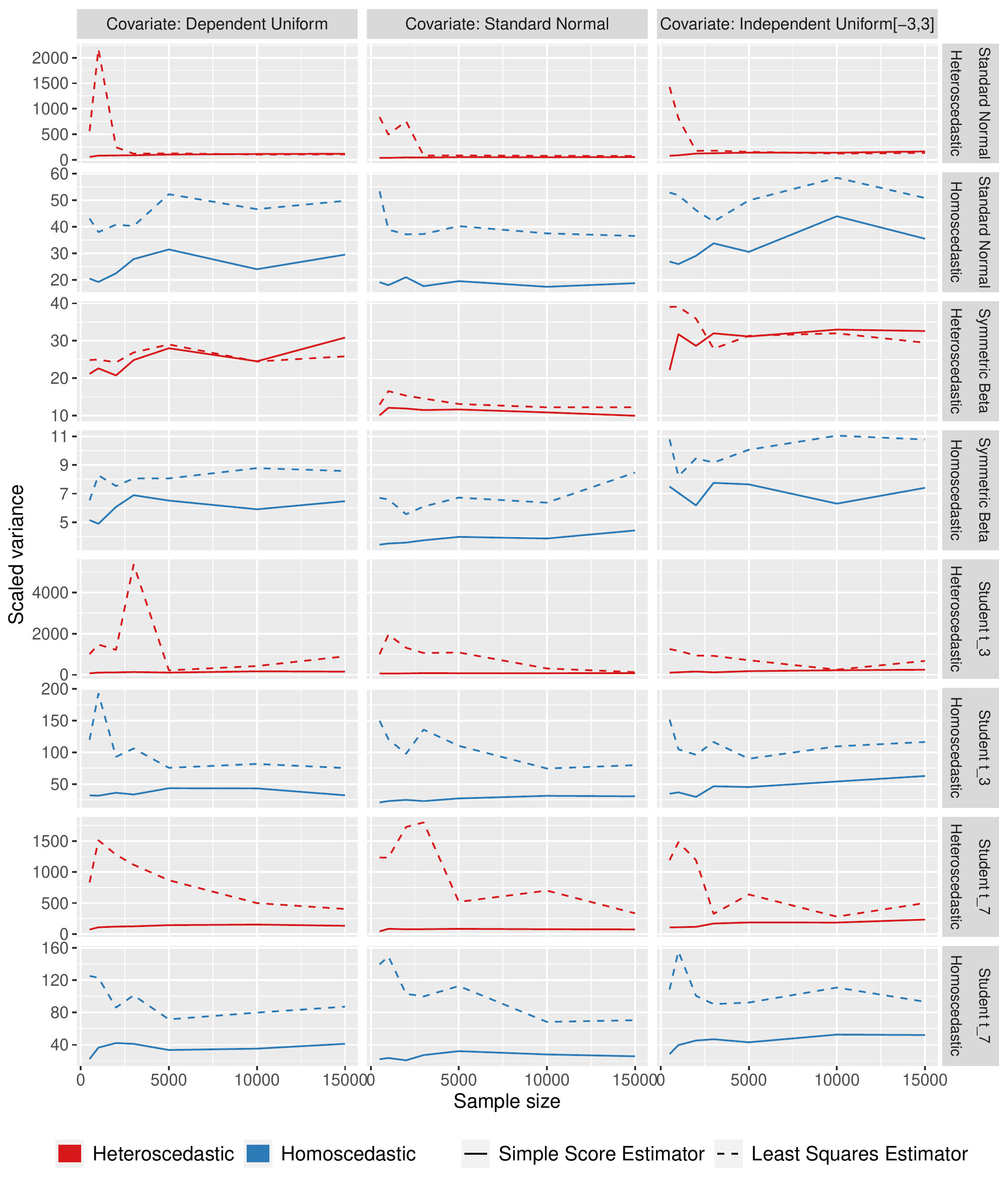}
  \caption[]{Plot of scaled variance ($n\times \text{var}(\theta^\dagger)$) for the LSE (dashed) and SSE (solid) for the first co-ordinate of $\theta_0$ when $\eta_0(t) = 5+\exp(-t/4)$.}
  \label{fig:exp1 all}
\end{figure}
  \begin{figure}[!h]
\centering
\includegraphics[width=.9\textwidth]{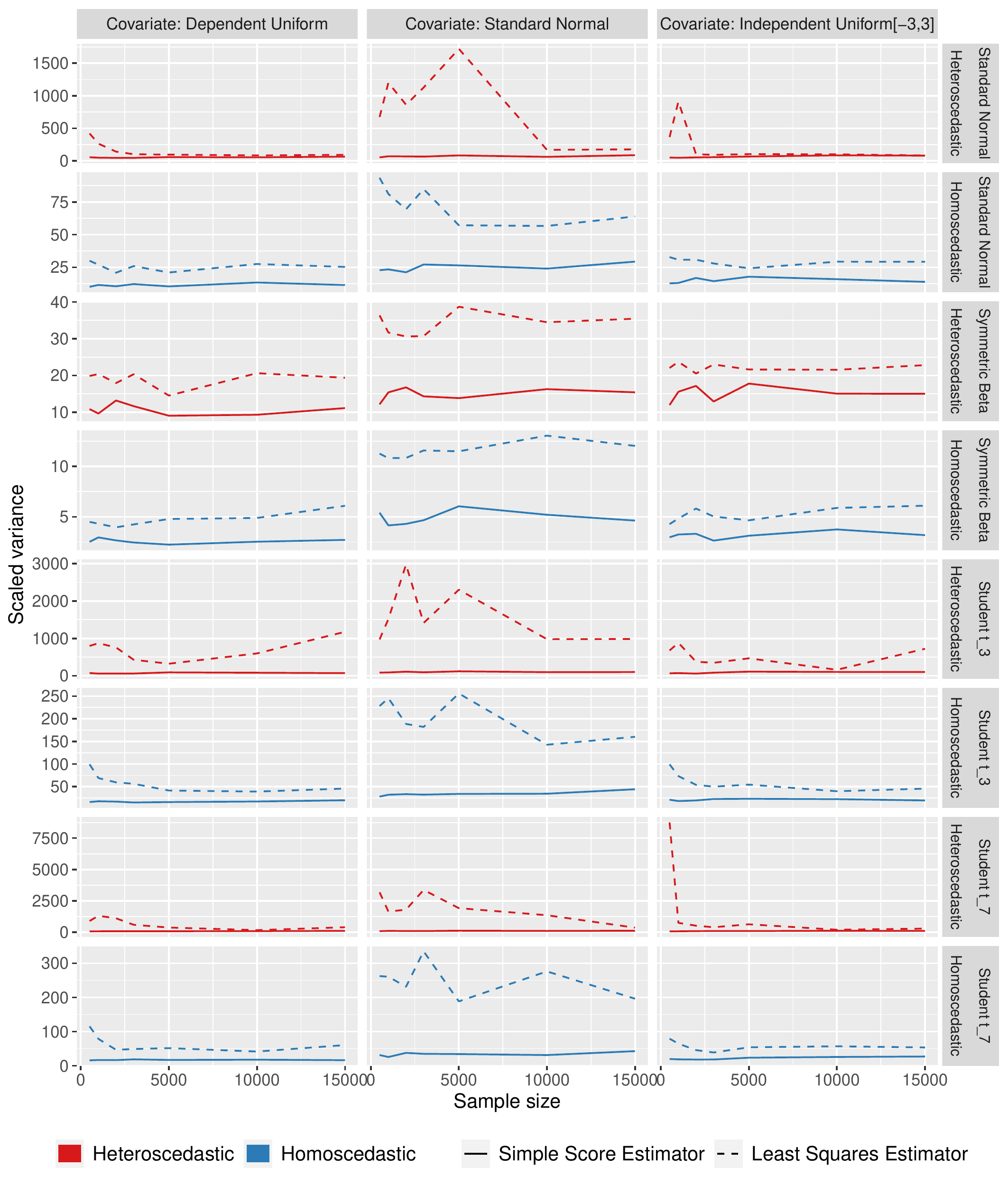}
  \caption[]{Plot of scaled variance ($n\times \text{var}(\theta^\dagger)$) for the LSE (dashed) and SSE (solid) for the first co-ordinate of $\theta_0$ when $\eta_0(t) = 5+(1-t/10)\times \mathbf{1}(0 \le t\le10)$.}
  \label{fig:linear_all}
\end{figure}

\subsection{Extensive comparison of LSE and SSCE} 
\label{sub:independent_covariates}
We consider a grid of settings for the attenuation function, the distribution of both the covariates, and the errors as described in Table~\ref{tab:choices_table}. The dimension is fixed as $d=2$ and $\theta_0 = (0, 0).$ In figures \ref{fig:poly_2 all}--\ref{fig:linear_all}, we  illustrate the finite sample performance of the LSE and SSCE by plotting the sample variance (scaled by $n$) of the two estimators of $\theta_{0, 1}$, the first co-ordinate of $\theta_0$. In each figure, we fix the choice of the attenuation function and vary across the choices for the distribution of covariates and errors. The figures illustrate that in almost all settings considered, the empirical variance of the SSCE is significantly lower than that of the LSE. Further, the sample biases (not shown here) of both the estimators are close to zero. We see that LSE has a smaller finite sample variance than the SSCE in only seven out of $74$ simulation settings considered in Figures~\ref{fig:poly_2 all}--\ref{fig:linear_all}. Indeed, a similar behavior of the SSCE was observed in the  monotone single models by~\cite{balabdaoui2019score} and \cite{balabdaoui2020profile}.

The analysis in~Section~\ref{sub:asymptotic_analysis_of_the_lse} established that $|\check{\theta}-\theta_0| = O_p(n^{-1/3} n^{1/q})$. However, simulations suggest that the above rate is not tight. We conjecture that $\check{\theta}$ converges to $\theta_0$ at a rate faster than $n^{1/3}$, but is not  $\sqrt{n}$ consistent. This is in line with the extensive work on the monotone current status and single index models (see \cite{balabdaoui2016least, groeneboom2016current, balabdaoui2019score, balabdaoui2020profile}). Another interesting empirical observation is that $\hat{\theta}$, the estimator based on a simple score equation, thoroughly outperforms the LSE.

\begin{figure}[!h]
\centering
\includegraphics[width=.91\textwidth]{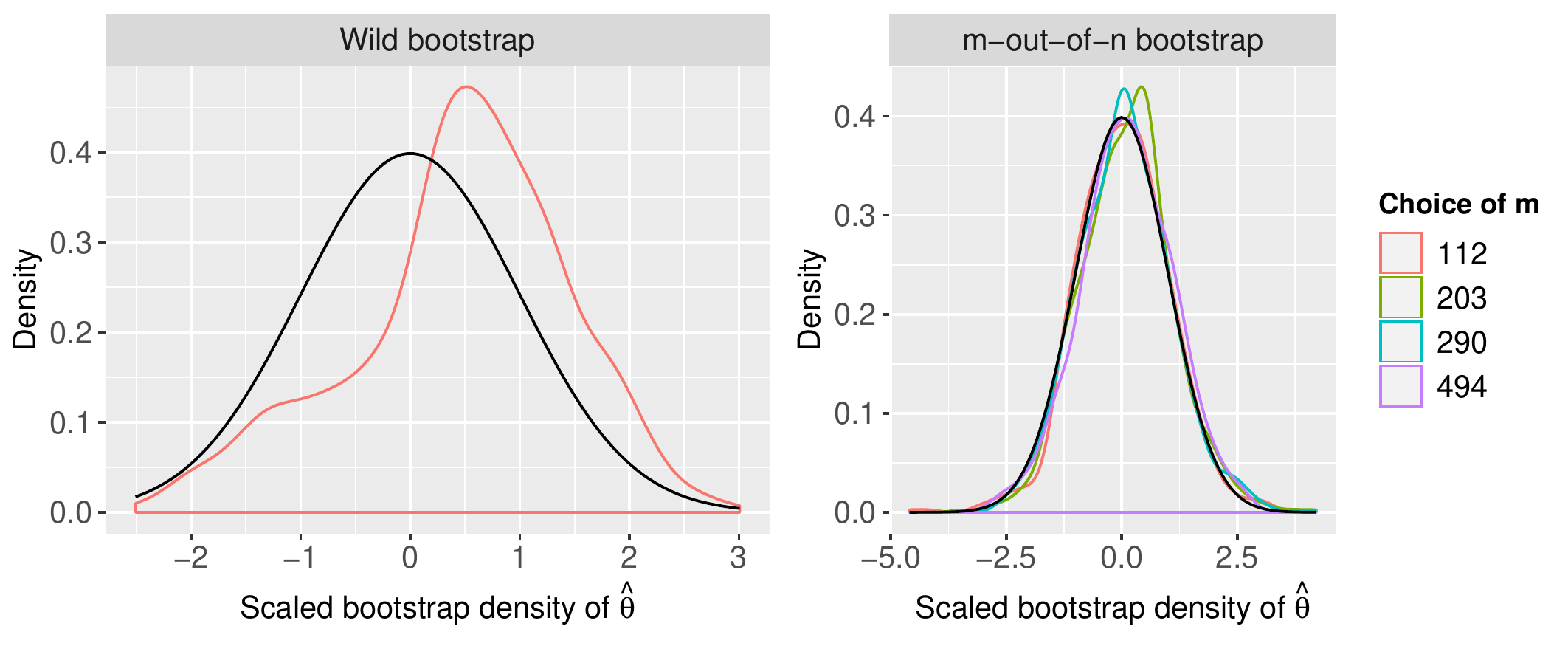}
  \caption[]{Kernel density estimate of the (scaled and centered) empirical bootstrap distribution of $\hat{\theta}_1$ based on $800$ bootstrap replications for both wild (scaled by $\sqrt{n}$)  and $m$-out-of-$n$ (scaled by $\sqrt{m}$) bootstrap. The data has a sample size of $1200$ and is generated according to $Y= 5+ (1+ |X-0|/5)^{-3}+ \text{Normal} (0, 1)$ and $X\sim \text{Unif}[-3, 3]^2$. For wild bootstrap, we use Mammen's two-point distribution~\cite{mammen1993bootstrap}. For $m$-out-of-$n$, the choices of $m$ are $(\floor{n^{2/3}}, \floor{n^{3/4}}, \floor{n^{4/5}}, \floor{n^{7/8}})$. The solid black line in both of the plots represents the true asymptotic density of $\sqrt{n}(\hat{\theta}_1-\theta_{0, 1})$.}
  \label{fig:DensityBoot}
\end{figure}
\subsection{Confidence intervals for the SSCE based on the Bootstrap} 
\label{sub:confidence_interval_and_coverage}

The goal of this subsection is to compute a confidence interval for $\theta_0$ based on the SSCE. Theorem~\ref{thm:Asymp_norml} establishes that the asymptotic distribution of the SSCE depends on nuisance parameters such as the $\sigma^2(X)$, $E(X| |X- \theta_0|^2)$, and $\eta_0'$. One can use consistent estimates of these quantities to estimate the asymptotic variance of the SSCE and create an asymptotic confidence interval for $\theta_0$. However, such estimators often involve tuning parameters. Following the theme of this paper for a tuning parameter free approach, we will use the standard $m$-out-of-$n$ bootstrap procedure on the data $\{(X_i, Y_i)\}_{i=1}^n$ to compute a confidence interval for $\theta_0$. Figure~\ref{fig:DensityBoot} shows the empirical bootstrap distribution of $\hat{\theta}_1$ for the wild bootstrap~\cite{mammen1993bootstrap} (left panel) and $m$-out-of-$n$ bootstrap~\cite{bickel2012resampling} (right panel). Extensive simulation results suggest that the wild bootstrap with Mammen's two-point distribution~\cite{mammen1993bootstrap} is inconsistent for the SSCE. However, for all the settings considered in the paper, the $m$-out-of-$n$ bootstrap performs well for most (valid) choices of $m$. Figure~\ref{fig:BootSimul} depicts the empirical coverage of both bootstrap procedures for sample sizes ranging from $300$ to $1200$ for a number of the settings described in Table~\ref{tab:choices_table}.
The results suggest that coverage for the $m$-out-of-$n$ bootstrap is close to the nominal level (90\%) for all settings involving an exponential attenuation function, but about 5\% below for the polynomial attenuation one. On the other hand, the wild bootstrap's coverage falls significantly short of the nominal level,
even for very large sample sizes.

\section{Locating an individual from video surveillance footage} 
\label{sec:real_data_analysis_CCTV_footage}

Next, we use the proposed estimators in a surveillance application. Specifically, video footage is available from a wide
angle CCTV camera for the entrance lobby of the INRIA Labs at Grenoble, France. Individuals walk in and out of the lobby and
the objective is to determine their locations. The video frames can be downloaded from the Context Aware Vision using Image-based Active Recognition (CAVIAR) project~\cite{fisher2005caviar}. The specific data set (Walk1) employed in this paper can be downloaded from~\url{http://groups.inf.ed.ac.uk/vision/CAVIAR/CAVIARDATA1/Walk1/Walk1_jpg.tar.gz}.
\begin{figure}
\centering
\includegraphics[width=.9\textwidth]{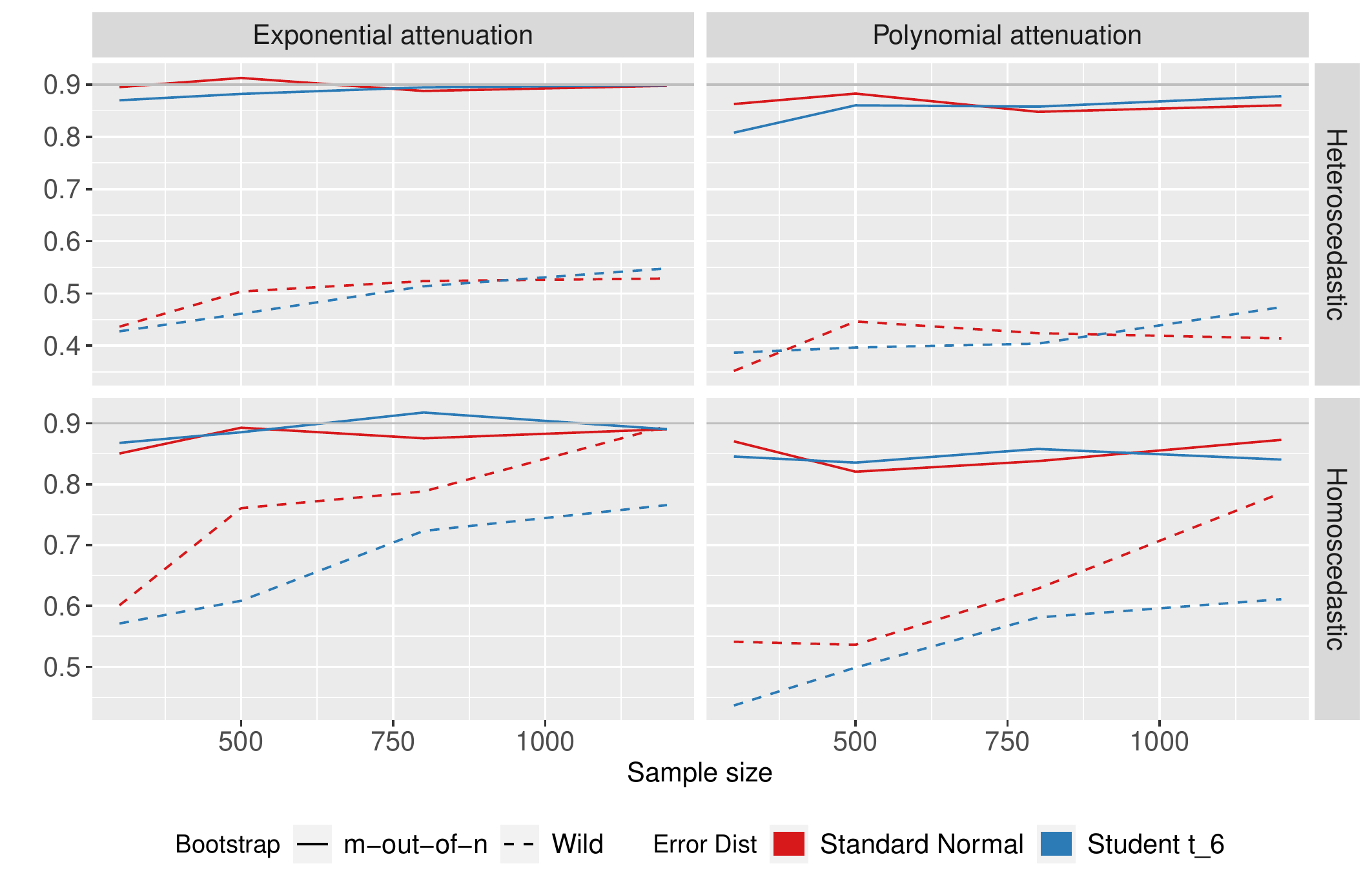}
  \caption[]{Empirical coverage for two $90\%$ nonparametric bootstrap confidence intervals for $\theta_{0, 1}$ based on the SSCE under model~\eqref{eq:model_sen} for a number of settings for the attenuation functions and distributions of the error listed in~Table~\ref{tab:choices_table}. For wild bootstrap, we use Mammen's two-point distribution~\cite{mammen1993bootstrap} and for $m$-out-of-$n$ bootstrap we used $m= \floor{n^{7/8}}$. The computed empirical coverage is based on $500$ bootstrap replications.}
  \label{fig:BootSimul}
\end{figure}

In the video under consideration, there were two time windows  (frames 1002--1019 and 1182--1230) where there was no movement in the lobby. Starting at frame $1235$ and ending at frame $1525$ a person walks across the lobby. Given an image, the objective is to locate the person in the lobby. The frames are stored in an RGB format. Thus, each frame consists of three channels: Red, Green, and Blue; for each channel, we have a gray scale matrix of dimension $384\times288$. Thus, a single frame can be represented as a $384\times288\times 3$ tensor.
 \begin{figure}[h!]
\centering
\includegraphics[width=.9\textwidth]{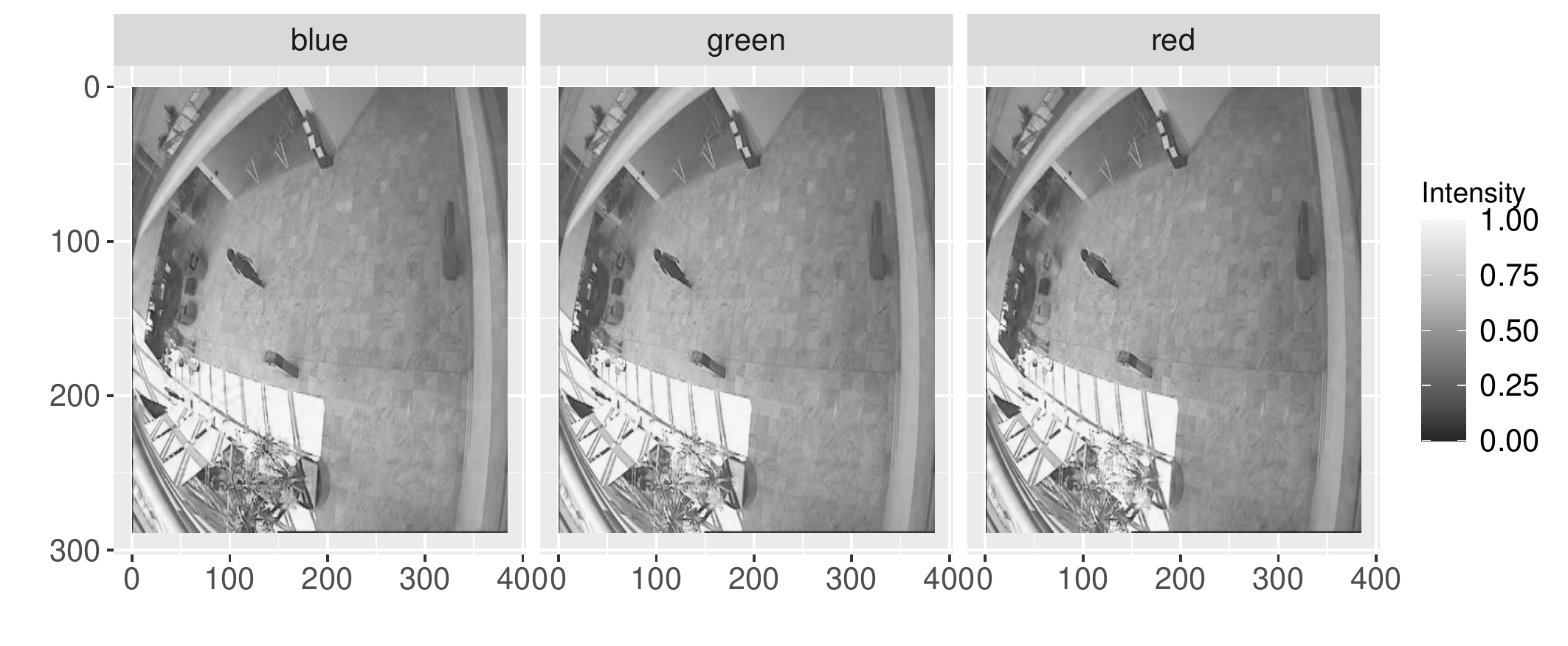}
  \caption[]{Intensity plot for the frame 1380 from the ``Walk1'' benchmark data set of the CAVIAR project for  the red, green, and blue channels.}
  \label{fig:1380_fgnd_foreground}
\end{figure}
To leverage our model, we analyze the three channels of the frame independently. For each channel, the person in the frame is considered to be the target and each pixel corresponds to a sensor measurement. In this section, we analyze frame number 1380. 
To adapt the data into our framework, we convert the data in the frame of interest (\# 1380) into a long vector of size $384\times288$ independent observations from the model
  \begin{equation}\label{eq:Total_model}
  Z = f(X)+ \epsilon,
  \end{equation}
  where $X$ is a two dimensional vector denoting the location of the pixel, $Y$ is the measurement at the corresponding pixel, and $\epsilon$ is a noise term. We further assume that the function $f$ can be modeled via an additive structure as
\[f(X)= b_0(X)+ s_0(X), \]
where $b_0(X)$ denotes the unknown ``background'' and $s_0(X)$ denotes the unknown ``signal''. The 
function $X\mapsto b_0(X)$ is unknown. However, each of the 66 frames (wherein the lobby is empty) provides independent and identically distributed observations from $Z= b_0(X)+ \epsilon$ for every $X$. Thus, we can estimate $X\mapsto b_0(X)$ consistently for each $X$ via a simple sample average. Thus, for the remainder of the section, we treat $X\mapsto b_0(X)$ as a known function and assume we have measurements from the following model:
\begin{equation}\label{eq:final_data_model}
  Y= Z- b_0(X)= s_0(X) + \epsilon.
\end{equation}
\begin{figure}
\centering
\includegraphics[width=.9\textwidth]{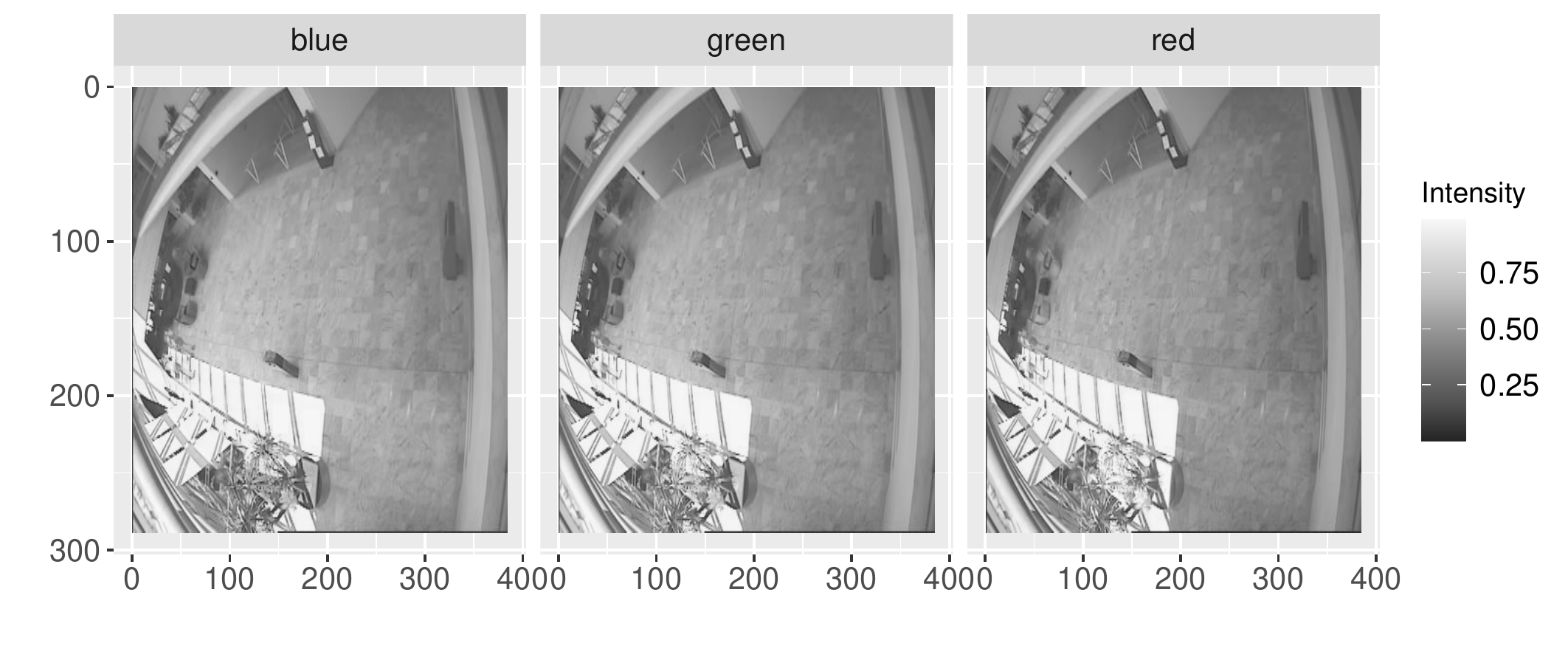}
  \caption[]{Intensity plot of the estimate of $b(X)$ (i.e., the average of 66 frames where the lobby is empty) for the red, green, and blue channels.}
  \label{fig:background}
\end{figure}

\begin{figure}
\centering
\includegraphics[width=.9\textwidth]{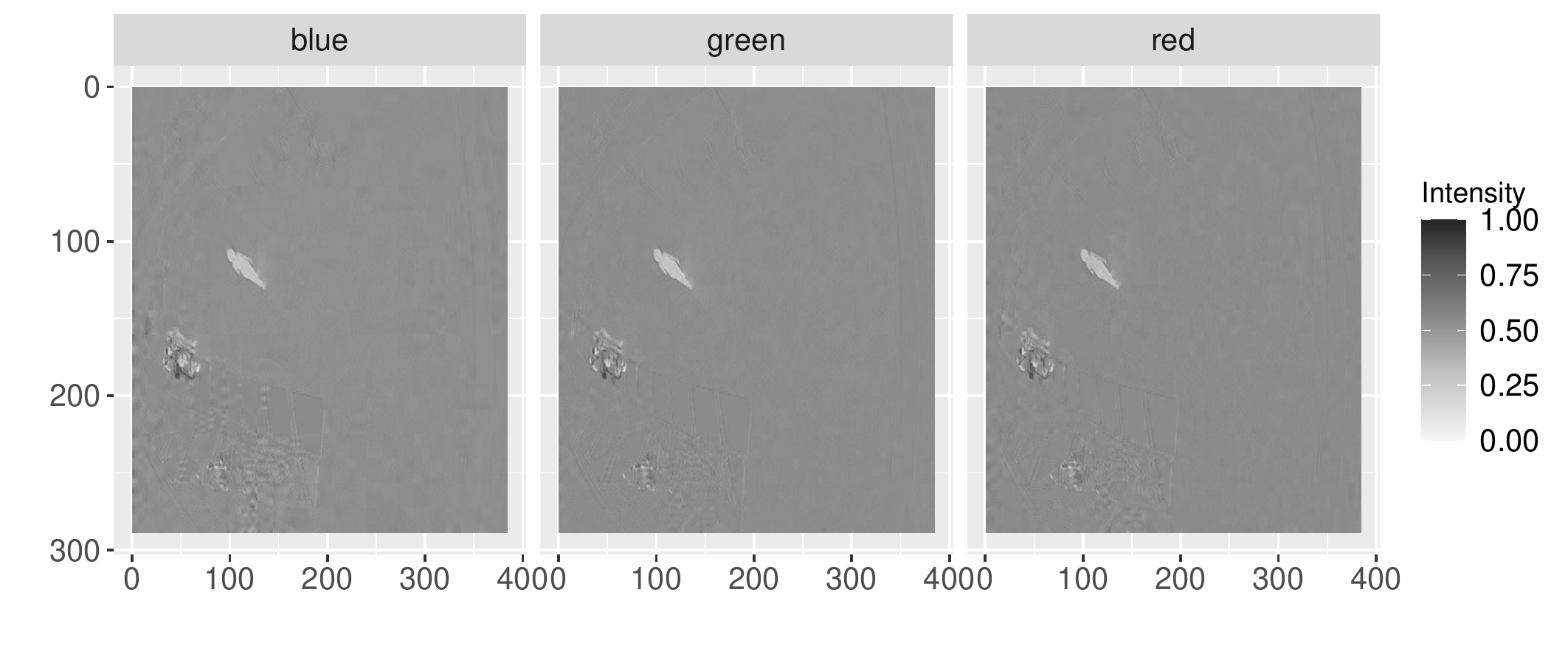}
  \caption[]{Intensity plot of all pixels  of re-centered image (see~\eqref{eq:final_data_model}) for the frame number 1380.}
  \label{fig:diff}
\end{figure}
Since in the current setup the target corresponds to a person walking through the lobby, in an ideal world (one without light bleeding),\footnote{Light bleeding is the phenomenon, where  photo-charge from an pixel bleeds/leaks into other nearby pixels, thus affecting their detected sensitivity.} $x \mapsto s_0(x)$ would a step function of the form $c_1 \times \mathbf{1}(x\in A)$ for some constant $c_1$ and subset $A\subset \rchi$. The presence of the target at a pixel location will elevate (or deprecate) the true gray scale intensity, while the gray scale intensity at the other locations is zero. Observe that the true gray scale intensity can be taken to be zero, because we assume that $b_0(X)$ has been estimated well from all preceding frames. Further, in this paper, we will assume that the target (set $A$) can be well approximated by a disk. This implies that, in the ideal world, we can assume that $s_0(X)$ is a step function of the form $c_1 \times \mathbf{1}(|\theta_0-x|\le c_2)$ for some constants $c_1$ and $c_2$. However, to accommodate for light bleeding in the data and to allow for the developed theory to be applicable to this dataset, we posit that $s_0(\cdot)$ can be approximated by a rapidly (strictly) decreasing function around the target. For the rest of the section, we assume that
\begin{equation}\label{eq:S_def}
s_0(X) = \eta_0(|X-\theta_0|^2),
\end{equation}
where $\theta_0$ can be interpreted as the location of the target and $\eta_0$ is a monotone function. For an example of $\eta_0$ see Figure~\ref{fig:Fucntion_Detection} where we plot the estimated functions for the three color bands. Combining~\eqref{eq:final_data_model} and~\eqref{eq:S_def}, the data are converted to our posited framework with the following exception: the design points $\{X_i\}_{i=1}^n$ are fixed grid locations in this application, while in the technical developments we assume a random design setup. To remedy this, we sample the grid locations uniformly at random.

\begin{remark}\label{rem:dec_inc_choose}
A natural question is, how does the practitioner know whether to assume that $t\mapsto \eta(t)$ is decreasing or increasing? This is an important question, because the intensity (values of $Y$) of the person relative to the background can be higher or lower depending on factors such as lighting and colors in the picture.  A simple way to address this, is to fit both an increasing and a decreasing function for~\eqref{eq:final_data_model} and choose the fit that has the smaller squared error loss ($ \sum_{i=1}^n (Y_i-\eta(|X_i-\hat\theta|^2))$) at the SSCE. In fact, this is how we decided to fit a decreasing function for $\eta_0$ in~\eqref{eq:S_def}. 
\end{remark}

 The image for frame 1380 from the ``Walk1'' benchmark data set is shown in Figure~\ref{fig:1380_fgnd_foreground}. As mentioned earlier, we use frames 1002--1019 and 1182--1230 to estimate the background levels (i.e., the function $b(\cdot)$). As the above 66 frames do not record any movement, we assume that the data in the model follows model~\eqref{eq:Total_model} with $f(x)=b(x)$ for all $x$ and estimate it by the sample mean of the 66 frames. In Figure~\ref{fig:background}, we plot the estimates of $x\mapsto b_0(X)$ for each of the three channels. We now treat $b_0(\cdot)$ as known,  and assume that we have observations $(X, Y)$ from~\eqref{eq:final_data_model}. Figure~\ref{fig:diff} corresponds to a heat map of the centered frame; i.e., we treat the image as being generated by~\eqref{eq:final_data_model} with $Y$ being the intensity and $X$ being the location of the pixel.

\begin{figure}[h!]
\centering
\includegraphics[width=.9\textwidth]{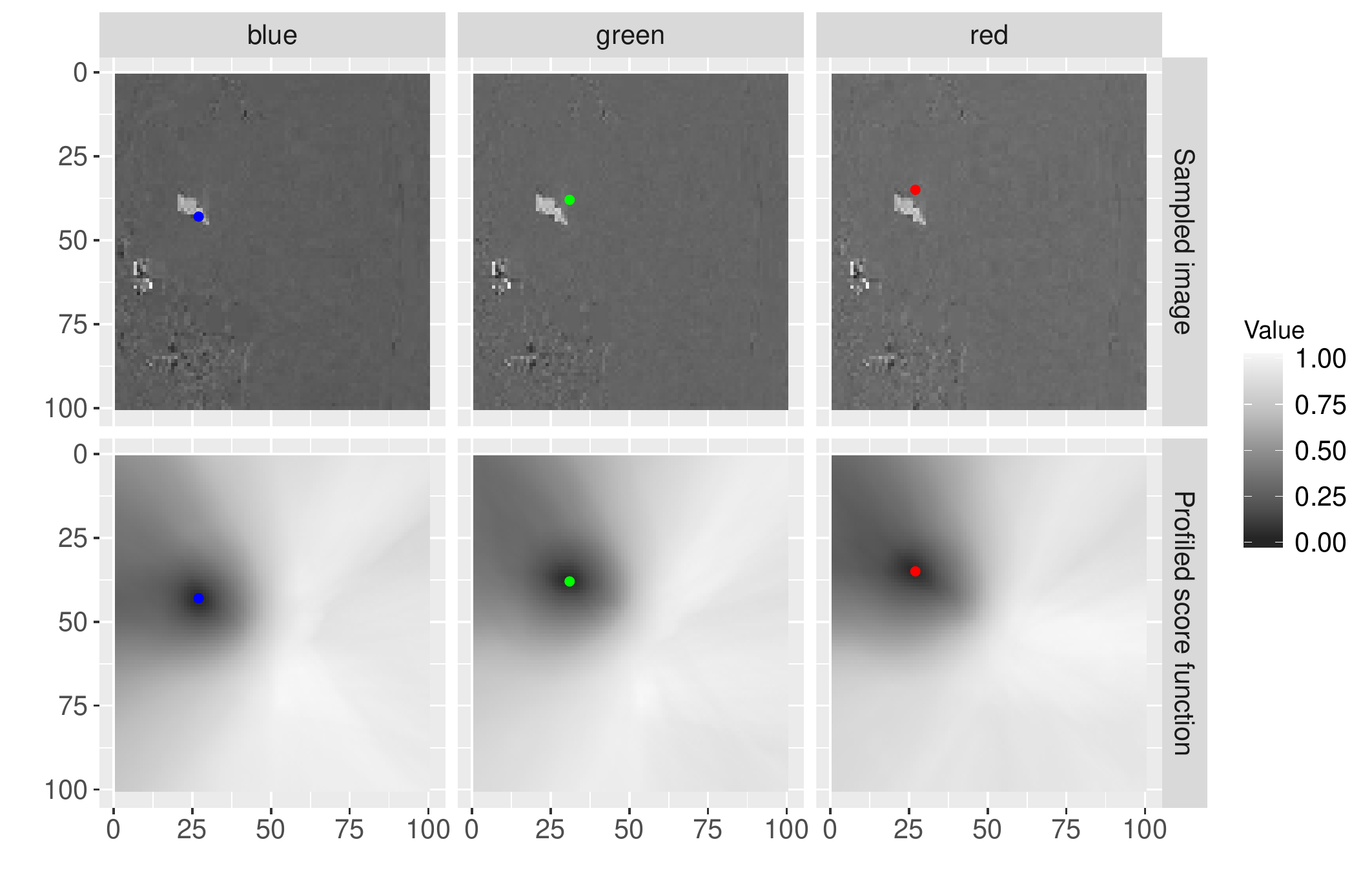}
  \caption[]{First row:  estimates of $\theta_0$ corresponding to each of the three bands overlaid on the heat map of the sampled image. Second row: scaled heat map of the profiled score function ($\theta\mapsto |\mathbb{M}_n(\theta)|$) at each of the sensor location overlaid with the estimates of $\theta_0$ (the location of the minimum of the profiled score function) for each of the three bands.}
  \label{fig:detected_image_sample}
\end{figure}

To replicate the random design scenario, we sample (uniformly) a grid of size $100\times 100$ from the images in Figure~\ref{fig:diff}. The top row of Figure~\ref{fig:detected_image_sample} depicts this ``observed'' data set. The next step is to compute the SSCE. The second row of Figure~\ref{fig:detected_image_sample}  depicts the heat map of $\theta\mapsto |\MM_n(\theta)|$ as $\theta$ varies over the location of the sensors. In each of the heat maps, the location of the minimum is marked with solid dots that correspond to the estimated location of the individual. 
Finally, in the left panel of Figure~\ref{fig:Fucntion_Detection}, we overlay the original image with the detected location from each of the channels. The right panel plots the $t\mapsto \hat\eta(t)$ corresponding to the three channels - recall that $\hat\eta$ is defined as $\tilde{\eta}_{\hat{\theta}}$ as in \eqref{eq:eta_hat}. Finally, in Figure~\ref{fig:bootstrap}, we plot the ellipsoid confidence regions based on a normal approximation, with the dispersion matrix based on an $m$-out-of-$n$ bootstrap with $m= \floor{n^{7/8}}$. It is worth noting that the confidence ellipsoids contain the target for all three of the channels. 
\begin{figure}[h!]
\centering
\includegraphics[width=.9\textwidth]{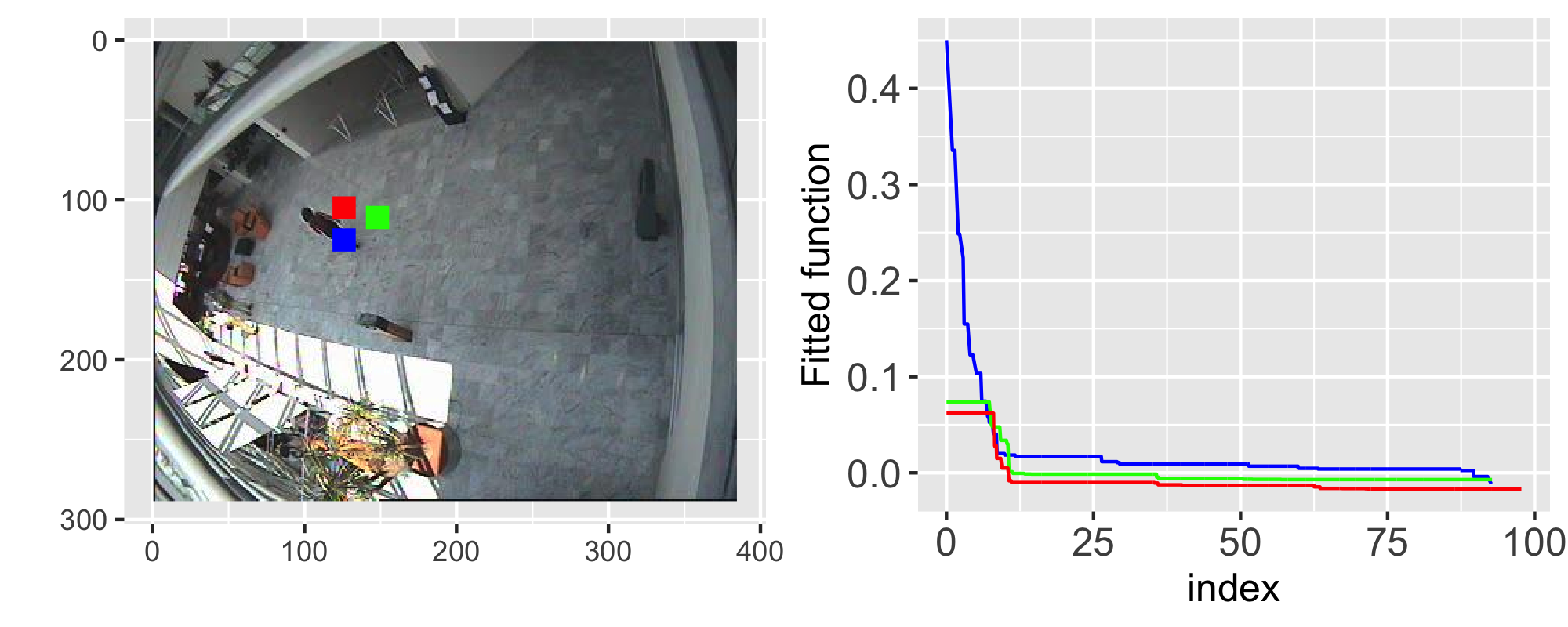}
  \caption[]{Left panel: input image overlaid with estimated location (SSCE, $\hat\theta$) of person based on the three color channels. Right Panel: isotonic estimates of the attenuation function at the estimated location ($\tilde\eta_{\hat\theta} = \hat\eta$) for each of the three color channels. }
  \label{fig:Fucntion_Detection}
\end{figure}

\begin{figure}[h!]
\centering
\includegraphics[width=.9\textwidth]{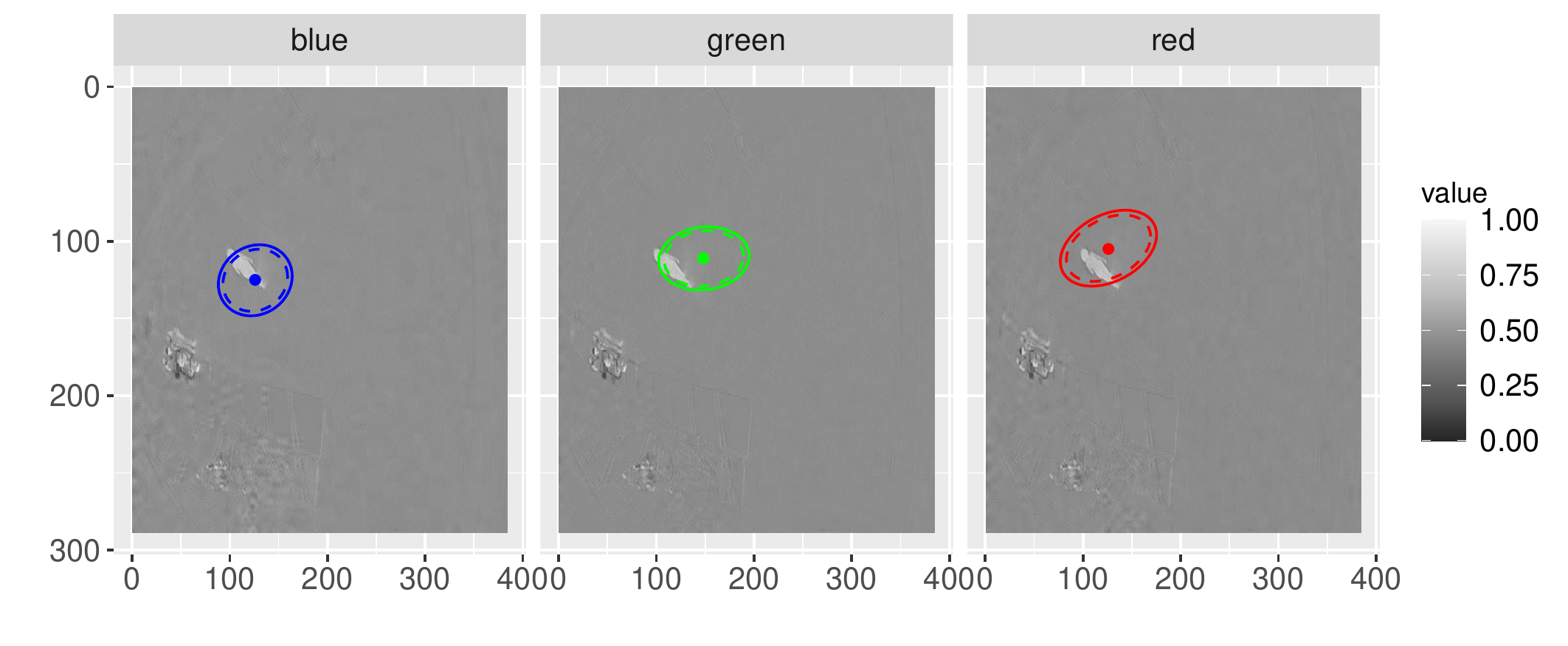}
  \caption[]{Ellipsoid confidence regions ($90\%$ and $95\%$) based on normal approximation with the variance estimated via $200$ $m$-out-of-$n$ bootstrap replicates with $m= \floor{n^{7/8}}$. }
  \label{fig:bootstrap}
\end{figure}

\section{Concluding remarks} 
\label{sec:concluding_remarks}
This paper proposed two tuning parameter free estimators (the LSE and SSCE) for the location of a target based on measurements acquired from distributed sensors using an index type regression model. We proved that the SSCE for the unknown location is $\sqrt{n}$ consistent and asymptotically normal under heavy-tailed and heteroscedastic errors. A numerical comparison between the SSCE and LSE reveals that proposed score estimator performs well in wide variety of settings. Unlike most work in the target detection literature, we do not assume a parametric model for the signal strength attenuation function. Further, the estimation procedure is completely automated and doesn't require any tuning parameters. These advantages make the proposed methodology applicable to a wide variety of problems that leverage sensing infrastructure in distributed systems.

We conclude by outlining some exciting future research directions. The rate of convergence of the LSE and limiting distribution of the monotone LSE is an open problem in the wider field of semi-parametric inference. Further, the current paper focuses only on locating a target for a fixed time point. Extending the current methodology to \textit{tracking} one or multiple targets is a challenging but important problem. When tracking a moving target, one can potentially ``combine'' the estimates of the attenuation function from different time points to provide an accurate and tuning parameter free estimate of the current location. 

\appendix
\addcontentsline{toc}{section}{Appendix} 
\part{Appendix} 
\parttoc
\section{Proof of Lemma~\ref{thm:Ident}} 
\label{sec:proof_of_thmIdent}
If we could prove that $\theta_0=\beta$, this would imply that $\eta_0(\cdot)\equiv g(\cdot)$ on  $\{|x-\theta_0|^2: x\in  \rchi \}$. Hence, it suffices to show that $\theta_0=\beta$. To show that $\theta_0=\beta$, we first notice that because of the convexity of $\rchi$, for small enough $L > 0$ we can find an open ball $B$ with radius $L$ included in $\rchi$ on which $x\mapsto \eta_0(|x- \theta_0|^2 )$ is not constant and 
 \begin{equation}\label{eq:tempa1}
 \eta_0(|\theta_0-x|^2) = g(|\beta-x|^2) \quad \text{ for all } x\in B.
 \end{equation}
Since, $x\mapsto \eta_0(|x- \theta_0|^2 )$ is not constant on $B, $ there  exists a point $b \in \{|x- \theta_0|^2 : x\in B\}$ and $\epsilon>\delta>0$ such that $b+\epsilon \in \{|x- \theta_0|^2 : x\in B\}$ and
\begin{equation}\label{eq:epsilion_b}
  t_0=:\eta_0(b+\epsilon) < \eta_0(b+\delta) := t_1 < \eta_0(b):=t_2.
  \end{equation}  Thus
\begin{equation}\label{eq:subset}
\phi \subsetneq \{x \in \rchi: \eta_0(|x-\theta_0|^2) \le t_0\} \cap B \subsetneq \{x \in \rchi: \eta_0(|x-\theta_0|^2) \le t_1\} \cap B \subsetneq B
\end{equation}
Observe that $\{x\in \R^d: \eta_0(|x-\theta_0|^2) \le t_0\}$ and $\{x\in \R^d: \eta_0(|x-\theta_0|^2) \le t_1\}$ are two distinct (by~\eqref{eq:subset}) concentric discs centered at $\theta_0$. Similarly, by~\eqref{eq:tempa1},  we have that  
$\{x\in \R^d: g(|x-\beta|^2)\le t_0\}$ and $\{x\in \R^d: g(|x-\beta|^2) \le t_1\}$ are two (by~\eqref{eq:subset}) concentric discs centered at $\beta$. They are distinct because
\begin{align*}\label{eq:t2}
\{x\in \R^d: \eta_0(|x-\theta_0|^2) \le t_0\}\cap B &= \{x\in \R^d: g(|x-\beta|^2) \le t_0\}\cap B\\
\{x\in \R^d: \eta_0(|x-\theta_0|^2) \le t_1\}\cap B &= \{x\in \R^d: g(|x-\beta|^2) \le t_1\}\cap B, 
\end{align*}
and $ \{x \in \rchi: \eta_0(|x-\theta_0|^2) \le t_1\}\cap B \neq B.$ Thus $\theta_0=\beta$. And the proof is complete. 


\section{Proof of Theorem~\ref{thm:unif_eta}}\label{sec:proof_unif_eta}
We will first prove~\eqref{eq:eta_bound_unif}. For each $\theta\in \Theta$, recall that \[\tilde\eta_\theta :=\argmin_{\eta \in \M}\Q_n(\eta, \theta).\] From~\citet[Theorem~1.3.4]{RWD88}, for any $q\le j \le n$, we have 
\[ \min_{1\le i \le n} Y_i \le \tilde{\eta}_\theta (|\theta-X_j|^2)\le \max_{1\le i \le n} Y_i.\]
Thus the proof of~\eqref{eq:eta_bound_unif} will be complete if we can show that  $\max_{1\le i \le n} |Y_i| =O_p\left(n^{1/q}\right).$ In this regard observe that for any $C>0$,
 \begin{align}
 \begin{split}
 \P\Big(\max_{1\le i \le n} |Y_i| \ge C n^{1/q}\Big) \le \sum_{i=1}^n\P\Big( |Y_i| \ge C n^{1/q}\Big) \le \sum_{i=1}^n\P\Big( |\epsilon_i| \ge C n^{1/q}/2\Big),
 \end{split}
 \end{align}
 for all $n$ such that $n^{1/q} \ge\|\eta_0\|_\infty$. By assumption~\ref{assum:err_mom} and Markov's inequality, we have
\begin{align}\label{eq:max_bound}
 \begin{split}
 \P\Big(\max_{1\le i \le n} |Y_i| \ge C n^{1/q}\Big)  \le \sum_{i=1}^n\P\Big( |\epsilon_i| \ge C n^{1/q}/2\Big) \le n \frac{2 ^q K_q^q}{C^q n } = 2^q K_q^q C^{-q}.
 \end{split}
 \end{align}
 The above upper bound converges to zero as $C\uparrow \infty$ for all large enough $n$ (independent of $C$).  Thus $\max_{1\le i \le n} |Y_i| =O_p\left(n^{1/q}\right).$

In the following we prove~\eqref{eq:unif_eta}. We will use the following newly defined quantities in the proof:
\begin{enumerate}

  \item $\M^{K}:= \{\eta:\D \to \R: \eta \text{ is non-decreasing and }   \|\eta\|_{\infty} \le K\}$
  \item $M_0:= \|\eta_0\|_\infty$

  \item $M_1 := \sup_{\theta\in B(\theta_0, r)} \|\eta_\theta\|_\infty$

\item $C_\epsilon:= 8 \E\big(\max_{1\le i\le n} |\epsilon_i|\big)\le 8\big[\E\big(\max_{1\le i\le n}|\epsilon_i|^q\big)\big]^{1/q} \le 8\left(n\E{|\epsilon|^q}\right)^{1/q} \le 8 \|\epsilon\|_q n^{1/q}. $
\end{enumerate}

 We will use arguments similar to Theorem 3.2.5 of~\cite{vdVaartWellner2000} and~\cite{balabdaoui2016least}. Recall that \begin{equation}\label{eq:temp_1}
\tilde{\eta}_\theta:=\argmin_{\eta\in \M} \Q_n(\eta, \theta) = \argmax_{\eta\in \M} \sum_{i=1}^{n} \bigg\{Y_i \eta\circ\theta(x_i) -\frac{1}{2}\eta^2\circ\theta(x_i)\bigg\}
\end{equation}
and
\begin{equation}\label{eq:temp_123}
{\eta}_\theta:=\argmin_{\eta\in \M} \Q(\eta, \theta) = \argmax_{\eta\in \M} \E \bigg\{Y\eta(|\theta-X|^2) -\frac{1}{2}\eta^2(|\theta-X|^2)\bigg\}
\end{equation}

Observe that  
\begin{align}\label{eq:q_diff}
\begin{split}
 &\Q_n(\eta, \theta) - \Q_n(\eta_\theta, \theta)\\
  ={}&  \sum_{i=1}^{n} \bigg\{Y_i \eta\circ\theta(x_i) -\frac{1}{2}\eta^2\circ\theta(x_i) - Y_i \eta_\theta\circ\theta(x_i) +\frac{1}{2}\eta_\theta^2\circ\theta(x_i)\bigg\}\\
 ={}&  \sum_{i=1}^{n} \bigg\{\epsilon_i \big(\eta\circ\theta(x_i)- \eta_\theta\circ\theta(x_i)\big) +  \eta_0\circ\theta_0(x_i) \big(\eta\circ\theta(x_i)- \eta_\theta\circ\theta(x_i)\big) -\frac{1}{2}\eta^2\circ\theta(x_i)  +\frac{1}{2}\eta_\theta^2\circ\theta(x_i)\bigg\}\\
 ={}&  \sum_{i=1}^{n} \epsilon_i \big(\eta\circ\theta(x_i)- \eta_\theta\circ\theta(x_i)\big) +\sum_{i=1}^{n}   \big(\eta_0\circ\theta_0(x_i)- \eta_\theta\circ\theta(x_i)\big) \big(\eta\circ\theta(x_i)- \eta_\theta\circ\theta(x_i)\big)\\&\qquad +\sum_{i=1}^{n}  -\frac{1}{2}\big(\eta\circ\theta(x_i)  - \eta_\theta\circ\theta(x_i)\big)^2.
\end{split}
\end{align}
We first will show that  for each $\theta$
\begin{equation}\label{eq:dist}
\Q(\eta, \theta) -\Q(\eta_\theta, \theta) = \frac{-1}{2} d_\theta^2(\eta, \eta_\theta) \lesssim -d_\theta^2(\eta, \eta_\theta),
\end{equation}
where for any $\eta_1, \eta_2 \in \M, $
\begin{equation}\label{eq:d_theta}
d_\theta^2(\eta_1, \eta_2) =\int \big(\eta_1(|\theta- X|^2)-\eta_2(|\theta- X|^2)\big)^2 d\P_0(X) = \int (\eta(t)-\eta_2(t))^2 f_{|\theta-X|^2} (t) dt.
\end{equation}
Observe that
\begin{align}\label{eq:pr_1}
\begin{split}
&\Q(\eta, \theta) -\Q(\eta_\theta, \theta) \\
={}& \E\left( Y\eta(|\theta-X|^2) -\frac{1}{2}\eta^2(|\theta-X|^2) - Y\eta_\theta(|\theta-X|^2) +\frac{1}{2}\eta_\theta^2(|\theta-X|^2) \right)\\
={}& \E\Bigg( \E\left[ Y\eta(|\theta-X|^2) -\frac{1}{2}\eta^2(|\theta-X|^2) - Y\eta_\theta(|\theta-X|^2) +\frac{1}{2}\eta_\theta^2(|\theta-X|^2)\bigg\vert|\theta-X|^2\right] \Bigg)\\
={}& \E\Bigg( \E\bigg(Y\Big\vert|\theta-X|^2 \bigg) \big[\eta(|\theta-X|^2)-\eta_\theta(|\theta-X|^2)\big] -\frac{1}{2}\eta^2(|\theta-X|^2) +\frac{1}{2}\eta_\theta^2(|\theta-X|^2)\Bigg)\\
={}& \E\Bigg(  \eta_\theta(|\theta-X|^2)\big[\eta(|\theta-X|^2)-\eta_\theta(|\theta-X|^2)\big] -\frac{1}{2}\eta^2(|\theta-X|^2) +\frac{1}{2}\eta_\theta^2(|\theta-X|^2)\Bigg)\\
={}&-\frac{1}{2} d_\theta^2(\eta, \eta_\theta).
\end{split}
\end{align}
Let us fix $r>0$ and let $\phi_n : \R^+\to \R^+$, be a function such that
\begin{equation}\label{eq:325_main}
\sqrt{n} \E^* \sup_{ \theta \in B(\theta_0, r) } \sup_{\eta \in \M_\theta^K(\delta)}\Big| (\Q_n-\Q)(\eta, \theta)- (\Q_n-\Q)(\eta_\theta, \theta)\Big| \lesssim  \phi_n(\delta),
\end{equation}
where \begin{equation}\label{eq:M_dleta}
\M_\theta^K(\delta):= \{\eta: \|\eta\|_\infty <K\text{ and } d_\theta(\eta, \eta_\theta) \le \delta \}
\end{equation}
and there exists an $\alpha <2$ such that $\delta \mapsto \phi_n(\delta)/\delta^\alpha$ is decreasing and $r_n^2\phi(1/r_n) \le \sqrt{n}.$ Our goal is to show that
\[\sup_{\theta\in B(\theta_0, r)} d_\theta(\tilde\eta_\theta, \eta_\theta) =O_p\left(n^{-1/3} n^{1/q}\right).\]
Note that
\begin{align}
&P\Big( r_n \sup_{\theta\in B(\theta_0, r)} d_\theta(\tilde\eta_\theta, \eta_\theta) >\delta \Big)\label{eq:p1}\\
 \le{}& \P(\sup_{\theta\in B(\theta_0, r)}\|\tilde{\eta}_\theta\|\ge K)+ \sum_{j>M} \P\Big(  2^{j-1} \delta< r_n \sup_{\theta\in B(\theta_0, r)} d_\theta(\tilde\eta_\theta, \eta_\theta) \le 2^j \delta \text{ and } \sup_{\theta\in B(\theta_0, r)}\|\tilde{\eta}_\theta\|\le K\Big).\nonumber
\end{align}
By~\eqref{eq:eta_bound_unif}, we can make the first probability on the right hand side of the equation small by choosing $K=C n^{1/q}$ for an appropriate choice of $C.$
We will now try to bound the second probability.
Now observe that $ 2^{j-1}\delta < r_n \sup_{\theta\in B(\theta_0, r)} d_\theta(\tilde\eta_\theta, \eta_\theta) \le 2^j\delta$ implies that   there exists $\theta_a \in B(\theta_0, r)$ such that  $2^{j-1}\delta< r_n d_{\theta_a}(\tilde{\eta}_{\theta_a}, \eta_{\theta_a}) \le 2^{j}\delta$. Now let us define a set
\begin{equation}\label{eq:set_a}
\mathcal{A}_{n, j} := \Big\{(\eta, \theta)\, :\, \theta\in B(\theta_0, r), \|\eta\|_\infty \le K, \text{ and } 2^{j-1}\delta < r_n d_\theta(\eta, \eta_\theta) \le 2^j\delta \Big\}.
\end{equation}
As $(\theta_a, \tilde{\eta}_{\theta_a}) \in \A_{n, j}$, we have that
 \begin{equation}\label{eq:p2}
 2^{j-1}\delta < r_n \sup_{\theta\in B(\theta_0, r)} d_\theta(\tilde\eta_\theta, \eta_\theta) \le 2^j\delta \Rightarrow \sup_{(\eta, \theta) \in \A_{n, j}} \Q_n(\eta, \theta)- \Q_n(\eta_\theta, \theta) \ge \Q_n(\tilde{\eta}_{\theta_a}, \theta_a)- \Q_n(\eta_{\theta_a}, \theta_a) \ge 0.
 \end{equation}
Moreover, by~\eqref{eq:dist} we have that
\begin{equation}\label{eq:p3}
 \sup_{(\eta, \theta) \in \A_{n, j}}  \Q(\eta, \theta)- \Q(\eta_\theta, \theta) = \sup_{(\eta, \theta) \in \A_{n, j}}   \frac{-1}{2} d_\theta^2(\eta, \eta_\theta)= \frac{-2^{2j -2} \delta^2}{2 r_n^2}.
\end{equation}
Combining~\eqref{eq:p2} and~\eqref{eq:p3}, we have that if $2^{j-1} \delta< r_n \sup_{\theta\in B(\theta_0, r)} d_\theta(\tilde\eta_\theta, \eta_\theta) \le 2^j\delta$ then
\[0 \le \sup_{(\eta, \theta) \in \A_{n, j}}  (\Q_n-\Q)(\eta, \theta)- (\Q_n-\Q)(\eta_\theta, \theta) + \sup_{(\eta, \theta) \in \A_{n, j}}  \Q(\eta, \theta)- \Q(\eta_\theta, \theta).\]
Thus
\begin{equation}\label{eq:pd}
2^{j-1}\delta < r_n \sup_{\theta\in B(\theta_0, r)} d_\theta(\tilde\eta_\theta, \eta_\theta) \le 2^j\delta \Rightarrow \sup_{(\eta, \theta) \in \A_{n, j}}  (\Q_n-\Q)(\eta, \theta)- (\Q_n-\Q)(\eta_\theta, \theta) \ge \frac{2^{2j -2}\delta^2}{2 r_n^2}
\end{equation}
Now combining~\eqref{eq:p1} and~\eqref{eq:pd}, we have that
\begin{align}
&P\Big( r_n \sup_{\theta\in B(\theta_0, r)} d_\theta(\tilde\eta_\theta, \eta_\theta) > \delta\Big) \\
={}&\P\left( \exists j \ge 1  \sup_{(\eta, \theta) \in \A_{n, j}}  \Big[(\Q_n-\Q)(\eta, \theta)- (\Q_n-\Q)(\eta_\theta, \theta)\Big] \ge \frac{2^{2j -2}\delta^2}{2 r_n^2}\right)\\
\le{}&  \sum_{j\ge 1}  \frac{2 r_n^2}{\sqrt{n}2^{2j -2}\delta^2} \E \left(\sup_{(\eta, \theta) \in \A_{n, j}}  \sqrt{n}\Big|(\Q_n-\Q)(\eta, \theta)- (\Q_n-\Q)(\eta_\theta, \theta)\Big|  \right)\\
\le{}&  \sum_{j\ge 1}  \frac{2 r_n^2}{\sqrt{n}2^{2j -2}\delta^2} \E \left(\sup_{(\eta, \theta) \in \F_j}  \sqrt{n}\Big|(\Q_n-\Q)(\eta, \theta)- (\Q_n-\Q)(\eta_\theta, \theta)\Big|  \right), \label{eq:pd1}
\end{align}
where
\begin{equation}\label{eq:F_def}
\F_j :=\cup_{i=1}^j \A_{n, i} =  \{(\eta, \theta) : \theta \in B(\theta_0, r) \text{ and } \eta \in \M_\theta^K (2^j \delta/r_n)\}.
\end{equation}
We will now compute an upper bound for the expectation in  the display. Observe that 
\begin{align}\label{eq:g_equiv}
\begin{split}
&\sqrt{n}\big|(\Q_n-\Q)(\eta, \theta)- \sqrt{n}(\Q_n-\Q)(\eta_\theta, \theta)\big|\\
\le{}& \big|\G_n\epsilon \big(\eta\circ\theta- \eta_\theta\circ\theta\big)\big| + \big| \G_n\big[\big(\eta_0\circ\theta_0(x_i)- \eta_\theta\circ\theta(x_i)\big) \big(\eta\circ\theta(x_i)- \eta_\theta\circ\theta(x_i)\big)\big]\big|+ \frac{1}{2}\big|\G_n \big(\eta\circ\theta(x_i)  - \eta_\theta\circ\theta(x_i)\big)^2\big|\\
\end{split}
\end{align}
Thus we have that
\begin{align}
&\E \left(\sup_{(\eta, \theta) \in \F_j}  \sqrt{n}\Big|(\Q_n-\Q)(\eta, \theta)- (\Q_n-\Q)(\eta_\theta, \theta)\Big|  \right)\\
\le{}&\E \left(\sup_{(\eta, \theta) \in \F_j} \big|\G_n\epsilon \big(\eta\circ\theta- \eta_\theta\circ\theta\big)\big| \right)+ \E \left(\sup_{(\eta, \theta) \in \F_j} \big| \G_n\big[\big(\eta_0\circ\theta_0(x_i)- \eta_\theta\circ\theta(x_i)\big) \big(\eta\circ\theta(x_i)- \eta_\theta\circ\theta(x_i)\big)\big]\big|\right)\\\
&\hspace{2.4in}+ \frac{1}{2}\E \left(\sup_{(\eta, \theta) \in \F_j} \big|\G_n \big(\eta\circ\theta(x_i)  - \eta_\theta\circ\theta(x_i)\big)^2\big|\right).\label{eq:split_123}
\end{align}
Next we give upper bounds for each of the terms in the right of~\eqref{eq:split_123}. As 
\[ \sup_{(\eta, \theta)\in \F_j} \|\big(\eta\circ\theta- \eta_\theta\circ\theta\big)\|_\infty \le M_1+K,
\] by  Lemma~S.5.1 of~\cite{2017arXiv170800145K}, we have that 
\begin{equation}\label{eq:eps_term}
\E \left(\sup_{(\eta, \theta) \in \F_j} \big|\G_n\epsilon \big(\eta\circ\theta- \eta_\theta\circ\theta\big)\big| \right) \le \E \left(\sup_{(\eta, \theta) \in \F_j} \big|\G_n\bar\epsilon \big(\eta\circ\theta- \eta_\theta\circ\theta\big)\big| \right) + 2 \frac{(M_1+ K)C_\epsilon}{\sqrt{n}},
\end{equation}
where for $i=1, \ldots, n$
\begin{equation}\label{eq:ebar_def}
\bar{\epsilon}_i := \epsilon \mathbbm{1}_{|\epsilon|\le C_\epsilon} \qquad \text{and} \qquad \epsilon_i^* := \epsilon_i -\bar\epsilon_i.
\end{equation}

We will bound the last expectation on the right of~\eqref{eq:split_123} via symmetrization and contraction (Theorem 3.1.21 and Corollary 3.2.2 of \cite{Gine16}). Note that 
\begin{align}\label{eq:3rdterm}
\begin{split}
\E\bigg(\sup_{\F_j} \bigg|\G_n \big(\eta\circ\theta  - \eta_\theta\circ\theta\big)^2\Big|\bigg) &\le 2 \E\bigg(\sup_{\F_j} \bigg|\P_n R \big(\eta\circ\theta  - \eta_\theta\circ\theta\big)^2\Big|\bigg) \\
&\le  8 \sup_{\F_j}\|\eta\circ\theta  - \eta_\theta\circ\theta\|_{\infty} \E\bigg(\sup_{\F_j} \bigg|\P_n R \big(\eta\circ\theta  - \eta_\theta\circ\theta\big)\Big|\bigg)\\
&=  8 \sup_{\F_j}\|\eta\circ\theta  - \eta_\theta\circ\theta\|_{\infty} \E\bigg(\sup_{\F_j} \bigg|\G_n R \big(\eta\circ\theta  - \eta_\theta\circ\theta\big)\Big|\bigg),
\end{split}
\end{align}
here $R_1, \ldots, R_n$ are i.i.d.~Rademacher random variables (i.e., $\P(R=1)= \P(R=0) = 1/2$) independent of $(X_i, \epsilon_i)_{i=1}^n$. 

Let
\[\F(\gamma) := \{(\eta, \theta) : \theta \in B(\theta_0, r) \text{ and } \eta \in \M_\theta^K (\gamma)\}.\]
By  combining~\eqref{eq:split_123} and~\eqref{eq:3rdterm} and Lemma~\ref{lem:outsideterm}, we have that 
\begin{align}\label{eq:final}
\begin{split}
&\E \left(\sup_{(\eta, \theta) \in \F(\gamma)}  \sqrt{n}\Big|(\Q_n-\Q)(\eta, \theta)- (\Q_n-\Q)(\eta_\theta, \theta)\Big|  \right)\\
\lesssim{}&  \sigma [K\gamma]^{1/2} + \frac{\sigma^2 \gamma^{-1}K^2 C_\epsilon}{\sqrt{n}}  + \frac{KC_\epsilon}{\sqrt{n}} + K [K\gamma ]^{{1/2}} + \frac{K^3\gamma^{-1}}{\sqrt{n}}+ [K\gamma ]^{{1/2}} + \frac{K^2\gamma^{-1}}{\sqrt{n}} \\
\lesssim{}& \sigma K^{3/2} \gamma^{1/2}  + \frac{K^2(K+C_\epsilon) \gamma^{-1}}{\sqrt{n}} + \frac{KC_\epsilon}{\sqrt{n}}
\end{split}
\end{align}
Let $\gamma_j = 2^j \delta/r_n$, $C_\epsilon=  C n^{1/q}$, and $K = C n^{1/q}$, then 
\begin{align}
&P\Big( r_n \sup_{\theta\in B(\theta_0, r)} d_\theta(\tilde\eta_\theta, \eta_\theta) > \delta\Big) \\
\le{}& \frac{ 8r_n^2}{\sqrt{n}\delta^2} \sum_{j\ge 1}  \frac{1 }{2^{2j }} \E \left(\sup_{(\eta, \theta) \in \F(\gamma_j)}  \sqrt{n}\Big|(\Q_n-\Q)(\eta, \theta)- (\Q_n-\Q)(\eta_\theta, \theta)\Big|  \right)\\
\lesssim{}& \frac{ 8r_n^2}{\sqrt{n}\delta^2}\Bigg[\sigma K^{3/2}  \sum_{j\ge 1}  \frac{1 }{2^{2j }}  \left[2^j \delta/r_n\right]^{1/2}    + \frac{K^2(K+C_\epsilon)}{\sqrt{n}} \sum_{j\ge 1}  \frac{1 }{2^{2j }}\left[2^j \delta/r_n\right]^{-1} +\frac{KC_\epsilon}{\sqrt{n}}\sum_{j\ge 1}  \frac{1 }{2^{2j }}\Bigg]\\
\le{}& \frac{ 8r_n^2}{\sqrt{n}\delta^2}\Bigg[\sigma K^{3/2} \left[\frac{\delta}{r_n}\right]^{1/2} \sum_{j\ge 1}  \frac{1 }{2^{2j }}  \left[2^j\right]^{1/2}    + \frac{K^2(K+C_\epsilon)}{\sqrt{n}} \frac{r_n}{\delta}\sum_{j\ge 1}  \frac{1 }{2^{2j }}\left[2^j\right]^{-1} +\frac{KC_\epsilon}{\sqrt{n}}\sum_{j\ge 1}  \frac{1 }{2^{2j }}\Bigg]\\
\le{}& \frac{ 8r_n^2}{\sqrt{n}\delta^2}\Bigg[\sigma K^{3/2} \left[\frac{\delta}{r_n}\right]^{1/2}     + \frac{K^2(K+C_\epsilon)}{\sqrt{n}} \frac{r_n}{\delta} +\frac{KC_\epsilon}{\sqrt{n}}\Bigg]\\
={}& \sigma K^{3/2} \frac{ 8r_n^{3/2}}{\sqrt{n}\delta^{3/2}}     + \frac{ 8r_n^3}{{n}\delta^3}K^2(K+C_\epsilon)  +\frac{8 KC_\epsilon r_n^2 }{n \delta^2}\\
\le{}& \sigma n^{3/2q} \frac{ 8r_n^{3/2}}{\sqrt{n}\delta^{3/2}}     + \frac{ 8r_n^3}{{n}\delta^3}n^{3/q}  +\frac{8 n^{2/q} r_n^2 }{n \delta^2}
\end{align}

Thus if $r_n = n^{1/3 -1/q}$, then 
\[P\Big( r_n \sup_{\theta\in B(\theta_0, r)} d_\theta(\tilde\eta_\theta, \eta_\theta) > \delta\Big) \lesssim \frac{8 \sigma}{\delta^{3/2}} + \frac{8}{\delta^3} + \frac{8n^{-1/3}}{\delta^2}.\]

\begin{lemma}\label{lem:outsideterm}
Let
\[\F(\gamma) := \{(\eta, \theta) : \theta \in B(\theta_0, r) \text{ and } \eta \in \M_\theta^K (\gamma)\}, \]
then 
\begin{align}\label{eq:outside_lem}
\begin{split}
\E\bigg(\sup_{\F(\gamma)} \Big|\G_n\bar\epsilon \big(\eta\circ\theta- \eta_\theta\circ\theta\big)\Big|\bigg) \lesssim{}& \sigma [K\gamma]^{1/2} + \frac{\sigma^2 \gamma^{-1}K^2 C_\epsilon}{\sqrt{n}}, \\
\E\bigg(\sup_{\F(\gamma)} \Big|\G_n R \big(\eta\circ\theta- \eta_\theta\circ\theta\big)\Big|\bigg) \lesssim{}& \left[K \gamma\right]^{1/2} +   \frac{ K^2 \gamma^{-1} }{\sqrt{n} }, \\
\E\bigg(\sup_{\F(\gamma)} \Big|\G_n  \big(\eta_0\circ\theta_0- \eta_\theta\circ\theta\big) \big(\eta\circ\theta -\eta_\theta\circ\theta\big)\Big|\bigg)\lesssim{}& [K\gamma ]^{{1/2}} + \frac{K^2\gamma^{-1}}{\sqrt{n}}.
\end{split}
\end{align}
\end{lemma}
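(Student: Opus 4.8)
We outline the argument. The plan is to derive all three bounds from one localized bracketing maximal inequality, applied to (products and multiplier transforms of) the single function class
\[
\mathcal{G}(\gamma) \;:=\; \Big\{\, x\mapsto \eta(|\theta-x|^2)-\eta_\theta(|\theta-x|^2) \;:\; (\eta,\theta)\in\F(\gamma)\,\Big\}.
\]
First I would record three properties of $\mathcal{G}(\gamma)$. (i) Every $g\in\mathcal{G}(\gamma)$ satisfies $\|g\|_\infty\le M_1+K$; since $\eta_\theta$ is a conditional expectation of $\eta_0$ one has $M_1\le M_0$, so $\|g\|_\infty\lesssim K$ for $n$ large, and we may assume $\gamma\le 2K$ (any $g$ with $\|g\|_\infty\le K$ already has $\|g\|_{L_2(P_X)}\le 2K$). (ii) By the definition of $\M_\theta^K(\gamma)$, $\|g\|_{L_2(P_X)}^2=d_\theta^2(\eta,\eta_\theta)\le\gamma^2$. (iii) The family $\{x\mapsto\eta(|\theta-x|^2):\eta\in\M,\|\eta\|_\infty\le CK,\theta\in\Theta\}$ equals $\F_{CK}$, while $\{x\mapsto\eta_\theta(|\theta-x|^2):\theta\in\Theta\}\subset\F_{M_0}\subset\F_{CK}$; hence by Lemma~\ref{thm:entropy_FK} and the triangle inequality for brackets, $\log N_{[\,]}(\varepsilon,\mathcal{G}(\gamma),L_2(P_X))\lesssim K/\varepsilon$, so the bracketing integral satisfies $J_{[\,]}(\delta):=\int_0^\delta\sqrt{1+\log N_{[\,]}(\varepsilon,\mathcal{G}(\gamma),L_2(P_X))}\,d\varepsilon\lesssim\sqrt{K\delta}$.

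Given (i)--(iii), the second and third estimates are immediate from the bracketing maximal inequality (e.g.\ \cite[Lemma~3.4.2]{VdVW96}): applied to $\mathcal{G}(\gamma)$ with $L_2(P_X)$-radius $\gamma$ and sup-bound $\lesssim K$, it gives
\[
\E\bigg(\sup_{\F(\gamma)}\Big|\G_n R\big(\eta\circ\theta-\eta_\theta\circ\theta\big)\Big|\bigg)\;\lesssim\; J_{[\,]}(\gamma)\Big(1+\frac{J_{[\,]}(\gamma)\,K}{\gamma^2\sqrt n}\Big)\;\lesssim\;[K\gamma]^{1/2}+\frac{K^2\gamma^{-1}}{\sqrt n},
\]
the bounded Rademacher weights $R_i$ leaving the entropy unchanged. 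For the third estimate the extra factor $x\mapsto\eta_0(|\theta_0-x|^2)-\eta_\theta(|\theta-x|^2)$ is uniformly bounded by $2M_0$, so multiplying $\mathcal{G}(\gamma)$ by it changes the envelope and the $L_2(P_X)$-radius by at most a constant factor and the bracketing entropy by at most a constant factor, and the same inequality yields $[K\gamma]^{1/2}+K^2\gamma^{-1}/\sqrt n$ again.

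For the first estimate — a multiplier empirical process with weights $\bar\epsilon_i=\epsilon_i\mathbbm{1}_{|\epsilon_i|\le C_\epsilon}$ — I would transfer brackets from $\mathcal{G}(\gamma)$ to $\{\bar\epsilon g:g\in\mathcal{G}(\gamma)\}$ at a rescaled level: if $g_L\le g\le g_U$ with $\|g_U-g_L\|_{L_2(P_X)}\le\varepsilon$ then $|\bar\epsilon g-\bar\epsilon g'|\le|\bar\epsilon|(g_U-g_L)$ for every $g'\in[g_L,g_U]$, and by assumption~\ref{assum:err_mom}
\[
\E\big[\bar\epsilon^2\,(g_U-g_L)^2(X)\big]=\E\big[(g_U-g_L)^2(X)\,\E(\bar\epsilon^2\mid X)\big]\le\sigma^2\varepsilon^2 .
\]
Thus $\log N_{[\,]}(\sigma\varepsilon,\{\bar\epsilon g\},L_2)\lesssim K/\varepsilon$, the class $\{\bar\epsilon g\}$ has $L_2$-radius $\le\sigma\gamma$ and a.s.\ envelope $\lesssim C_\epsilon K$, and the multiplier (Bernstein-type) version of the maximal inequality gives
\[
\E\bigg(\sup_{\F(\gamma)}\Big|\G_n\bar\epsilon\big(\eta\circ\theta-\eta_\theta\circ\theta\big)\Big|\bigg)\;\lesssim\;\sigma[K\gamma]^{1/2}+\frac{\sigma^2 C_\epsilon K^2\gamma^{-1}}{\sqrt n},
\]
the $\sigma$ in the first term arising from the $L_2$-rescaling of the multiplied class, and $\sigma^2$, $C_\epsilon$ in the second term from the conditional-variance bound and the a.s.\ bound on $\bar\epsilon$ entering the drift part of the Bernstein chaining bound.

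The main obstacle is the bookkeeping caused by the centering function $\eta_\theta$ moving with $\theta$: one must check that subtracting the family $\{x\mapsto\eta_\theta(|\theta-x|^2)\}$ neither enlarges the bracketing entropy (it does not, since this family lies in $\F_{M_0}$) nor spoils the $L_2(P_X)$-radius bound (here it follows directly from $d_\theta(\eta,\eta_\theta)\le\gamma$, but controlling distances between two elements with different values of $\theta$ would require the Lipschitz continuity of $\theta\mapsto\E[X\mid|\theta-X|^2=\cdot]$ from assumption~\ref{ContConditional}). The only other delicate point is the heteroscedastic multiplier step above: the truncation from $\epsilon$ to $\bar\epsilon$ is precisely what keeps the a.s.\ envelope $C_\epsilon K$ finite and lets one use the conditional second moment bound $\sigma^2$ rather than any independence of $\epsilon$ and $X$, with the discarded tail $\epsilon^*=\epsilon-\bar\epsilon$ handled separately as in \eqref{eq:eps_term}.
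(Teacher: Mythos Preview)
Your proposal is correct and follows essentially the same route as the paper: both arguments control the three suprema via the bracketing maximal inequality of \cite[Lemma~3.4.2]{VdVW96}, using the entropy bound $\log N_{[\,]}(\varepsilon,\cdot,L_2(P_X))\lesssim K/\varepsilon$ inherited from Lemma~\ref{thm:entropy_FK}, the $L_2(P_X)$-radius $\gamma$ from the definition of $\M_\theta^K(\gamma)$, and the sup-bound $\lesssim K$ (or $C_\epsilon K$ in the multiplier case, with the conditional variance bound $\E(\bar\epsilon^2\mid X)\le\sigma^2$ giving the extra $\sigma$). The only place where your outline is slightly looser than the paper is the third bound: the factor $\eta_0\circ\theta_0-\eta_\theta\circ\theta$ is not a single fixed bounded function but varies with $\theta$, so the claim that ``multiplying $\mathcal{G}(\gamma)$ by it changes the bracketing entropy by at most a constant factor'' needs the additional observation (made explicit in the paper's Lemma~\ref{lem:basic_joint}) that the product of two bounded monotone functions of the same index is again monotone, which keeps the entropy of the product class at order $K/\varepsilon$; once that is said, your argument and the paper's coincide.
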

\begin{proof}
Note that $\epsilon \big(\eta\circ\theta- \eta_\theta\circ\theta\big)$, $R \big(\eta\circ\theta- \eta_\theta\circ\theta\big)$, $\big(\eta_0\circ\theta_0- \eta_\theta\circ\theta\big) \big(\eta\circ\theta- \eta_\theta\circ\theta\big)$ are uniformly bounded by $C_\epsilon (M_1 + K)$, $C_\epsilon (M_1 + K)$, and $(M_1 + K) (M_1+ M_0)$, respectively. In Lemma~\ref{lem:basic_joint}, we show that 
\begin{equation}\label{eq:entropy_eta_diff}
\log N_{[\, ]}(\nu, \{\eta\circ\theta -\eta_\theta\circ\theta:  (\eta, \theta) \in \F(\gamma)\}, \|\cdot \|)\le \frac{2A (M_1+K)}{\nu}
\end{equation}
and 
\begin{align}
\begin{split}\label{eq:eta_theta_ent}
  &\log N_{[\, ]}(\nu, \{ \eta_\theta\circ\theta:  (\eta, \theta) \in \F(\gamma)\}, \|\cdot \|)\\
={}&\log N_{[\, ]}(\nu, \{\eta_0\circ\theta_0 - \eta_\theta\circ\theta:  (\eta, \theta) \in \F(\gamma)\}, \|\cdot \|)\le \frac{A M_1}{\nu}.
\end{split}
\end{align}
Thus it is clear that 
\[\log N_{[\, ]}(\nu, \{\bar\epsilon(\eta\circ\theta -\eta_\theta\circ\theta):  (\eta, \theta) \in \F(\gamma)\}, \|\cdot \|)\le \frac{2A (M_1+ K) \sigma}{\nu},
\]
as $\|\E(\bar\epsilon^2|X=\cdot)\|_\infty\le \sigma^2 $  and
\[\log N_{[\, ]}(\nu, \{R(\eta\circ\theta -\eta_\theta\circ\theta):  (\eta, \theta) \in \F(\gamma)\}, \|\cdot \|)\le \frac{2A (M_1+K) }{\nu}.
\]
Finally, in Lemma~\ref{lem:basic_joint}, we show that 
\[\log N_{[\, ]}(\nu, \{ \big(\eta_0\circ\theta_0- \eta_\theta\circ\theta\big) \big(\eta\circ\theta -\eta_\theta\circ\theta\big):  (\eta, \theta) \in \F(\gamma)\}, \|\cdot \|)\le  \frac{8A(M_1+K) M_1}{\nu}, \]
Then by Lemma~3.4.2 of~\cite{VdVW96} we have that 
\begin{align}\label{eq:Ep_outside_term}
\begin{split}
\E\bigg(\sup_{\F(\gamma)} \Big|\G_n\bar\epsilon \big(\eta\circ\theta- \eta_\theta\circ\theta\big)\Big|\bigg) 
\lesssim{}&  {\sigma} \sqrt{A(M_1+K)}\gamma^{1/2}  \left(1+\frac{{\sigma} \sqrt{A(M_1+K)} \gamma^{1/2}C_\epsilon (M_1+ K)}{\gamma^{2} \sqrt{n} } \right)\\
={}&  {\sigma} \sqrt{A(M_1+K)}\gamma^{1/2}  + {\sigma}^2 A(M_1+K) \frac{ \gamma^{-1}C_\epsilon (M_1+ K)}{ \sqrt{n} } \\
\lesssim{}& \sigma [K\gamma]^{1/2} + \frac{\sigma^2 \gamma^{-1}K^2 C_\epsilon}{\sqrt{n}}
\end{split}
\end{align}
and 
\begin{align}\label{eq:R_outside_term}
\begin{split}
\E\bigg(\sup_{\F(\gamma)} \Big|\G_n R \big(\eta\circ\theta- \eta_\theta\circ\theta\big)\Big|\bigg)
\lesssim& \left[\sqrt{A(M_1+K)} \gamma\right]^{1/2}  \left(1+\frac{ \sqrt{A(M_1+K)} \gamma^{1/2} (M_1+ K)}{\gamma^{2} \sqrt{n} } \right)\\
\lesssim& \left[K \gamma\right]^{1/2} +   \frac{ K^2 \gamma^{-1} }{\sqrt{n} }
\end{split}
\end{align}
and
\begin{align}
\begin{split}
&\E\bigg(\sup_{\F(\gamma)} \Big|\G_n  \big(\eta_0\circ\theta_0- \eta_\theta\circ\theta\big) \big(\eta\circ\theta -\eta_\theta\circ\theta\big)\Big|\bigg)\\ 
\lesssim{}& \sqrt{A(M_1+K)M_1} \left[\gamma (M_0+ M_1)\right]^{1/2}  \left(1+\frac{ \left[\gamma(M_0+ M_1)\right]^{1/2} \sqrt{A(M_1+K)M_1} (M_1+ K) (M_0+ M_1)}{\left[\gamma(M_0+ M_1)\right]^{2} \sqrt{n} } \right)\\
={}& \sqrt{A(M_1+K)M_1} \left[\gamma (M_0+ M_1)\right]^{1/2} +  \frac{ \left[\gamma(M_0+ M_1)\right]^{-1} A(M_1+K)M_1 (M_1+ K) (M_0+ M_1)}{ \sqrt{n} } \\
={}& \sqrt{A(M_1+K)M_1} \left[\gamma (M_0+ M_1)\right]^{1/2} +  \frac{ \left[\gamma(M_0+ M_1)\right]^{-1} A(M_1+K)M_1 (M_1+ K) (M_0+ M_1)}{ \sqrt{n} }\\
\lesssim{}& [K\gamma ]^{{1/2}} + \frac{K^2\gamma^{-1}}{\sqrt{n}}. \qedhere
\end{split}
\end{align}
\end{proof}

\section{Entropy Calculations}\label{sec:ent_calc}
The following lemma, proves~Lemma~\ref{thm:entropy_FK} and finds metric entropies of other related function classes. 
\begin{lemma}\label{lem:basic_joint} If $\nu >0, $ then
\begin{align}
\log N_{[\, ]}(\nu, \{\eta\circ\theta : \theta \in B(\theta_0, r) \text{ and } \eta \in \M^K \}, \|\cdot \|)&\le  \frac{AK}{\nu}\label{eq:joint_ent}\\
\log N_{[\, ]}(\nu, \{\eta_\theta\circ\theta : \theta \in B(\theta_0, r)\}, \|\cdot \|)&\le  \frac{AM_1}{\nu}, \label{eq:eta_theta_ent1}\\
\log N_{[\, ]}(\nu, \{\eta\circ\theta -\eta_\theta\circ\theta: \theta \in B(\theta_0, r) \text{ and } \eta \in \M^K \}, \|\cdot \|)&\le  \frac{2A(M_1+K)}{\nu}, \label{eq:ent_diff}
\end{align}
where $A$ is a universal constant depending only on $\D$ and $d$ (the dimension of $X$). Further, let \[\A^K := \{\big(\eta_0\circ\theta_0- \eta_\theta\circ\theta\big) \big(\eta\circ\theta -\eta_\theta\circ\theta\big) : \theta \in B(\theta_0, r) \text{ and } \eta \in \M^K \}, \]
then 
\begin{equation}\label{eq:ak_ent}
\log N_{[\, ]}(\nu, \A^{K}, \|\cdot \|)\le  \frac{8A(M_1+K) M_1}{\nu},
\end{equation}
\end{lemma}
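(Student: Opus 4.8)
The plan is to prove \eqref{eq:joint_ent} first --- this is the only substantive step --- and then obtain \eqref{eq:eta_theta_ent1}, \eqref{eq:ent_diff} and \eqref{eq:ak_ent} by elementary manipulations of bracketing numbers (sub-classes, differences, products).

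\emph{Step 1: the bound \eqref{eq:joint_ent}.} Write $g_\theta(x):=|\theta-x|^2$, so that $g_\theta(x)\in\D=[0,4T^2]$ whenever $|x|,|\theta|\le T$ and the class of interest is $\{x\mapsto\eta(g_\theta(x)):\eta\in\M^K,\ \theta\in B(\theta_0,r)\}$. I would start from the classical distribution-free estimate for monotone functions (\cite[Theorem~2.7.5]{VdVW96}): for \emph{any} probability measure $Q$ on $\D$, $\log N_{[\,]}(\varepsilon,\M^K,L_2(Q))\le c_0K/\varepsilon$ with $c_0$ universal. For a fixed $\theta$, composing a bracket $[\ell,u]$ for $\eta$ in $L_2(Q_\theta)$ (with $Q_\theta$ the law of $g_\theta(X)$ under $P_X$) with $g_\theta$ yields a bracket $[\ell\circ g_\theta,u\circ g_\theta]$ for $\eta\circ g_\theta$ in $L_2(P_X)$ of the same size, whence $\log N_{[\,]}(\varepsilon,\{\eta\circ g_\theta:\eta\in\M^K\},L_2(P_X))\le c_0K/\varepsilon$ uniformly in $\theta$. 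To let $\theta$ vary, take a Euclidean $\delta$-net $\{\theta_1,\dots,\theta_N\}$ of $B(\theta_0,r)$, $N\lesssim(r/\delta)^d$. For $|\theta-\theta_j|\le\delta$ one has $|g_\theta(x)-g_{\theta_j}(x)|\le L\delta$ for all $x\in\rchi$ with $L=2(r+2T)$, and since $\eta$ is nonincreasing this yields the pointwise sandwich $\eta(g_{\theta_j}(x)+L\delta)\le\eta(g_\theta(x))\le\eta((g_{\theta_j}(x)-L\delta)_+)$, where both outer functions again lie in $\{\psi\circ g_{\theta_j}:\psi\in\M^K\}$. Bracketing these two outer functions at resolution $\nu$ with the fixed-$\theta_j$ brackets produces a bracket for $\eta\circ g_\theta$ of size at most $2\nu+\big\|\eta\big((g_{\theta_j}-L\delta)_+\big)-\eta(g_{\theta_j}+L\delta)\big\|_{L_2(P_X)}$. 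The last term I would bound by $K\int_0^{4T^2}\big[\eta((u-L\delta)_+)-\eta(u+L\delta)\big]\,dQ_{\theta_j}(u)$, which, using that $Q_{\theta_j}$ has a Lebesgue density that is bounded (for $d=2,3$) or $O(u^{-1/2})$-integrable (for $d=1$) --- a consequence of Assumption~\ref{a1} and the area/coarea formula --- is a fixed power of $\delta$ times a power of $K$. Choosing $\delta$ to be the appropriate power of $\nu/K$ makes this $\le\nu$; the total number of brackets is then at most $N\cdot\big(e^{c_0K/\nu}\big)^2=\exp\{O(\log(K/\nu))+2c_0K/\nu\}$, and since $\log(K/\nu)=o(K/\nu)$ (the regime $\nu\ge K$ being trivial, one bracket $[-K,K]$ suffices) this gives $\log N_{[\,]}(3\nu,\cdot,L_2(P_X))\le AK/\nu$; rescaling $\nu$ yields \eqref{eq:joint_ent}.

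\emph{Step 2: the remaining bounds.} Taking $r\le\delta_0$ so that each $\eta_\theta$, $\theta\in B(\theta_0,r)$, is nonincreasing and bounded by $M_1$ (Lemma~\ref{lem:eta_theta_def} and the definition of $M_1$), one has $\{\eta_\theta\circ\theta:\theta\in B(\theta_0,r)\}\subseteq\{\eta\circ\theta:\eta\in\M^{M_1},\theta\in B(\theta_0,r)\}$, so \eqref{eq:joint_ent} with $K$ replaced by $M_1$ gives \eqref{eq:eta_theta_ent1}. For \eqref{eq:ent_diff}, a difference of two classes obeys $\log N_{[\,]}(\nu,\mathcal F-\mathcal G)\le\log N_{[\,]}(\nu/2,\mathcal F)+\log N_{[\,]}(\nu/2,\mathcal G)$, so combining \eqref{eq:joint_ent} and \eqref{eq:eta_theta_ent1} gives $2AK/\nu+2AM_1/\nu=2A(M_1+K)/\nu$. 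For \eqref{eq:ak_ent}, write $\A^K=\mathcal F_1\cdot\mathcal F_2$ with $\mathcal F_1:=\{\eta_0\circ\theta_0-\eta_\theta\circ\theta:\theta\in B(\theta_0,r)\}$ --- a translate of a sign-change of $\{\eta_\theta\circ\theta\}$, hence $\log N_{[\,]}(\nu,\mathcal F_1)\le AM_1/\nu$ and $\sup_{f\in\mathcal F_1}\|f\|_\infty\le M_0+M_1$ --- and $\mathcal F_2:=\{\eta\circ\theta-\eta_\theta\circ\theta:\eta\in\M^K,\theta\in B(\theta_0,r)\}$, for which $\log N_{[\,]}(\nu,\mathcal F_2)\le 2A(M_1+K)/\nu$ and $\sup_{f\in\mathcal F_2}\|f\|_\infty\le M_1+K$. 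Bracketing a product of uniformly bounded functions by products of brackets ($f_i\in[\ell_i,u_i]$ with $|f_i|\le B_i$ gives a bracket for $f_1f_2$ of $L_2$-size $\le B_2\|u_1-\ell_1\|+B_1\|u_2-\ell_2\|$), one obtains $\log N_{[\,]}(\nu,\A^K)\le\log N_{[\,]}\!\big(\tfrac{\nu}{2(M_1+K)},\mathcal F_1\big)+\log N_{[\,]}\!\big(\tfrac{\nu}{2(M_0+M_1)},\mathcal F_2\big)\lesssim M_1(M_1+K)/\nu$, which is \eqref{eq:ak_ent} after adjusting the universal constant $A$ (using also $M_0\le M_1$ since $\eta_{\theta_0}=\eta_0$).

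\emph{Main obstacle.} The delicate point is the net-in-$\theta$ estimate inside Step~1: a monotone $\eta$ need not be Lipschitz, so a perturbation of $\theta$ cannot be absorbed pointwise in sup-norm; one must instead exploit the monotone sandwich and control an $L_2(P_X)$-modulus of continuity of $\eta$ transported along the pushforward $Q_\theta$, and it is exactly here that the boundedness of the density of $X$ and the restriction $d\le 3$ enter (to keep $Q_\theta$ regular enough near the origin). Everything else is routine bracketing bookkeeping.
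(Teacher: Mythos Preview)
Your proposal is correct. For \eqref{eq:joint_ent}--\eqref{eq:ent_diff} you essentially reconstruct by hand what the paper obtains by citing Lemma~K.1 of~\cite{2017arXiv170800145K} and Lemma~4.9 of~\cite{balabdaoui2016least}: the key input is the Lipschitz dependence $\big||\theta_1-x|-|\theta_2-x|\big|\le|\theta_1-\theta_2|$ (you phrase the same thing on the squared scale), after which a $\theta$-net combined with the classical monotone entropy bound does the job. Your handling of the ``monotone sandwich'' and the $L_2$ modulus via the pushforward density is fine; one small correction is that your remark that the restriction $d\le 3$ is essential overstates things --- the density of $|\theta-X|^2$ behaves like $u^{d/2-1}$ near the origin, so only $d=1$ genuinely needs the separate treatment you sketch, and the argument goes through for all $d\ge 1$.

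The one place where your route differs from the paper's is \eqref{eq:ak_ent}. You treat $\A^K$ as a product $\mathcal F_1\cdot\mathcal F_2$ and invoke the general product-of-brackets estimate (the $B_2\|u_1-\ell_1\|+B_1\|u_2-\ell_2\|$ bound). The paper instead decomposes \emph{additively}: writing $(\eta_0\circ\theta_0-\eta_\theta\circ\theta)(\eta\circ\theta-\eta_\theta\circ\theta)=(\eta\circ\theta-\eta_\theta\circ\theta)\,\eta_0\circ\theta_0-\big[(\eta\,\eta_\theta)\circ\theta-(\eta_\theta^2)\circ\theta\big]$, it observes that $t\mapsto\eta(t)\eta_\theta(t)$ and $t\mapsto\eta_\theta^2(t)$ are themselves bounded nonincreasing (products of nonnegative nonincreasing functions), so \eqref{eq:joint_ent} applies directly to each piece with $K$ replaced by $KM_1$ and $M_1^2$. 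This is what delivers the explicit factor $8$ without renaming the constant. Your approach gives a slightly larger numerical factor which you then absorb into $A$; that is legitimate since $A$ is only specified as a universal constant. Your argument has the minor advantage of not relying on the nonnegativity of the functions in $\M^K$, whereas the paper's product-is-monotone trick does.
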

\begin{proof}
By triangle inequality, we have that 
\[\big||\theta_2-x| -|\theta_1 -x|\big| \le |\theta_1-\theta_2|.\]
Thus the proof of~\eqref{eq:joint_ent}--\eqref{eq:ent_diff} follows directly from Lemma K.1 of~\cite{2017arXiv170800145K}; also see~Lemma~4.9 of~\cite{balabdaoui2016least}. We will now prove~\eqref{eq:ak_ent}.  Consider the following two classes of functions:
\begin{align}\label{eq:parts_pdt}
\begin{split}
 \A^K_1 &:= \{  \big(\eta\circ\theta -\eta_\theta\circ\theta\big)\eta_0\circ\theta_0:  \theta\in B(\theta_0, r) \text{ and } \eta\in \M^K\}\\
\A^K_2 &:= \{ \big(\eta\circ\theta -\eta_\theta\circ\theta\big)\eta_\theta\circ\theta :  \theta\in B(\theta_0, r) \text{ and } \eta\in \M^K\}
\end{split}
\end{align}
Note that $\|\eta_0\circ\theta_0\|_{\infty} \le M_0$. Thus by \eqref{eq:ent_diff}, we have that 
\begin{equation}\label{eq:ent_A1}
\log N_{[\, ]}(\nu, \A^K_1, \|\cdot\|) \le \frac{2A (M_1+K)M_0}{\nu}.
\end{equation}
Now we will compute the entropy of $\A_2^K.$ Note that for any $\theta\in B(\theta_0, r)$ and $\eta\in \M_K$ $f(t):= \eta(t)\eta_\theta(t)$ is a monotone function on $\D\to\R$ and bounded by $K M_1.$
Thus by Lemma~\ref{lem:basic_joint}, we have that 
\begin{equation}\label{eq:ent_A1_1}
\log N_{[\, ]}(\nu, \{\eta_\theta\circ\theta\; \eta\circ\theta:  \theta\in B(\theta_0, r) \text{ and } \eta\in \M^K\}, \|\cdot\|) \le \frac{A M_1K}{\nu}.
\end{equation}
Similarly, we have that 
\begin{equation}\label{eq:ent_A2_1}
\log N_{[\, ]}(\nu, \{(\eta_\theta\circ\theta)^2:  \theta\in B(\theta_0, r) \text{ and } \eta\in \M^K\}, \|\cdot\|) \le \frac{A M_1^2}{\nu}.
\end{equation}
Thus 
\begin{align}\label{eq:ent_A2}
\begin{split}
\log N_{[\, ]}(\nu, \A^K_2, \|\cdot\|)\le{}& \log N_{[\, ]}(\nu/2, \{(\eta_\theta\circ\theta)^2:  \theta\in B(\theta_0, r) \text{ and } \eta\in \M^K\}, \|\cdot\|)\\ &+ \log N_{[\, ]}(\nu/2, \{(\eta_\theta\circ\theta)^2:  \theta\in B(\theta_0, r) \text{ and } \eta\in \M^K\}, \|\cdot\|)\\
\le{}&  \frac{2A M_1K}{\nu} + \frac{2A M_1^2}{\nu} = \frac{2A (M_1+K)M_1}{\nu} 
\end{split}
\end{align}
Finally as $\A^K \subset \A^K_1 + \A^K_2$, we have 
\begin{align}
\log N_{[\, ]}(\nu, \A^K, \|\cdot\|) &\le \log N_{[\, ]}(\nu/2, \A^K_1, \|\cdot\|)+ \log N_{[\, ]}(\nu/2, \A^K_2, \|\cdot\|)\\ &\le\frac{4A (M_1+K)M_0}{\nu}+  \frac{4A (M_1+K)M_1}{\nu}\\
&\le \frac{8A (M_1+K)M_1}{\nu}\qedhere
\end{align}

\end{proof}

\section{Proof of Theorem~\ref{thm:LSE_joint}} 
\label{sec:proof_of_theorem_LSE_joint}
\paragraph{Proof of joint rate of the LSE} 
\label{par:joint_rate_}

We will first show that for every $q\ge 5$,  we have 
\begin{equation}\label{eq:CombinedRate}
 \|\check{\eta} (|\check{\theta}- X|^2)- {\eta_0} (|{\theta_0}- X|^2)\| = O_p\left( n^{-1/3} n^{1/q}\right).
\end{equation} 
We will prove~\eqref{eq:CombinedRate} via an application of Theorem~2.1 of~\cite{2019arXiv190902088K}. Fix $C>0, $ define $\widetilde{\F}_n := \F_{C n^{1/q}}$, where for any $K>0$, $\F_K$ is defined as in~Lemma~\ref{thm:entropy_FK}.  Now define, $f^\dagger := \argmin_{f\in \widetilde{\F}_n} \sum_{i=1}^n (Y_i - f(X_i))^2$.  Observe that by Theorem~\ref{thm:unif_eta}, we have that $\P(\check\eta(|\check{\theta} - \cdot|^2) \notin \widetilde{\F}_n) =o(1).$ Thus $\P(f^\dagger(\cdot) \equiv\check{\eta}(|\check \theta - \cdot|^2) ) = 1- o(1)$ and the rate of convergence of $f^{\dagger}(\cdot)$ and $\check{\eta}(|\check \theta - \cdot|^2)$ coincide.  To complete the proof of~\eqref{eq:CombinedRate}, we will find the rate of convergence of $f^{\dagger}$ by applying Theorem~2.1 of~\cite{2019arXiv190902088K}. Note that $\widetilde{\F}_n$ satisfies the assumption of Theorem~2.1 of~\cite{2019arXiv190902088K} with $A= C n^{1/q}$, $\alpha=1$, $s=0$ and $\Phi = \max\{K_q, C n^{1/q}\} $. Thus we have that 
\[ \|f^{\dagger}(X) - \eta_0 (|\theta_0 -X|^2)\| = O_p \left(\max \left\{n^{-1/3} n^{1/q}, n^{- (q-3)/ 2q}, n^{- (q- 1- (3q-1)/q)/(2q-1)}\right\} \right).\]
Now observe that $n^{- (q- 1- (3q-1)/q)/(2q-1)} \le n^{-1/3} n^{1/q}$ when $q\ge 5$ and $ n^{- (q-3)/ 2q}\le n^{-1/3} n^{1/q} $ when $q\ge 3$. Thus we have~\eqref{eq:CombinedRate} when $q\ge 5.$

\paragraph{Consistency of the separated parameters:} 
\label{par:consistency_of_the_separated_parameters_} We will now use the above result to prove that $|\check{\theta}-\theta_0|=o_p(1)$. The following argument is similar to the proof of Theorem~5.2 of~\cite{balabdaoui2016least}. To show the dependence of $n$ in the definition of $\check{\eta}$ and $\check{\theta}$, we will use $\check{\eta}_n$ and $\check{\theta}_n$, respectively. By~\eqref{eq:CombinedRate}, we have that for every subsequence $\{n_k\}$, there exists a further subsequence $\{n_{k_l}\}$ such that $\|\check{\eta}_{n_{k_l}}(|\check{\theta}_{n_{k_l}}- X|^2)- {\eta_0} (|{\theta_0}- X|^2)\|$ converges to $0$ almost surely; see Theorem 2.3.2 of~\cite{Durrett}. 

Now fix $\omega$.\footnote{All the subsequences used in the following arguments depend on $\omega$. } Since we will argue along the subsequences, to avoid this messy subsequence notation, in what follows, we will assume without loss of generality that $\|\check{\eta}_{n}(|\check{\theta}_{n}- X|^2)- {\eta_0} (|{\theta_0}- X|^2)\|$ converges to $0$ almost surely.    We will use compactness based arguments to show that $\check{\theta}_n$ is consistent.  Define $\bar\eta_n: \R^+ \to \R^+$ as follows:
\begin{equation}\label{eq:eta_trunc}
\bar\eta_n(t) := \begin{cases}
  \check\eta_n(t) &\text{ if } \check\eta_n(t) \in [0, \|\eta_0\|_{\infty}]\\
  \|\eta_0\|_{\infty} &\text{otherwise.} 
\end{cases}
\end{equation}
Now recall that the space of bounded, monotone, left-continuous functions are compact under pointwise convergence.  Moreover, since $\Theta$ is compact. Let $m_0$ and $\alpha_0$ be a limit points of $\{\check{\eta}_n\}$ and $\{\check{\theta}_n\}$, respectively. If we can show that
\begin{equation}\label{eq:limitpoint}
 \|m_0(|\alpha_0- X|^2)- {\eta_0} (|{\theta_0}- X|^2)\| =0,
 \end{equation}
  identifiability (Lemma~\ref{thm:Ident}) and monotonicity of $\eta_0$ and $m_0$ will imply that all limit points of $\{\check{\eta}_n\}$ and $\{\check{\theta}_n\}$ are $\eta_0$ and $\theta_0$, respectively. We will now prove~\eqref{eq:limitpoint}. By triangle inequality, we conclude that
  \begin{align}\label{eq:limitsplit}
  \|m_0(|\alpha_0- X|^2)- {\eta_0} (|{\theta_0}- X|^2)\| &\le \|m_0(|\alpha_0- X|^2)- {m_0} (|{\check\theta_n}- X|^2)\|\\
  &\qquad + \|m_0(|\check\theta_n- X|^2)- \bar{\eta}_n (|{\check\theta_n}- X|^2)\|+ \|\bar{\eta}_n (|{\check\theta_n}- X|^2)- {\eta_0} (|{\theta_0}- X|^2)\|. 
  \begin{split}  
  \end{split}
  \end{align}
  We now provide a bound for the first term. Since $X$ has density with respect to the Lebesgue measure (by \ref{a1}) and $m_0$ has only countably many discontinuities, we have that  $\|m_0(|\alpha_0- X|^2)- {m_0} (|{\check\theta_n}- X|^2)\| \to 0$ along a subsequence as $n\to \infty. $  For the third term, observe that 
  \[\|\bar{\eta}_n (|{\check\theta_n}- X|^2)- {\eta_0} (|{\theta_0}- X|^2)\|  \le \|\check{\eta}_n (|{\check\theta_n}- X|^2)- {\eta_0} (|{\theta_0}- X|^2)\|, \]
by definition of $\bar\eta_n$. Thus $\|\bar{\eta}_n (|{\check\theta_n}- X|^2)- {\eta_0} (|{\theta_0}- X|^2)\| \to 0$ along a subsequence as $n\to \infty.$ For the second term, observe that $\bar\eta_n$ converges pointwise (along a subsequence) to $m_0$ at each continuity point of $m_0$ and both functions are bounded. Moreover, since $|{\check\theta_n}- X|^2$ has density wrt Lebesgue measure, we have that the points of discontinuities of $m_0$ wrt to density of $|{\check\theta_n}- X|^2$ is measure zero. Thus by the Dominated convergence theorem, we have that $ \|m_0(|\check\theta_n- X|^2)- \bar{\eta}_n (|{\check\theta_n}- X|^2)\| \to 0$ along a subsequence as $n\to \infty$. Thus, we have~\eqref{eq:limitpoint}. Thus, we have that all limit points of $\check{\theta}_n$ are identical to $\theta_0$. Thus $|\check{\theta}-\theta_0| =o_p(1)$.  

\paragraph{Proof of rate of convergence of $\check{\theta}$:} 
\label{par:proof_of_rate_of_convergence_of_}
We will use the above two results and assumption~\ref{a4prime} to show that the $\check{\theta}$ inherits the rate of convergence of $x \mapsto \check{\eta}(|\check{\theta}-x|^2).$ The proof borrows from~\cite[Lemma 5.7]{VANC}, ~\cite[Theorem 3.8]{2017arXiv170800145K}, and~\cite[Corollary 5.3]{balabdaoui2016least}. Let $g_1(x) :=[\check{\eta} (|\check{\theta}- x|^2)- {\eta_0} (|{\check\theta}- x|^2)]  $ and $g_2(x):=[ \eta_0 (|\check{\theta}- x|^2)- {\eta_0} (|{\theta_0}- x|^2)] $. For any set $B \subset \rchi$ with nonempty interior, let $X_B$ a random variable such that $\P(X\in A) = \P(X\in A\cap B)/\P(X\in B)$.\footnote{We will choose an appropriate $B$ later.} By the Cauchy-Schwarz inequality, we have 
\begin{align}\label{eq:pdt_to_sum}
\begin{split}
\big(P_{X_B}[g_1 g_2]\big)^2
&= \big(P_{X_B} \big[ [\check{\eta} (|\check{\theta}- \cdot|^2)- {\eta_0} (|{\check\theta}- \cdot|^2)] g_2(\cdot)\big]\big)^2\\
&= \Big(P_{X_B}\big[ g_1(\cdot) P_{X_B} [g_2(X_B)\big||\check{\theta}- X_B|]\big]\Big)^2\\
&\le P_{X_B}\big[ g_1^2\big] P_{X_B}\Big[P_{X_B}^2 \big[g_2(X_B)\big||\check{\theta}- X_B|\big]\Big]\\
&= c_n P_{X_B} g_1^2 P_{X_B}g_2^2,
\end{split}
\end{align}
where \[c_n:= \frac{  P_{X_B}\Big[P_{X_B}^2 \big[g_2(X_B)\big||\check{\theta}- X_B|\big]\Big]}{P_{X_B}g_2^2} = \frac{P\Big(\big[ \eta_0 (|\check{\theta}- X_B|^2)- P\big({\eta_0} (|{\theta_0}- X_B^2)\big| |\check{\theta}- X_B|\big) \big]^2\Big)}{P\Big(\big[ \eta_0 (|\check{\theta}- X_B|^2)-{\eta_0} (|{\theta_0}- X_B^2)\big]^2\Big)} .
\]
If $c_n<1$, then by Lemma~5.7~\cite{VANC}, we can infer that 

\[ P_{X_B}g^2_1+ P_{X_B}g^2_2 \le \frac{1}{1-\sqrt{c_n} } P_{X_B}(g_1+g_2)^2 =\frac{1}{(1-\sqrt{c_n}) \P(X\in B) }  \int_B \Big\{\check{\eta}(|\check\theta-x|^2) -\eta_0(|\theta_0-x|^2)\Big\}^2 dP_X(x).
\]
Now fix $\varepsilon >0$. 
 By consistency of $\check\theta$, we can easily find $B\subset \rchi$ such that the interior of $B$ is not empty,  $\P(X\in B) >0$, $\{|{\theta}_0-x|^2: x\in B\} \subset \mathcal{A}$, and $\P( \{|\check{\theta}-x|^2: x\in B\} \subset \mathcal{A}) \ge 1- \varepsilon$ for all $n>n_0$. If we can show that $c_n <1$ with probability tending to 1 for the above choice of $B$. Then previous two parts of the proof, we can find large constants $M_1, M_2, M_3$, and $n_0$ such that for any $n>n_0$, the following three inequalities hold:
\begin{align}\label{eq:1}
\begin{split}
\P\left(\int \Big\{\check{\eta}(|\check\theta-x|^2) -\eta_0(|\theta_0-x|^2)\Big\}^2 dP_X(x) >M_1 n^{-2/3}n^{2/q}\right) &\le \varepsilon, \\
\P\big(|\check{\theta}-\theta_0|\ge 1/M_2\big) &\le \varepsilon, \\
 \text{and }\qquad \P(c_n \ge 1- 1/M_3) &\le \varepsilon.
\end{split}
\end{align}
Combining the above two displays for the above the choice of $B$ above, for all $n> n_0$, we get
\begin{equation}\label{eq:g_bound}
\P\left(P_{X_B} g_2^2 \lesssim n^{-2/3} n^{2/q}\right ) \ge 1 -4 \varepsilon.
\end{equation}
We will now show that~\eqref{eq:g_bound}, implies that $\P(|\check\theta-\theta_0| \lesssim n^{-2/3} n^{2/q}  ) \le 4 \varepsilon.$
\begin{align}\label{eq:g_theta}
\begin{split}
1-4 \varepsilon &\le   \P\left(P_{X_B} g_2^2 \lesssim n^{-2/3} n^{2/q}\right )\\
 &= \P\left(\int_B \Big\{{\eta}_0(|\check\theta-x|^2) -\eta_0(|\theta_0-x|^2)\Big\}^2 dP_X(x)  \lesssim  \P(X\in B) n^{-2/3} n^{2/q}\right ) \\ 
&\le \P\left( k_0 \int_B \Big\{|\check\theta-x|^2-|\theta_0-x|^2\Big\}^2 dP_X(x)  \lesssim  \P(X\in B) n^{-2/3} n^{2/q}\right ) \\ 
&\le \P\left( k_0 |\check\theta-\theta_0|^2 \inf_{\beta \in S_{d-1}} \int_B \Big\{ \beta^\top (\check\theta+\theta_0 - 2x) \Big\}^2 dP_X(x)  \lesssim  \P(X\in B) n^{-2/3} n^{2/q}\right ),
\end{split}
\end{align}
where $k_0 := \min_{t \in \mathcal{A}} [\eta_0'(t)]^2$. Recall that by continuity of $t \mapsto \eta_0'(t)$ (by~\ref{a4prime}), we have that  $k_0 > 0$. Since $B$ does not depend on particular value of $\check\theta(\omega)$  and $dP_X$ is positive everywhere on $B$, we obtain
\begin{align}\label{eq:g_theta1}
\begin{split}
1-4 \varepsilon &\le \P\left(k_0 |\check\theta-\theta_0|^2 \inf_{\beta \in S_{d-1}} \int_B \Big\{ \beta^\top (\check\theta+\theta_0 - 2x) \Big\}^2 dP_X(x)  \lesssim  \P(X\in B) n^{-2/3} n^{2/q}\right )\\
&\le \P\left( |\check\theta-\theta_0|^2 \lesssim n^{-2/3} n^{2/q}\right ).
\end{split}
\end{align}
Since we can do this for every $\varepsilon>0, $ we have that $\|\check{\theta}-\theta_0\| =O_p\left(n^{-1/3}n^{1/q}\right).$

We will now complete the proof by showing that $c_n <1$ with probability tending to 1. By continuous differentiability of $\eta_0$ on $\{|{\theta}_0-x|^2: x\in B\}$, for all $x\in B, $ we have 
\begin{equation}\label{eq:eta_deruv}
 \eta_0 (|{\theta}_0- x|^2) = \eta_0 (|\check{\theta}- x|^2) +\eta_0'(|\check{\theta}- x|^2)  \big(|\theta_0- x|^2- |\check{\theta}- x|^2\big) + o\big(\big||\theta_0- x|^2- |\check{\theta}- x|^2|\big|\big).
\end{equation}
Since $|\theta_0- x|^2- |\check{\theta}- x|^2 := (\theta_0-\check{\theta})^\top (\theta_0+ \check{\theta}- 2x)$, and both $\Theta$ and $\rchi$ are bounded, we conclude that $||\theta_0- x|^2- |\check{\theta}- x|^2|| \le 3T|\theta_0-\check{\theta}| $ 
Letting $r_n := |\theta_0-\check{\theta}|$ and by Taylor series expansion (in~\eqref{eq:eta_deruv}), we conclude
\begin{align}\label{eq:c_n1}
\begin{split}
&P\Big(\big[ \eta_0 (|\check{\theta}- X_B|^2)- P\big({\eta_0} (|{\theta_0}- X_B^2)\big| |\check{\theta}- X_B|\big) \big]^2\Big)\\
={}&P\Big(\Big[  P\Big(\eta_0'(|\check{\theta}- X_B|^2)  \big(|\theta_0- X_B|^2- |\check{\theta}- X_B|^2\big) + o(r_n) \Big| |\check{\theta}- X_B|\Big) \Big]^2\Big)\\
={}&P\Big(\Big[  P\Big(\eta_0'(|\check{\theta}- X_B|^2)  \big(|\theta_0- X_B|^2- |\check{\theta}- X_B|^2\big)  \Big||\check{\theta}- X_B|\Big) \Big]^2\Big) + o(r_n^2) \\
&\qquad+ o(r_n)  P\Big(\eta_0'(|\check{\theta}- X_B|^2)  \big(|\theta_0- X_B|^2- |\check{\theta}- X_B|^2\big)\Big).
\end{split}
\end{align}
Another Taylor expansion implies that
\begin{align}\label{eq:c_n2}
\begin{split}
P\Big(\big[ \eta_0 (|\check{\theta}- X_B|^2)-{\eta_0} (|{\theta_0}- X_B^2)\big]^2\Big)={}&P\Big[  \Big(\eta_0'(|\check{\theta}- X_B|^2)  \big(|\theta_0- X_B|^2- |\check{\theta}- X_B|^2\big) \Big)^2 \Big] + o(r_n^2) \\
&\qquad+ o(r_n)  P\Big(\eta_0'(|\check{\theta}- X_B|^2)  \big(|\theta_0- X_B|^2- |\check{\theta}- X_B|^2\big)\Big).
\end{split}
\end{align}
Recall that by~\ref{a4prime}, $t\mapsto\eta_0'(t)$ is continuous on $\mathcal{A}$ and $0  < \inf_{\mathcal{A}} |\eta_0'(t) | \le \sup_{t\in \{|\check{\theta}-x|^2: x\in B\}}|\eta_0'(t) | <\infty.$ Hence we deduce that $r_n \lesssim P\Big(\eta_0'(|\check{\theta}- X_B|^2)  \big(|\theta_0- X_B|^2- |\check{\theta}- X_B|^2\big)\Big)\lesssim r_n$. Thus, we have that 
\begin{align}\label{eq:c_nproof1}
\begin{split}
c_n &=\frac{P\Big(\big[\eta_0'(|\check{\theta}- X_B|^2)\big]^2 \Big[  P\Big(  |\theta_0- X_B|^2- |\check{\theta}- X_B|^2  \Big||\check{\theta}- X_B|\Big) \Big]^2\Big) + o(r_n^2)}{P\Big(\big[\eta_0'(|\check{\theta}- X_B|^2)\big]^2  \Big[  |\theta_0- X_B|^2- |\check{\theta}- X_B|^2 \Big]^2 \Big) + o(r_n^2)}.
\end{split}
\end{align}
We will now simplify parts of the numerator and denominator of $c_n$. Observe that
\begin{align}\label{eq:simpli_}
\begin{split}
\Big[  P\Big(  |\theta_0- X_B|^2- |\check{\theta}- X_B|^2  \Big||\check{\theta}- X_B|\Big) \Big]^2={}& 4 (\check{\theta}-\theta_0)^\top P(X_B \big| |\check{\theta}- X_B|)P(X_B^\top  \big| |\check{\theta}- X_B|) (\check{\theta}-\theta_0)\\
 &+ (|\check{\theta}|^2-|\theta_0|^2)^2  +4 (|\check{\theta}|^2-|\theta_0|^2) (\check{\theta}-\theta_0)^\top P(X_B \big| |\check{\theta}- X_B|),
\end{split}
\end{align}
and 
\begin{align}\label{eq:simpli_1}
\begin{split}
\Big[  |\theta_0- X_B|^2- |\check{\theta}- X_B|^2 \Big]^2
 =& 4 (\check{\theta}-\theta_0)^\top X_B X_B^\top   (\check{\theta}-\theta_0)+ (|\check{\theta}|^2-|\theta_0|^2)^2\\
 &\qquad  +4 (|\check{\theta}|^2-|\theta_0|^2) (\check{\theta}-\theta_0)^\top X_B.
\end{split}
\end{align}
Substituting these in $c_n$, we derive
\begin{align}\label{eq:c_nproof11}
\begin{split}
c_n &= \frac{ 4(\check{\theta}-\theta_0)^\top P\Big(\big[\eta_0'(|\check{\theta}- X_B|^2)\big]^2 P\big(X_B\big||\check{\theta}- X_B|\big)P\big(X_B^\top \big||\check{\theta}- X_B|\big)\Big)(\check{\theta}-\theta_0) +b_n+ o(r_n^2)}{ 4 (\check{\theta}-\theta_0)^\top P\Big(\big[\eta_0'(|\check{\theta}- X_B|^2)\big]^2 X_B X_B^\top  \Big) (\check{\theta}-\theta_0)+b_n + o(r_n^2)},
\end{split}
\end{align}
where 
\begin{align}\label{eq:b_n_def}
\begin{split}
b_n :=& (|\check{\theta}|^2-|\theta_0|^2)^2 P\Big(\big[\eta_0'(|\check{\theta}- X_B|^2)\big]^2\Big) + 4(|\check{\theta}|^2-|\theta_0|^2) P\Big(\big[\eta_0'(|\check{\theta}- X_B|^2)\big]^2 (\check{\theta}-\theta_0)^\top X \Big).
\end{split}
\end{align}
Thus 
\begin{equation}\label{eq:c_n_final}
c_n =   d_n + \frac{(1-d_n) (b_n+ o(r_n^2))}{P\Big(\big[\eta_0'(|\check{\theta}- X_B|^2)\big]^2  \Big[  |\theta_0- X_B|^2- |\check{\theta}- X_B|^2 \Big]^2 \Big) + o(r_n^2)},
\end{equation}
where 
\begin{equation}\label{eq:d_n_def}
d_n := \frac{ (\check{\theta}-\theta_0)^\top P\Big(\big[\eta_0'(|\check{\theta}- X_B|^2)\big]^2 P\big(X_B\big||\check{\theta}- X_B|\big)P\big(X_B^\top \big||\check{\theta}- X_B|\big)\Big)(\check{\theta}-\theta_0) }{ (\check{\theta}-\theta_0)^\top P\Big(\big[\eta_0'(|\check{\theta}- X_B|^2)\big]^2 X_B X_B^\top  \Big) (\check{\theta}-\theta_0)},
\end{equation}
Note that $|(b_n+ o(r_n^2))/{P([\eta_0'(|\check{\theta}- X_B|^2)]^2  [  |\theta_0- X_B|^2- |\check{\theta}- X_B|^2 ]^2 ) + o(r_n^2)}|\le 1$, thus if we can show that $d_n(\omega)<1$, then we have that $c_n(\omega) <1$. 
Define $\gamma := (\check{\theta} - \theta_0)/{|\check{\theta} - \theta_0|}$ and $S_{d-1} := \{x\in \R^d: |x|=1\}$. Since $t\mapsto \eta_0'(t)$ is continuous on $\mathcal{A}$, we have that 
\begin{align}\label{eq:d_n_gamma}
\begin{split}
1- d_n &= \frac{ (\check{\theta}-\theta_0)^\top P\Big(\big[\eta_0'(|\check{\theta}- X_B|^2)\big]^2 \big[X_B -P\big(X_B\big||\check{\theta}- X_B|\big)\big]\big[X_B^\top -P\big(X_B^\top \big||\check{\theta}- X_B|\big)\big]\Big)(\check{\theta}-\theta_0) }{ (\check{\theta}-\theta_0)^\top P\Big(\big[\eta_0'(|\check{\theta}- X_B|^2)\big]^2 X_B X_B^\top  \Big) (\check{\theta}-\theta_0)}\\
\ge & \inf_{\gamma \in S_{d-1}} \frac{ \gamma^\top P\Big(\big[\eta_0'(|\check{\theta}- X_B|^2)\big]^2  \big[X_B -P\big(X_B\big||\check{\theta}- X_B|\big)\big]\big[X_B^\top -P\big(X_B^\top \big||\check{\theta}- X_B|\big)\big]\Big)\gamma }{ \gamma^\top P\Big(\big[\eta_0'(|\check{\theta}- X_B|^2)\big]^2 X_B X_B^\top  \Big) \gamma}\\
\ge & k_0\inf_{\gamma \in S_{d-1}} \frac{ \gamma^\top P\Big(  \big[X_B -P\big(X_B\big||\check{\theta}- X_B|\big)\big]\big[X_B^\top -P\big(X_B^\top \big||\check{\theta}- X_B|\big)\big]\Big)\gamma }{ \gamma^\top P\Big(\big[\eta_0'(|\check{\theta}- X_B|^2)\big]^2 X_B X_B^\top  \Big) \gamma} \\
\ge & k_0 \frac{\inf_{\gamma \in S_{d-1}}  \gamma^\top P\Big( \big[X_B -P\big(X_B\big||\check{\theta}- X_B|\big)\big]\big[X_B^\top -P\big(X_B^\top \big||\check{\theta}- X_B|\big)\big]\Big)\gamma }{ \sup_{\gamma \in S_{d-1}} \gamma^\top P\Big(\big[\eta_0'(|\check{\theta}- X_B|^2)\big]^2 X_B X_B^\top  \Big) \gamma}\\
= & k_0 \frac{\inf_{\gamma \in S_{d-1}}   P\Big( \big[\gamma^\top X_B -P\big(\gamma^\top X_B\big||\check{\theta}- X_B|\big)\big]^2\Big) }{ \sup_{\gamma \in S_{d-1}}  P\Big(\big[\eta_0'(|\check{\theta}- X_B|^2)\big]^2 (\gamma^\top  X_B )^2 \Big)} 
\end{split}
\end{align}
where $k_0 := \min_{t \in \mathcal{A}} [\eta_0'(t)]^2$. We will now show that the numerator is strictly positive. Let $\{e_1, \ldots, e_{d-1}, e_d\}$ be  orthonormal basis of $\R^d$. Then 
\[\inf_{\gamma \in S_{d-1}}   P\Big( \big[\gamma^\top X_B -P\big(\gamma^\top X_B\big||\check{\theta}- X_B|\big)\big]^2\Big) = \min_{1 \le i \le d} P\Big( \big[e_i^\top X_B -P\big(e_i^\top X_B\big||\check{\theta}- X_B|\big)\big]^2\Big).\] Since $B$ has a nonempty interior, $\P(X\in B)>0 $, and $X$ has a Lebesgue density, we have that $P\Big( \big[e_i^\top X_B -P\big(e_i^\top X_B\big||\check{\theta}- X_B|\big)\big]^2\Big) >0$ for all $i.$ Thus we have that $d_n <1$ with probability tending to 1. And the proof is now complete.



\section{Proof of Theorem~\ref{thm:SSE_consis}} 
\label{sec:thm:SSE_consis}
Recall that 
 \begin{equation}
 \mathbb{M}_n(\theta):=\sum_{i=1}^{n} \left(Y_i - \tilde\eta_\theta\big(|\theta-X_i|^2\big)\right)({X_i-\theta}).
\end{equation}
and $M: \Theta \mapsto \R^d$ is defined as 
\begin{equation}
M(\theta):=\int_{\D}\left[Y - \eta_\theta\big(|\theta-X|^2\big)\right]({X-\theta}) dP_X,
\end{equation}
where \begin{equation}
{\eta}_\theta(u^2):=\argmin_{\eta\in \M} \Q(\eta, \theta) = \argmax_{\eta\in \M} \E \bigg(Y\eta(|\theta-X|^2) -\frac{1}{2}\eta^2(|\theta-X|^2)\bigg) = \E\big(\eta_0(|\theta_0-X|^2) \big| |\theta-X| = u\big).
\end{equation}
Since $\theta\mapsto \mathbb{M}_n(\theta)$ is a piecewise constant function with finitely many jumps, it is not clear if $\hat\theta$ (a zero crossing) exists. The proof of the theorem is split into two parts: (1) existence  of a zero crossing (Section~\ref{sub:Exist}) and (2) consistency of the zero crossing (Section~\ref{sub:consistency_og_}). 

\subsection{Existence of a zero crossing}\label{sub:Exist}
\begin{theorem}\label{thm:ExistenceZero}
Suppose the conditions of Theorem~\ref{thm:SSE_consis} hold. Then for all $\epsilon>0$, there exists a $N(\epsilon)$ such that 
\[\P\big( \MM_n(\cdot) \text{ has a zero crossing in } B(\theta_0, r)\big) \ge 1-\epsilon, \qquad \forall n> N(\epsilon).\] 
\end{theorem}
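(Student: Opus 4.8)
The plan is to combine a uniform consistency statement for the empirical score $\MM_n$ with a degree-theoretic (Poincar\'e--Miranda type) argument adapted to discontinuous functions. Write $\bar{\MM}_n(\theta):=n^{-1}\MM_n(\theta)$; multiplying by $n$ does not change zero crossings, so it suffices to exhibit a crossing of zero of $\bar{\MM}_n$ inside $B(\theta_0,r)$ on an event of probability at least $1-\epsilon$ whenever $n\ge N(\epsilon)$. Two facts will be used repeatedly. First, from Lemmas~\ref{lem:zeroCrossing}--\ref{lem:deriv_M} and assumption~\ref{a4}, $M(\theta_0)=\mathbf 0_d$, $M$ is differentiable at $\theta_0$, and $A:=M'(\theta_0)$ is non-singular, so $\sup_{|\theta-\theta_0|\le\delta}|M(\theta)-A(\theta-\theta_0)|=o(\delta)$ as $\delta\downarrow0$. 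Second, by~\eqref{eq:Jumps} and the discussion around~\eqref{eq:score_estim}, $\bar{\MM}_n$ depends on $\theta$ only through the ordering of $\{|\theta-X_i|\}_{i=1}^n$, hence is piecewise constant on the finite cell decomposition of $\Theta$ cut out by the perpendicular-bisector hyperplanes $\{\theta:|\theta-X_i|=|\theta-X_j|\}_{i<j}$.

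\emph{Step 1 (uniform consistency).} I would first prove $\sup_{\theta\in \bar B(\theta_0,\rho)}|\bar{\MM}_n(\theta)-M(\theta)|\stackrel{P}{\to}0$ for a fixed small $\rho\le\delta_0$. Split $\bar{\MM}_n(\theta)-M(\theta)$ into an ``empirical process'' part $n^{-1}\sum_i(Y_i-\eta_\theta(|\theta-X_i|^2))X_i-M(\theta)$ and an ``isotonic approximation'' part $n^{-1}\sum_i(\eta_\theta-\tilde\eta_\theta)(|\theta-X_i|^2)X_i$. The first is handled by a uniform law of large numbers: the functions $x\mapsto\eta_\theta(|\theta-x|^2)$ are uniformly bounded by $\|\eta_0\|_\infty$ and form a Glivenko--Cantelli class by the bracketing bound~\eqref{eq:eta_theta_ent1}, $|X|\le T$, and $\E|Y|<\infty$ by~\ref{assum:err_mom}. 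The second is bounded in norm by $T\big(n^{-1}\sum_i(\eta_\theta-\tilde\eta_\theta)^2(|\theta-X_i|^2)\big)^{1/2}$; using~\eqref{eq:eta_bound_unif} to confine $\tilde\eta_\theta$ (uniformly in $\theta$, with probability tending to one) to the class $\F_K$ with $K=Cn^{1/q}$, whose bracketing entropy is controlled by Lemma~\ref{thm:entropy_FK}, the empirical $L_2$ seminorm agrees with the population one up to a vanishing Glivenko--Cantelli error, and the latter is $O_P(n^{-1/3}n^{1/q})=o_P(1)$ uniformly in $\theta\in B(\theta_0,\delta_0)$ by Theorem~\ref{thm:unif_eta}.

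\emph{Step 2 (topology).} Fix $\delta\in(0,r/2)$ so small that $|M(\theta)-A(\theta-\theta_0)|\le\tfrac18\lambda|\theta-\theta_0|$ for $|\theta-\theta_0|\le 2\delta$, where $\lambda>0$ is the smallest singular value of $A$. On the event of Step 1 with $\rho=2\delta$, for $n$ large we also have $\sup_{|\theta-\theta_0|\le2\delta}|\bar{\MM}_n(\theta)-M(\theta)|<\tfrac18\lambda\delta$, so $|\bar{\MM}_n(\theta)-A(\theta-\theta_0)|<\tfrac12\lambda\delta$ on the annulus $\{\tfrac12\delta\le|\theta-\theta_0|\le\tfrac32\delta\}$, while $|A(\theta-\theta_0)|\ge\lambda\delta$ on $\partial B(\theta_0,\delta)$. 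Mollifying, $\psi_\varepsilon:=\bar{\MM}_n*\phi_\varepsilon$ with $\phi_\varepsilon\ge0$ a smooth mollifier supported in $B(\mathbf 0,\varepsilon)$ is continuous and, for $\varepsilon$ small, satisfies $|\psi_\varepsilon(\theta)-A(\theta-\theta_0)|<\tfrac34|A(\theta-\theta_0)|$ on $\partial B(\theta_0,\delta)$; hence $\psi_\varepsilon$ is nonvanishing there and homotopic, through nonvanishing maps, to $\theta\mapsto A(\theta-\theta_0)$, so $\deg(\psi_\varepsilon,B(\theta_0,\delta),\mathbf 0)=\sign\det A\neq0$ and $\psi_\varepsilon$ vanishes at some $\hat\theta_\varepsilon\in B(\theta_0,\delta)$. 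Passing to a subsequence, $\hat\theta_\varepsilon\to\hat\theta\in\bar B(\theta_0,\delta)\subset B(\theta_0,r)$. For each coordinate $j$, $\psi_{\varepsilon,j}(\hat\theta_\varepsilon)=0$ is a weighted average of the values of $\bar{\MM}_{n,j}$ on $\bar B(\hat\theta_\varepsilon,\varepsilon)$ with nonnegative weights summing to one, so that ball contains a point where $\bar{\MM}_{n,j}\ge0$ and a point where $\bar{\MM}_{n,j}\le0$; since these balls shrink to $\hat\theta$, every neighborhood of $\hat\theta$ contains such a pair, i.e.\ $\hat\theta$ is a crossing of zero of $\bar{\MM}_{n,j}$ for every $j$, hence of $\bar{\MM}_n$. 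This topological input is the discontinuous Brouwer-type lemma of~\cite{groeneboom2016current}, which may also be invoked as a black box.

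\emph{Main obstacle.} The delicate probabilistic ingredient, the uniform-in-$\theta$ control of the nuisance estimator $\tilde\eta_\theta$, is already supplied by Theorem~\ref{thm:unif_eta}; the real difficulty is the deterministic Step 2, namely passing from a sign/degree condition satisfied by the smooth population map $M$ to an actual crossing of zero of the everywhere-discontinuous, piecewise-constant field $\bar{\MM}_n$. The point to be careful about is that $\bar{\MM}_n$ is not uniformly approximable by continuous functions; the mollification trick works only because we need the smoothed approximant to resemble the linearization $A(\cdot-\theta_0)$ solely on a neighborhood of the sphere $\partial B(\theta_0,\delta)$, and we then recover the crossing of zero as an accumulation point of the zeros of the mollifications.
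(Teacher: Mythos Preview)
Your proposal is correct and follows essentially the same approach as the paper: prove uniform consistency $\sup_{\theta\in B(\theta_0,r)}|\bar{\MM}_n(\theta)-M(\theta)|=o_p(1)$ (the paper packages this as Lemma~\ref{lem:Approx_M}), linearize $M$ at $\theta_0$ via Lemma~\ref{lem:deriv_M}, mollify the piecewise-constant score, apply a topological existence argument to the smooth approximant, and then recover a zero crossing of $\MM_n$ as an accumulation point of the zeros of the mollifications. The only cosmetic difference is the topological step---the paper reparametrizes by $A^{-1}$ and invokes Brouwer's fixed-point theorem, whereas you use the Brouwer degree directly---and your final passage from ``$\psi_{\varepsilon,j}(\hat\theta_\varepsilon)=0$ is a convex combination of nearby values'' to ``$\hat\theta$ is a crossing of zero'' is exactly the paper's contradiction argument phrased in the forward direction.
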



\begin{proof}
In Lemma~\ref{lem:deriv_M}, we show that $M(\theta)$ is a differentiable function. Hence, multivariate Taylor's theorem implies that
\begin{equation}\label{eq:M_taylor}
M(\theta) = M(\theta_0) + M'(\theta_0) (\theta-\theta_0) + o(|\theta -\theta_0|)
\end{equation}
Thus by Lemma~\ref{lem:Approx_M}, we have 
\begin{equation}\label{eq:Taylor+Approx}
\MM_n(\theta) = M'(\theta_0) (\theta-\theta_0) + o(|\theta -\theta_0|) + B_n(\theta),
\end{equation}
where $B_n(\theta)$ is such that $\sup_{\theta \in B(\theta_0, r)} |B_n(\theta)| =o_p(1).$ Let us define
\begin{equation}\label{eq:Rn_def}
R_n(\theta):= \MM_n(\theta) - M'(\theta_0) (\theta-\theta_0),
\end{equation}
 and for any $h >0$, $K_h(\cdot)$ be any (fixed) $d$-dimensional product differentiable kernel with finite support. Let us now define
\[ \tilde{R}_{nh}(\theta) :=\int_{\R^d} R_n(v) K_h(v-\theta)dv, \]
and 
\begin{equation}\label{eq:M_tilde}
\widetilde{\MM}_{n, h}(\theta) := \int \MM_n(v)  K_h(v-\theta)dv = M'(\theta_0) (\theta-\theta_0) + \tilde{R}_{nh}(\theta).
\end{equation}
We will first show that for every large  $n$, $\widetilde{\MM}_{n, h}(\theta)$ has a zero in $B(\theta_0, r)$ with probability tending to one. We will then show that, this in turn implies that for every large $n$, $\MM_{n}(\theta)$ has a zero crossing in $B(\theta_0, r)$ with probability tending to one. Consider the following reparameterization of $\theta, $ $\gamma := M'(\theta_0)  \theta$ and $\gamma_0 := M'(\theta_0)  \theta_0$. Now consider the function 
\begin{equation}\label{eq:new_gamm_maping}
k_{n, h}(\gamma):= \gamma_0 - \tilde{R}_{n, h}\big([M'(\theta_0)]^{-1} \gamma\big). 
\end{equation}

Note that $M'(\theta_0)$ is invertible and $\tilde{R}_{n, h}$ is a continuous map. 
 Thus by~\eqref{eq:Taylor+Approx} and \eqref{eq:Rn_def}, we have for each small enough $\delta>0$,  $h$ small enough such that 

 \[ \P\big(k_{n, h}(B(\gamma_0, \delta)) \subset B(\gamma_0, \delta)\big) \ge 1-\epsilon, \] for all large enough $n$.  If $k_{n, h}(B(\gamma_0, \delta)) \subset B(\gamma_0, \delta)$, then by Brouwer's fixed point theorem, we have that there exists a $\gamma_{n, h}$ such that  $k_{n, h}(\gamma_{n, h})= \gamma_{n, h}$, i.e., $  \gamma_{n, h} =\gamma_0- \tilde{R}_{n, h}([M'(\theta_0)]^{-1} \gamma_{n, h})$. Defining $\theta_{n, h} := [M'(\theta_0)]^{-1} \gamma_{n, h}$, we get that 
\begin{equation}\label{eq:theta_tilde}
\widetilde{\MM}_{n, h}(\theta_{n, h})= M'(\theta_0) (\theta_{n, h}-\theta_0) + \tilde{R}_{nh}(\theta_{n, h}) =  \gamma_0- \gamma_{n, h} - \tilde{R}_{n, h}\big([M'(\theta_0)]^{-1} \gamma_{n, h}\big)= 0.
\end{equation}

For each fixed $n$, consider the sequence of $\theta_{n, h_i}$ as $h_i \downarrow 0.$ By compactness of $B(\gamma_0, \delta)$, we have that $\theta_{n, h_i}$'s  have a limit point. Let us denote this point by $\theta_n.$ 

In the following, we will show that $\MM_{n, j}(\theta)$ (the $j$th component of $\MM_{n}(\theta)$) has a zero crossing at $\theta_n$ for all $j\le d$. Suppose $\MM_{n, j}(\theta)$  does not have a zero crossing at $\theta_{n}$. Then there exists a $\delta>0, $ such that  $\MM_{n, j}(\theta)$ must have the same sign for all $\theta \in B({\theta}_n, \delta)$. Let $\MM_{n, j}(\theta) >0$ for all $\theta \in B({\theta}_n, \delta)$. Since $\MM_{n, j}(\theta)$ takes only finitely many values, there exists a $c>0$ such that $\MM_{n, j}(\theta) \ge c> 0$ for all $\theta \in B({\theta}_n, \delta)$. Observe that 
 $ \widetilde{\MM}_{n, h, j}(\theta) = \int \MM_{n, j}(u) K_h(u-\theta) du > c/2, $
 for all $\theta \in B({\theta}_n, \delta)$. A contradiction to~\eqref{eq:theta_tilde}, since ${\theta}_{n, h} \in B({\theta}_n, \delta)$ for large $h.$
\end{proof}

\subsection{Consistency of the SSCE} 
\label{sub:consistency_og_}

\begin{theorem}\label{thm:Consis}
Suppose the conditions of Theorem~\ref{thm:SSE_consis} hold, then $|\hat\theta -\theta_0| =o_p(1).$
\end{theorem}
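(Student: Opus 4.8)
The plan is to run the classical consistency argument for $Z$-estimators (cf.~\cite[Theorem 5.9]{vanderVaart98}), modified to accommodate the fact that $\hat\theta$ is only a zero \emph{crossing} of the piecewise-constant map $\theta\mapsto\MM_n(\theta)$, not an exact zero. Throughout I work inside $B(\theta_0,r)$, first shrinking $r$ (if necessary) so that $r\le\delta_0$, where $\delta_0$ is the radius from assumptions~\ref{a4}--\ref{assum:NonzeroEverywhere}. By Theorem~\ref{thm:ExistenceZero} a zero crossing $\hat\theta$ of $\MM_n$ lies in $B(\theta_0,r)$ with probability tending to one, so it suffices to argue on that event; in particular no compactness argument over all of $\Theta$ is needed.

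The first ingredient is that $\theta_0$ is a \emph{well-separated} zero of the population map $M$ on $\bar B(\theta_0,r)$. By Lemma~\ref{lem:zeroCrossing}, $\theta_0$ is the unique zero of $M$ on $\bar B(\theta_0,r)$, and by Lemma~\ref{lem:deriv_M} the map $M$ is differentiable, hence continuous, there. Consequently, for each $\varepsilon\in(0,r)$ the set $A_\varepsilon:=\{\theta:\varepsilon\le|\theta-\theta_0|\le r\}$ is compact, $M$ is nonvanishing on it, and therefore $c_\varepsilon:=\inf_{\theta\in A_\varepsilon}|M(\theta)|>0$.

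The second ingredient is the uniform stochastic approximation $\Delta_n:=\sup_{\theta\in B(\theta_0,r)}|\MM_n(\theta)-M(\theta)|=o_p(1)$, which follows from Lemma~\ref{lem:Approx_M} (equivalently, from~\eqref{eq:Taylor+Approx} combined with the Taylor expansion~\eqref{eq:M_taylor} and $M(\theta_0)={\bf 0}_d$). The step that needs care is passing from ``$\hat\theta$ is a zero crossing'' to ``$|M(\hat\theta)|$ is small'', since $\MM_n$ need not vanish anywhere. For each coordinate $j\in[d]$ and each small $\rho>0$ (so that $B(\hat\theta,\rho)\subset B(\theta_0,r)$), the definition of a zero crossing produces points $\beta_1,\beta_2\in B(\hat\theta,\rho)$ with $\MM_{n,j}(\beta_1)\le 0\le\MM_{n,j}(\beta_2)$, whence $M_j(\beta_1)\le\Delta_n$ and $M_j(\beta_2)\ge-\Delta_n$; letting $\rho\downarrow0$ and invoking the continuity of $M_j$ gives $|M_j(\hat\theta)|\le\Delta_n$ for every $j$, so $|M(\hat\theta)|\le\sqrt{d}\,\Delta_n$ on the good event.

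Putting the pieces together: fix $\varepsilon\in(0,r)$. On the good event, $|\hat\theta-\theta_0|\ge\varepsilon$ forces $\hat\theta\in A_\varepsilon$ and hence $|M(\hat\theta)|\ge c_\varepsilon$, so
\[ \P\big(|\hat\theta-\theta_0|\ge\varepsilon\big)\le\P\big(\MM_n\text{ has no zero crossing in }B(\theta_0,r)\big)+\P\big(\sqrt{d}\,\Delta_n\ge c_\varepsilon\big)\longrightarrow 0. \]
Since $\varepsilon\in(0,r)$ was arbitrary, $\hat\theta\stackrel{P}{\to}\theta_0$. The main obstacle is precisely the discontinuity of $\MM_n$: the zero-crossing device together with uniform approximation by the \emph{continuous} limit $M$ is what bypasses it, and that coordinatewise limiting argument (with a common $\hat\theta$ across all $d$ coordinates) is where most of the bookkeeping sits.
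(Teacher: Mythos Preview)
Your argument is correct and is, in fact, cleaner than the paper's own proof. The paper proceeds via subsequences: it uses Borel--Cantelli to pass to a subsequence along which both the existence of a zero crossing and the uniform approximation $\sup_\theta|\MM_{n_k}(\theta)-M(\theta)|\to0$ hold almost surely, then uses compactness of $\bar B(\theta_0,r)$ to extract a further subsequence with $\hat\theta_{n_{k_j}}\to\theta_*$, argues that $M(\theta_*)=0$ because ``limits of zero crossings become roots of the continuous limit,'' and finally identifies $\theta_*=\theta_0$ via Lemma~\ref{lem:zeroCrossing} and assumption~\ref{assum:NonzeroEverywhere}. The conclusion in probability is then recovered by the subsequence characterization (Durrett, Theorem~2.3.2).

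Your route sidesteps all of this machinery: the coordinatewise zero-crossing argument, combined with the continuity of $M$ and the uniform bound $\Delta_n$, gives the quantitative inequality $|M(\hat\theta)|\le\sqrt d\,\Delta_n$ directly on the good event, after which the well-separated-zero argument finishes the job in probability without ever passing to subsequences. The only point where your citation is slightly loose is the claim that Lemma~\ref{lem:zeroCrossing} alone yields uniqueness of the zero; strictly, you need the nonnegativity of the conditional covariance (from the proof of part~(1) of that lemma) together with assumption~\ref{assum:NonzeroEverywhere} to conclude $(\theta-\theta_0)^\top M(\theta)>0$ for $\theta\neq\theta_0$, hence $M(\theta)\neq0$. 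This is exactly what the paper does at the end of its proof, so the logic is sound; just make the dependence on~\ref{assum:NonzeroEverywhere} explicit.
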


\begin{proof}
 By Theorem~\ref{thm:ExistenceZero}, Lemma~\ref{lem:Approx_M}, and the Borel-Cantelli Lemma, there exists a sequence $\{n_k\}$ such that 
 \begin{equation}\label{eq:th12}
   \sum_{k=1}^\infty \P(\MM_{n_k}\text{ does not have a zero crossing}) < \infty \text{  and  } \sup_{\theta \in B(\theta_0, r)} \big|\MM_{n_k}(\theta) - M(\theta)\big| \stackrel{a.s.}{\rightarrow} 0.
 \end{equation}
  If it exists, let $\hat\theta_n$ be any zero crossing of $\MM_n$. Thus for almost every $\omega$, $\hat{\theta}_{n_k}$ exists for all but finitely many $k's$. Hence we can find a subsequence $\{n_{k_j}\}$ (can depend on $\omega$) such that $\hat\theta_{n_{k_j}}(\omega) \to \theta_*$ for some $\theta_* \in B(\theta_0, r).$

The fact that $M(\cdot)$ is a continuous function and ~\eqref{eq:th12} imply that  $|\MM_n(\theta_{n_k} (\omega)) -M(\theta_*)| \rightarrow 0$ for almost all $\omega$.
By the fact that limit of zero crossing become roots of the limit, we will have that, we have that $M(\theta_*) =0.$ We will next show that this implies that $\theta_* =\theta_0$. Recall that (see~\eqref{eq:m_cov})
\begin{align}\label{eq:test132}
\begin{split}
0&= (\theta_0 -\theta_*) M(\theta_*)\\
&=\E\left[\text{Cov}\Big((\theta- \theta_0)^\top ({X-\theta}), \eta_0\big(|\theta_0-X|^2\big)\big| |\theta-X|^2\Big)\right] 
\end{split}
\end{align}
In the proof of Lemma~\ref{lem:zeroCrossing}, we have shown that $\text{Cov}\Big((\theta- \theta_0)^\top ({X-\theta}), \eta_0\big(|\theta_0-X|^2\big)\big| |\theta-X|^2\Big) \ge 0$. Further by assumption~\ref{assum:NonzeroEverywhere}, we have that $\text{Cov}\Big((\theta- \theta_0)^\top ({X-\theta}), \eta_0\big(|\theta_0-X|^2\big)\big| |\theta-X|^2\Big)$ is not equal to $0$ almost surely for all $\theta\neq \theta_0$. Thus \eqref{eq:test132} implies $\theta_0 =\theta_*.$ As for almost all $\omega$, $\{\hat\theta_{n_{k_j}}(\omega)\}$ converges to $\theta_0$, by Theorem~2.3.2 of~\cite{Durrett}, we have that $\hat{\theta}_{n}$ converges to $\theta_0$ in probability 
 \end{proof}

 \section{Proof of Theorem~\ref{thm:Asymp_norml}} 
 \label{sub:asymptotic_normality_of_}
  To avoid messy and technical details, in the proof we will assume that $\hat{\theta}_n$ exists for each $n.$ We will first show that 
\begin{align}
\MM_n(\hat\theta)&= \int_{\D} \big(\eta_{\theta_0}(|\theta_0-X|^2)- \eta_{\hat{\theta}}(|\hat\theta-X|^2)\big) \big(X-\theta_0- h_{\theta_0}(|\theta_0 -X|)\big) dP_X\label{eq:MnAprrox}\\
&\quad +\int_{\D} \big(Y- \eta_{\theta_0}(|\theta_0-X|^2)\big) \big(X-\theta_0- h_{\theta_0}(|\theta_0 -X|)\big) d(\P_n- P_0) + o_p\big(n^{-1/2}+ \hat\theta-\theta_0\big), \nonumber
\end{align}
where 
\begin{equation}\label{eq:h_def}
 h_{\theta}(u) := \E(X\big||{X-\theta}| =u)-\theta.
 \end{equation}
 By Lemma~\ref{lem:Approx_M} and the fact that $\hat\theta\stackrel{p}{\rightarrow} \theta_0$, we have that 
\begin{align}\label{eq:norm_part1}
\begin{split}
&\int_{\D} \big(X- \E(X\big|{X-\theta_0})\big)\big(\eta_{\theta_0}(|\theta_0-X|^2)- \eta_{\hat{\theta}}(|\hat\theta-X|^2)\big) dP_X\\
={}& M'(\theta_0) (\hat\theta-\theta_0) + o_p(\hat\theta-\theta_0).
\end{split}
\end{align}
Recall that in Lemma~\ref{lem:deriv_M}, we show that $M'(\theta_0)= \E\Big(\eta'_0(|\theta_0-X|^2) \text{Cov}\big(X\big||\theta_0-X|^2 \big)\Big)$.  As $\MM_n(\hat\theta)=0$ and $M'(\theta_0)$ is invertible (see~\ref{a4}), we have that 
\begin{align}\label{eq:Final_step}
\begin{split}
\sqrt{n}(\hat\theta-\theta_0)=  -\{M'(\theta_0)\}^{-1}\sqrt{n}\int_{\D} &\big(Y- \eta_{\theta_0}(|\theta_0-X|^2)\big) \big(X-\theta_0- h_{\theta_0}(|\theta_0 -X|)\big) d(\P_n- P_0)\\
 &+ o_p\big(1+ \sqrt{n}(\hat\theta-\theta_0)\big),
\end{split}
\end{align}
We can then conclude that
\begin{equation}\label{eq:dist1}
\sqrt(\hat\theta-\theta_0) \stackrel{d}{\to} N_d(0, (M'(\theta_0))^{-1} \Sigma (M'(\theta_0))^{-1}),
\end{equation}
where \begin{equation}\label{eq:Sigma}
\Sigma = \E\left( \epsilon^2 [X- \E(X\big||{X-\theta_0}|)] [X- \E(X\big||{X-\theta_0}|)]^\top\right).
\end{equation}

\noindent\textbf{Proof of~\eqref{eq:MnAprrox}:}
Observe that 
\begin{align}\label{eq:proof_MnAprrox}
\begin{split}
\MM_n(\hat\theta) &= \int \left(Y - \tilde\eta_{\hat\theta}\big(|{\hat\theta}-X|^2\big)\right)\big[X-{\hat\theta}\big] d\P_n\\
&= \int \left(Y - \tilde\eta_{\hat\theta}\big(|{\hat\theta}-X|^2\big)\right)\big[X-{\hat\theta}- h_{\hat\theta}\big(|{\hat\theta}-X|\big)\big] d\P_n\\
&\quad+  \int \left(Y - \tilde\eta_{\hat\theta}\big(|{\hat\theta}-X|^2\big)\right)\big[h_{\hat\theta}\big(|{\hat\theta}-X|\big)- \bar{h}_{n, \hat\theta}\big(|{\hat\theta}-X|\big)\big] d\P_n\\
&\quad+ \int \left(Y - \tilde\eta_{\hat\theta}\big(|{\hat\theta}-X|^2\big)\right) \bar{h}_{n, \hat\theta}\big(|{\hat\theta}-X|\big) d\P_n\\
\end{split}
\end{align}
where $\bar{h}_{n, \theta}$ be a piecewise constant function defined as  follows:
\begin{equation}\label{eq:hbar_def}
\bar{h}_{n, \theta} :=\begin{cases}
h_{\theta}(\tau_{i, \theta} ) & \text{if } \eta_\theta(u) > \tilde{\eta}_\theta (\tau_{i, \theta}) \text{ for all } u \in (\tau_{i, \theta}, \tau_{i+1, \theta})\\
h_{\theta}(s ) & \text{if } \eta_\theta(s) = \tilde{\eta}_\theta (\tau_{i, \theta}) \text{ for some } s \in (\tau_{i, \theta}, \tau_{i+1, \theta})\\
h_{\theta}( \tau_{i+1, \theta} ) & \text{if } \eta_\theta(u) < \tilde{\eta}_\theta (\tau_{i, \theta}) \text{ for all } u \in (\tau_{i, \theta}, \tau_{i+1, \theta}),
\end{cases}
\end{equation}
where $\{\tau_{i, \theta}\}_{i=1}^n$ is the values $\{|X_i-\theta|\}_{i=1}^n$ in increasing order. By definition of $\bar{h}_{n, \theta}$ and the fact that $\tilde{\eta}_\theta$ is the minimizer of $\sum_{i=1}^{n} \left(Y_i - \tilde\eta_\theta\big(|\theta-X_i|^2\big)\right)^2$, we have that 
\[\int \left(Y - \tilde\eta_{\hat\theta}\big(|{\hat\theta}-X|^2\big)\right) \bar{h}_{n, \hat\theta}\big(|{\hat\theta}-X|\big) d\P_n= 0\footnote{See \cite[Page 332]{GJ14} for this property of the isotonic estimator; also see~\cite{groeneboom2016current}.}.\]
In Lemma~\ref{lem:II_proof}, we show that 
\begin{equation}\label{eq:II_orig}
\left|\int \left(Y - \tilde\eta_{\hat\theta}\big(|{\hat\theta}-X|^2\big)\right)\big[h_{\hat\theta}\big(|{\hat\theta}-X|\big)- \bar{h}_{n, \hat\theta}\big(|{\hat\theta}-X|\big)\big] d\P_n\right|= o_p(n^{-1/2} +|\hat\theta-\theta_0|).
\end{equation}
Thus by~\eqref{eq:proof_MnAprrox} and \eqref{eq:II_orig}, we have that 
\begin{align}\label{eq:second_split}
\begin{split}
\MM_n(\hat\theta) 
&= \int \left(Y - \tilde\eta_{\hat\theta}\big(|{\hat\theta}-X|^2\big)\right)\big[X-{\hat\theta}- h_{\hat\theta}\big(|{\hat\theta}-X|\big)\big] d\P_n+ o_p(n^{-1/2} +|\hat\theta-\theta_0|)\\
&= \int \left(Y - \eta_{\hat\theta}\big(|{\hat\theta}-X|^2\big)\right)\big[X-{\hat\theta}- h_{\hat\theta}\big(|{\hat\theta}-X|\big)\big] d\P_n\\
&\quad+ \int \left(\eta_{\hat\theta}\big(|{\hat\theta}-X|^2\big) - \tilde\eta_{\hat\theta}\big(|{\hat\theta}-X|^2\big)\right)\big[X-{\hat\theta}- h_{\hat\theta}\big(|{\hat\theta}-X|\big)\big] d\P_n+ o_p(n^{-1/2} +|\hat\theta-\theta_0|)\\
\end{split}
\end{align}
In Lemma~\ref{lem:Second_split}, we show that 
\begin{equation}\label{eq:Second_split_approx}
\int \left(\eta_{\hat\theta}\big(|{\hat\theta}-X|^2\big) - \tilde\eta_{\hat\theta}\big(|{\hat\theta}-X|^2\big)\right)\big[X-{\hat\theta}- h_{\hat\theta}\big(|{\hat\theta}-X|\big)\big] d\P_n =  o_p(n^{-1/2} +|\hat\theta-\theta_0|). 
\end{equation}
 Thus by~\eqref{eq:second_split} and~\eqref{eq:Second_split_approx}, we have that 
 \begin{align}\label{eq:third_split}
\begin{split}
\MM_n(\hat\theta) 
&= \int \left(Y - \eta_{\hat\theta}\big(|{\hat\theta}-X|^2\big)\right)\big[X-{\hat\theta}- h_{\hat\theta}\big(|{\hat\theta}-X|\big)\big] d\P_n+ o_p(n^{-1/2} +|\hat\theta-\theta_0|)\\
&= \int \left(Y - \eta_{\hat\theta}\big(|{\hat\theta}-X|^2\big)\right)\big[X-{\hat\theta}- h_{\hat\theta}\big(|{\hat\theta}-X|\big)\big] d(\P_n-P_0)\\
&\quad +\int \left(Y - \eta_{\hat\theta}\big(|{\hat\theta}-X|^2\big)\right)\big[X-{\hat\theta}- h_{\hat\theta}\big(|{\hat\theta}-X|\big)\big] dP_0+ o_p(n^{-1/2} +|\hat\theta-\theta_0|)\\
\end{split}
\end{align}
Observe that
\begin{align}\label{eq:third_split_part2}
\begin{split}
&\int \left(Y - \eta_{\hat\theta}\big(|{\hat\theta}-X|^2\big)\right)\big[X-{\hat\theta}- h_{\hat\theta}\big(|{\hat\theta}-X|\big)\big] dP_0\\
={}&\int \left(\eta_{\theta_0}\big(|{\theta_0}-X|^2 - \eta_{\hat\theta}\big(|{\hat\theta}-X|^2\big)\right)\big[X-{\hat\theta}- h_{\hat\theta}\big(|{\hat\theta}-X|\big)\big] dP_X\\
={}&\int \left(\eta_{\theta_0}\big(|{\theta_0}-X|^2 - \eta_{\hat\theta}\big(|{\hat\theta}-X|^2\big)\right)\big[X-{\theta_0}- h_{\theta_0}\big(|{\hat\theta}-X|\big)\big] dP_X +o_p(|\hat\theta-\theta_0|).
\end{split}
\end{align}
Here the last step is due to assumption~\ref{ContConditional} and Theorem~\ref{thm:unif_eta}.  Moreover
\begin{align}\label{eq:third_split_part1}
\begin{split}
& \int \left(Y - \eta_{\hat\theta}\big(|{\hat\theta}-X|^2\big)\right)\big[X-{\hat\theta}- h_{\hat\theta}\big(|{\hat\theta}-X|\big)\big] d(\P_n-P_0)\\
={}& \int \left(Y - \eta_{\theta_0}\big(|{\theta_0}-X|^2\big)\right)\big[X-{\hat\theta}- h_{\hat\theta}\big(|{\hat\theta}-X|\big)\big] d(\P_n-P_0)\\
& +\int \left(\eta_{\theta_0}\big(|{\theta_0}-X|^2\big) - \eta_{\hat\theta}\big(|{\hat\theta}-X|^2\big)\right)\big[X-{\hat\theta}- h_{\hat\theta}\big(|{\hat\theta}-X|\big)\big] d(\P_n-P_0)\\
={}& \int \left(Y - \eta_{\theta_0}\big(|{\theta_0}-X|^2\big)\right)\big[{X-\theta_0} -h_{\theta_0}\big(|\theta_0-X|\big)\big] d(\P_n-P_0)\\
&+ \int \left(Y - \eta_{\theta_0}\big(|{\theta_0}-X|^2\big)\right)\big[\theta_0+ h_{\theta_0}\big(|\theta_0-X|\big)-{\hat\theta}- h_{\hat\theta}\big(|{\hat\theta}-X|\big)\big] d(\P_n-P_0)\\
& +\int \left(\eta_{\theta_0}\big(|{\theta_0}-X|^2\big) - \eta_{\hat\theta}\big(|{\hat\theta}-X|^2\big)\right)\big[X-{\hat\theta}- h_{\hat\theta}\big(|{\hat\theta}-X|\big)\big] d(\P_n-P_0)\\
\end{split}
\end{align}
In Lemmas~\ref{lem:proof_third_1} and~\ref{lem:proof_third_2}, we show that 
\begin{equation}\label{eq:proof_third_11}
\left|\int \left(Y - \eta_{\theta_0}\big(|{\theta_0}-X|^2\big)\right)\big[\theta_0+ h_{\theta_0}\big(|\theta_0-X|\big)-{\hat\theta}- h_{\hat\theta}\big(|{\hat\theta}-X|\big)\big] d(\P_n-P_0)\right| = o_p(n^{-1/2})
\end{equation}
and 
\begin{equation}\label{eq:proof_third_21}
\left|\int \left(\eta_{\theta_0}\big(|{\theta_0}-X|^2\big) - \eta_{\hat\theta}\big(|{\hat\theta}-X|^2\big)\right)\big[X-{\hat\theta}- h_{\hat\theta}\big(|{\hat\theta}-X|\big)\big] d(\P_n-P_0)\right| = o_p(n^{-1/2})
\end{equation}
Thus by~\eqref{eq:third_split_part1}, ~\eqref{eq:proof_third_11}, and~\eqref{eq:proof_third_21}, we have that 
\begin{align}\label{eq:189}
\begin{split}
& \int \left(Y - \eta_{\hat\theta}\big(|{\hat\theta}-X|^2\big)\right)\big[X-{\hat\theta}- h_{\hat\theta}\big(|{\hat\theta}-X|\big)\big] d(\P_n-P_0)\\
={}&\int_{\D} \big(Y- \eta_{\theta_0}(|\theta_0-X|^2)\big) \big(X-\theta_0- h_{\theta_0}(|\theta_0 -X|)\big) d(\P_n- P_0) + o_p\big(n^{-1/2}\big).
\end{split}
\end{align}
Combining~\eqref{eq:third_split}, ~\eqref{eq:third_split_part2}, and~\eqref{eq:189}, we have that~\eqref{eq:MnAprrox}. Thus completing the proof.

\begin{lemma}\label{lem:II_proof}
Suppose assumptions~\ref{a1}--\ref{assum:NonzeroEverywhere} hold. If $q\ge 6$, then
\begin{equation}\label{eq:II}
\left|\int \left(Y - \tilde\eta_{\hat\theta}\big(|{\hat\theta}-X|^2\big)\right)\big[h_{\hat\theta}\big(|{\hat\theta}-X|\big)- \bar{h}_{n, \hat\theta}\big(|{\hat\theta}-X|\big)\big] d\P_n\right|= o_p(n^{-1/2} +|\hat\theta-\theta_0|)
\end{equation}
\end{lemma}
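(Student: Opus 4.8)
\emph{Proof idea.} The plan is to bound the integral in \eqref{eq:II} by $\|h_{\hat\theta}-\bar h_{n,\hat\theta}\|_\infty$ times an $L^1$--average of the residuals, using that, by its very definition \eqref{eq:hbar_def}, the value of $\bar h_{n,\theta}$ at any $|\theta-X_i|$ equals $h_\theta$ evaluated at a point lying within one order--statistic spacing. Concretely, I would write, for each $i$,
\[
Y_i-\tilde\eta_{\hat\theta}\big(|\hat\theta-X_i|^2\big)=\epsilon_i+\Big(\eta_0\big(|\theta_0-X_i|^2\big)-\eta_{\hat\theta}\big(|\hat\theta-X_i|^2\big)\Big)+\Big(\eta_{\hat\theta}\big(|\hat\theta-X_i|^2\big)-\tilde\eta_{\hat\theta}\big(|\hat\theta-X_i|^2\big)\Big),
\]
so that the modulus of the left side of \eqref{eq:II} is at most $\|h_{\hat\theta}-\bar h_{n,\hat\theta}\|_\infty$ times the sum of the three averages $\tfrac1n\sum_i|\epsilon_i|$, $\tfrac1n\sum_i\big|\eta_0(|\theta_0-X_i|^2)-\eta_{\hat\theta}(|\hat\theta-X_i|^2)\big|$ and $\big(\tfrac1n\sum_i(\eta_{\hat\theta}-\tilde\eta_{\hat\theta})^2(|\hat\theta-X_i|^2)\big)^{1/2}$. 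By~\ref{assum:err_mom} the first is $O_p(1)$; since $\eta_0$ and the $\eta_\theta$'s are bounded on $[0,4T^2]$ by~\ref{eta_bound}, the second is $O_p(1)$; and by Theorem~\ref{thm:unif_eta} (with Lemma~\ref{thm:entropy_FK} used to pass from the population to the empirical $L^2$ norm over $\mathcal{F}_{Cn^{1/q}}$) the third is $O_p(n^{-1/3+1/q})$.

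The main work is the bound on $\|h_{\hat\theta}-\bar h_{n,\hat\theta}\|_\infty$. By~\ref{ContConditional} the map $r\mapsto h_\theta(r)$ is Lipschitz with a constant $L_h$ uniform over $\theta$ in a fixed neighbourhood of $\theta_0$, and by \eqref{eq:hbar_def} the value $\bar h_{n,\hat\theta}(|\hat\theta-X_i|)$ is $h_{\hat\theta}$ evaluated at a point at distance at most the local mesh of $\{|\hat\theta-X_j|\}$ from $|\hat\theta-X_i|$; hence $\|h_{\hat\theta}-\bar h_{n,\hat\theta}\|_\infty\le L_h\,\Delta_n(\hat\theta)$, with $\Delta_n(\theta):=\max_j(\tau_{j+1,\theta}-\tau_{j,\theta})$ the maximal spacing of $\{|\theta-X_i|\}$. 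Since $X$ has a bounded density on the convex set $\rchi$ (by~\ref{a1}) and $\hat\theta\stackrel{p}{\to}\theta_0$ (Theorem~\ref{thm:SSE_consis}) lies in the interior of $\Theta$, the law of $|\hat\theta-X|$ has a density bounded above everywhere and bounded below on compact subsets of $(0,\infty)$, uniformly for $\theta$ near $\theta_0$; standard spacing estimates (via a VC bound for annuli centred at $\theta$) then give, for each fixed $\delta>0$, $\sup_\theta\max\{\tau_{j+1,\theta}-\tau_{j,\theta}:\tau_{j,\theta}>\delta\}=O_p(n^{-1}\log n)$, whereas near $0$ the density degenerates like $r^{d-1}$, the largest spacing there is $O_p(n^{-1/d}(\log n)^{1/d})$, and only $O_p((\log n)^{d})$ of the $X_i$ satisfy $|\hat\theta-X_i|\le\delta_n$ for a suitable $\delta_n\downarrow 0$.

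I would then combine these by splitting each of the three averages into its ``$|\hat\theta-X_i|>\delta_n$'' and ``$|\hat\theta-X_i|\le\delta_n$'' parts. On the former, $\|h_{\hat\theta}-\bar h_{n,\hat\theta}\|$ contributes only $O_p(n^{-1}\log n)$, so the three pieces become $O_p(n^{-1}\log n)$, $O_p(n^{-1}\log n)$ and $O_p(n^{-4/3+1/q}\log n)$, all $o_p(n^{-1/2})$ once $q\ge 6$ (indeed for $q>6/5$). On the latter, the $O_p((\log n)^d)$ summands are bounded crudely --- $|\epsilon_i|$ by $\max_i|\epsilon_i|=O_p(n^{1/q})$ (as in the proof of \eqref{eq:eta_bound_unif}), the $\eta_0$--difference by $O(1)$, and $|\eta_{\hat\theta}-\tilde\eta_{\hat\theta}|$ by $\|\tilde\eta_{\hat\theta}\|_\infty+O(1)=O_p(n^{1/q})$ from \eqref{eq:eta_bound_unif} --- and, multiplied by $n^{-1}O_p((\log n)^d)$ observations and the local mesh $O_p(n^{-1/d}(\log n)^{1/d})$, again give $o_p(n^{-1/2})$. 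This proves \eqref{eq:II}, in fact with the sharper rate $o_p(n^{-1/2})$.

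The step I expect to be the main obstacle is the uniform spacing control near the origin and its interaction with the heavy--tailed errors: when $d\ge 2$ the density of $|\hat\theta-X|$ vanishes at $0$, so the order statistics are sparse there, and one must verify that the possibly large residuals carried by those few observations, weighted by the comparatively coarse local mesh, still contribute $o_p(n^{-1/2})$ --- which is precisely where the moment hypothesis $q\ge 6$ and the sup--norm rate $\|\tilde\eta_{\hat\theta}\|_\infty=O_p(n^{1/q})$ of Theorem~\ref{thm:unif_eta} are used.
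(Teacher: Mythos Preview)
Your route is genuinely different from the paper's, and the difference is exactly where the argument breaks. The paper never bounds $|h_\theta-\bar h_{n,\theta}|$ by order--statistic spacings. Its key device is the pointwise inequality
\[
|h_\theta(u)-\bar h_{n,\theta}(u)|\ \le\ M_h^*\sqrt{d}\,|\eta_\theta(u)-\tilde\eta_\theta(u)|,
\]
established in Lemma~\ref{lem:H_bar_diff} (following the Groeneboom--Jongbloed construction, \cite[(10.64)]{GJ14}). This uses that $\eta_\theta$ has derivative bounded away from zero (strict monotonicity of $\eta_0$), so the displacement between $u$ and the evaluation point $s$ in \eqref{eq:hbar_def} is dominated by $|\eta_\theta(u)-\tilde\eta_\theta(u)|$, not merely by the spacing. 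The paper then splits the $\P_n$--integral into $(\P_n-P_0)$ and $P_0$ parts: the two $P_0$--pieces are controlled directly by this inequality together with the $L^2$ rate of $\tilde\eta_\theta-\eta_\theta$ from Theorem~\ref{thm:unif_eta}, while the $(\P_n-P_0)$--piece is handled by entropy bounds (Lemmas~\ref{lem:H_bar_diff} and~\ref{lem:eta_h_hbar}), exploiting the cancellation in $\G_n$ that your absolute--value bound discards.

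Your spacing argument has a concrete gap. You invoke $\sup_\theta\max\{\tau_{j+1,\theta}-\tau_{j,\theta}:\tau_{j,\theta}>\delta\}=O_p(n^{-1}\log n)$ for each \emph{fixed} $\delta>0$, but then apply it with a shrinking $\delta_n\downarrow 0$. That is not legitimate: for $d\ge 2$ the density of $R=|\theta-X|$ behaves like $r^{d-1}$ near $0$, so if $\delta_n\asymp n^{-1/d}(\log n)$ (needed for only polylog many observations in $[0,\delta_n]$), the density at $r=\delta_n$ is of order $n^{-(d-1)/d}$ and the spacing just outside $\delta_n$ is of order $n^{-1/d}$, not $n^{-1}\log n$. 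Hence $\|h_{\hat\theta}-\bar h_{n,\hat\theta}\|_\infty$ restricted to $\{r>\delta_n\}$ is only $O_p(n^{-1/d})$, which for $d=2$ is merely $O_p(n^{-1/2})$ and for $d=3$ is $O_p(n^{-1/3})$; multiplied by $\tfrac1n\sum_i|\epsilon_i|=O_p(1)$ this does not give $o_p(n^{-1/2})$. The decomposition of the residual into three pieces is fine, but the crux --- getting $|h-\bar h|$ small enough --- fails without the paper's inequality linking it to $|\eta_\theta-\tilde\eta_\theta|$.
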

\begin{proof}
Let us split the quantity of interest into three parts,
\begin{align}\label{eq:II_proof_1}
\begin{split}
&\int \left(Y - \tilde\eta_{\hat\theta}\big(|{\hat\theta}-X|^2\big)\right)\big[h_{\hat\theta}\big(|{\hat\theta}-X|\big)- \bar{h}_{n, \hat\theta}\big(|{\hat\theta}-X|\big)\big] d\P_n\\
={}& \int\left(Y - \tilde\eta_{\hat\theta}\big(|{\hat\theta}-X|^2\big)\right)\big[h_{\hat\theta}\big(|{\hat\theta}-X|\big)- \bar{h}_{n, \hat\theta}\big(|{\hat\theta}-X|\big)\big] d(\P_n-P_0)\\
&\quad + \int\left(Y - \eta_{\hat\theta}\big(|{\hat\theta}-X|^2\big)\right)\big[h_{\hat\theta}\big(|{\hat\theta}-X|\big)- \bar{h}_{n, \hat\theta}\big(|{\hat\theta}-X|\big)\big] dP_0\\
&\quad + \int\left(\eta_{\hat\theta}\big(|{\hat\theta}-X|^2\big) - \tilde\eta_{\hat\theta}\big(|{\hat\theta}-X|^2\big)\right)\big[h_{\hat\theta}\big(|{\hat\theta}-X|\big)- \bar{h}_{n, \hat\theta}\big(|{\hat\theta}-X|\big)\big] dP_X\\
={}& \bf{I}+ \bf{II}+ \bf{III}
\end{split}
\end{align}
Let us start by providing an upper bound for $\bf{II}$. Observe that Lemma~\ref{lem:H_bar_diff}, the Dominated Convergence Theorem, Lemma~\ref{lem:deriv_M}, and consistency of $\hat\theta$,  imply that
\begin{align}\label{eq:II_1}
\begin{split}
&\bigg|\int\left(Y - \eta_{\hat\theta}\big(|{\hat\theta}-X|^2\big)\right)\big[h_{\hat\theta}\big(|{\hat\theta}-X|\big)- \bar{h}_{n, \hat\theta}\big(|{\hat\theta}-X|\big)\big] dP_0\bigg|\\
={}&\bigg|\int\left(\eta_{\theta_0}\big(|{\theta_0}-X|^2\big) - \eta_{\hat\theta}\big(|{\hat\theta}-X|^2\big)\right)\big[h_{\hat\theta}\big(|{\hat\theta}-X|\big)- \bar{h}_{n, \hat\theta}\big(|{\hat\theta}-X|\big)\big] dP_X\bigg|\\
={}& \bigg|(\hat\theta-\theta_0)(1+o(|\hat\theta-\theta_0|))\int\eta'_{\theta_0}\big(|{\theta_0}-X|^2\big) \big[X-\theta -\big[h_{\theta_0}\big(|{\theta_0}-X|\big)\big]\big[h_{\hat\theta}\big(|{\hat\theta}-X|\big)- \bar{h}_{n, \hat\theta}\big(|{\hat\theta}-X|\big)\big] dP_X\bigg|\\
\le{}&\bigg|(\hat\theta-\theta_0)(1+o(|\hat\theta-\theta_0|))\int\eta'_{\theta_0}\big(|{\theta_0}-X|^2\big) \big[X-\theta -\big[h_{\theta_0}\big(|{\theta_0}-X|\big)\big]\big[h_{\hat\theta}\big(|{\hat\theta}-X|\big)- \bar{h}_{n, \hat\theta}\big(|{\hat\theta}-X|\big)\big] dP_X\bigg|\\
\le{}& d R\|\eta'_{\theta_0}\|_\infty  \bigg|(\hat\theta-\theta_0)(1+o(|\hat\theta-\theta_0|))\bigg|\int\left|\eta_{\hat\theta}\big(|{\hat\theta}-X|^2\big) - \tilde\eta_{\hat\theta}\big(|{\hat\theta}-X|^2\big)\right| dP_X\\
={}&o_p(|\hat\theta-\theta_0|). 
\end{split}
\end{align}
We will now bound $\bf{III}$. By Lemma~\ref{lem:H_bar_diff} and~Theorem~\ref{thm:unif_eta}, we have that 
\begin{align}\label{eq:III_1}
\begin{split}
\bf{III}={}&\left|\int\left(\eta_{\hat\theta}\big(|{\hat\theta}-X|^2\big) - \tilde\eta_{\hat\theta}\big(|{\hat\theta}-X|^2\big)\right)\big[h_{\hat\theta}\big(|{\hat\theta}-X|\big)- \bar{h}_{n, \hat\theta}\big(|{\hat\theta}-X|\big)\big] dP_X\right|\\
\le{}&M^*_h \sqrt{d} \int\left(\eta_{\hat\theta}\big(|{\hat\theta}-X|^2\big) - \tilde\eta_{\hat\theta}\big(|{\hat\theta}-X|^2\big)\right)^2 dP_X\\
={}& O_p(n^{-2/3} n^{2/q}). 
\end{split}
\end{align}
The proof will be complete if we can show that $\mathbf{I} =o_p(n^{-1/2})$. Observe that 
\begin{align}\label{eq:I_split}
\begin{split}
\sqrt{n}\textbf{I}&= \left|\G_n\Big[\epsilon \big[h_{\hat\theta}\big(|{\hat\theta}-X|\big)- \bar{h}_{n, \hat\theta}\big(|{\hat\theta}-X|\big)\big]\Big]\right| \\
&{}+ \left|\G_n\Big[ \big[\eta_{\theta_0}\big(|{\theta_0}-X|^2\big)-\tilde\eta_{\hat\theta}\big(|{\hat\theta}-X|^2\big)\big] \big[h_{\hat\theta}\big(|{\hat\theta}-X|\big)- \bar{h}_{n, \hat\theta}\big(|{\hat\theta}-X|\big)\big]\Big]\right|.
\end{split}
\end{align}
In Lemma~\ref{lem:H_bar_diff}, we show that  that for any $\delta >0$
\begin{equation}\label{eq:G_n_ep_I}
\P\left(\left|\G_n\Big[\epsilon \big[h_{\hat\theta}\big(|{\hat\theta}-X|\big)- \bar{h}_{n, \hat\theta}\big(|{\hat\theta}-X|\big)\big]\Big]\right| \ge \delta \right)= o(1).
\end{equation}
In Lemma~\ref{lem:eta_h_hbar}, we show that the second term on the right side of~\eqref{eq:I_split} is also $o_p(1).$
\end{proof}

\begin{lemma}\label{lem:H_bar_diff}
Suppose assumptions~\ref{a1}--\ref{assum:NonzeroEverywhere} hold. Let $\{h_{\theta, i}\}_{i=1}^d$ denote the $d$ components of $d$-dimensional function $h_\theta$, i.e., $h_\theta(\cdot):= \left(h_{\theta, 1}(\cdot), \ldots, h_{\theta, d}(\cdot)\right)$. There exists constant $M_h$ and $M^*_h$ such that 
\begin{equation}\label{eq:h_boun}
\sup_{\theta \in B(\theta_0, r)}\|h_\theta\|_{2, \infty} \le M_h, \quad\quad \sup_{\theta \in B(\theta_0, r)} \sum_{i=1}^d TV(h_{\theta, i}) \le M_h,
\end{equation}
and 
\begin{equation}\label{eq:h_diff_o}
|h_{\theta}(u) -\bar{h}_{n, \theta}(u)| \le M^*_h \sqrt{d}  |\eta_{\theta}(u)- \tilde\eta_{\theta}(u)| \qquad \forall \theta\in B(\theta_0, r) \text{ and } u\in \D. 
\end{equation}
Moreover, let
\begin{equation}\label{eq:H_def} 
\bar{\h} := \big\{ h(|\theta-x|)\big| h: \D \to \R^d, \, h_i = f_{i, 1} -f_{i, 2}, f_{i, 1}, f_{i, 2} \in \M^{4M_h}, \text{ and } \theta \in B(\theta_0, r)\big\}.
\end{equation}
Then 
\begin{equation}\label{eq:log_ent_H}
\log N_{[]}(\nu, \bar{\h}, \|\cdot\|_{2, P_X}) \le \frac{8dAM_h}{\nu},
\end{equation}

Finally
\begin{equation}\label{eq:g_n_e_h_hbar}
\P\left(\left|\G_n\Big[\epsilon \big[h_{\hat\theta}\big(|{\hat\theta}-X|\big)- \bar{h}_{n, \hat\theta}\big(|{\hat\theta}-X|\big)\big]\Big]\right| \ge \delta \right) =o(1)
\end{equation}

\end{lemma}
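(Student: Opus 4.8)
The statement bundles four assertions, which I would dispatch in the order listed. The two bounds in \eqref{eq:h_boun} are the routine ones. By~\ref{a1} both $X$ and $\theta$ lie in a ball of radius $T$, so $|h_\theta(u)|=|\E(X\mid|X-\theta|=u)-\theta|\le 2T$ for every $u$ and $\theta$. For the total variation I would write $h_\theta(u)=g_\theta(u^2)-\theta$ with $g_\theta(t):=\E(X\mid|X-\theta|^2=t)$; by~\ref{ContConditional} each coordinate of $g_\theta$ is piecewise twice continuously differentiable on $[0,4T^2]$ with only finitely many kinks, and \eqref{eq:lip_h_beta} together with a compactness argument over $\theta\in B(\theta_0,r)$ produces bounds on $\sup_\theta\|g_\theta'\|_\infty$ and on the total jump size that are uniform in $\theta$. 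Composing with the Lipschitz map $u\mapsto u^2$ on $[0,2T]$ then gives $\sup_\theta\sum_{i\le d}\mathrm{TV}(h_{\theta,i})\le M_h$ for a suitable constant $M_h$, which I fix (taking $M_h\ge 2T\sqrt d$); in particular each $h_{\theta,i}$ is $L$-Lipschitz in $u$ with $L$ independent of $\theta$.

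\emph{The pointwise bound \eqref{eq:h_diff_o} is the crux.} Fix $\theta\in B(\theta_0,r)$ and a constancy block $(\tau_{i,\theta},\tau_{i+1,\theta})$ of the isotonic fit, on which $\tilde\eta_\theta\equiv c_i:=\tilde\eta_\theta(\tau_{i,\theta})$. Using the strict monotonicity of $\eta_0$ (assumed in Theorem~\ref{thm:Asymp_norml}, under which this lemma is applied) together with~\ref{eta_bound}, for $r$ small the map $\eta_\theta$ defined by~\eqref{eq:eta_profile_pop} is strictly decreasing with $\inf|\eta_\theta'|\ge m>0$ uniformly over $\theta\in B(\theta_0,r)$ on the relevant range, so $\eta_\theta(u)=c_i$ has a unique root $u_i^\ast$. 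Inspecting the three cases of~\eqref{eq:hbar_def} shows that for $u$ in the block $\bar h_{n,\theta}(u)=h_\theta\!\big(\Pi_{[\tau_{i,\theta},\tau_{i+1,\theta}]}(u_i^\ast)\big)$, i.e.\ $h_\theta$ evaluated at the point of the block nearest to $u_i^\ast$. Since the metric projection onto an interval containing $u$ moves $u_i^\ast$ no farther from $u$, $|u-\Pi_{[\tau_{i,\theta},\tau_{i+1,\theta}]}(u_i^\ast)|\le|u-u_i^\ast|$, and combining the Lipschitz property of $h_\theta$ with the mean value theorem and $|\eta_\theta'|\ge m$,
\[
\big|h_{\theta}(u)-\bar h_{n,\theta}(u)\big|\le L\,|u-u_i^\ast|\le\frac{L}{m}\big|\eta_\theta(u)-\eta_\theta(u_i^\ast)\big|=\frac{L}{m}\big|\eta_\theta(u)-\tilde\eta_\theta(u)\big|,
\]
which is \eqref{eq:h_diff_o} with $M_h^\ast=L/m$, the factor $\sqrt d$ on the right absorbing the passage from the worst coordinate to the Euclidean norm. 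Moreover $u\mapsto\Pi_{[\tau_{i,\theta},\tau_{i+1,\theta}]}(u_i^\ast)$ is non-decreasing across blocks, so $\bar h_{n,\theta}=h_\theta\circ(\text{non-decreasing})$ inherits $\mathrm{TV}\le\mathrm{TV}(h_\theta)\le M_h$ and the sup-norm bound.

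\emph{Entropy and the empirical-process bound.} Each coordinate of an element of $\bar\h$ has the form $f_1(|\theta-\cdot|)-f_2(|\theta-\cdot|)$ with $f_1,f_2\in\M^{4M_h}$ and a common $\theta\in B(\theta_0,r)$; by \eqref{eq:joint_ent} the class $\{f\circ\theta:f\in\M^{4M_h},\theta\in B(\theta_0,r)\}$ has bracketing entropy $\le 4AM_h/\nu$, hence each coordinate class has entropy $\le 8AM_h/\nu$, and assembling coordinatewise brackets into a product bracket for the $\R^d$-valued function yields \eqref{eq:log_ent_H}. The total-variation and sup-norm bounds of the first two parts put both $x\mapsto h_\theta(|\theta-x|)$ and $x\mapsto\bar h_{n,\theta}(|\theta-x|)$ in $\bar\h$, so the integrand in \eqref{eq:g_n_e_h_hbar} lies in $\bar\h-\bar\h$. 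Setting $g_n:=h_{\hat\theta}(|\hat\theta-\cdot|)-\bar h_{n,\hat\theta}(|\hat\theta-\cdot|)$, \eqref{eq:h_diff_o}, Theorem~\ref{thm:unif_eta} (uniformly over $B(\theta_0,\delta_0)$) and Theorem~\ref{thm:SSE_consis} give $\|g_n\|_{2,P_X}^2\le(M_h^\ast)^2 d\int(\eta_{\hat\theta}(|\hat\theta-x|^2)-\tilde\eta_{\hat\theta}(|\hat\theta-x|^2))^2\,dP_X(x)=O_p(n^{-2/3}n^{2/q})=o_p(1)$, so with probability tending to one $g_n$ lies in $\{g\in\bar\h-\bar\h:\|g\|_{2,P_X}\le\delta_n\}$ for a deterministic $\delta_n\downarrow 0$. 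From here I would run the truncation-and-chaining scheme of the proof of Theorem~\ref{thm:unif_eta} and Lemma~\ref{lem:outsideterm}: split $\epsilon=\bar\epsilon+\epsilon^\ast$ with $\bar\epsilon=\epsilon\,\mathbbm 1\{|\epsilon|\le C_\epsilon\}$, $C_\epsilon\asymp n^{1/q}$; feed \eqref{eq:log_ent_H} and $\|\E(\bar\epsilon^2\mid X=\cdot)\|_\infty\le\sigma^2$ into Lemma~3.4.2 of~\cite{VdVW96} to bound $\E\sup_{\|g\|_{2,P_X}\le\delta}|\G_n[\bar\epsilon g]|\lesssim\sigma(dM_h\delta)^{1/2}+\sigma^2 dM_hC_\epsilon/(\delta^{3/2}\sqrt n)$; bound the untruncated part crudely by $\lesssim M_hC_\epsilon/\sqrt n$ as in \eqref{eq:eps_term}; peel over dyadic shells of $\|g\|_{2,P_X}$ down to $\delta_n$, using $q\ge 6$ (in fact $q\ge 2$ suffices here) to drive every term to $0$. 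This gives $\sup_{g:\|g\|_{2,P_X}\le\delta_n}|\G_n[\epsilon g]|=o_p(1)$, hence \eqref{eq:g_n_e_h_hbar}.

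\emph{Main obstacle.} The genuine difficulty is \eqref{eq:h_diff_o}, which requires both reading the three-case construction \eqref{eq:hbar_def} correctly as ``$h_\theta$ at the point of the block closest to the level-crossing of $\eta_\theta$'' and establishing the quantitative strict monotonicity $\inf|\eta_\theta'|\ge m>0$ of $\eta_\theta$ uniformly for $\theta$ near $\theta_0$, transferred from $\eta_0$ through the conditional-expectation operator. Obtaining the uniform-in-$\theta$ control of $\mathrm{TV}(h_\theta)$ in the first part is a secondary and more routine point.
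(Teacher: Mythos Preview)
Your proposal is correct and follows essentially the same route as the paper: Lipschitzness of $h_\theta$ combined with uniform strict monotonicity of $\eta_\theta$ (the paper invokes Lemma~\ref{lem:eta_theta_def} and then ``techniques used in (10.64) of~\cite{GJ14}'' for \eqref{eq:h_diff_o}, which is exactly the mean-value argument you spell out), the monotone-class entropy of Lemma~\ref{lem:basic_joint} for \eqref{eq:log_ent_H}, and a truncation $\epsilon=\bar\epsilon+\epsilon^*$ followed by a maximal inequality on the localized class $\bar\h_n=\{h\in\bar\h:\|h\|_{2,P_X}\le\zeta_n\}$ for \eqref{eq:g_n_e_h_hbar}. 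Two minor remarks: your ``projection'' reading of the three cases in \eqref{eq:hbar_def} swaps the endpoint assignments in cases~1 and~3 relative to the paper's written definition---the projection version is what actually makes \eqref{eq:h_diff_o} hold, so this is almost certainly a typo in the paper rather than an error on your part; and the paper does not peel but simply applies the maximal inequality once on the single shrinking ball of radius $\zeta_n=d(\log n)n^{-1/3+1/q}$, which is all that is needed here.
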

\begin{proof}

Recall that by Lemma~\ref{lem:basic_joint}, we have that 
\[\log N_{[\, ]}(\nu, \{f\circ\theta : f :\D\to \R, \theta \in B(\theta_0, r) \text{ and } f \in \M^{4M_h} \}, \|\cdot \|_{P_X})\le  \frac{A4M_h}{\nu}.\]
Thus by stability property of Donsker classes, we have that 
\begin{equation}\label{eq:log_ent_H_proof}
\log N_{[]}(\nu, \bar{\h}, \|\cdot\|_{2, P_X}) \lesssim  \frac{8dAM_h}{\nu}.
\end{equation}
As $\sup_{x\in\rchi}|x| \le R$, it is easy to see that $\sup_{\theta\in B(\theta_0, r)}\|h_\theta\|_{2, \infty} \le R + |\theta| + r.$ The proof of finite total variation follows from the proof of Lemma~\ref{lem:deriv_M} and Lemma~F.4 of~\cite{balabdaoui2019score}.
Recall that $t\mapsto \eta_0(t)$ is strictly decreasing and continuously differentiable, then by Lemma~\ref{lem:eta_theta_def}, we can conclude that $\eta'_\theta(\cdot)$ is bounded away from zero for all $\theta\in B(\theta_0, r)$. We have also assumed that $u \mapsto h_\theta(u)$ has a totally bounded derivative (see~\ref{ContConditional}). Thus techniques used in (10.64) of~\cite{GJ14} imply that there exist constant $M_h^*$ such that
\begin{equation}
|h_{\theta}(u) -\bar{h}_{n, \theta}(u)| \le M^*_h \sqrt{d}  |\eta_{\theta}(u)- \tilde\eta_{\theta}(u)| \qquad \forall \theta\in B(\theta_0, r) \text{ and } u\in \D. 
\end{equation}
Observe that $h_{\theta} -\bar{h}_{n, \theta} \in \bar{\h}$, as both $h_\theta$ and $\bar{h}_{n, \theta}$ satisfy~\eqref{eq:h_boun} and by Lemma~F.5\footnote{A real-valued uniformly bounded function of bounded variation can be written as difference of two bounded and monotone functions.} of~\cite{balabdaoui2019score}, we have that $h_{\theta} -\bar{h}_{n, \theta} = \left(f_{1, 1} -f_{1, 2}, \ldots, f_{d, 1} -f_{d, 2}\right)$, for some functions $ f_{i, 1}, f_{i, 2} \in \M^{4M_h}$ for all $i \in\{1, \ldots, d\}.$
Recall that by Theorem~\ref{thm:unif_eta} and~\eqref{eq:h_diff_o}, we have that  
\begin{equation}\label{eq:L_2_bound}
\sup_{\theta\in B(\theta_0, r)} \|h_{\theta} (|\theta-X|) -\bar{h}_{n, \theta}(|\theta-X|)\|_{2, P_X} = O_p(dn^{-1/3} n^{1/q}).
\end{equation}
Let us define
\begin{equation}\label{eq:zeta_n} 
\bar{\h}_n := \big\{ h(x): h\in \bar\h, \text{ and } \|h(X)\|_{2, P_X} \le \zeta_n \big\}, \qquad\text{where}\qquad\zeta_n := d\log n n^{-1/3} n^{1/q}.
\end{equation}
Then by Chebyshev's inequality
\begin{align}
&\P\left(\left|\G_n\Big[\epsilon \big[h_{\hat\theta}\big(|{\hat\theta}-X|\big)- \bar{h}_{n, \hat\theta}\big(|{\hat\theta}-X|\big)\big]\Big]\right| \ge \delta \right)\nonumber\\
\le{}&\P\bigg(\sup_{\theta\in B(\theta_0, r)}\left|\G_n\Big[\epsilon \big[h_{\theta}\big(|{\theta}-X|\big)- \bar{h}_{n, \theta}\big(|{\theta}-X|\big)\big]\Big]\right|\ge \delta\bigg)\label{eq:P_E_dim}\\
\le{}&\P\bigg(\sup_{h(\cdot)\in \bar{\mathcal{H}}_n}\left|\G_n \epsilon h \right|\ge \delta\bigg) + \P\left( \sup_{\theta\in B(\theta_0, r)} \|h_{\theta} (|\theta-X|) -\bar{h}_{n, \theta}(|\theta-X|)\|_{2, P_X} \ge \zeta_n\right) \nonumber\\
\le{}& 2 \delta^{-1} \sqrt{d} \sum_{i=1}^{d}\E\bigg(\sup_{h(\cdot)\in \bar{\mathcal{H}}_n}\left|\G_n\epsilon h_i\right|\bigg) + o(1). \nonumber
\end{align}
By definition of $\bar\h$, we have that $\sup_{h\in \bar\h} \|h\|_{2, \infty} \le 2 M_h$. Thus by arguments similar to those in the proof of Theorem~\ref{thm:unif_eta} and Lemma F.4 of~\cite{2017arXiv170800145K}, we have that for every $i \in \{1, \ldots, d\}$
\begin{align}\label{eq:e_bar_split}
\begin{split}
\E\bigg(\sup_{h(\cdot)\in \bar{\mathcal{H}}_n}\left|\G_n\epsilon h_i\right|\bigg)\le{}&\E\bigg(\sup_{h(\cdot)\in \bar{\mathcal{H}}_n}\left|\G_n\bar\epsilon h_i\right|\bigg) + 2 \frac{2 M_h C_\epsilon}{\sqrt{n}},
\end{split}
\end{align}
where $\bar\epsilon_i := \epsilon_i\mathbbm{1}_{\{|\epsilon_i|\le C_{\epsilon}\}}$ and  $C_\epsilon \lesssim n^{1/q}$.  Thus by Lemma~F.7 of~\cite{2017arXiv170800145K}, we have that 
\begin{align}\label{eq:H_bar_diff}
\begin{split}
\E\bigg(\sup_{h(\cdot)\in \bar{\mathcal{H}}_n}\left|\G_n \bar\epsilon h_i\right|\bigg)\le{}& \sigma  \sqrt{4d A M_h} \zeta_n^{1/2}\left(1+ \frac{\sigma \sqrt{4d A M_h} \zeta_n^{1/2} C_\epsilon 2M_h}{\zeta_n^2 \sqrt{n} }\right) =o(1).
\end{split}
\end{align}
Thus combining~\eqref{eq:P_E_dim}, ~\eqref{eq:e_bar_split}, and~\eqref{eq:H_bar_diff}, we have~\eqref{eq:g_n_e_h_hbar}.\qedhere

\end{proof}

\begin{lemma}\label{lem:eta_h_hbar}
Suppose  the assumptions of Theorem~\ref{thm:Asymp_norml} hold. If $q\ge 6$, 
\begin{align}\label{eq:eta_h_hbar}
\begin{split}
\left|\G_n\Big[ \big[\eta_{\theta_0}\big(|{\theta_0}-X|^2\big)-\tilde\eta_{\hat\theta}\big(|{\hat\theta}-X|^2\big)\big] \big[h_{\hat\theta}\big(|{\hat\theta}-X|\big)- \bar{h}_{n, \hat\theta}\big(|{\hat\theta}-X|\big)\big]\Big]\right|=& o_p(1).
\end{split}
\end{align}
\end{lemma}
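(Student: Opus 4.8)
The plan is to bound the empirical process term by a standard ``product of two small terms'' argument, combined with a maximal inequality over a suitable Donsker class. First I would observe that the factor $h_{\hat\theta}(|\hat\theta - X|) - \bar h_{n,\hat\theta}(|\hat\theta - X|)$ is controlled pointwise by $M_h^*\sqrt{d}\,|\eta_{\hat\theta}(|\hat\theta-X|^2) - \tilde\eta_{\hat\theta}(|\hat\theta-X|^2)|$ via~\eqref{eq:h_diff_o} in Lemma~\ref{lem:H_bar_diff}, and that the other factor $\eta_{\theta_0}(|\theta_0-X|^2) - \tilde\eta_{\hat\theta}(|\hat\theta-X|^2)$ splits, using the triangle inequality, into $[\eta_{\theta_0}(|\theta_0-X|^2) - \eta_{\hat\theta}(|\hat\theta-X|^2)]$ plus $[\eta_{\hat\theta}(|\hat\theta-X|^2) - \tilde\eta_{\hat\theta}(|\hat\theta-X|^2)]$. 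The first of these is $O_p(|\hat\theta-\theta_0|)$ uniformly (by the differentiability of $\theta\mapsto\eta_\theta$, Lemma~\ref{lem:eta_theta_def}, and assumption~\ref{ContConditional}), while the second has $L_2(P_X)$-norm $O_p(n^{-1/3}n^{1/q})$ by Theorem~\ref{thm:unif_eta}.

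Next I would set up the relevant function class. The integrand, as $\hat\theta$ varies over $B(\theta_0,r)$, lies (with probability tending to one, after truncating $\tilde\eta_\theta$ at level $Cn^{1/q}$ as in~\eqref{eq:F_1} and the proof of Theorem~\ref{thm:unif_eta}) in the class of products $g_1(x)\,g_2(x)$ where $g_1$ ranges over $\{\eta\circ\theta - \eta_\theta\circ\theta : \theta\in B(\theta_0,r),\ \eta\in\M^K\}$ and $g_2$ ranges over $\bar{\h}$ from~\eqref{eq:H_def}. The bracketing entropy of the first class is $O(K/\nu)$ by~\eqref{eq:ent_diff} of Lemma~\ref{lem:basic_joint}, and of $\bar\h$ is $O(dM_h/\nu)$ by~\eqref{eq:log_ent_H}; since both classes are uniformly bounded (the first by $M_1+K \asymp n^{1/q}$, the second by $2M_h$), the class of products has bracketing entropy $O(Kn^{1/q}/\nu)$ by the standard stability argument (as in the proof of~\eqref{eq:ak_ent}). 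I would then restrict to the subclass where the $L_2(P_X)$-norm of $g_1$ is at most $\rho_n := \log n \cdot (n^{-1/3}n^{1/q} + \sup_{\theta}|\ldots|)$, which by the above reasoning (and the rate $|\hat\theta-\theta_0| = O_p(n^{-1/3}n^{1/q})$ established in Theorem~\ref{thm:LSE_joint}-type arguments, or simply consistency plus the bound $\rho_n\to 0$) contains the random integrand with probability tending to one; note that $g_2$ being uniformly bounded means the product's $L_2$-norm is $\lesssim \rho_n$ as well. Applying Lemma~3.4.2 of~\cite{VdVW96} (the maximal inequality with bracketing, handling the unbounded errors via the truncation $\bar\epsilon_i = \epsilon_i\mathbbm{1}_{|\epsilon_i|\le C_\epsilon}$ and the tail bound~\eqref{eq:max_bound} exactly as in Lemma~\ref{lem:outsideterm} and Lemma~\ref{lem:H_bar_diff}), the supremum of $|\G_n(g_1 g_2)|$ over this subclass is bounded by a quantity of order
\[
\sqrt{K n^{1/q}}\,\rho_n^{1/2} + \frac{K n^{1/q}(M_1+K)}{\sqrt{n}\,\rho_n},
\]
and plugging in $K \asymp n^{1/q}$, $\rho_n \asymp \log n\, n^{-1/3+1/q}$ shows this is $o(1)$ precisely when $q\ge 6$ (the first term is $\asymp \sqrt{\log n}\, n^{-1/6 + 3/(2q)}$ and the second $\asymp (\log n)^{-1} n^{-1/6+5/(2q)}$, both $\to 0$ when $1/q \le 1/6$, i.e.\ $q\ge 6$, with a little room to spare).

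The main obstacle I anticipate is the careful bookkeeping of the moment assumption: getting the exponent arithmetic to close at exactly $q\ge 6$ requires tracking how the factors of $n^{1/q}$ from $\|\tilde\eta_\theta\|_\infty = O_p(n^{1/q})$, from $C_\epsilon \lesssim n^{1/q}$, and from the $O_p(n^{-1/3}n^{1/q})$ rate all combine in the maximal inequality, and then confirming that the ``wild'' part coming from $\epsilon_i^* = \epsilon_i - \bar\epsilon_i$ is also negligible (this is where the finite sixth moment is genuinely used, via Markov's inequality on $\E(\max_i|\epsilon_i|)\le (n\E|\epsilon|^q)^{1/q}$). The rest is routine: the reduction to a Donsker class, the two-term product bound, and the entropy computations are all direct applications of results already established in the excerpt, so the proof should be short modulo this arithmetic.
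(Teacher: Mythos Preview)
Your high-level strategy---embed the integrand in a product of two bracketing classes, bound the $L_2$-radius via Theorem~\ref{thm:unif_eta}, and apply the maximal inequality (Lemma~3.4.2 of~\cite{VdVW96})---is exactly what the paper does. Two execution points need repair, though, and one of them matters for the threshold $q\ge 6$.

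First, the class you write for the first factor, $\{\eta\circ\theta-\eta_\theta\circ\theta:\theta\in B(\theta_0,r),\,\eta\in\M^K\}$, does not contain $\eta_0\circ\theta_0-\tilde\eta_{\hat\theta}\circ\hat\theta$: the two pieces are indexed at \emph{different} $\theta$'s. The paper instead uses $\mathcal{F}^\eta_n:=\{\eta_0\circ\theta_0-f\circ\theta: f\in\M^{B_n},\,\theta\in B(\theta_0,r),\,\|\cdot\|_{P_X}\le\zeta_n\}$ with $B_n=n^{1/q}\log n$, which does contain the random integrand (by~\eqref{eq:eta_bound_unif} and~\eqref{eq:unif_eta}) and has the same entropy order. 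Relatedly, neither the triangle-inequality split of the first factor nor the pointwise bound~\eqref{eq:h_diff_o} from your opening paragraph is actually used in your class construction or in the paper's proof; the smallness comes entirely from the $L_2$-constraint on the first factor together with the uniform boundedness of $\bar\h$.

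Second, your product-class entropy is overcounted by a factor $n^{1/q}$. Combining a class with entropy $\lesssim K/\nu$ and envelope $\asymp K$ with a class with entropy $\lesssim M_h/\nu$ and envelope $\asymp M_h$ yields product entropy $\lesssim (K M_h+M_h K)/\nu\asymp K/\nu$, not $Kn^{1/q}/\nu$; the paper does this via Lemma~9.25 of~\cite{Kosorok08} and obtains~\eqref{eq:basic_joint}. With your overcounted entropy, the leading maximal-inequality term is $\sqrt{\log n}\,n^{-1/6+3/(2q)}$, whose exponent at $q=6$ is $+1/12$, so your own bound does \emph{not} give $o(1)$ at $q=6$ despite the claim; the correct entropy gives $\asymp\log n\cdot n^{-1/6+1/q}$ and closes at $q\ge 6$ as in the paper's~\eqref{eq:13_final}. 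Finally, there is no $\epsilon$ in this integrand, so the truncation $\bar\epsilon_i$ is unnecessary here: once $\|\tilde\eta_{\hat\theta}\|_\infty\le B_n$ the class is uniformly bounded and the maximal inequality applies directly.
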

\begin{proof}
Let 
\begin{equation}\label{eq:f_eta}
\mathcal{F}^\eta_n:= \big\{ \eta_0\circ\theta_0 - f\circ\theta:  \theta \in B(\theta_0, r), f\in \M^{B_n}, \text{ and } \|\eta_0\circ\theta_0 - f\circ\theta\|_{P_X} \le \zeta_n\big\},
\end{equation}
where $\zeta_n$ is defined in~\eqref{eq:zeta_n} and 
\begin{equation}\label{eq:B_n_def}
B_n:= n^{1/q} \log  n .
\end{equation}
Then by Lemma~\ref{lem:basic_joint}  we have that 
\begin{equation}\label{eq:ent_1_eta}
\log N_{[\, ]}(\nu, \mathcal{F}_n^\eta, \|\cdot \|_{P_X})\le \frac{A (B_n+M_1)}{\nu},
\end{equation}Let $\mathcal{F}_n^\eta\times \bar{\h} := \{fh : f\in \mathcal{F}_n^\eta \bar{\h}\}.$ By~\eqref{eq:log_ent_H} and Lemma 9.25 of~\cite{Kosorok08}, we have that 
\begin{equation}\label{eq:basic_joint}
\log N_{[\, ]}(\nu, \mathcal{F}_n^\eta\times \bar{\h}, \|\cdot \|_{2, P_X})\le \frac{ A (4d M_h+ M_1+ B_n )}{\nu}.
\end{equation}
Define $A_n:=\sup_{f\in \mathcal{F}_n^\eta\times \bar{\h}}\left\| f \right\|_{2, P_X}$. As $\bar{\h}$ is uniformly bounded by $2M_h$, by definition of $\mathcal{F}_n^\eta$, we have that
\begin{align}\label{eq:bound}
\begin{split}
A_n={}&\sup_{f\in \mathcal{F}_n^\eta, h\in \bar\h} \left\| fh \right\|_{2, P_X}\le{}2\sqrt{d}M_h\sup_{f\in \mathcal{F}_n^\eta} \left\| f \right\|_{ P_X}\le{}2  \sqrt{d} M_h \zeta_n.
\end{split}
\end{align}
Recall that by~\eqref{eq:f_eta}, Theorem~\ref{thm:unif_eta}, and definition of $\bar\h$, we have that
\begin{equation}\label{eq:in_prob_f}
\P\left(\eta_{\theta_0}\big(|{\theta_0}-X|^2\big)-\tilde\eta_{\hat\theta}\big(|{\hat\theta}-X|^2\big) \in \mathcal{F}_n^\eta\right) = o(1)  \text{ and } \P\left(\big[h_{\hat\theta}\big(|{\hat\theta}-X|\big)- \bar{h}_{n, \hat\theta}\big(|{\hat\theta}-X|\big)\big] \in \bar\h\right) =1.\end{equation}
Thus
\begin{align}
 &\P\left(\left|\G_n\Big[ \big[\eta_{\theta_0}\big(|{\theta_0}-X|^2\big)-\eta_{\hat\theta}\big(|{\hat\theta}-X|^2\big)\big] \big[h_{\hat\theta}\big(|{\hat\theta}-X|\big)- \bar{h}_{n, \hat\theta}\big(|{\hat\theta}-X|\big)\big]\Big]\right| >\delta\right)\nonumber\\
\le{}& \P\left(\sup_{f \in \mathcal{F}_n^\eta\times \bar{\h}} \left|\G_n f\right| >\delta\right) +\P \left(\big[\eta_{\theta_0}\big(|{\theta_0}-X|^2\big)-\eta_{\hat\theta}\big(|{\hat\theta}-X|^2\big)\big] \big[h_{\hat\theta}\big(|{\hat\theta}-X|\big)- \bar{h}_{n, \hat\theta}\big(|{\hat\theta}-X|\big)\big] \notin \F_n^\eta \times \bar\h\right)\nonumber\\
={}&\P\left(\sup_{f\in \mathcal{F}_n^\eta\times \bar{\h}} |\G_n f| >\delta \text{ and } \right) + o(1).\label{eq:12_1}
\end{align}
 The proof is now complete, as by Lemma 3.4.2 of~\cite{VdVW96} (for uniformly bounded function classes) and~\eqref{eq:12_1} imply that
\begin{align}\label{eq:13_final}
\begin{split}
 &\P\left(\left|\G_n\Big[ \big[\eta_{\theta_0}\big(|{\theta_0}-X|^2\big)-\eta_{\hat\theta}\big(|{\hat\theta}-X|^2\big)\big] \big[h_{\hat\theta}\big(|{\hat\theta}-X|\big)- \bar{h}_{n, \hat\theta}\big(|{\hat\theta}-X|\big)\big]\Big]\right| >\delta\right)\\
\le{}&\delta^{-1} \sqrt{d} \sum_{i=1}^n \E\left(\sup_{f \in \mathcal{F}_n^\eta\times \bar{\h}} \left|\G_n\Big[ f_i \Big]\right|\right) +o(1)\\
={}& \left( (4d M_h+ M_1+ B_n ) A_n\right)^{1/2}  \left( 1+ \frac{\left( (4d M_h+ M_1+ B_n ) A_n\right)^{1/2}}{\sqrt{n} A_n^2} 2 M_h B_n  \right) +o(1)\\
\lesssim{}& (B_n A_n)^{1/2} + \frac{B_n^2}{\sqrt{n} A_n} + o(1)\\
={}& O\left( n^{-1/6} n^{1/q} \log n + (\log n)^2 n^{1/q} n^{-1/6}\right) +o(1) =o(1). \qedhere
 \end{split}
\end{align}
\end{proof}

\begin{lemma}\label{lem:Second_split}
Suppose  the assumptions of Theorem~\ref{thm:Asymp_norml} hold. If $q\ge 6$, we have that 
\begin{equation}\label{eq:Second_split_approx_1}
\int \left(\eta_{\hat\theta}\big(|{\hat\theta}-X|^2\big) - \tilde\eta_{\hat\theta}\big(|{\hat\theta}-X|^2\big)\right)\big[X-{\hat\theta}- h_{\hat\theta}\big(|{\hat\theta}-X|\big)\big] d\P_n =  o_p(n^{-1/2} +|\hat\theta-\theta_0|). 
\end{equation}
\end{lemma}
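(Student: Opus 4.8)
The plan is to reduce~\eqref{eq:Second_split_approx_1} to a pure empirical-process statement and then to bound that empirical process by exactly the peeling-and-bracketing-entropy argument already carried out for Lemma~\ref{lem:eta_h_hbar}. The one genuinely new observation is that the corresponding population integral vanishes \emph{exactly}: conditionally on the data, $\hat\theta$ is a fixed point of $B(\theta_0,r)$ and $u\mapsto\psi(u):=\eta_{\hat\theta}(u^2)-\tilde\eta_{\hat\theta}(u^2)$ is a fixed real-valued function, so for a fresh draw $X\sim P_X$,
\begin{align}\label{eq:second_pop_vanish}
\begin{split}
&\int\big(\eta_{\hat\theta}(|\hat\theta-x|^2)-\tilde\eta_{\hat\theta}(|\hat\theta-x|^2)\big)\big(x-\hat\theta-h_{\hat\theta}(|\hat\theta-x|)\big)\,dP_X(x)\\
={}&\E\Big[\psi\big(|\hat\theta-X|\big)\,\E\big(X-\hat\theta-h_{\hat\theta}(|\hat\theta-X|)\,\big|\,|\hat\theta-X|\big)\Big]=\mathbf{0}_d,
\end{split}
\end{align}
because by the definition~\eqref{eq:h_def} of $h_\theta$ the inner conditional expectation $\E(X\,|\,|\hat\theta-X|)-\hat\theta-h_{\hat\theta}(|\hat\theta-X|)$ is the zero vector. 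Hence the left-hand side of~\eqref{eq:Second_split_approx_1} equals $\int\big(\eta_{\hat\theta}(|\hat\theta-\cdot|^2)-\tilde\eta_{\hat\theta}(|\hat\theta-\cdot|^2)\big)\big(\cdot-\hat\theta-h_{\hat\theta}(|\hat\theta-\cdot|)\big)\,d(\P_n-P_0)=n^{-1/2}\,\G_n\big[\psi(|\hat\theta-\cdot|)\,(\cdot-\hat\theta-h_{\hat\theta}(|\hat\theta-\cdot|))\big]$, so it suffices to prove this $\G_n$-term is $o_p(1)$ --- which is in fact stronger than the asserted $o_p(n^{-1/2}+|\hat\theta-\theta_0|)$.

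Next I would localize. I would work on the event, of probability tending to one, on which $\hat\theta\in B(\theta_0,r)$ (Theorem~\ref{thm:SSE_consis}), $\|\tilde\eta_{\hat\theta}\|_\infty\le B_n:=n^{1/q}\log n$, and $\|\eta_{\hat\theta}(|\hat\theta-X|^2)-\tilde\eta_{\hat\theta}(|\hat\theta-X|^2)\|_{2,P_X}\le\zeta_n:=d\log n\,n^{-1/3}n^{1/q}$; the last two bounds are Theorem~\ref{thm:unif_eta} specialized to $\theta=\hat\theta$, using that $\eta_{\hat\theta}$ is exactly the population minimizer $\eta_\theta$ evaluated at $\theta=\hat\theta$ (recall $\eta_0$ is strictly decreasing, so $\eta_\theta$ is well defined on a neighbourhood of $\theta_0$). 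On this event each coordinate of the integrand belongs to
\[
\mathcal{G}_n:=\Big\{\big(\eta_\theta(|\theta-\cdot|^2)-f(|\theta-\cdot|^2)\big)\big(\cdot-\theta-h_\theta(|\theta-\cdot|)\big):\ \theta\in B(\theta_0,r),\ f\in\M^{B_n},\ \|\eta_\theta\circ\theta-f\circ\theta\|_{2,P_X}\le\zeta_n\Big\}.
\]

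I would then bound the entropy of $\mathcal{G}_n$ and apply a maximal inequality. Its first factor is uniformly bounded by $M_1+B_n\lesssim B_n$; its second factor $x\mapsto x-\theta-h_\theta(|\theta-x|)$ is uniformly (over $\theta\in B(\theta_0,r)$) bounded by $2T+M_h$ and of uniformly bounded total variation by Lemma~\ref{lem:H_bar_diff}. Hence, combining Lemma~\ref{lem:basic_joint} with the product rule for bracketing numbers exactly as in the passage leading to~\eqref{eq:basic_joint}, one obtains $\log N_{[\,]}(\nu,\mathcal{G}_n,\|\cdot\|_{2,P_X})\lesssim B_n/\nu$ with $\mathcal{G}_n$ having envelope of order $B_n$. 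Applying Lemma~3.4.2 of~\cite{VdVW96} to this uniformly bounded class with $L_2$-radius $A_n\lesssim\sqrt d\,M_h\zeta_n$ gives, verbatim as in~\eqref{eq:13_final}, the bound $\E\sup_{g\in\mathcal{G}_n}|\G_n g|\lesssim (B_nA_n)^{1/2}+B_n^2/(\sqrt n\,A_n)=o(1)$ when $q\ge 6$, and Chebyshev's inequality combined with the exact cancellation~\eqref{eq:second_pop_vanish} then yields~\eqref{eq:Second_split_approx_1}.

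The main obstacle is really just the first step --- recognizing that $\eta_{\hat\theta}-\tilde\eta_{\hat\theta}$ is a function of $|\hat\theta-X|$ alone and therefore $P_0$-orthogonal to the centered regressor $X-\E(X\mid|\hat\theta-X|)=X-\hat\theta-h_{\hat\theta}(|\hat\theta-X|)$, so that no $o_p(|\hat\theta-\theta_0|)$ slack is needed here at all. Once that is in hand, the empirical-process control is a direct transcription of Lemma~\ref{lem:eta_h_hbar}; its only delicate ingredient, inherited unchanged, is that the slowly growing sup-norm $B_n\asymp n^{1/q}\log n$ of the isotonic estimator must be played off against the $n^{-1/3}n^{1/q}$ rate of $\|\eta_{\hat\theta}(|\hat\theta-\cdot|^2)-\tilde\eta_{\hat\theta}(|\hat\theta-\cdot|^2)\|_{2,P_X}$ furnished by Theorem~\ref{thm:unif_eta}, which is what forces $q\ge 6$.
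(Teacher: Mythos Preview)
Your proposal is correct and follows essentially the same route as the paper: first observe that the population integral vanishes exactly by conditioning on $|\hat\theta-X|$ (the paper's \eqref{eq:SS_1} is your \eqref{eq:second_pop_vanish}), then localize to the class $\mathcal{G}_n$ (the paper's $\g^*_n$) and control the resulting empirical process via the same bracketing-entropy bound and the maximal inequality of \cite[Lemma~3.4.2]{VdVW96}. The final numerology $(B_n\zeta_n)^{1/2}+B_n^2/(\sqrt{n}\,\zeta_n)=o(1)$ for $q\ge 6$ matches the paper's \eqref{eq:1_4_final}.
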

\begin{proof}
First note that 
\begin{align}\label{eq:SS_1}
\begin{split}
&\int \left(\eta_{\hat\theta}\big(|{\hat\theta}-X|^2\big) - \tilde\eta_{\hat\theta}\big(|{\hat\theta}-X|^2\big)\right)\big[X-{\hat\theta}- h_{\hat\theta}\big(|{\hat\theta}-X|\big)\big] dP_0 \\
={}&\E\left[\big(\eta_{\hat\theta}\big(|{\hat\theta}-X|^2\big) - \tilde\eta_{\hat\theta}\big(|{\hat\theta}-X|^2\big)\big)\big[X-{\hat\theta}- h_{\hat\theta}\big(|{\hat\theta}-X|\big)\big]\right]\\
={}&\E\bigg[\left(\eta_{\hat\theta}\big(|{\hat\theta}-X|^2\big) - \tilde\eta_{\hat\theta}\big(|{\hat\theta}-X|^2\big)\right)\E\Big[\big[X-{\hat\theta}- h_{\hat\theta}\big(|{\hat\theta}-X|\big)\big]\Big||{\hat\theta}-X|^2 \Big]\bigg]
=0. 
\end{split}
\end{align}
Thus, we have 
\begin{align}\label{eq:SS_2}
\begin{split}
&\int \left(\eta_{\hat\theta}\big(|{\hat\theta}-X|^2\big) - \tilde\eta_{\hat\theta}\big(|{\hat\theta}-X|^2\big)\right)\big[X-{\hat\theta}- h_{\hat\theta}\big(|{\hat\theta}-X|\big)\big] d\P_n \\
={}&\int \left(\eta_{\hat\theta}\big(|{\hat\theta}-X|^2\big) - \tilde\eta_{\hat\theta}\big(|{\hat\theta}-X|^2\big)\right)\big[X-{\hat\theta}- h_{\hat\theta}\big(|{\hat\theta}-X|\big)\big] d(\P_n -P_0)\\
={}&\G_n \Big[\left(\eta_{\hat\theta}\big(|{\hat\theta}-X|^2\big) - \tilde\eta_{\hat\theta}\big(|{\hat\theta}-X|^2\big)\right)\big(X-{\hat\theta}- h_{\hat\theta}\big(|{\hat\theta}-X|\big)\big)\Big]
\end{split}
\end{align}
Let us define
\begin{align}\label{eq:G^*}
\begin{split}
\g^*_n:=  \Big\{\left(\eta_{\theta}\big(|{\theta}-X|^2\big) - f(|\theta-X|)\right)&\big(X-{\theta}- h_{\theta}\big(|{\theta}-X|\big)\big): f\in \M^{B_n}, \\
&\theta \in B(\theta_0, r), \text{ and } \|\eta_{\theta}\big(|{\theta}-X|^2\big) - f(|\theta-X|)\|_{P_X} \le \zeta_n\Big\},
\end{split}
\end{align}
where $B_n$ and $\zeta_n $ are defined as in~\eqref{eq:B_n_def}. By Theorem~\ref{thm:unif_eta}, we have that
\begin{equation}\label{eq:non_inclusion_prob}
\P\left(\big(\eta_{\hat\theta}\big(|{\hat\theta}-X|^2\big) - \tilde\eta_{\hat\theta}\big(|{\hat\theta}-X|^2\big)\big)\big(X-{\hat\theta}- h_{\hat\theta}\big(|{\hat\theta}-X|\big)\big) \notin \g^*_n\right) =o(1).
\end{equation}
Then by~\eqref{eq:joint_ent}, ~\eqref{eq:eta_theta_ent}, ~\eqref{eq:log_ent_H}, and Lemma 9.25 of~\cite{Kosorok08}, we have that 
\begin{equation}\label{eq:log_ent_g}
\log N_{[]}(\nu, \g^*_n, \|\cdot\|_{2, P_X}) \le  \frac{4 AM_h}{\nu} +\frac{A(M_1 +B_n)}{\nu} \lesssim \frac{B_n}{\nu}.
\end{equation}
Combining~\eqref{eq:SS_2}, ~\eqref{eq:non_inclusion_prob}, ~\eqref{eq:log_ent_g}, and Theorem~3.4.2 of~\cite{VdVW96}, we get 
\begin{align}
&\P\left(\G_n \Big[\left(\eta_{\hat\theta}\big(|{\hat\theta}-X|^2\big) - \tilde\eta_{\hat\theta}\big(|{\hat\theta}-X|^2\big)\right)\big(X-{\hat\theta}- h_{\hat\theta}\big(|{\hat\theta}-X|\big)\big)\Big] \ge \delta\right)\nonumber \\
\le{}&\delta^{-1}\sqrt{d} \sum_{i=1}^{d} \E\left(\sup_{g \in \g^*_n} \big|\G_n g_i\big| \right)+ o(1)\nonumber \\
 \lesssim{}&\delta^{-1}\sqrt{d} B_n^{1/2} \zeta_n^{1/2} \left( 1+  \frac{ B_n^{1/2} \zeta_n^{1/2}}{\sqrt{n} \zeta_n^2}  (B_n+ M_1) R\right) + o(1). \label{eq:1_4_final}
\end{align}
The proof is now complete, as it is very easy to see that the first term is $o(1)$ if $q \ge 6$ as
 \[ B_n \zeta_n = n^{-1/3} n^{2/q}(\log n)^2\text{   and   } \frac{B_n^2}{\sqrt{n} \zeta_n} = \log n n^{-1/6} n^{1/q}. \qedhere \]
\end{proof}

\begin{lemma}\label{lem:proof_third_1}
Suppose  the assumptions of Theorem~\ref{thm:Asymp_norml} hold, then 
\begin{equation}\label{eq:proof_third_1}
\left|\G_n \left(Y - \eta_{\theta_0}\big(|{\theta_0}-X|^2\big)\right)\big[\theta_0+ h_{\theta_0}\big(|\theta_0-X|\big)-{\hat\theta}- h_{\hat\theta}\big(|{\hat\theta}-X|\big)\big]\right| = o_p(1).
\end{equation}
\end{lemma}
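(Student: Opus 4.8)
The plan is to bound the empirical process term $\G_n$ applied to a product $\left(Y - \eta_{\theta_0}(|\theta_0-X|^2)\right)$ times a "remainder" function that is small because $\hat\theta \to \theta_0$. Write $g_\theta(x,y) := (y - \eta_{\theta_0}(|\theta_0-x|^2))\big[\theta_0 + h_{\theta_0}(|\theta_0-x|) - \theta - h_\theta(|\theta-x|)\big]$, so that the quantity of interest is $\G_n g_{\hat\theta}$ (coordinatewise). Note $g_{\theta_0} \equiv \mathbf{0}_d$. Since $Y - \eta_{\theta_0}(|\theta_0-X|^2) = Y - \E(Y\mid |\theta_0-X|^2) \cdot(\ldots)$ — more precisely, it equals $\epsilon + \eta_0(|\theta_0-X|^2) - \eta_{\theta_0}(|\theta_0-X|^2)$, and $\eta_{\theta_0} = \eta_0$, so this is exactly $\epsilon$. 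Thus $g_\theta(x,\epsilon) = \epsilon \cdot \big[\theta_0 + h_{\theta_0}(|\theta_0-x|) - \theta - h_\theta(|\theta-x|)\big]$, which has conditional mean zero given $X$ but depends on the heavy-tailed $\epsilon$.

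First I would truncate the errors: with $\bar\epsilon_i := \epsilon_i \ind{|\epsilon_i|\le C_\epsilon}$ and $C_\epsilon \lesssim n^{1/q}$ as in~\eqref{eq:ebar_def}, use the argument of~\eqref{eq:eps_term} (i.e.\ Lemma~S.5.1 of~\cite{2017arXiv170800145K}) to replace $\epsilon$ by $\bar\epsilon$ at a cost of $O(C_\epsilon/\sqrt n) = o(1)$, using that the multiplier function $\theta_0 + h_{\theta_0}(|\theta_0-\cdot|) - \theta - h_\theta(|\theta-\cdot|)$ is uniformly bounded (by~\eqref{eq:h_boun}, $\sup_{\theta\in B(\theta_0,r)}\|h_\theta\|_{2,\infty}\le M_h$, and $\rchi,\Theta$ are bounded). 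Second, I would restrict to the "shrinking ball" event. By assumption~\ref{ContConditional} (the Lipschitz bound~\eqref{eq:lip_h_beta}) together with $|\theta-\theta_0|$ small, the multiplier satisfies
\[
\sup_{u}\big|\theta_0 + h_{\theta_0}(u) - \theta - h_\theta(u)\big| \le |\theta-\theta_0| + \bar M|\theta-\theta_0| = (1+\bar M)|\theta-\theta_0|,
\]
so on the event $\{|\hat\theta-\theta_0|\le \rho_n\}$ (which has probability $1-o(1)$ for any $\rho_n\to 0$, by Theorem~\ref{thm:SSE_consis}) the relevant function $g_{\hat\theta}$ lies in the class
\[
\g_n := \Big\{(x,e)\mapsto \bar e\,\big[\theta_0 + h_{\theta_0}(|\theta_0-x|) - \theta - h_\theta(|\theta-x|)\big] : \theta\in B(\theta_0,\rho_n)\Big\},
\]
which has envelope of order $C_\epsilon\rho_n$ and uniformly small $L_2(P)$-norm $\lesssim \rho_n$.

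Third, I would bound $\E\sup_{g\in\g_n}|\G_n g_i|$ via the maximal inequality Lemma~3.4.2 of~\cite{VdVW96}. The bracketing entropy of $\g_n$ is controlled by that of $\{h_\theta(|\theta-\cdot|):\theta\in B(\theta_0,r)\}$: by~\eqref{eq:log_ent_H} in Lemma~\ref{lem:H_bar_diff}, $\log N_{[]}(\nu,\bar\h,\|\cdot\|_{2,P_X})\le 8dAM_h/\nu$, and multiplying by the bounded multiplier $\bar\epsilon$ (with $\|\E(\bar\epsilon^2\mid X=\cdot)\|_\infty\le\sigma^2$) keeps the $1/\nu$ entropy with constant scaled by $C_\epsilon$. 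Feeding envelope $\asymp C_\epsilon\rho_n$ and $L_2$-radius $\asymp\rho_n$ into Lemma~3.4.2 gives a bound of the form
\[
\E\sup_{g\in\g_n}|\G_n g_i| \;\lesssim\; \sigma(C_\epsilon\rho_n)^{1/2} + \frac{C_\epsilon\cdot C_\epsilon\rho_n}{\sqrt n\,\rho_n} \;\lesssim\; \sigma(n^{1/q}\rho_n)^{1/2} + \frac{n^{2/q}}{\sqrt n};
\]
choosing $\rho_n\to 0$ slowly enough (e.g.\ $\rho_n = (\log n)^{-1}$, which is still $\ge$ the actual rate $|\hat\theta-\theta_0| = o_p(1)$ established in Theorem~\ref{thm:SSE_consis}) makes the first term $o(1)$, while the second is $o(1)$ since $q\ge 6$ implies $2/q\le 1/3 < 1/2$. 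Combining with Chebyshev/Markov over coordinates and adding back the truncation error $o(1)$ from step one and the non-inclusion probability $o(1)$ from step two yields~\eqref{eq:proof_third_1}.

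The main obstacle I expect is bookkeeping the interplay between the heavy-tailed errors and the vanishing $L_2$-norm of the multiplier: the truncation level $C_\epsilon \asymp n^{1/q}$ enters both the envelope and (through the conditional second moment bound) the entropy, and one must check that the maximal inequality's "second term" $\mathrm{(envelope)}\cdot(\text{entropy integral})^2/(\sqrt n\,\sigma^2)$ does not blow up — this is exactly where $q\ge 6$ is used, and it must be verified that the chosen $\rho_n$ dominates the true consistency rate (only $o_p(1)$ is available here, not a polynomial rate, so $\rho_n$ cannot be taken polynomially small). A secondary technical point is confirming $Y-\eta_{\theta_0}(|\theta_0-X|^2)=\epsilon$ exactly, i.e.\ that $\eta_{\theta_0}=\eta_0$ (noted after~\eqref{eq:eta_profile_pop}), so that the mean-zero-given-$X$ structure can be exploited for the symmetrization/truncation step.
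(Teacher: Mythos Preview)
Your approach is essentially the same as the paper's: truncate $\epsilon\to\bar\epsilon$, restrict to a shrinking ball $B(\theta_0,\rho_n)$, and apply the maximal inequality Lemma~3.4.2 of~\cite{VdVW96} using the $1/\nu$-type bracketing entropy of the multiplier class. However, your entropy bookkeeping contains a genuine error that makes the displayed bound fail.

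You assert that ``multiplying by the bounded multiplier $\bar\epsilon$ \dots\ keeps the $1/\nu$ entropy with constant scaled by $C_\epsilon$.'' This is wrong: if $[\ell,u]$ is an $L_2(P_X)$-bracket of size $\nu$ for the multiplier, then the induced bracket for $\bar\epsilon\cdot(\text{multiplier})$ has $L_2(P)$-size controlled by $\big(\E[\bar\epsilon^2(u-\ell)^2(X)]\big)^{1/2}\le \sigma\nu$, using $\|\E(\bar\epsilon^2\mid X)\|_\infty\le\sigma^2$. So the entropy constant scales by $\sigma$ (a fixed constant), not by $C_\epsilon\asymp n^{1/q}$. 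With your incorrect scaling the first term reads $\sigma(C_\epsilon\rho_n)^{1/2}=\sigma(n^{1/q}\rho_n)^{1/2}$, and for $\rho_n=(\log n)^{-1}$ this is $\sigma\, n^{1/(2q)}/\sqrt{\log n}\to\infty$, so your claim that ``this makes the first term $o(1)$'' is false as written. With the correct scaling the first term is $\asymp\sigma\sqrt{\rho_n}\to 0$, and the argument goes through; the paper obtains $a_n^{1/2}+C_\epsilon/(\sqrt n\,a_n)$ (using the cruder envelope $C_\epsilon\cdot 4R$ rather than your tighter $C_\epsilon\rho_n$), which also tends to zero once $a_n\log n\to\infty$.

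A second, smaller gap: fixing $\rho_n=(\log n)^{-1}$ is not justified by $\hat\theta\stackrel{P}{\to}\theta_0$ alone, since $o_p(1)$ does not imply $o_p((\log n)^{-1})$. You correctly flag this in your obstacle paragraph, but the resolution (as in the paper) is to invoke the general fact that for any $o_p(1)$ sequence there exists a deterministic $a_n\downarrow 0$ with $|\hat\theta-\theta_0|=o_p(a_n)$ and $a_n\log n\to\infty$; you cannot simply name $1/\log n$.
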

\begin{proof}
For the proof of this lemma, let us define
\begin{equation}\label{eq:g_def}
g_\theta(\cdot) := \E\left(X\big| |X-\theta|=\cdot\right).
\end{equation}
By definition of~\eqref{eq:h_def}, we have that 
\[\theta_0+ h_{\theta_0}\big(|\theta_0-x|\big)-\big[{\theta}+ h_{\theta}\big(|{\hat\theta}-x|\big)\big] = g_{\theta_0}(|{\theta_0}-x|) -g_{\hat\theta}(|{\hat\theta}-x|).\]
By Lemma~\ref{lem:H_bar_diff} and the fact that $\rchi$ and $\Theta$ is bounded, we have that  \begin{equation}\label{eq:g_boun}
\sup_{\theta \in B(\theta_0, r)}\|g_{\theta}\|_{2, \infty} \le T , \quad\quad \sup_{\theta \in B(\theta_0, r)} \sum_{i=1}^n TV(g_{\theta, i}) \le M_h,
\end{equation}
and 
\begin{equation}\label{eq:h_diff_1}
\|g_{\theta} -g_{\theta_1}\|_{2, \infty} \le M^*_h |\theta-\theta_1| \qquad \forall \theta, \theta_1 \in B(\theta_0, r).
\end{equation}
By arguments similar to the proof of~\eqref{eq:log_ent_H} or proof of~\eqref{eq:eta_theta_ent} in Lemma~\ref{lem:basic_joint} and the fact that a bounded function with finite total variation can be written as a difference of two bounded and monotone functions (also see Lemma~F.5 of~\cite{balabdaoui2019score}),  we can show that 
\begin{equation}\label{eq:h_theta_ent}
\log N_{[]}(\nu, \{g_\theta( |X-\theta|) -g_{\theta_0}(|X-\theta_0|) : \theta\in B (\theta_0, r)\}, \|\cdot\|_{2, P_X}) \lesssim \frac{(M_h+T)}{\eta}.
\end{equation}
Let $\{a_n\}$ be a sequence such that  $a_n\rightarrow 0$, $ a_n \log n\rightarrow \infty$, and $|\hat\theta-\theta_0| =o_p(a_n)$\footnote{Note that Theorem~\ref{thm:Consis} guarantees the existence of such a sequence.}. Thus for every $\zeta>0$, there exist $N_\zeta$ and  $C_\zeta$ such that   $\P(|\hat\theta-\theta_0|  \ge C_\zeta a_n) \le \zeta/2$  for all $n >N_\zeta.$ Now  for any $n > N_\zeta$, let us define
\begin{align}\label{eq:B_delta}
\begin{split}
\mathcal{B}_n^\zeta:= \Big\{g_\theta(|{\theta}-x| ) - g_{\theta_0}|{\theta_0}-x| ) : \theta\in B (\theta_0, C_\zeta a_n)\Big\}
\end{split}
\end{align}
For all $n> N_\zeta$, by Chebyshev's inequality, we have 
\begin{align}
&\P\left(\left|\G_n \epsilon \big(g_{\theta_0}(|{\theta_0}-X|) -g_{\hat\theta}(|{\hat\theta}-X|\big)\right| \ge \delta \right)\nonumber\\
\le{}&\P\left(\left|\G_n \epsilon \big(g_{\theta_0}(|{\theta_0}-X|) -g_{\hat\theta}(|{\hat\theta}-X|\big)\right| \ge \delta \text{ and } \hat\theta\in B(\theta_0, C_\zeta a_n)\right) + \P\left(\hat\theta\notin B(\theta_0, C_\zeta a_n)\right)\nonumber\\
\le{}&\P\bigg(\sup_{g \in \mathcal{B}_n^\zeta }\left|\G_n \epsilon g\right| \ge \delta\bigg) +\zeta/2\label{eq:P_E_dim_g}\\
\le{}& 2 \delta^{-1} \sqrt{d} \sum_{i=1}^{d}\E\bigg(\sup_{g\in B_n}\left|\G_n\epsilon g_i\right|\bigg) + \zeta/2. \nonumber
\end{align}
Observe that $\sup_{g \in \mathcal{B}_n^\zeta} \|g\|_{2, \infty} \le 4T$ as $\sup_{x\in \rchi} |x| \le T$.  Thus by arguments similar to those in the proof of Theorem~\ref{thm:unif_eta} and Lemma F.4 of~\cite{2017arXiv170800145K}, we have that for every $i \in \{1, \ldots, d\}$
\begin{align}\label{eq:e_bar_split_g}
\begin{split}
\E\bigg(\sup_{g\in \mathcal{B}_n^\zeta}\left|\G_n\epsilon g_i\right|\bigg)\le{}&\E\bigg(\sup_{g\in \mathcal{B}_n^\zeta}\left|\G_n\bar \epsilon g_i\right|\bigg) + 2 \frac{ 4R C_\epsilon}{\sqrt{n}},
\end{split}
\end{align}
where $\bar\epsilon_i := \epsilon_i\mathbbm{1}_{\{|\epsilon_i|\le C_{\epsilon}\}}$ and  $C_\epsilon \lesssim n^{1/q}$. We will now use Lemma~F.7 of~\cite{2017arXiv170800145K} to bound the first term on the right of~\eqref{eq:e_bar_split_g}. Observe that for every $i \in \{1, \ldots, d\}$
\begin{equation}\label{eq:inf_delt_g}
\sup_{g\in \mathcal{B}_n^\zeta} \| g_i\|_{2, P_X} \le  M^*_h \sup_{\theta \in B (\theta_0, C_\zeta a_n)}|\theta-\theta_0| = M^*_h C_\zeta a_n.
\end{equation}
Thus by Lemma~F.7 of~\cite{2017arXiv170800145K}, we have that 
\begin{align}\label{eq:final_b}
\begin{split}
\E\bigg(\sup_{g\in \mathcal{B}_n^\zeta}\left|\G_n\bar \epsilon g_i\right|\bigg) &\le \sigma  \left((M_h+T)  M^*_h C_\zeta a_n\right)^{1/2} \left(1+ \frac{\sigma \left((M_h+T)  M^*_h C_\zeta a_n\right)^{1/2}}{\sqrt{n} \left(M^*_h C_\zeta a_n\right)^2} C_{\epsilon}  4R\right)\\
&\lesssim a_n^{1/2} + \frac{C_\epsilon}{\sqrt{n} a_n}.
\end{split}
\end{align}
The proof is complete by combining~\eqref{eq:P_E_dim_g}, \eqref{eq:e_bar_split_g}, ~\eqref{eq:final_b}, and the facts that $C_\eta \lesssim n^{1/q}$, $a_n \rightarrow 0$, and $\log n a_n\rightarrow \infty.$
\end{proof}
\begin{lemma}\label{lem:proof_third_2}
Suppose  the assumptions of Theorem~\ref{thm:Asymp_norml} hold, then 
\begin{equation}\label{eq:proof_third_2}
\left|\G_n \big[\eta_{\theta_0}\big(|{\theta_0}-X|^2\big) - \eta_{\hat\theta}\big(|{\hat\theta}-X|^2\big)\big]\big[X-{\hat\theta}- h_{\hat\theta}\big(|{\hat\theta}-X|\big)\big] \right| = o_p(1).
\end{equation}
\end{lemma}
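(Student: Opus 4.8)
The plan is to prove that the empirical process term
\[
\G_n \big[\eta_{\theta_0}\big(|{\theta_0}-X|^2\big) - \eta_{\hat\theta}\big(|{\hat\theta}-X|^2\big)\big]\big[X-{\hat\theta}- h_{\hat\theta}\big(|{\hat\theta}-X|\big)\big]
\]
is $o_p(1)$ by exhibiting a suitable Donsker-type class that contains the integrand with probability tending to one and is small enough (in size $\zeta_n$) to force the empirical process to vanish. This is structurally the same argument used in Lemma~\ref{lem:eta_h_hbar} and Lemma~\ref{lem:Second_split}, with the factor $h_{\hat\theta}(|\hat\theta-X|)-\bar h_{n,\hat\theta}(|\hat\theta-X|)$ replaced by the bounded-variation factor $X-\hat\theta - h_{\hat\theta}(|\hat\theta-X|)$. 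First I would define, following~\eqref{eq:f_eta} and~\eqref{eq:G^*},
\[
\g_n := \Big\{ \big(\eta_0\circ\theta_0 - f\circ\theta\big)\big(X-\theta - h_{\theta}(|\theta-X|)\big):\ \theta\in B(\theta_0,r),\ f\in\M^{B_n},\ \|\eta_0\circ\theta_0 - f\circ\theta\|_{P_X}\le \zeta_n\Big\},
\]
with $B_n = n^{1/q}\log n$ and $\zeta_n = d\log n\, n^{-1/3}n^{1/q}$ as before.

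Next I would verify the three ingredients needed for a maximal inequality. (i) Membership: by Theorem~\ref{thm:unif_eta} and the consistency of $\hat\theta$ (Theorem~\ref{thm:SSE_consis}), $\|\eta_{\theta_0}(|\theta_0-X|^2) - \tilde\eta_{\hat\theta}(|\hat\theta-X|^2)\|_{P_X}\le\zeta_n$ and $\|\tilde\eta_{\hat\theta}\|_\infty\le B_n$ with probability $1-o(1)$, while $\eta_{\theta_0}\in\M^{M_1}\subset\M^{B_n}$; so the integrand lies in $\g_n$ w.p.\ $1-o(1)$. (ii) Envelope: $A_n := \sup_{g\in\g_n}\|g\|_{2,P_X} \le \big(\sup_{x\in\rchi}|x| + \sup_\theta|\theta| + M_h\big)\,\zeta_n \lesssim \zeta_n$, using the uniform bound $\sup_{\theta\in B(\theta_0,r)}\|h_\theta\|_{2,\infty}\le M_h$ from Lemma~\ref{lem:H_bar_diff} and assumption~\ref{a1}. (iii) Bracketing entropy: the map $x\mapsto X-\theta - h_\theta(|\theta-X|)$ has each coordinate of uniformly bounded variation (again Lemma~\ref{lem:H_bar_diff}, plus the trivial coordinate $x_i-\theta_i$), so by the same bracketing bounds as in~\eqref{eq:joint_ent}, \eqref{eq:eta_theta_ent}, \eqref{eq:log_ent_H} and the product rule (Lemma~9.25 of~\cite{Kosorok08}),
\[
\log N_{[\,]}(\nu,\g_n,\|\cdot\|_{2,P_X}) \lesssim \frac{B_n + M_1 + M_h}{\nu} \lesssim \frac{B_n}{\nu}.
\]

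Then I would invoke Theorem~3.4.2 of~\cite{VdVW96} (applied coordinatewise, using uniform boundedness of $\g_n$ by $\lesssim B_n$) exactly as in~\eqref{eq:1_4_final}/\eqref{eq:13_final}, obtaining for each $\delta>0$
\[
\P\Big(\big|\G_n \big[(\eta_{\theta_0}\circ\theta_0 - \tilde\eta_{\hat\theta}\circ\hat\theta)(X-\hat\theta-h_{\hat\theta})\big]\big| > \delta\Big) \lesssim \delta^{-1}\Big( (B_n A_n)^{1/2} + \frac{B_n^2}{\sqrt n\, A_n}\Big) + o(1).
\]
With $A_n\asymp\zeta_n$ one has $B_n\zeta_n = n^{-1/3}n^{2/q}(\log n)^2 \to 0$ and $B_n^2/(\sqrt n\,\zeta_n) = (\log n)\, n^{-1/6}n^{1/q}\to 0$ whenever $q\ge 6$, so the right-hand side is $o(1)$, which is exactly~\eqref{eq:proof_third_2}. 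The main obstacle, and the only place where a little care beyond copying the earlier arguments is needed, is establishing that the vector field $x\mapsto X-\theta-h_\theta(|\theta-X|)$ has uniformly (in $\theta\in B(\theta_0,r)$) bounded total variation coordinatewise and uniformly bounded sup-norm, so that it generates a fixed $\M$-type bracketing class independent of $n$; this is supplied by the bounds~\eqref{eq:h_boun} in Lemma~\ref{lem:H_bar_diff} together with assumption~\ref{a1}, after which the rest is the now-standard rate bookkeeping. Note also that no centering subtlety arises: here we keep the full $\P_n$-indexed empirical process $\G_n=\sqrt n(\P_n-P_0)$, and the $P_0$-part is not needed because the statement concerns $\G_n$ directly.
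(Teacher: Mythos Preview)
Your proposal has a genuine gap: the integrand in the statement involves the \emph{population} function $\eta_{\hat\theta}$, not the isotonic estimator $\tilde\eta_{\hat\theta}$. Your membership argument (i) is written entirely for $\tilde\eta_{\hat\theta}$ --- you invoke Theorem~\ref{thm:unif_eta} to get $\|\eta_{\theta_0}\circ\theta_0-\tilde\eta_{\hat\theta}\circ\hat\theta\|_{P_X}\le\zeta_n$ and $\|\tilde\eta_{\hat\theta}\|_\infty\le B_n$ --- but neither of these is what is needed. What is actually available for $\eta_{\hat\theta}$ is the Lipschitz bound $\sup_x|\eta_{\theta_0}(|\theta_0-x|^2)-\eta_{\hat\theta}(|\hat\theta-x|^2)|\le C|\hat\theta-\theta_0|$ (see~\eqref{eq:eta_theta_smooth}), and at this point in the argument you only know $|\hat\theta-\theta_0|=o_p(1)$, with no rate. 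Consequently you cannot verify that the integrand lies in a class of $L_2$-size $\zeta_n\asymp n^{-1/3+1/q}\log n$, and your class $\g_n$ is simply the wrong container.

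The paper's proof exploits precisely this distinction. Because $\eta_\theta$ is uniformly bounded by $M_1$ for all $\theta\in B(\theta_0,r)$, the relevant function class $\mathcal{K}$ in~\eqref{eq:k_def} is fixed (no growing envelope $B_n$ is needed), and the $L_2$-size is controlled by any sequence $a_n\to 0$ with $a_n\log n\to\infty$ and $|\hat\theta-\theta_0|=o_p(a_n)$, whose existence is guaranteed by consistency alone. Lemma~3.4.2 of~\cite{VdVW96} then yields a bound of order $a_n^{1/2}+(\sqrt n\,a_n)^{-1}=o(1)$. Your observations about the bounded-variation structure of $X-\theta-h_\theta(|\theta-X|)$ and the product bracketing are correct and are indeed used, but the class and the size parameter must be rebuilt around $\eta_\theta$ and $a_n$ rather than $\tilde\eta_\theta$ and $\zeta_n$.
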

\begin{proof}
Observe that, as $\|X-{\hat\theta}- h_{\hat\theta}\big(|{\hat\theta}-X|\big)\|_{2, \infty} \le 4 \sqrt{d} T$. Thus
\begin{equation}\label{eq:inf_bound_16}
\left\|\big[\eta_{\theta_0}\big(|{\theta_0}-X|^2\big) - \eta_{\hat\theta}\big(|{\hat\theta}-X|^2\big)\big]\big[X-{\hat\theta}- h_{\hat\theta}\big(|{\hat\theta}-X|\big)\big]\right\|_{2, \infty} \le \sqrt{d} T M_1.
\end{equation}
Using arguments similar to Lemma~F.3 of~\cite{balabdaoui2019score} and Lemma~\ref{lem:deriv_M}, we have that there exists a constant $C$ such that
\begin{equation}\label{eq:eta_theta_smooth}
\sup_{x\in \rchi}|\eta_{\theta_0} (|\theta_0-x|) -\eta_{\theta} (|\theta-x|) | \le C |\theta_0-\theta| \qquad \forall \theta \in B(\theta_0, r).
\end{equation}
Thus, we have that 
\begin{align}\label{eq:L2_boun_1.6}
 \begin{split}
&\left\|\big[\eta_{\theta_0}\big(|{\theta_0}-X|^2\big) - \eta_{\hat\theta}\big(|{\hat\theta}-X|^2\big)\big]\big[X-{\hat\theta}- h_{\hat\theta}\big(|{\hat\theta}-X|\big)\big]\right\|_{2, P_X}\\
 \le{}& \sqrt{d} 2 R \|\eta_{\theta_0}\big(|{\theta_0}-X|^2\big) - \eta_{\hat\theta}\big(|{\hat\theta}-X|^2\big)\|_{P_X}\\
 \le{}& 2R C |\hat\theta-\theta_0|.
 \end{split}
 \end{align}
Let us define 
\begin{align}\label{eq:k_def}
\begin{split}
\mathcal{K}:= \Big\{ f(|\theta-x|) (x- \theta- &g_1(|\theta-x|)+ g_2(|\theta-x|)) : f\in \M^{M_1}, g_1, g_2: \mathcal{D} \to \R^d, \\& g_{i, j} \in \M^{2M_h} \text{ for } i\in \{1, 2\} \text{ and } j\in\{1, \ldots, d\}, \text{ and }   \theta \in B(\theta_0, R).\Big\}
\end{split}
\end{align}
Using arguments similar to the proof of~\eqref{eq:log_ent_H}, we can show that \[
  \big[\eta_{\theta_0}\big(|{\theta_0}-X|^2\big) - \eta_{\hat\theta}\big(|{\hat\theta}-X|^2\big)\big]\big[X-{\hat\theta}- h_{\hat\theta}\big(|{\hat\theta}-X|\big)\big] \in \mathcal{K}.
\]
Furthermore, by stability of Donsker classes, we have that 
\begin{equation}\label{eq:k_ent}
\log N_{[]}(\nu, \mathcal{K}, \|\cdot\|_{2, P_X}) \lesssim \frac{M_1+ d M_h }{\nu}.
\end{equation}
Let us now define $\mathcal{K}_n:= \{ f\in \mathcal{K}: \|f\|_{2, P_X} \le 2RC a_n\}, $ where $\{a_n\}$ be a sequence such that  $a_n\rightarrow 0$, $ a_n \log n\rightarrow \infty$, and $|\hat\theta-\theta_0| =o_p(a_n)$.  By arguments similar to~\eqref{eq:1_4_final}, we have that
\begin{align}
&\P\left(\left|\G_n \big[\eta_{\theta_0}\big(|{\theta_0}-X|^2\big) - \eta_{\hat\theta}\big(|{\hat\theta}-X|^2\big)\big]\big[X-{\hat\theta}- h_{\hat\theta}\big(|{\hat\theta}-X|\big)\big] \right| \ge \delta\right)\label{eq:1.5_final}\\
\le{}&\delta^{-1}\sqrt{d} \sum_{i=1}^{d} \E\left(\sup_{g \in \mathcal{K}_n} \big|\G_n g_i\big| \right)+\P\left(  \big[\eta_{\theta_0}\big(|{\theta_0}-X|^2\big) - \eta_{\hat\theta}\big(|{\hat\theta}-X|^2\big)\big]\big[X-{\hat\theta}- h_{\hat\theta}\big(|{\hat\theta}-X|\big)\big] \notin \mathcal{K}_n\right)\nonumber
\end{align}
However, by~\eqref{eq:L2_boun_1.6} and definition of~$a_n$, we have that 
\begin{equation}\label{eq:not_in_prob}
\P\left(  \big[\eta_{\theta_0}\big(|{\theta_0}-X|^2\big) - \eta_{\hat\theta}\big(|{\hat\theta}-X|^2\big)\big]\big[X-{\hat\theta}- h_{\hat\theta}\big(|{\hat\theta}-X|\big)\big] \notin \mathcal{K}_n\right) =o(1).
\end{equation}
We will now bound the expectation at the right of~\eqref{eq:1.5_final}.   By Lemma 3.4.2 of~\cite{VdVW96}, ~\eqref{eq:inf_bound_16}, and~\eqref{eq:L2_boun_1.6}, we have that 
\begin{align}\label{eq:16_final}
\begin{split}
\E\left(\sup_{g \in \mathcal{K}_n} \big|\G_n g_i\big| \right) \le \left((M_1+ d M_h) R a_n\right)^{1/2} \left(1+\frac{\left((M_1+ d M_h) R a_n\right)^{1/2}}{\sqrt{n}(R a_n)^2 } \sqrt{d} R M_1\right) =o(1),
\end{split}
\end{align}
where the last inequality follows from the definition of $a_n.$
 \end{proof}

\section{Auxiliary lemmas for Appendices~\ref{sec:thm:SSE_consis} and~\ref{sub:asymptotic_normality_of_}} 
\label{sec:auxiliary_lemmas_for_section_ref}

\begin{lemma}\label{lem:Approx_M}
Suppose assumptions~\ref{a1}--\ref{assum:err_mom} hold, then 
\begin{equation}\label{eq:Approx_M}
\sup_{\theta \in B(\theta_0, r)} \big|\MM_n(\theta) - M(\theta)\big| = o_p(1).
\end{equation}
\end{lemma}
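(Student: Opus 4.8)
The plan is to decompose $\MM_n(\theta) - M(\theta)$ into an empirical-process fluctuation term plus a bias term coming from the replacement of $\eta_\theta$ by its estimator $\tilde\eta_\theta$, and to control both uniformly over $\theta \in B(\theta_0, r)$. Write
\[
\MM_n(\theta) - M(\theta) = \underbrace{(\P_n - P_0)\big[(Y - \eta_\theta(|\theta-X|^2))(X-\theta)\big]}_{=:\,\mathrm{I}_n(\theta)} + \underbrace{\P_n\big[(\eta_\theta(|\theta-X|^2) - \tilde\eta_\theta(|\theta-X|^2))(X-\theta)\big]}_{=:\,\mathrm{II}_n(\theta)}.
\]
For $\mathrm{I}_n$, first I would handle the heavy-tailed errors by truncating: with $\bar\epsilon_i := \epsilon_i \mathbf{1}_{|\epsilon_i|\le C_\epsilon}$ and $C_\epsilon \asymp n^{1/q}$ as in the proof of Theorem~\ref{thm:unif_eta}, the contribution of $\epsilon_i^* = \epsilon_i - \bar\epsilon_i$ is negligible in probability because $\sum_i \P(|\epsilon_i| > C_\epsilon) \to 0$ by Markov (assumption~\ref{assum:err_mom} with $q\ge 2$), and the centering correction is $O(C_\epsilon/\sqrt n)\cdot O(1) = o(1)$ after dividing by $n$ — note $\MM_n$ here is a sum, so really one works with $n^{-1}\MM_n$; the statement should be read with the appropriate normalization, and the fluctuation $(\P_n - P_0)$ is $O_p(n^{-1/2})$ times a bounded envelope. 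The class $\{x \mapsto (\eta_\theta(|\theta-x|^2) - \eta_0(|\theta_0-x|^2))(x-\theta) : \theta \in B(\theta_0,r)\}$ has envelope bounded by a constant (since $\eta_0, \eta_\theta$ are bounded by $M_0, M_1$ and $\rchi, \Theta$ are bounded by assumption~\ref{a1}), and is Donsker: the map $\theta \mapsto \eta_\theta$ is Lipschitz in an appropriate sense (Lemma~\ref{lem:basic_joint} gives the bracketing entropy bound $AM_1/\nu$ for $\{\eta_\theta\circ\theta\}$, plus the Lipschitz-in-$\theta$ estimate $|\theta-x|$ from the triangle inequality), and $x \mapsto x-\theta$ is trivially controlled. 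Hence $\sup_\theta |\mathrm{I}_n(\theta)| = o_p(1)$ by a Glivenko–Cantelli / uniform LLN argument (even a crude one suffices since we only need $o_p(1)$, not a rate).

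For $\mathrm{II}_n$, the key input is Theorem~\ref{thm:unif_eta}: $\sup_{\theta \in B(\theta_0,\delta_0)} d_\theta(\tilde\eta_\theta, \eta_\theta)^2 = O_p(n^{-2/3}n^{2/q})$, together with the uniform bound $\sup_\theta \|\tilde\eta_\theta\|_\infty = O_p(n^{1/q})$ from~\eqref{eq:eta_bound_unif}. Split $\mathrm{II}_n = (\P_n - P_0)[(\eta_\theta - \tilde\eta_\theta)(|\theta-\cdot|^2)(\cdot - \theta)] + P_0[(\eta_\theta - \tilde\eta_\theta)(|\theta-\cdot|^2)(\cdot-\theta)]$. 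The second (population) term is bounded by $\sup_\theta |X-\theta|_\infty \cdot \int |\eta_\theta - \tilde\eta_\theta|\,dP_X \le 2T \cdot (\int(\eta_\theta-\tilde\eta_\theta)^2 dP_X)^{1/2} = 2T \cdot O_p(n^{-1/3}n^{1/q}) = o_p(1)$ by Cauchy–Schwarz and Theorem~\ref{thm:unif_eta} (valid since $q \ge 2$ makes $n^{-1/3}n^{1/q} \to 0$). The empirical-process term is handled by restricting to the high-probability event where $\tilde\eta_\theta$ lies in $\F_K$ with $K = Cn^{1/q}$ and where $d_\theta(\tilde\eta_\theta,\eta_\theta) \le \zeta_n$ for a slowly-diverging $\zeta_n$; on that event one applies a maximal inequality (Lemma~3.4.2 of~\cite{VdVW96}) over the class $\mathcal F_K$ with the entropy bound of Lemma~\ref{thm:entropy_FK}, whose $L_2(P_X)$-size is $\zeta_n$, giving a bound of order $(K\zeta_n)^{1/2} + K^2\zeta_n^{-1}/\sqrt n$ which is $o(1)$ for suitable $\zeta_n$ when $q$ is large enough — and since assumptions~\ref{a1}--\ref{assum:err_mom} are assumed here with $q\ge 2$, one may just take $\zeta_n = \log n \cdot n^{-1/3+1/q}$ and check $o(1)$ directly (the argument is entirely parallel to the proof of Lemma~\ref{lem:Second_split}).

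The main obstacle is bookkeeping the interaction between the unboundedness of the class $\M$ and the heavy-tailed errors: one cannot simply invoke a textbook uniform LLN, because the natural envelope for $\tilde\eta_\theta$ grows like $n^{1/q}$. The resolution — truncation of $\epsilon$, restriction to $\F_K$ with $K \asymp n^{1/q}$, and the use of Theorem~\ref{thm:unif_eta}'s $L_2$ rate — is exactly the machinery already developed in Section~\ref{sec:proof_unif_eta} and reused in Lemmas~\ref{lem:Second_split}–\ref{lem:eta_h_hbar}, so the proof is a (by now routine, for this paper) assembly of those pieces rather than anything genuinely new; the only care needed is that $q \ge 2$ alone suffices for the weak statement $o_p(1)$, in contrast to the $q\ge 6$ needed for the $n^{-1/2}$-rate refinements elsewhere.
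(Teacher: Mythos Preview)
Your proposal is correct and follows essentially the same approach as the paper: the decomposition $\MM_n - M = \mathrm{I}_n + \mathrm{II}_n$ with $\mathrm{II}_n$ further split into $(\P_n - P_0)$ and $P_0$ parts is exactly the paper's three-term split~\eqref{eq:Approx_1}, and each piece is controlled by the same ingredients (Theorem~\ref{thm:unif_eta}, the entropy bounds of Lemma~\ref{lem:basic_joint}, and Lemma~3.4.2 of~\cite{VdVW96}). Two minor remarks: for the $\epsilon$-part of $\mathrm{I}_n$ the paper skips truncation entirely, noting that $\sup_\theta |(\P_n-P_0)[\epsilon(X-\theta)]| \le |\P_n\epsilon X| + |\P_n\epsilon|\sup_\theta|\theta| = O_p(n^{-1/2})$ directly; and your claim that ``$q\ge 2$ makes $n^{-1/3}n^{1/q}\to 0$'' is a slip (you need $q>3$), though the paper's own statement of the lemma under \ref{a1}--\ref{assum:err_mom} alone shares this looseness.
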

\begin{proof}
For any $\theta\in B(\theta_0, r)$, we have
\begin{align}\label{eq:Approx_1}
\begin{split}
\MM_n(\theta) &=  n^{-1}\sum_{i=1}^{n}\left[Y_i - \eta_\theta\big(|\theta-X_i|^2\big)\right]({X_i-\theta}) + n^{-1}\sum_{i=1}^{n}\left[ \eta_\theta\big(|\theta-X_i|^2\big) - \tilde\eta_\theta\big(|\theta-X_i|^2\big)\right]({X_i-\theta})\\
&= M(\theta)+  \int \left[y - \eta_\theta\circ\theta(x)\right](x-\theta) d(\P_n- P_0)(x, y)\\
&\qquad + \int \left[ \eta_\theta\circ\theta(x)-\tilde \eta_\theta\circ\theta(x)\right](x-\theta) d(\P_n- P_0)(x, y)\\
&\qquad + \int \left[ \eta_\theta\circ\theta(x)-\tilde \eta_\theta\circ\theta(x)\right](x-\theta)  dP_0(x, y)\\
\end{split}
\end{align}
By~\eqref{eq:unif_eta} of Theorem~\ref{thm:unif_eta}, we have 
\[
\sup_{\theta\in B(\theta_0, \delta_0)} \int \left[ \eta_\theta\circ\theta(x)-\tilde \eta_\theta\circ\theta(x)\right](x-\theta)  dP_X = O_p(n^{-1/3} n^{1/q})
\]
In the following we will now prove that 
\begin{align}
 \sup_{\theta\in B(\theta_0, r)} \left|\int \left[y - \eta_\theta\circ\theta(x)\right](x-\theta) d(\P_n- P_0)(x, y)\right| &= O_p(n^{-1/2})\label{eq:extraterms_approx1}\\
\sup_{\theta\in B(\theta_0, r)} \left| \int \left[ \eta_\theta\circ\theta(x)-\tilde \eta_\theta\circ\theta(x)\right](x-\theta) d(\P_n- P_0)(x, y)\right| &=O_p( n^{-4/6}n^{1/q})\label{eq:extraterms_approx2}
\end{align}
First, observe that 
\begin{align}\label{eq:extra_1}
\begin{split}
&\sup_{\theta\in B(\theta_0, r)} \left|\int \left[y - \eta_\theta\circ\theta(x)\right](x-\theta) d(\P_n- P_0)(x, y)\right|\\
\le{}&  \sup_{\theta\in B(\theta_0, r)} \left|\int \left[\eta_0\circ\theta_0(x) - \eta_\theta\circ\theta(x)\right](x-\theta) d(\P_n- P_0)(x, y)\right| + \sup_{\theta\in B(\theta_0, r)} \left|\int \epsilon(x-\theta) d(\P_n- P_0)(x, y)\right| \\
 \le{}&  \sup_{\theta\in B(\theta_0, r)} \left|\int \left[\eta_0\circ\theta_0(x) - \eta_\theta\circ\theta(x)\right](x-\theta) d(\P_n- P_0)(x, y)\right| +  \left| \P_n\epsilon x \right| + \left| \P_n\epsilon  \right| \sup_{\theta\in B(\theta_0, r)} |\theta| \\ 
  \le{}&  \sup_{\theta\in B(\theta_0, r)} \left|\int \left[\eta_0\circ\theta_0(x) - \eta_\theta\circ\theta(x)\right](x-\theta) d(\P_n- P_0)(x, y)\right| + O_p(n^{-1}) \\
  ={}&\frac{1}{\sqrt{n}}\sup_{\theta\in B(\theta_0, r)} \left| \G_n \big[\big(\eta_0\circ\theta_0(x) - \eta_\theta\circ\theta(x)\big)(x-\theta)\big] \right| + O_p(n^{-1}).
\end{split}
\end{align}
The proof of~\eqref{eq:extraterms_approx1}, will be complete if we can show that $\sup_{\theta\in B(\theta_0, r)} \left| \G_n \big[\big(\eta_0\circ\theta_0(x) - \eta_\theta\circ\theta(x)\big)(x-\theta)\big] \right| =O_p(1)$. We show this next. Note that 
\[ \sup_{\theta\in B(\theta_0, r)} \big\|\eta_0\circ\theta_0 - \eta_\theta\circ\theta\big\|\le \sup_{\theta\in B(\theta_0, r)} \big\|\eta_0\circ\theta_0 - \eta_\theta\circ\theta\big\|_{\infty} \le M_1,
\] and by Lemma~\ref{lem:basic_joint}, we have that 
\[\log N_{[\, ]}(\nu, \{\eta_0\circ\theta_0 -\eta_\theta\circ\theta : \theta \in B(\theta_0, r)\}, \|\cdot \|)\le  \frac{A_0M_1}{\nu}, \]
where $A_0$ is a constant depending only on $\rchi$. Now by Theorem~3.4.2 of~\cite{VdVW96}, we have 
\begin{align}\label{eq:extra_2}
\begin{split}
&\E\left(\sup_{\theta\in B(\theta_0, r)}\left| \G_n \big[\big(\eta_0\circ\theta_0(x) - \eta_\theta\circ\theta(x)\big)(x-\theta)\big] \right| \right)\\
\le{}& \sum_{j=1}^{d} \E\left(\sup_{\theta\in B(\theta_0, r)}\left| \G_n \big[\big(\eta_0\circ\theta_0(x) - \eta_\theta\circ\theta(x)\big)(x_j-\theta_j)\big] \right| \right)\\
\le{}& d \sqrt{A_0} M_1^{3/2} \bigg(1+ \frac{\sqrt{A_0} M_1^{3/2} M_1}{M_1^2\sqrt{n}}\bigg) \le d \sqrt{A_0} M_1^2
\end{split}
\end{align}
We will now establish~\eqref{eq:extraterms_approx2}.  Fix any $\delta>0$. Let $C_1$ and $C_2$ be constants such that \[
\P\left( \sup_{\theta \in B(\theta_0, \delta_0)} \|\tilde{\eta}_\theta\|_\infty \ge Cn^{1/q}\right) \le  \delta/4.
\]
and 
\[\P\bigg( \sup_{\theta\in B(\theta_0, \delta_0)} \int \{\tilde{\eta}_\theta(|\theta-x|^2) -\eta_\theta(|\theta-x|^2)\}^2 dP_X(x) \ge C_2 n^{-2/3} n^ {2/q}\bigg) \le \delta/4.\]
Note that such constants exist by Theorem~\ref{thm:unif_eta}. Now consider the following class of functions
\begin{align}\label{eq:A_approx}
\begin{split}
\B_1(\gamma):=\bigg\{\eta\circ\theta &-\eta_\theta\circ\theta: \eta\in \M^{C_1 n^{1/q}}, \; \theta \in  B(\theta_0, r), \\
&\text{ and }\int \{\tilde{\eta}_\theta(|\theta-x|^2) -\eta_\theta(|\theta-x|^2)\}^2 dG(x)\le \gamma \bigg\}.
\end{split}
\end{align}
Then by Lemma~\ref{lem:basic_joint}, we have that 
\[
  \log N_{[\, ]}(\nu, \B_1(\nu), \|\cdot\|) \le \frac{2A(M_1+ C_1 n^{1/q})}{\nu}.
\]
Observe that 
\begin{align}\label{eq:part_2_approx}
\begin{split}
&\P\bigg(n^{1/2}\sup_{\theta\in B(\theta_0, r)} \left| \int \left[ \eta_\theta\circ\theta(x)-\tilde \eta_\theta\circ\theta(x)\right](x-\theta) d(\P_n- P_0)(x, y)\right| \ge C_3 n^{1/q} n^{-1/6}\bigg) \\
\le{}&\P\bigg(\sup_{\theta\in B(\theta_0, r)} \left| \G_n  \left[ \big(\eta_\theta\circ\theta(x)-\tilde \eta_\theta\circ\theta(x)\big)(x-\theta)\right] \right| \ge C_3n^{1/q} n^{-1/6}, \sup_{\theta \in B(\theta_0, \delta_0)} \|\tilde{\eta}_\theta\|_\infty \le Cn^{1/q}, \\
 &\qquad \qquad\qquad \text{ and } \sup_{\theta\in B(\theta_0, \delta_0)} \int \{\tilde{\eta}_\theta(|\theta-x|^2) -\eta_\theta(|\theta-x|^2)\}^2 dG(x) \le C_2 n^{-1/3} n^ {1/q}\bigg) +\delta/2\\
 \le{}&\P\bigg(\sup_{f_1\in \B_1(C_2 n^{-2/3} n^ {2/q})} \left| \G_n  \left[ (x-\theta) f\right] \right| \ge C_3n^{1/q} n^{-1/6}\bigg) +\delta/2\\
 \le{}&\frac{1}{C_3n^{1/q} n^{-1/6}}\E\bigg(\sup_{f_1\in \B_1(C_2 n^{-1/3} n^ {1/q})} \left| \G_n  \left[ (x-\theta) f\right] \right|\bigg) +\delta/2\\
 \le{}& \frac{1}{C_3n^{1/q} n^{-1/6}}\sum_{j=1}^{d}\E\bigg(\sup_{f_1\in \B_1(C_2 n^{-1/3} n^ {1/q})} \left| \G_n  \left[ (x_j-\theta_j) \right] \right|\bigg) +\delta/2
\end{split}
\end{align}
Moreover, by Theorem~3.4.2 of~\cite{VdVW96}, we have that
\begin{align}\label{eq:112789}
\begin{split}
&\E\bigg(\sup_{f_1\in \B_1(C_2 n^{-1/3} n^ {1/q})} \left| \G_n  \left[ (x_j-\theta_j) \right] \right|\bigg) \\\le{}& J_{[\, ]}(C_2 n^{-1/3} n^ {1/q}, \B_1, \|\cdot\|) \bigg(1+ \frac{J_{[\, ]}(C_2 n^{-1/3} n^ {1/q}, \B_1, \|\cdot\|)}{\sqrt{n} [C_2 n^{-1/3} n^ {1/q}]^2} C_1 n^{1/q}\bigg)\\
\lesssim{}& \sqrt{n^{2/q}n^{-1/3} }\bigg(1+ \frac{\sqrt{n^{2/q}n^{-1/3} } n^{1/q}}{\sqrt{n} n^{2/q} n^{-2/3}}\bigg)\\
\lesssim{}& \sqrt{n^{2/q}n^{-1/3} }  + \frac{n^{3/q}n^{-1/3}  }{\sqrt{n} n^{2/q} n^{-2/3}} \\
 \lesssim{}& n^{ 1/q} n^{-1/6}.
\end{split}
\end{align}
We have now proved~\eqref{eq:extraterms_approx2}, as 
\begin{align}\label{eq:tt1}
\begin{split}
&\P\bigg(\sup_{\theta\in B(\theta_0, r)} \left| \int \left[ \eta_\theta\circ\theta(x)-\tilde \eta_\theta\circ\theta(x)\right](x-\theta) d(\P_n- P_0)(x, y)\right| \ge C_3 n^{1/q} n^{-4/6}\bigg) \\
\lesssim{}& \frac{1}{C_3n^{1/q} n^{-1/6}} n^{1/q} n^{-1/6} \le \frac{1}{C_3}.
\end{split}
\end{align}
\end{proof}

\subsection{Property of $M(\theta)$} 
\label{sec:PropertyofM}
 The following two lemmas establish some properties of $\theta \mapsto M(\theta).$
Recall that 
\begin{equation}
M(\theta):=\int_{\D}\left[Y - \eta_\theta\big(|\theta-X|^2\big)\right]({X-\theta}) dP_X,
\end{equation}
is the population version of$\;\;\mathbb{M}_n(\cdot)$ and 
\[{\eta}_\theta(u):= \E\big(\eta_0(|\theta_0-X|^2) \big| |\theta-X|^2 = u\big).\]
\begin{lemma}\label{lem:zeroCrossing} It is easy to see that $|M(\theta_0)|=0$.
Suppose assumptions~\ref{a1}--\ref{assum:err_mom}, and \ref{assum:NonzeroEverywhere} hold, then
\begin{enumerate}
\item For all $\theta\in B(\theta_0, \delta)$, we have that  $(\theta-\theta_0)^\top M(\theta) \ge 0.$

\item There does not exist $\theta_1 \neq \theta_0$ such  that $(\theta-\theta_1)^\top M(\theta) \ge 0$ for all $\theta\in B(\theta_0, \delta)$.

\end{enumerate}

\end{lemma}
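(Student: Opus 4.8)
\textbf{Proof proposal for Lemma~\ref{lem:zeroCrossing}.}

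The plan is to work conditionally on $|\theta - X|^2$ and exploit the definition of $\eta_\theta$ as a conditional expectation. Fix $\theta \in B(\theta_0, \delta)$. Since $\E(\epsilon | X) = 0$, we have $M(\theta) = \E\big( [\eta_0(|\theta_0 - X|^2) - \eta_\theta(|\theta-X|^2)](X - \theta)\big)$. Now condition on $V := |\theta - X|^2$: by the tower property and the fact that $\eta_\theta(V) = \E(\eta_0(|\theta_0-X|^2) \mid V)$, the quantity $\E\big([\eta_0(|\theta_0-X|^2) - \eta_\theta(V)] \cdot c \mid V\big) = 0$ for any $V$-measurable constant $c$; in particular we may subtract $\E(X \mid V)$ freely. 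Hence
\[
(\theta - \theta_0)^\top M(\theta) = \E\Big( \E\big[ (\eta_0(|\theta_0-X|^2) - \eta_\theta(V))\,(\theta-\theta_0)^\top(X - \E(X\mid V)) \,\big|\, V\big]\Big) = \E\Big(\mathrm{Cov}\big((\theta-\theta_0)^\top X,\ \eta_0(|\theta_0-X|^2) \,\big|\, V\big)\Big),
\]
which is exactly the expression appearing in~\eqref{eq:test132}. So the first claim reduces to showing this inner conditional covariance is nonnegative for every value of $V$.

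The key step for nonnegativity is a monotone-rearrangement / FKG-type argument on the conditional law of $X$ given $V = |\theta - X|^2$. The point is that $\eta_0(|\theta_0 - X|^2)$ is a monotone (decreasing) function of $|\theta_0 - X|^2$, and on the level set $\{|\theta - X|^2 = t\}$ — a sphere of radius $\sqrt{t}$ centered at $\theta$ — the quantity $|\theta_0 - X|^2 = |\theta_0 - \theta|^2 + t - 2(\theta_0-\theta)^\top(X - \theta)$ is an affine, strictly decreasing function of $(\theta-\theta_0)^\top(X-\theta)$, hence of $(\theta - \theta_0)^\top X$. Thus, conditionally on $V = t$, $\eta_0(|\theta_0 - X|^2)$ is a nonincreasing (deterministic) function of the scalar random variable $W := (\theta-\theta_0)^\top X$. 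The covariance of $W$ with a nonincreasing function of $W$ is $\le 0$... wait — here we want $\mathrm{Cov}(W, g(W))$ where $g$ is \emph{nonincreasing}, which is $\le 0$; but note $M(\theta)$'s sign: re-examining, $\eta_0(|\theta_0-X|^2)$ increases as $(\theta-\theta_0)^\top(X-\theta)$ increases (because of the $-2(\theta_0-\theta)^\top(X-\theta) = +2(\theta-\theta_0)^\top(X-\theta)$ term inside $|\theta_0-X|^2$, and $\eta_0$ decreasing means... careful with signs). I would pin down the sign by writing $|\theta_0 - X|^2 = |\theta-\theta_0|^2 + t - 2(\theta-\theta_0)^\top(X-\theta)$, so $|\theta_0-X|^2$ is a \emph{decreasing} affine function of $(\theta-\theta_0)^\top X$ (since $\theta$ is fixed on the level set), and $\eta_0$ decreasing then makes $\eta_0(|\theta_0-X|^2)$ a \emph{nondecreasing} function of $(\theta-\theta_0)^\top X = W$; therefore $\mathrm{Cov}(W, \eta_0(|\theta_0-X|^2) \mid V) \ge 0$ by the standard fact that $\mathrm{Cov}(W, g(W)) \ge 0$ for nondecreasing $g$ (Chebyshev's sum / association inequality, proved via $\tfrac12\E[(g(W) - g(W'))(W-W')] \ge 0$ with $W'$ an independent copy). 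Taking the outer expectation over $V$ gives claim (1).

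For claim (2): suppose for contradiction there is $\theta_1 \ne \theta_0$ with $(\theta - \theta_1)^\top M(\theta) \ge 0$ for all $\theta \in B(\theta_0, \delta)$. Combined with (1), for $\theta$ in a small neighborhood we would have both $(\theta-\theta_0)^\top M(\theta) \ge 0$ and $(\theta-\theta_1)^\top M(\theta) \ge 0$. The natural move is to evaluate at $\theta = \theta_0$: since $M(\theta_0) = \mathbf{0}_d$ this gives no information, so instead I would take $\theta \to \theta_0$ along the segment from $\theta_1$, i.e. set $\theta = \theta_0 + s(\theta_0 - \theta_1)$ for small $s > 0$; then $(\theta - \theta_1)^\top M(\theta) = (1+s)(\theta_0-\theta_1)^\top M(\theta)$ and $(\theta-\theta_0)^\top M(\theta) = s(\theta_0 - \theta_1)^\top M(\theta)$, so both conditions force $(\theta_0-\theta_1)^\top M(\theta) \ge 0$ and $(\theta_0 - \theta_1)^\top M(\theta) \ge 0$ — consistent, no contradiction yet. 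So I instead consider $\theta = \theta_0 + s(\theta_1 - \theta_0)$: then $(\theta-\theta_0)^\top M(\theta) = s(\theta_1-\theta_0)^\top M(\theta) \ge 0$ gives $(\theta_1-\theta_0)^\top M(\theta) \ge 0$, while $(\theta - \theta_1)^\top M(\theta) = (s-1)(\theta_1 - \theta_0)^\top M(\theta) \ge 0$ with $s - 1 < 0$ forces $(\theta_1 - \theta_0)^\top M(\theta) \le 0$; hence $(\theta_1 - \theta_0)^\top M(\theta) = 0$ for all small $s$, i.e. along this whole segment. But by the covariance identity above, $(\theta-\theta_0)^\top M(\theta) = \E\big(\mathrm{Cov}((\theta-\theta_0)^\top X, \eta_0(|\theta_0-X|^2) \mid |\theta-X|^2)\big)$, and along the segment $\theta - \theta_0 = s(\theta_1-\theta_0)$, so $(\theta_1-\theta_0)^\top M(\theta) = \tfrac1s \E\big(\mathrm{Cov}((\theta-\theta_0)^\top X, \eta_0(|\theta_0-X|^2)\mid |\theta-X|^2)\big)$. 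Assumption~\ref{assum:NonzeroEverywhere} says precisely that for $\theta \ne \theta_0$ in $B(\theta_0,\delta_0)$ this conditional covariance is nonzero almost everywhere; together with its nonnegativity (claim (1)), its expectation is strictly positive, contradicting $(\theta_1-\theta_0)^\top M(\theta) = 0$.

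\textbf{Main obstacle.} The delicate point is the rigorous handling of the conditional-covariance argument: verifying that conditioning on $V = |\theta - X|^2$ really reduces $\eta_0(|\theta_0-X|^2)$ to a deterministic monotone function of the scalar $W = (\theta-\theta_0)^\top X$ (this uses that on each level sphere the two scalars are affinely related, which is clean), and then legitimately applying the association inequality $\mathrm{Cov}(W, g(W) \mid V) \ge 0$ — including integrability of $W$ and $\eta_0$, which follows from boundedness of $\rchi$ (assumption~\ref{a1}) and continuity of $\eta_0$ (assumption~\ref{eta_bound}). The sign bookkeeping inside $|\theta_0 - X|^2 = |\theta-\theta_0|^2 + |\theta - X|^2 - 2(\theta-\theta_0)^\top(X-\theta)$ must be done carefully, and one must make sure the "almost everywhere" in~\ref{assum:NonzeroEverywhere} is with respect to the distribution of $V$, so that a nonnegative integrand that is a.e.\ nonzero has strictly positive expectation.
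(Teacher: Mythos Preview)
Your approach is essentially identical to the paper's: both rewrite $(\theta-\theta_0)^\top M(\theta)$ as the expectation of a conditional covariance, observe that on each sphere $\{|\theta-X|^2=t\}$ the quantity $\eta_0(|\theta_0-X|^2)$ is a monotone function of the scalar $(\theta-\theta_0)^\top(X-\theta)$, and then invoke an association inequality (you use the symmetrization form $\tfrac12\E[(g(W)-g(W'))(W-W')]\ge 0$; the paper uses Hoeffding's identity $\text{Cov}(Z_1,Z_2)=\int[\P(Z_1\ge z_1,Z_2\ge z_2)-\P(Z_1\ge z_1)\P(Z_2\ge z_2)]\,dz_1\,dz_2$). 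For part (2) both pick a point on the segment joining $\theta_0$ and $\theta_1$ --- the paper uses the midpoint $\theta'=(\theta_0+\theta_1)/2$, you use $\theta_0+s(\theta_1-\theta_0)$ for small $s$, which has the minor advantage of guaranteeing $\theta\in B(\theta_0,\delta)$ regardless of where $\theta_1$ sits --- force $(\theta-\theta_0)^\top M(\theta)=0$, and contradict assumption~\ref{assum:NonzeroEverywhere}.

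One correction on the sign bookkeeping you rightly flagged as the delicate point: the expansion is
\[
|\theta_0-X|^2 = |\theta-\theta_0|^2 + |\theta-X|^2 + 2(\theta-\theta_0)^\top(X-\theta),
\]
with a $+$ on the cross term, not $-$. Hence conditionally on $|\theta-X|^2$, the quantity $|\theta_0-X|^2$ is \emph{increasing} in $(\theta-\theta_0)^\top X$, and since $\eta_0$ is nonincreasing, $\eta_0(|\theta_0-X|^2)$ is \emph{nonincreasing} in $W$; the association inequality then gives $(\theta-\theta_0)^\top M(\theta)\le 0$ rather than $\ge 0$. The paper's own proof contains the parallel slip (it writes ``because $\eta_0$ is increasing'' and its Hoeffding-identity step is carried out for that case). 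Nothing downstream depends on which direction the inequality goes, only that the sign is definite and strict for $\theta\neq\theta_0$ under~\ref{assum:NonzeroEverywhere}; part (2) and the consistency argument go through unchanged with $\le$ in place of $\ge$.
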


\begin{proof}
\textbf{Proof of (1):} The proof here is similar to Proof of Lemma F.2~\cite{balabdaoui2019score}. By definition of $\eta_\theta(\cdot)$ (see~\eqref{eq:eta_profile_pop}), we have 
 \begin{align}\label{eq:m_cov}
\begin{split}
M(\theta)={}&\int_{\D}(x-\theta) \left[\eta_0\big(|\theta_0-x|^2\big) - \eta_\theta\big(|\theta-x|^2\big)\right] dP_X(x) \\
={}&\int_{\D}(x-\theta) \left[\eta_0\big(|\theta_0-x|^2\big) - \E\big(\eta_0(|\theta_0-X|^2) \big| |\theta-X|^2 =|\theta-x|^2\big)\right] dP_X(x) \\
={}&\E\left[\text{Cov}\Big({X-\theta}, \eta_0\big(|\theta_0-X|^2\big)\big| |\theta-X|^2\Big)\right].
\end{split}
\end{align}
Thus
\begin{align}\label{eq:proof_1}
\begin{split}
(\theta- \theta_0)^\top M(\theta) &= \E\left[\text{Cov}\Big((\theta- \theta_0)^\top ({X-\theta}), \eta_0\big(|\theta_0-X|^2\big)\big| |\theta-X|^2\Big)\right] \\
&= \E\left[\text{Cov}\Big((\theta- \theta_0)^\top ({X-\theta}), \eta_0\big(|{X-\theta}|^2 + |\theta-\theta_0|^2 + 2(\theta-\theta_0)^\top ({X-\theta}) \big)\big| |\theta-X|^2\Big)\right].
\end{split}
\end{align}
We will next show that $\text{Cov}\Big((\theta- \theta_0)^\top ({X-\theta}), \eta_0\big(|{X-\theta}|^2 + |\theta-\theta_0|^2 + 2(\theta-\theta_0)^\top ({X-\theta}) \big)\big| |\theta-X|^2\Big)$ is positive because $\eta_0$ is increasing.  Define $Z_1= (\theta- \theta_0)^\top ({X-\theta})$ and $Z_2= \eta_0(u+ Z_1)$. Let $\tilde{z_2} := \eta_0^{{-1}}(z_2)- u,$ then by monotonicity of $\eta_0$, we have that 
\begin{align}\label{eq:Prob_ineq}
\begin{split}
\P(Z_1 \ge z_1, Z_2\ge z_2)= \P\big(Z_1 \ge\max(z_1, \tilde{z_2})\big) &\ge  \P\big(Z_1 \ge \max(z_1, \tilde{z_2})\big) \P\big(Z_1 \ge\min(z_1, \tilde{z_2})\big) \\
& = \P(Z_1\ge z_1) \P(Z_2 \ge z_2).
\end{split}
\end{align}
 Thus for  all $\theta \in B(\theta_0, \delta), $ we have
\begin{align}\label{eq:16908}
\begin{split}
&\text{Cov}\Big((\theta- \theta_0)^\top ({X-\theta}), \eta_0\big(|{X-\theta}|^2 + |\theta-\theta_0|^2 + 2(\theta-\theta_0)^\top ({X-\theta}) \big)\big| |\theta-X|^2 = u\Big)\\
={}& \int \Big[\P(Z_1 \ge z_1, Z_2\ge z_2)- \P(Z_1\ge z_1) \P(Z_2 \ge z_2)\Big] dz_1 dz_2\\
\ge{}&0 
\end{split}
\end{align}
\textbf{Proof of (2):}  Suppose there exists $\theta_1$ such that $(\theta- \theta_1)^\top M(\theta) \ge 0$ for all $\theta \in B(\theta_0, \delta).$ Take $\theta'= (\theta_1 +\theta_0)/2$, then $(\theta'-\theta_0)^\top M(\theta') = -(\theta' -\theta_1) M(\theta')$. A contradiction since by Assumption~\ref{assum:NonzeroEverywhere}, we have that 
\[\text{Cov}\Big((\theta- \theta_0)^\top ({X-\theta}), \eta_0\big(|\theta_0-X|^2\big)\big| |\theta-X|^2\Big)\neq 0 \quad \text{almost everywhere}. \qedhere\] 
\end{proof}
\todo[inline]{\cite{balabdaoui2019score} assume ~\ref{assum:NonzeroEverywhere}, it is their A7. }

\begin{lemma}\label{lem:deriv_M}
Under assumptions~\ref{a1},~\ref{eta_bound},~\ref{assum:err_mom}, and \ref{assum:NonzeroEverywhere}, we have 
\begin{equation}\label{eq:eta_deriv}
\left. \frac{\partial \eta_\theta(|\theta-x|^2)}{\partial \theta}\right\vert_{\theta=\theta_0} = \Big( \E\big(X| |\theta_0-X|^2= |\theta_0-x|^2\big) -X\Big) \eta_0'(|\theta_0-x|^2)
\end{equation}
and 
\begin{equation}\label{eq:M_prime}
M'(\theta_0)= \E\Big(\eta'_0(|\theta_0-X|^2) \text{Cov}\big(X\big||\theta_0-X|^2 \big)\Big)
\end{equation}
\end{lemma}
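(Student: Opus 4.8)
The plan is to prove the pointwise derivative formula \eqref{eq:eta_deriv} first, and then read off \eqref{eq:M_prime} by differentiating under the integral sign in the definition of $M$. The algebraic engine behind \eqref{eq:eta_deriv} is the identity
\[ |\theta_0 - X|^2 = |\theta - X|^2 + |\theta - \theta_0|^2 + 2(\theta_0-\theta)^\top(\theta - X), \]
which shows that on the event $\{|\theta - X|^2 = u\}$ one has $\eta_0(|\theta_0-X|^2) = \eta_0\big(u + |\theta-\theta_0|^2 + 2(\theta_0-\theta)^\top(\theta-X)\big)$. Since $\eta_0$ is continuously differentiable on the compact interval $[0,4T^2]$ (assumption~\ref{eta_bound}) and $X$ lies in the bounded set $\rchi$ (assumption~\ref{a1}), the increment $|\theta-\theta_0|^2 + 2(\theta_0-\theta)^\top(\theta-X)$ is of order $|\theta-\theta_0|$ uniformly in $X$, its $|\theta-\theta_0|^2$ part being of second order. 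A first-order Taylor expansion of $\eta_0$ about $u$, followed by taking the conditional expectation given $|\theta-X|^2 = u$, then yields an expansion $\eta_\theta(u) = \eta_0(u) + \eta_0'(u)\,\ell_\theta(u) + o(|\theta-\theta_0|)$ (uniformly in $u$), where $\ell_\theta(u)$ is linear in $\theta-\theta_0$ and proportional to $(\theta_0-\theta)^\top\big(\theta - g_\theta(u)\big)$ with $g_\theta(u) := \E(X\mid |\theta-X|^2 = u)$.

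Differentiating this expansion at $\theta = \theta_0$ — replacing $g_\theta(u)$ by $g_{\theta_0}(u) = \E(X\mid|\theta_0-X|^2 = u)$ at leading order, using continuity of $\theta\mapsto g_\theta(u)$ at $\theta_0$, which follows from dominated convergence or from assumption~\ref{ContConditional} — shows that $\partial_\theta\eta_\theta(u)\big|_{\theta=\theta_0}$ is proportional to $\eta_0'(u)\big(g_{\theta_0}(u)-\theta_0\big)$. It remains to apply the chain rule to $\theta\mapsto\eta_\theta(|\theta-x|^2)$: the dependence through the argument $|\theta-x|^2$ contributes, because $\eta_{\theta_0}=\eta_0$, a term proportional to $\eta_0'(|\theta_0-x|^2)(\theta_0-x)$. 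Adding the two contributions, the $\theta_0$ terms cancel and one is left with a quantity proportional to $\eta_0'(|\theta_0-x|^2)\big(g_{\theta_0}(|\theta_0-x|^2)-x\big)$, which is \eqref{eq:eta_deriv}.

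Granting \eqref{eq:eta_deriv}, formula \eqref{eq:M_prime} follows by differentiating $M(\theta) = \int_{\rchi}(x-\theta)\big[\eta_0(|\theta_0-x|^2) - \eta_\theta(|\theta-x|^2)\big]\,dP_X(x)$ under the integral. The contribution of differentiating the prefactor $(x-\theta)$ vanishes at $\theta_0$ because the bracket is identically zero there ($\eta_{\theta_0}=\eta_0$), so only $-\int(x-\theta_0)\big[\partial_\theta\eta_\theta(|\theta-x|^2)\big|_{\theta_0}\big]^\top dP_X(x)$ survives. Substituting \eqref{eq:eta_deriv} and then conditioning on $U := |\theta_0-X|^2$, the inner expectation $\E\big[(X-\theta_0)\big(X - \E(X\mid U)\big)^\top \mid U\big]$ simplifies — the term carrying the factor $\theta_0$ vanishes since $\E(X - \E(X\mid U)\mid U)=0$ — to $\E(XX^\top\mid U) - \E(X\mid U)\E(X\mid U)^\top = \mathrm{Cov}(X\mid U)$, which is exactly \eqref{eq:M_prime}.

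The main obstacle is turning these heuristic expansions into rigorous estimates: one must justify interchanging differentiation with both the $X$-integral and the conditional expectation defining $\eta_\theta$, and verify that the remainder in the expansion of $u\mapsto\eta_\theta(u)$ is genuinely $o(|\theta-\theta_0|)$ uniformly over the relevant range of $u$. This is where the hypotheses are used — continuous differentiability of $\eta_0$ with uniformly continuous derivative on $[0,4T^2]$ for the Taylor step, boundedness of $\rchi$ and $\Theta$ together with dominated convergence for the interchanges, and the regularity of $\theta\mapsto g_\theta$ for the leading-order replacement. One also needs that $\theta\mapsto\eta_\theta$ is differentiable at $\theta_0$ in the appropriate sense, which holds because $\eta_{\theta_0}=\eta_0$ is $C^1$ and $\eta_\theta$ remains well-behaved for $\theta$ near $\theta_0$ (cf. Lemma~\ref{lem:eta_theta_def} and Lemma~F.3 of~\cite{balabdaoui2019score}); the rest is bookkeeping.
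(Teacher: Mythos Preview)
Your proposal is correct and arrives at the same formulas, but via a somewhat different route than the paper. Both arguments rest on the identity $|\theta_0-X|^2 = |\theta-X|^2 + |\theta-\theta_0|^2 + 2(\theta_0-\theta)^\top(\theta-X)$ (the paper writes essentially the same thing coordinate-wise in~\eqref{eq:theta_splt}). Where you Taylor-expand $\eta_0$ about $u=|\theta-x|^2$, pass to the conditional expectation, and then invoke the chain rule to separate the dependence of $\eta_\theta(|\theta-x|^2)$ on $\theta$ through the subscript from that through the argument, the paper instead parameterizes the conditional law of $X$ given $|\theta-X|=u$ explicitly --- conditioning further on $\mathrm{sign}(X_1-\theta_1)$ and integrating over $(X_2,\ldots,X_d)$ against the conditional densities $h^\theta_\pm$ --- and differentiates the resulting integral representation one coordinate at a time. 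Your approach is more conceptual and avoids the ad~hoc sign-splitting; the paper's is more concrete and makes the differentiation-under-the-integral step explicit at the density level. Your derivation of $M'(\theta_0)$ by conditioning on $U=|\theta_0-X|^2$ and noting that the $\theta_0$-term drops out is exactly the paper's final step~\eqref{eq:mprime_proof}.

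One point to watch: your leading-order replacement $g_\theta(u)\to g_{\theta_0}(u)$ needs continuity of $\theta\mapsto g_\theta(u)$ at $\theta_0$, for which you invoke assumption~\ref{ContConditional}. That assumption is not among the lemma's stated hypotheses. The paper's proof sidesteps this: the analogous contribution there is $\int \eta_0\big(h(x,\theta,\{\tilde x\})\big)\,\partial_{\theta_j} h^\theta_\pm(\tilde x\,|\,\cdot)\,d\tilde x$, which at $\theta=\theta_0$ factors as $\eta_0(|x-\theta_0|^2)\cdot\partial_{\theta_j}\int h^\theta_\pm = 0$ because $h(x,\theta_0,\{\tilde x\})$ is constant in $\tilde x$ and the density integrates to one. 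Both routes ultimately lean on some regularity of the conditional law in $\theta$ --- yours through $g_\theta$, the paper's implicitly through differentiability of $h^\theta_\pm$ --- so this is a minor gap in the hypothesis bookkeeping rather than a flaw in the argument.
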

\begin{proof}
Let $h^\theta_+(\cdot|u)$ be the conditional density of $( X_2, \ldots, X_d)$ when $|{X-\theta}| =u$ and $\sign(X_1-\theta_1) =1$ and $h^\theta_-(\cdot|u)$be the conditional density of $( X_2, \ldots, X_d)$ when $|{X-\theta}| =u$ and $\sign(X_1-\theta_1) =-1$. Further, let $p^{\theta}_+ =\P(\sign(X_1-\theta_1) =1)$ and $p^{\theta}_-= 1-p^{\theta}_+$]. In this proof, we use the following notation $\theta= (\theta_1, \ldots, \theta_d)$ and $\theta_0= (\theta_{01}, \ldots, \theta_{0d})$. Then 
\begin{align}\label{eq:deriv_11}
\begin{split}
\eta_\theta(|\theta-x|^2) &=\E\big(\eta_0(|\theta_0-X|^2)\big| |\theta-X|^2 =|\theta-x|^2\big)\\
&= p^\theta_+ \E\big(\eta_0(|\theta_0-X|^2)\big| |\theta-X|^2 =|\theta-x|^2, \sign(X_1-\theta_1) =1\big) \\
&\qquad \quad+ p^\theta_-\E\big(\eta_0(|\theta_0-X|^2)\big| |\theta-X|^2 =|\theta-x|^2, \sign(X_1-\theta_1) = -1\big)\\
\end{split}
\end{align}
and
\begin{align}\label{eq:deriv_1}
\begin{split}
\frac{\partial}{\partial \theta_j} \eta_\theta(|\theta-x|^2)\bigg\vert_{\theta=\theta_0}
&= \frac{\partial}{\partial \theta_j} p^\theta_+ \bigg\vert_{\theta=\theta_0} \E\big(\eta_0(|\theta_0-X|^2)\big| |\theta_0-X|^2 =|\theta_0-x|^2, \sign(X_1-\theta_{01}) =1\big) \\
&\qquad \quad+ \frac{\partial}{\partial \theta_j}p^\theta_- \bigg\vert_{\theta=\theta_0}\E\big(\eta_0(|\theta_0-X|^2)\big| |\theta_0-X|^2 =|\theta_0-x|^2, \sign(X_1-\theta_{01}) = -1\big)\\
&\qquad\quad+  p^\theta_+ \frac{\partial}{\partial \theta_j}\E\big(\eta_0(|\theta_0-X|^2)\big| |\theta-X|^2 =|\theta-x|^2, \sign(X_1-\theta_1) =1\big)\bigg\vert_{\theta=\theta_0} \\
&\qquad \quad+ p^\theta_- \frac{\partial}{\partial \theta_j}\E\big(\eta_0(|\theta_0-X|^2)\big| |\theta-X|^2 =|\theta-x|^2, \sign(X_1-\theta_1) = -1\big)\bigg\vert_{\theta=\theta_0}\\
&= \frac{\partial}{\partial \theta_j} p^\theta_+ \bigg\vert_{\theta=\theta_0} \eta_0(|\theta_0-x|^2) +\frac{\partial}{\partial \theta_j}p^\theta_- \bigg\vert_{\theta=\theta_0}\eta_0(|\theta_0-x|^2)\\
&\qquad\quad+  p^\theta_+ \frac{\partial}{\partial \theta_j}\E\big(\eta_0(|\theta_0-X|^2)\big| |\theta-X|^2 =|\theta-x|^2, \sign(X_1-\theta_1) =1\big)\bigg\vert_{\theta=\theta_0} \\
&\qquad \quad+ p^\theta_- \frac{\partial}{\partial \theta_j}\E\big(\eta_0(|\theta_0-X|^2)\big| |\theta-X|^2 =|\theta-x|^2, \sign(X_1-\theta_1) = -1\big)\bigg\vert_{\theta=\theta_0}\\
&=   p^\theta_+ \frac{\partial}{\partial \theta_j}\E\big(\eta_0(|\theta_0-X|^2)\big| |\theta-X|^2 =|\theta-x|^2, \sign(X_1-\theta_1) =1\big)\bigg\vert_{\theta=\theta_0} \\
&\qquad \quad+ p^\theta_- \frac{\partial}{\partial \theta_j}\E\big(\eta_0(|\theta_0-X|^2)\big| |\theta-X|^2 =|\theta-x|^2, \sign(X_1-\theta_1) = -1\big)\bigg\vert_{\theta=\theta_0},
\end{split}
\end{align}
for all $j =1, \ldots, d.$ Note that 
\begin{align}\label{eq:theta_splt}
\begin{split}
|\theta_0-X|^2 &= (X_1-\theta_{01})^2 +\sum_{k=2}^{d} (X_k-\theta_{0k})^2\\
&= |X_1-\theta_1|^2 + (\theta_1-\theta_{01})^2 + 2(X_1-\theta_1) (\theta_1-\theta_{01})+\sum_{k=2}^{d} (X_k-\theta_{0k})^2\\
&= |{X-\theta}|^2 -\sum_{k=2}^{d} (X_k-\theta_{k})^2+ 2\sign(X_1-\theta_1)(\theta_1-\theta_{01})\sqrt{|{X-\theta}|^2 -\sum_{k=2}^{d} (X_k-\theta_{k})^2} \\
&\qquad\quad  +(\theta_1-\theta_{01})^2 + \sum_{k=2}^{d} (X_k-\theta_{0k})^2.
\end{split}
\end{align}
Thus 
\begin{align}
&\E\big(\eta_0(|\theta_0-X|^2)\big| |\theta-X|^2 =|\theta-x|^2, \sign(X_1-\theta_1) =1\big)\\
={}& \int_{\sign(\tilde{x}_1-\theta_1) =1} \eta_0(|\theta_0-X|^2) h^\theta_+(\tilde{x}_2, \ldots, \tilde{x}_d | |\theta-x| ) d\tilde{x}_2\ldots d\tilde{x}_d\label{eq:+_part}\\
={}& \int_{\sign(\tilde{x}_1-\theta_1) =1} \eta_0\Bigg(|x-\theta|^2 -\sum_{k=2}^{d} (\tilde{x}_k-\theta_{k})^2+ 2(\theta_1-\theta_{01})\sqrt{|x-\theta|^2 -\sum_{k=2}^{d} (\tilde{x}_k-\theta_{k})^2} \\
&\qquad\qquad\qquad\qquad\qquad\qquad +(\theta_1-\theta_{01})^2 + \sum_{k=2}^{d} (\tilde{x}_k-\theta_{0k})^2\Bigg) h^\theta_+(\tilde{x}_2, \ldots, \tilde{x}_d | |\theta-x|) d\tilde{x}_2\ldots d\tilde{x}_d.
\end{align}
Define, 
\begin{equation}\label{eq:hloc_def}
h\big(x, \theta, \{\tilde{x}\}_2^d\big):= |x-\theta|^2 -\sum_{k=2}^{d} (\tilde{x}_k-\theta_{k})^2+ 2(\theta_1-\theta_{01})\sqrt{|x-\theta|^2 -\sum_{k=2}^{d} (\tilde{x}_k-\theta_{k})^2}+ (\theta_1-\theta_{01})^2 + \sum_{k=2}^{d} (\tilde{x}_k-\theta_{0k})^2.
\end{equation}
Note that for $j\ge 2$, we have 
\begin{equation}\label{eq:h_loc_deriv}
\frac{\partial}{\partial \theta_j} h\big(x, \theta, \{\tilde{x}\}_2^d\big)= 2(\tilde{x}_j -x_j)\Big\{ 1+ \frac{ \theta_1-\theta_{01} }{\sqrt{|x-\theta|^2 -\sum_{k=2}^{d} (\tilde{x}_k-\theta_{k})^2}}\Big\}.
\end{equation}
For $j=2, \ldots, d$, we have that 
\begin{align}\label{eq:+part_deriv}
\begin{split}
&\frac{\partial\E\big(\eta_0(|\theta_0-X|^2)\big| |\theta-X|^2 =|\theta-x|^2, \sign(X_1-\theta_1) =1\big)}{\partial \theta_j}\\
={}&\int_{\sign(\tilde{x}_1-\theta_1) =1} \eta_0\Bigg(h\big(x, \theta, \{\tilde{x}\}_2^d\big)\Bigg) \frac{\partial}{\partial \theta_j} h^\theta_+(\tilde{x}_2, \ldots, \tilde{x}_d | |\theta-x|) d\tilde{x}_2\ldots d\tilde{x}_d\\
 &+\int_{\sign(\tilde{x}_1-\theta_1) =1}  \eta_0'\Bigg(h\big(x, \theta, \{\tilde{x}\}_2^d\big)\Bigg)  2(\tilde{x}_j -x_j)\Big\{ 1+ \frac{ \theta_1-\theta_{01} }{\sqrt{|x-\theta|^2 -\sum_{k=2}^{d} (\tilde{x}_k-\theta_{k})^2}}\Big\}\\
 &\hspace{3.5in}h^\theta_+(\tilde{x}_2, \ldots, \tilde{x}_d | |\theta-x|) d\tilde{x}_2\ldots d\tilde{x}_d\\
\end{split}
\end{align}
Thus
\begin{align}\label{eq:+part_derivattheta0}
\begin{split}
&\left.\frac{\partial\E\big(\eta_0(|\theta_0-X|^2)\big| |\theta-X|^2 =|\theta-x|^2, \sign(X_1-\theta_1) =1\big)}{\partial \theta_j}\right\vert_{\theta=\theta_0}\\
={}&\int_{\sign(\tilde{x}_1-\theta_{01}) =1}  \eta_0'\big(|x-\theta_0|^2 \big)  2(\tilde{x}_j -x_j)h^{\theta_0}_+(\tilde{x}_2, \ldots, \tilde{x}_d | |\theta_0-x|) d\tilde{x}_2\ldots d\tilde{x}_d\\
={}& \eta_0'\big(|x-\theta_0|^2 \big)     2\Big(\E\big({X}_j\big| \sign(X_1-\theta_0) =1,  |\theta_0-X| =|\theta-x|\big)-x_j\Big)
\end{split}
\end{align}
Similarly, we can show that 
\begin{align}\label{eq:-part_derivattheta0}
\begin{split}
&\left.\frac{\partial\E\big(\eta_0(|\theta_0-X|^2)\big| |\theta-X|^2 =|\theta-x|^2, \sign(X_1-\theta_1) =-1\big)}{\partial \theta_j}\right\vert_{\theta=\theta_0}\\
={}& \eta_0'\big(|x-\theta_0|^2 \big)     2\Big(\E\big({X}_j\big| \sign(X_1-\theta_0) =-1,  |\theta_0-X| =|\theta-x|\big)-x_j\Big)
\end{split}
\end{align}
Combining~\eqref{eq:deriv_1} with ~\eqref{eq:+part_derivattheta0} and~\eqref{eq:-part_derivattheta0}, we get that
\begin{equation}\label{eq:j>2}
\frac{\partial}{\partial \theta_j} \eta_\theta(|\theta-x|^2)\bigg\vert_{\theta=\theta_0}
=  \eta_0'\big(|x-\theta_0|^2 \big)     2\Big(\E\big({X}_j\big|  |\theta_0-X| =|\theta-x|\big)-x_j\Big),
\end{equation}
for all $j\ge 2$. We will now compute~\eqref{eq:deriv_1} for $j=1.$ By~\eqref{eq:+_part}, we have that 

\begin{align}\label{eq:+part_deriv1}
\begin{split}
&\left.\frac{\partial\E\big(\eta_0(|\theta_0-X|^2)\big| |\theta-X|^2 =|\theta-x|^2, \sign(X_1-\theta_1) =1\big)}{\partial \theta_1}\right\vert_{\theta=\theta_0}\\
={}&  \int  \eta_0'\Bigg(h\big(x, \theta, \{\tilde{x}\}_2^d\big)\Bigg) \Big\{2\sqrt{|x-\theta_0|^2 -\sum_{k=2}^{d} (\tilde{x}_k-\theta_{0k})^2} -2x_1 +2\theta_{01}\Big\}h^\theta_+(\tilde{x}_2, \ldots, \tilde{x}_d | |\theta_0-x|) d\tilde{x}_2\ldots d\tilde{x}_d\\
&\qquad +  \int \eta_0\Bigg(h\big(x, \theta_0, \{\tilde{x}\}_2^d\big)\Bigg) \frac{\partial}{\partial \theta_1}\left. h^\theta_+(\tilde{x}_2, \ldots, \tilde{x}_d | |\theta-x|)\right\vert_{\theta=\theta_0} d\tilde{x}_2\ldots d\tilde{x}_d\\
=& \eta_0'\big(|x-\theta_0|^2 \big)     2\Big(\E\big({X}_1\big| \sign(X_1-\theta_0) =1,  |\theta_0-X| =|\theta-x|\big)-x_1\Big).
\end{split}
\end{align}
The other case can be solved similarly. Combining the above results, we have~\eqref{eq:eta_deriv} and 
\begin{align}\label{eq:mprime_proof}
M'(\theta_0)&= \E\Bigg( -\left. \frac{\partial \eta_\theta(|\theta-x|^2)}{\partial \theta}\right\vert_{\theta=\theta_0} ({X-\theta_0})\Bigg)=\E\Big(\eta'_0(|\theta_0-X|^2) \text{Cov}\big(X\big||\theta_0-X|^2 \big)\Big). \qedhere
\end{align}
\end{proof}

\bibliographystyle{apalike}
\bibliography{SigNoise}

\end{document}